\documentclass[11pt]{article}

\usepackage{xcolor}
\usepackage{amsfonts,amsmath,amssymb,mathtools,amsthm}
\usepackage{graphicx}
\usepackage{stmaryrd}
\usepackage{thmtools}
\DeclareGraphicsExtensions{.eps}
\usepackage{dsfont}
\usepackage{float}
\usepackage{geometry}
\allowdisplaybreaks

\renewcommand{\theequation}{\thesection.\arabic{equation}}

\newcommand\encadremath[1]{\vbox{\hrule\hbox{\vrule\kern8pt
\vbox{\kern8pt \hbox{$\displaystyle #1$}\kern8pt}
\kern8pt\vrule}\hrule}}
\def\enca#1{\vbox{\hrule\hbox{
\vrule\kern8pt\vbox{\kern8pt \hbox{$\displaystyle #1$}
\kern8pt} \kern8pt\vrule}\hrule}}

\newcommand\framefig[1]{
\begin{figure}[bth]
\hrule\hbox{\vrule\kern8pt
\vbox{\kern8pt \vbox{
\begin{center}
{#1}
\end{center}
}\kern8pt}
\kern8pt\vrule}\hrule
\end{figure}
}

\newcommand\figureframex[3]{
\begin{figure}[bth]
\hrule\hbox{\vrule\kern8pt
\vbox{\kern8pt \vbox{
\begin{center}
{\mbox{\epsfxsize=#1.truecm\epsfbox{#2}}}
\end{center}
\caption{#3}
}\kern8pt}
\kern8pt\vrule}\hrule
\end{figure}
}
\newcommand\figureframey[3]{
\begin{figure}[bth]
\hrule\hbox{\vrule\kern8pt
\vbox{\kern8pt \vbox{
\begin{center}
{\mbox{\epsfysize=#1.truecm\epsfbox{#2}}}
\end{center}
\caption{#3}
}\kern8pt}
\kern8pt\vrule}\hrule
\end{figure}
}

\renewcommand{\thesection}{\arabic{section}}
\renewcommand{\theequation}{\arabic{section}-\arabic{equation}}
\makeatletter
\@addtoreset{equation}{section}
\makeatother
\newtheorem{theorem}{Theorem}[section]
\newtheorem{conjecture}{Conjecture}[section]

\newtheorem{proposition}{Proposition}[section]
\newtheorem{lemma}{Lemma}[section]
\newtheorem{corollary}{Corollary}[section]

\theoremstyle{definition}
\newtheorem{remark}{Remark}[section]
\newtheorem{definition}{Definition}[section]

\def\br{\begin{remark}\rm\small}
\def\er{\end{remark}}
\def\bt{\begin{theorem}}
\def\et{\end{theorem}}
\def\bd{\begin{definition}}
\def\ed{\end{definition}}
\def\bp{\begin{proposition}}
\def\ep{\end{proposition}}
\def\bl{\begin{lemma}}
\def\el{\end{lemma}}
\def\bc{\begin{corollary}}
\def\ec{\end{corollary}}
\def\beaq{\begin{eqnarray}}
\def\eeaq{\end{eqnarray}}

\theoremstyle{definition}

\newcommand{\be}{\begin{equation}}
\newcommand{\ee}{\end{equation}}
\newcommand{\beq}{\begin{equation}}
\newcommand{\eeq}{\end{equation}}
\newcommand{\bea}{\begin{eqnarray}}
\newcommand{\eea}{\end{eqnarray}}
\newcommand{\beqq}{\begin{equation*}}
\newcommand{\eeqq}{\end{equation*}}
\newcommand{\beaa}{\begin{eqnarray*}}
\newcommand{\eeaa}{\end{eqnarray*}}

\newcommand{\Tr}{{\operatorname {Tr}}}

\newcommand{\diag}{{\operatorname{diag}}}

\newcommand{\om}{\omega}

\newcommand{\x}{{\rm x}}
\newcommand{\y}{{\rm y}}

\newcommand{\td}{\tilde}

\newcommand\blfootnote[1]{%
  \begingroup
  \renewcommand\thefootnote{}\footnote{#1}%
  \addtocounter{footnote}{-1}%
  \endgroup
}

%

\newcommand{\Res}{\mathop{\,\rm Res\,}}
\textwidth 160mm
\textheight 235mm
\topmargin 0pt
\oddsidemargin 5pt
\headheight 0pt
\headsep 0pt
\topskip 9mm
%
\usepackage[pdftex]{pict2e}
\usepackage[pdftex]{hyperref}
\hypersetup{colorlinks,urlcolor=magenta,citecolor=red,linkcolor=blue,filecolor=black}
\usepackage{tikz}
\usetikzlibrary{positioning}
\usetikzlibrary{shapes,arrows}
\usepackage{caption}
\usepackage[siunitx]{circuitikz}
\usetikzlibrary{calc,trees,positioning,arrows,chains,shapes.geometric,%
    decorations.pathreplacing,decorations.pathmorphing,shapes,%
    matrix,shapes.symbols}

\tikzset{
>=stealth',
  punktchain/.style={
    rectangle, 
    draw=black, very thick,
    text width=10em, 
    minimum height=3em, 
    text centered, 
    on chain},
  line/.style={draw, very thick, <-},
  element/.style={
    tape,
    top color=white,
    bottom color=blue!50!black!60!,
    minimum width=8em,
    draw=blue!40!black!90, very thick,
    text width=10em, 
    minimum height=3.5em, 
    text centered, 
    on chain},
  every join/.style={->,very thick,shorten >=1pt},
  decoration={brace},
  tuborg/.style={decorate},
  tubnode/.style={midway, right=2pt},
}
\usepackage{lscape}
\usepackage{pdflscape}

\title{\bf{Explicit Hamiltonian representations of meromorphic connections and duality from different perspectives: a case study}}

\date{\vspace{-5ex}}
 
\author{$_{1}$Mohamad Alameddine\footnote{Universit\'{e} Jean Monnet Saint-\'{E}tienne, CNRS, Institut Camille Jordan UMR 5208, F-42023, Saint-\'{E}tienne, France.} \footnote{Section de Math\'ematiques, Universit\'e de Gen\`eve, Geneva, Switzerland.} \,\,,
$_{2}$Olivier Marchal\footnote{Universit\'{e} Jean Monnet Saint-\'{E}tienne, CNRS, Institut Camille Jordan UMR 5208, Institut Universitaire de France, F-42023, Saint-\'{E}tienne, France}
}

\begin{document}

\maketitle

\vspace{0.5cm}

\begin{abstract}
In this article, we present an explicit study of $\hbar$-deformed meromorphic connections in $\mathfrak{gl}_3(\mathbb{C})$ with an unramified irregular pole at infinity of order $r_\infty=3$ and its spectral dual corresponding to the $\mathfrak{gl}_2(\mathbb{C})$ Painlev\'{e} IV Lax pair. Using the apparent singularities and their dual partners on the spectral curves as Darboux coordinates, we obtain the Hamiltonian evolutions, the reduction of these evolutions to a single non-trivial direction, the Jimbo-Miwa-Ueno tau-functions, the fundamental symplectic two-forms and the associated Hermitian matrix models on both sides. We then prove that the spectral duality connecting both sides extends to all these aspects, providing an explicit illustration of the generalized Harnad duality. We finally propose a conjecture relating the Jimbo-Miwa-Ueno differential as the $\hbar=0$ evaluation of the Hamiltonian differential in these Darboux coordinates that could provide insights on the geometric interpretation of the $\hbar$ formal parameter. As a byproduct we also obtain a rank $3$ Lax pair for the Painlev\'{e} IV equation.
\blfootnote{\textit{Email Addresses:}$_{1}$\textsf{Mohamad.Alameddine@unige.ch}, 
$_{2}$\textsf{olivier.marchal@univ-st-etienne.fr}}
\end{abstract}



\tableofcontents

\newpage

\section{Introduction}
The idea of isomonodromic deformations of linear differential equations dates back to B. Riemann in 1857 and the case of regular singularities was completely solved by L. Schlesinger \cite{schlesinger1912klasse} and L. Fuchs \cite{fuchs1907lineare}. Nowadays, this problem is known as the Fuchsian case and the beautiful set of equations emerging from it are known as the Schlesinger equations. These deformations became an extensive field of studies when the sixth Painlev\'{e} equation was derived from a Fuchsian system and the link with the other Painlev\'{e} equations and the Painlev\'{e} property inspired a huge community \cite{Fuchs,Gambier,Garnier,Painleve,Picard,schlesinger1912klasse}. The topic was then pursued by R. Garnier and K. Okamoto \cite{okamoto1979deformation,Okamoto1986Iso,Okamoto1986} who studied Garnier systems in scalar form or Schlesinger systems \cite{schlesinger1912klasse} in matrix form. Ultimately, it was Garnier who extended the results to all other Painlev\'{e} equations, obtaining them as completely integrable systems. The Hamiltonian formulation of the Painlev\'{e} equations was achieved by A.J. Malmquist \cite{Malmquist1922} while the relations with isomonodromic deformations of linear ordinary differential equations with irregular singularities were given in \cite{Okamoto1980}. The irregular case consisting of deformations of meromorphic connections defined on the trivial vector bundle of arbitrary rank above the Riemann sphere with an arbitrary pole structure still presents an active field of studies, in particular, the interest in the general case was revived by the Japanese school of Jimbo, Miwa and Ueno (JMU) \cite{JimboMiwa,JimboMiwaUeno} who considered arbitrary pole structure on arbitrary rank bundles and built, using the apparent singularities as Darboux coordinates, a vast family of nonlinear differential equations resulting from the isomonodromic deformations known as the largest to have the  ``Painlev\'{e} property". There are currently many ways to tackle the isomonodromic deformation problem and its link with the \textit{Stokes phenomenon} and the moduli spaces of meromorphic connections appearing in various areas of mathematical physics and geometry offers a variety of perspectives. Indeed, one may study the symplectic structure \cite{AtiyahBott,FockRosly99,Goldman84} of moduli spaces of flat meromorphic connections \cite{BoalchThesis,Boalch2001,Boalch2012,Boalch2022}. Another recent perspective developed by the Montr\'{e}al school, led by J. Harnad, J. Hurtubise and M. Bertola, relies on the study of moment maps to central extension of loop algebras \cite{HarnadHurtubise,Harnad_1994} whose goal is to extend isospectral deformations to isomonodromic deformations using an appropriate set of Darboux coordinates. In fact, both approaches have been recently linked in the $\mathfrak{gl}_2$ setting by an explicit and non-trivial time-dependent change of Darboux coordinates \cite{marchal2023isomonodromic} relating the isomonodromic Hamiltonians to the isospectral ones. At the same time in \cite{MarchalAlameddineP1Hierarchy2023,marchal2023hamiltonian}, the authors used the apparent singularities as Darboux coordinates and a gauge transformation to the oper gauge to explicitly solve the compatibility equations and obtain an explicit Hamiltonian formulation resulting from the general isomonodromic deformations of meromorphic connections defined on trivial rank $2$ vector bundles over $\mathbb{P}^1$ with arbitrary poles of arbitrary order for the generic and the twisted case. This result builds an explicit bi-rational map between the isomonodromy connections of JMU/Boalch and the symplectic Ehresmann connection built from the Hamiltonian system. One of the main goals of this article is to extend this construction to a rank $3$ case hence showing that the construction of \cite{MarchalAlameddineP1Hierarchy2023,marchal2023hamiltonian} is applicable in the higher rank setting.\\

Another recent interesting aspect of isomonodromic deformations that we shall study in this article is the so-called Harnad's duality also known as spectral duality, generalized abstract Fourier-Laplace transform or $x-y$ symmetry. Indeed, a simple family of isomonodromy equations was introduced in \cite{JMMS} by the Japanese school arising as equations for correlations functions when studying the Schr\"{o}dinger equation.  The particular feature of these equations is the presence of two sets of deformation parameters, namely, the positions of the poles and the irregular times. J. Harnad \cite{Harnad_1994} showed that one may swap the two sets while preserving the isomonodromy equations. At the level of the spectral curves, this swap is equivalent to the exchange $x \leftrightarrow y$ on the spectral curve but the extension of this symmetry to the Hamiltonian structure is very non-trivial. This unexpected observation led to the terminology Harnad's duality and paved the way to many works to understand and generalize this duality \cite{HarnadHurtubise,Balser,Bertola2001DualityBP,Bertola2003,HarnadIts,Luu2019} to other examples. Of particular interest are the generalization by N.M.J. Woodhouse \cite{woodhouse2007duality} and the understanding of P. Boalch \cite{Boalch2012} that extended this duality to a continuous group action by considering a new class of isomonodromy equations and showing that they admit Kac–Moody Weyl group symmetries. Another fruitful application of this duality lies in the construction of the additive analogue of the \textit{middle convolution} \cite{BibiloMiddle,HaraokaMiddleConv,Katz,YamakawaMiddleConvolutions}, an operator that acts on the local systems on the punctured Riemann sphere. When considering the middle convolution, the duality is manifested by the \textit{Fourier-Laplace transform} and the link was established by G. Sanguinetti and N.M.J. Woodhouse \cite{Sanguinetti_2004}. More generally, the additive middle convolution is used to construct Weyl group symmetries of the moduli spaces of meromorphic connections on trivial bundles over the Riemann sphere \cite{Hiroe_2017}. Under some assumptions, D. Yamakawa proved that this operator preserves the isomonodromy properties of admissible meromorphic connections \cite{yamakawa2014fourierlaplace}. \\

At the level of application, isomonodromy systems have appeared now in various problems in mathematical physics. For instance, in topological string theory, certain generating functions of Gromov-Witten invariants are shown to correspond to tau-functions of integrable hierarchies, which can be understood through isomonodromy systems. Another topic is the relation of integrable systems with Hermitian matrix models \cite{MBertola_2003,Bertola:2004ws,BertolaKorotkin2021,Morozov1999,MulaseMM}. Closely related to our work is the link between Hermitian matrix models and the \textit{topological recursion} \cite{EO07}, a powerful tool that has many applications in enumerative geometry. In fact, one may think of the topological recursion as a mechanism that associates to a classical spectral curve, an infinite family of differential forms computed recursively. These differentials enjoy various interesting properties and can be regrouped to quantize the classical spectral curve hence recovering the Lax system from the simple knowledge of the classical spectral curve \cite{Quantization_2021}. Although TR is a relatively recent tool, the $x-y$ symmetry has known a very rich development due to its various applications. For example, this transformation interchanges the potentials $V_1$ and $V_2$ when applied to Hermitian two-matrix models \cite{CEO06,EO2MM}. It is also used in free probability theory \cite{borot2021topological} and helps in the computation of intersection numbers on the moduli spaces of complex curves. Finally, it also has significant implications for topological string theory and mirror symmetry. Let us also mention that considerable work has been done in the past few years to relate the correlators of both sides of the $x-y$ swap \cite{alexandrov2022universal,Hock_2023b,Hock_2023a}, along with some properties preserved by this symmetry such as the KP integrability and the spectral duality \cite{alexandrov2023kp,alexandrov2024symplectic}.\\

In this article, we consider the duality from different perspectives in a non-trivial example. Our starting point is the moduli space of generic meromorphic connections having a pole at infinity of order $r_\infty=3$ defined on the trivial rank $3$ vector bundle over the Riemann sphere and its general ($\hbar$-deformed) isomonodromic deformations (\autoref{DefMeroGl3} and \autoref{GL3DefIsoDeform}). Our first result is to extend the methods applied in \cite{MarchalAlameddineP1Hierarchy2023,marchal2023hamiltonian} to obtain the explicit expressions for the Lax matrices (\autoref{LaxMatrixgl3} and \autoref{AppendixAuxiliaryGeneral}) and the explicit Hamiltonian evolutions (\autoref{Defs}). Our choice of Darboux coordinates is the apparent singularities and their dual partner on the spectral curve as originally considered by the Japanese school. The second step is then to realize that one may split the tangent space into a trivial subspace where the Hamiltonian evolutions are trivial (linear in the Darboux coordinates so that with an appropriate shift of the Darboux coordinates they may be set to zero) and a subspace of dimension $g=1$ (where $g$ is the genus of the spectral curve) where the evolutions are non-trivial giving rise to only one non-trivial Hamiltonian (\autoref{TheoNonTrivialEvolution}). A key aspect of this symplectic reduction is that it naturally reduces the fundamental symplectic two-form $\Omega$ \cite{marchal2023hamiltonian,Yamakawa2019FundamentalTwoForms} and the Jimbo-Miwa-Ueno differentials (\autoref{SymplecticReduction} and  \autoref{PropJMUDifferential}). We observe that the latter equals to the naive Hamiltonian differential formally evaluated at $\hbar=0$ (with fixed Darboux coordinates) and we propose it as a conjecture that would provide a geometric understanding of the formal parameter $\hbar$ (\autoref{ConjectureJMU}). The next step is then to obtain similar results on the dual side, which happens to be the $\mathfrak{gl}_2(\mathbb{C})$ Painlev\'{e} IV Lax pair studied in \cite{marchal2023hamiltonian} (i.e. meromorphic connections with one pole at infinity of order  $r_\infty=3$ and one finite pole of order $r_1=1$). Note that our results (\autoref{TheoHamP4} and \autoref{ThJMUP4}) may be seen as an extension of the results of \cite{marchal2023hamiltonian} since we do not assume the monodromies at each pole to have vanishing sum and because we propose the analysis of the JMU differential that was not performed in \cite{marchal2023hamiltonian}. The final step is then to study the duality between both systems at different levels. In particular, we prove the duality at the level of spectral curves (i.e. that the spectral curves are related by the $x-y$ swap) in \autoref{TheoDualitySpecCurves}, this is then extended to the Hamiltonian evolutions (\autoref{TheoDualHamiltonians}), the fundamental symplectic two-form (\autoref{TheoCorrespondenceFundamentalTwoForms}), the JMU differentials (\autoref{TheoDualityOmegaJMU}) and the partition functions of the corresponding matrix models when the spectral curves degenerate to genus $0$ (\autoref{TheoDualitynew}). Finally, at the level of the perturbative part of the partition functions, the duality is equivalent to the $x-y$ invariance of the free energies of the topological recursion when the spectral curves degenerate to genus $0$. When the spectral curves are of genus $1$, we formulate conjectures regarding the relations between the partition functions of the matrix models and the JMU tau-functions (\autoref{Prop2MM} and \autoref{PartitionFunctionP4}) and we conjecture duality relations with the non-perturbative TR-partition function (\autoref{ConjTRHMM}). The main results obtained in this article are summarized in \autoref{Fig2Diagram} providing a global picture of our work. 

\medskip

The paper is organized as follows. In \autoref{secGl3} we study the $\mathfrak{gl}_3(\mathbb{C})$ side. In particular, we obtain the Lax matrices (\autoref{SectionNormalized}), the Hamiltonian evolutions (\autoref{SecGL3Iso}) and their reduction, the fundamental symplectic two-form and its reduction (\autoref{SecGL3Reduction}), the JMU differential and its reduction (\autoref{SecGl3JMU}) and the associated Hermitian matrix model (\autoref{SecGl32MM}). In \autoref{SecP4JM} we review the $\mathfrak{gl}_2(\mathbb{C})$ side with focus on the Hamiltonian evolutions and fundamental two-form (\autoref{SecGl2Iso}) and their reduced form (\autoref{SecGl2Reduction}), the JMU differential (\autoref{SecGl2JMU}) and the associated Hermitian matrix model (\autoref{SecGl21MM}). Finally, in \autoref{SecDuality}, we study the duality at different levels and propose some conclusions and outlooks in  \autoref{SectionOutlooks}. In the end, the construction and results of the paper are summarized into the following figure:

 \begin{center}
\newgeometry{left=0.3cm,right=0.5cm,bottom=1cm,top=1cm}
\begin{landscape}
\thispagestyle{empty}
 \begin{figure}\centering
        \footnotesize{
        \begin{tikzpicture}
        [
         roundnode/.style={circle, draw=white , ultra thin, minimum size=0mm},
squarednode/.style={rectangle, draw=black, very thick,  text width=10em, text centered, minimum size=5mm},
squarednodeprime/.style={rectangle, draw=black, thick,  text width=5em, text centered, minimum size=5mm},
line/.style ={draw, very thick, -latex', shorten >=0pt}
roundnodeprime/.style={point, draw=white , ultra thin, text width=3em, minimum size=0mm},
        ]

        \node[squarednode]      (maintopic)                              {Space of meromorphic connections in $\mathfrak{gl}_3(\mathbb{C})$: $\hat{F}_{\{\infty\},\mathbf{3}}$};
        \node[squarednode]      (Birkhoff)       [below= of maintopic] {Birkhoff factorisations or formal asymptotic expansions};
        \node[squarednode]        (JMU1)       [right=4cm of Birkhoff] {The JMU one-form $\omega_{JMU}(q,p,\mathbf{t})$};
        \node[roundnode] (1) [left=0.1cm of maintopic] {} ;
        \node[roundnode] (2) [above=14cm of 1] {} ;
        \node[squarednode] (Iso1) [right=0.1cm of 2] {Isomonodromic deformations $ \mathcal{L}_{\boldsymbol{\alpha}}  [ \Tilde{ \Psi}] = \Tilde{A}_{\boldsymbol{\alpha}} \Tilde{\Psi}$};
        \node[squarednode] (Oper1) [below=2cm of Iso1] {Oper gauge $L(\lambda), A_{\boldsymbol{\alpha}}(\lambda)$}; 
        \node[squarednode] (Lax1) [below=5cm of Iso1] {Lax matrix $\hbar \partial_\lambda \Tilde{\Psi} = \Tilde{L} \Tilde{\Psi}$};
        \node[roundnode] (5) [left=0.1cm of Lax1] {};
        \node[squarednode] (Partition1)   [above=of JMU1]   {Partition function $\ln Z_N^{(\text{2MM})}(\mathbf{t;\hbar})$};
        \node[squarednode] (2MM) [above=0.5cm of Partition1] {Classical spectral curve $\mathcal{S}_0$};
        \node[squarednode] (2M) [left=0.3cm of 2MM] {Two-matrix model};
        \node[roundnode] (7) [right=0.1cm of Iso1] {};
        \node[roundnode] (8) [right=0.1cm of Oper1] {};
        \node[roundnode] (9) [right=0.1cm of Lax1] {};
        \node[roundnode] (13) [below=1cm of 9] {};
        \node[squarednode] (Ham1) [right=2cm of 13] {Hamiltonian evolutions $\text{Ham}_{(\boldsymbol{\alpha})}(q,p,\mathbf{t};\hbar) \Leftrightarrow$ Hamiltonian differential $\overline{\omega}(q,p;\hbar)$};
        \node[squarednode] (Split1) [above= of Ham1]  {Split of the tangent space $(q,p,\mathbf{t}) \longleftrightarrow (\check{q},\check{p}, \mathbf{T},\tau )$};
        \node[squarednode] (2form1) [above=4cm of Split1]  {Fundamental symplectic two-form $\Omega$};
        \node[roundnode] (15) [above=0.1em of 2form1] {};
        \node[roundnode] (17) [left=0.8cm of 2form1] {};
        \node[roundnode] (18) [left=0.8cm of Ham1] {};
        \node[squarednodeprime] (reduceddiff1) [right=0.3cm of Split1] {Reduced Hamiltonian differential};
        \node[squarednode] (reducedHam1) [above=0.6cm of reduceddiff1] {Reduced Hamiltonian $\text{Ham}_{(\tau)} (\check{q},\check{p}, \tau;\hbar)$};
        \node[roundnode] (21) [above=0.2cm of reducedHam1] {};
        \node[roundnode] (23) [above=1.1cm of Split1] {};
        \node[squarednode] (sc1) [below=2.2cm of Ham1]  {Spectral curve $\mathcal{S}$ $\det (yI_3 - \Tilde{L}(\lambda))$=0};
        \node[roundnode] (25) [below=1.25cm of sc1] {};
        \node[roundnode] (27) [right=0.5cm of sc1] {};
        \node[roundnode] (29) [below=1.6cm of 27] {};
        \node[roundnode] (31) [right=2cm of Ham1] {};
        \node[roundnode] (33) [right=0.4cm of JMU1] {};
        \node[roundnode] (35) [below=1.5em of 31] {};
        \node[roundnode] (p1) [below=0.3em of 2form1] {};
         \node[squarednodeprime] (reduced2form1) [right=1.7cm of p1] {Reduced fundamental $2-$form};
        \node[roundnode] (p3) [above=4cm of 1] {};
        \node[roundnode] (f1) [left=0.5cm of p1] {};
         \node[roundnode] (s1) [above=0.3cm of 2form1] {};
       \node[roundnode] (s2) [left=3.5cm of s1] {};

        \node[squarednode] (gl2side) [right=18cm of maintopic] {Space of meromorphic connections in $\mathfrak{gl}_2(\mathbb{C})$: $\hat{F}_{\{\infty,X_1\},\{3,1\}}$};
        \node[squarednode] (Birkhoff2) [below=of gl2side] {Birkhoff factorisations or formal asymptotic expansions};
        \node[squarednode] (JMU2)    [left=4cm of Birkhoff2]   {The JMU one-form $\omega_{JMU}^{(\text{P4})}(Q,P,\mathbf{s})$};
        \node[roundnode] (3) [right=0.1cm of gl2side] {} ;
        \node[roundnode] (4) [above=14cm of 3] {} ;
        \node[squarednode] (Iso2) [left=0.1cm of 4] {Isomonodromic def. $\mathcal{L}_{\boldsymbol{\beta}}  [ \tilde{ \Psi}_{\text{P4}}] = \tilde{A}^{(\text{P4})}_{\boldsymbol{\beta}} \tilde{\Psi}_{\text{P4}}$};
        \node[squarednode] (Oper2) [below=2cm of Iso2] {Oper gauge $L_{\text{P4}}(\xi),A^{(\text{P4})}_{\boldsymbol{\beta}}(\xi)$}; 
        \node[squarednode] (Lax2) [below=5cm of Iso2] {Lax matrix $\hbar \partial_\xi \tilde{\Psi}_{\text{P4}} = \tilde{L}_{\text{P4}} \tilde{\Psi}_{\text{P4}}$};
        \node[roundnode] (6) [right=0.1cm of Lax2] {};
        \node[squarednode] (Partition2)   [above=of JMU2]   {Partition function $\ln Z_N^{(\text{1MM})}(\mathbf{s;\hbar})$};
        \node[squarednode] (1MM) [above=0.5cm of Partition2] {Classical spectral curve $\mathcal{S}_{\text{P4},0}$};
        \node[squarednode] (1M) [right=0.3cm of 1MM] {One-matrix model};
        \node[roundnode] (10) [left=0.1cm of Iso2] {};
        \node[roundnode] (11) [left=0.1cm of Oper2] {};
        \node[roundnode] (12) [left=0.1cm of Lax2] {};
        \node[roundnode] (14) [below=1cm of 12] {};
        \node[squarednode] (Ham2) [left=2cm of 14] {Hamiltonian evolutions $\text{Ham}^{(\text{P4})}_{(\boldsymbol{\beta})}(Q,P,\mathbf{s};\hbar) \Leftrightarrow $ Hamiltonian differential $\overline{\omega}^{(\text{P4})}(Q,P;\hbar)$};
         \node[squarednode] (Split2) [above= of Ham2]  {Split of the tangent space $(Q,P,\mathbf{s}) \longleftrightarrow (\check{Q},\check{P}, \mathbf{S},\tilde{X}_1 )$};
         \node[squarednode] (2form2) [above=4cm of Split2]  {Fundamental symplectic two-form $\Omega^{(\text{P4})}$};
         \node[roundnode] (16) [above=0.1em of 2form2] {};
         \node[roundnode] (19) [right=0.8cm of 2form2] {};
        \node[roundnode] (20) [right=0.8cm of Ham2] {};
        \node[squarednodeprime] (reduceddiff2) [left=0.3cm of Split2] {Reduced Hamiltonian differential};
        \node[squarednode] (reducedHam2) [above=0.6cm of reduceddiff2] {Reduced Hamiltonian $\text{Ham}^{(\text{P4})}_{(\tilde{X}_1)} (\check{Q},\check{P}, \tilde{X}_1;\hbar)$};
         \node[roundnode] (22) [above=0.2cm of reducedHam2] {};
         \node[roundnode] (24) [above=1.1cm of Split2] {};
          \node[squarednode] (sc2) [below=2.2cm of Ham2]  {Spectral curve $\mathcal{S}_{\text{P4}}$ $\det (YI_2 - \tilde{L}_{\text{P4}}(\xi))$=0};
          \node[roundnode] (26) [below=1.15cm of sc2] {};
          \node[roundnode] (28) [left=0.5cm of sc2] {};
          \node[roundnode] (30) [below=1.5cm of 28] {};
           \node[roundnode] (32) [left=2cm of Ham2] {};
           \node[roundnode] (34) [left=0.4cm of JMU2] {};
           \node[roundnode] (p2) [below=0.2em of 2form2] {};
           \node[squarednodeprime] (reduced2form2) [left=1.7cm of p2] {Reduced fundamental $2-$form};
           \node[roundnode] (p4) [above=4cm of 3] {};
           \node[roundnode] (f2) [right=0.5cm of p2] {};
                   \node[roundnode] (s3) [above=0.3cm of 2form2] {};
       \node[roundnode] (s4) [right=2.1cm of s3] {};
           
\draw[->] (maintopic) edge node[right] {Eq. \eqref{PsiTT}} (Birkhoff);
\draw[->] (gl2side) --    (Birkhoff2);
\draw[dashed, color=red,very thick][-] let
    \p1=(JMU1.east), \p2=(JMU2.west) in
    ($(\x1,\y1-1em)$) edge node[above]  { Th. \ref{TheoDualityOmegaJMU}} ($(\x2,\y2-1em)$)  ;

\draw[color=black][-] let
    \p1=(Birkhoff.east), \p2=(JMU1.west) in
    ($(\x1,\y1)$) edge node[above] {JMU residue formula  }($(\x2,\y2)$);

\draw[color=black][-] let
    \p1=(JMU2.east), \p2=(Birkhoff2.west) in
    ($(\x1,\y1)$) edge node[above] {JMU residue formula  } ($(\x2,\y2)$)  ;

\draw[color=black][-] let
    \p1=(Birkhoff.east), \p2=(JMU1.west) in
    ($(\x1,\y1)$) edge node[below]  {Eq. \eqref{JMUGL3Def} } ($(\x2,\y2)$)  ;

\draw[color=black][-] let
    \p1=(JMU2.east), \p2=(Birkhoff2.west) in
    ($(\x1,\y1)$) edge node[below]  {Eq. \eqref{JMUGL2Def} } ($(\x2,\y2)$)  ;

 \draw[color=black][-] let
    \p1=(1.center), \p2=(p3.center) in
    ($(\x1,\y1)$) edge node[right,text width=10em] {Choice of Darboux coordinates $(q,p)$} ($(\x2,\y2)$) ; 
 
 \draw[color=black][-] let
    \p1=(3.center), \p2=(p4.center) in
    ($(\x1,\y1)$) edge node[left,text width=10em, align=flush right] {Choice of Darboux coordinates $(Q,P)$} ($(\x2,\y2)$) ;
 
\draw[-] (1M) -- (1MM);
\draw[-] (2M) -- (2MM);
\draw[-] (p3.center) -- (5.center);
\draw[-] (2.center) -- (5.center);
\draw[-] (maintopic.west) -- (1.center);
\draw[-] (p4.center) -- (6.center);
\draw[-] (4.center) -- (6.center);
\draw[-] (gl2side.east) -- (3.center);

\draw[-] (2.center) -- (Iso1.west);
\draw[-] (4.center) -- (Iso2.east);
\draw[-] (5.center) -- (Lax1.west);
\draw[-] (6.center) -- (Lax2.east);
\draw[->] (Lax1) edge node[left] {$G(\lambda)$ } (Oper1);
\draw[->] (Lax1) edge node[right] {Prop. \ref{LaxMatrixgl3}} (Oper1);
\draw[->] (Lax2) edge node[left] {\cite{marchal2023isomonodromic}} (Oper2);
\draw[-] (JMU1) edge node[left] {Conj. \ref{Prop2MM}} (Partition1);
\draw[-] (JMU2) edge node[right] {Conj. \ref{PartitionFunctionP4}} (Partition2);

\draw[dashed, color=red,very thick][-] let
    \p1=(Partition1.east), \p2=(Partition2.west) in
    ($(\x1,\y1)$) edge node[above] {Conj. \ref{ConjTRHMM}} ($(\x2,\y2)$)  ;

\draw[-] (Partition1.west) -- (2M.south);
\draw[-] (Partition2.east) -- (1M.south);

\draw[-] (7.center) -- (Iso1.east);
\draw[-] (8.center) -- (Oper1.east);
\draw[-] (9.center) -- (Lax1.east);
\draw[-] (7.center) -- (13.center);

\draw[->] (13) edge node [below] {Th. \ref{Defs}} (Ham1)  ;
\draw[-] (13.east) -- (13.center);

\draw[-] (10.center) -- (Iso2.west);
\draw[-] (11.center) -- (Oper2.west);
\draw[-] (12.center) -- (Lax2.west);
\draw[-] (10.center) -- (14.center);
\draw[<-] (Ham2) edge node[below] {Prop. \ref{TheoHamP4}} (14) ;
\draw[-] (14.center) -- (14.west);

\draw[dashed,color=red,very thick][-] let
    \p1=(Ham1.east), \p2=(Ham2.west) in
    ($(\x1,\y1)$) edge node[above] {Th. \ref{TheoDualHamiltonians}} ($(\x2,\y2+0.05cm)$)  ;

\draw[-] (Ham1) -- (Split1);
\draw[-] (Ham2) -- (Split2);

\draw[densely dotted, color=green,very thick][-] let
    \p1=(reduced2form1.east), \p2=(reduced2form2.west) in
    ($(\x1,\y1)$) edge node[below]  {Th. \ref{TheoCorrespondenceFundamentalTwoForms}} ($(\x2,\y2)$) ;
    \draw[-]  (2form1)  -- (p1.center);
    \draw[-]  (2form2)  -- (p2.center);

\draw[dashed,color=red,very thick][-] let
    \p1=(2form1.east), \p2=(2form2.west) in
    ($(\x1,\y1+0.4em)$) edge node[above]  {Th. \ref{TheoCorrespondenceFundamentalTwoForms}} ($(\x2,\y2+0.5em)$) ;

\draw[<-] (2form1) -- (17.center);
\draw[<-] (2form2) -- (19.center);
\draw[-] (17.center) edge node[right] {Def. \ref{DefinitionFundamentalTwoForm}} (18.center);
\draw[-] (19.center) edge node[left] {Prop. \ref{PropReductionOmega2}} (20.center);

\draw[<-] (reduceddiff1) -- (Split1);
\draw[<-] (reduceddiff2) -- (Split2);

\draw[densely dotted,color=green,very thick][-] let
    \p1=(reduceddiff1.east), \p2=(reduceddiff2.west) in
    ($(\x1,\y1-0.15em)$) edge node[above] {Th. \ref{TheoDualHamiltonians}} ($(\x2,\y2)$)  ;

\draw[densely dotted,color=green,very thick][-] (reducedHam1) -- (21.center);
\draw[densely dotted,color=green,very thick][-] (reducedHam2) -- (22.center);
\draw[densely dotted,color=green,very thick][-] let
    \p1=(21.center), \p2=(22.center) in
    ($(\x1,\y1+0.1em)$) edge node[above,align=flush left] {Th. \ref{TheoDualReducedHamiltonian}\,\,\,\,} ($(\x2,\y2)$)  ;

\draw[-] (Split1) -- (23.center);
\draw[-] (Split2) -- (24.center);

\draw[color=black][-] let
    \p1=(23.center), \p2=(reducedHam1.west) in
    ($(\x1,\y1)$) edge node[above] {Th. \ref{TheoNonTrivialEvolution}} ($(\x2,\y2+0.1em)$); 
    \draw[->] let
    \p1=(23.center), \p2=(reducedHam1.west) in  ($(\x1,\y1)$) --  ($(\x2,\y2+0.1em)$) ;
\draw[color=black][-] let
    \p1=(reducedHam2.east), \p2=(24.center) in
    ($(\x1,\y1-0.1em)$) edge node[above, midway,align=flush right] {\,\,\,\,Prop. \ref{PropReducedP4}}  ($(\x2,\y2)$)  ;
    \draw[color=black][<-] let
    \p1=(reducedHam2.east), \p2=(24.center) in
    ($(\x1,\y1-0.1em)$) --   ($(\x2,\y2)$)  ;
\draw[dashed,color=red,very thick][-] let
    \p1=(sc1.east), \p2=(sc2.west) in
    ($(\x1,\y1+1em)$) -- ($(\x2,\y2+1.1em)$) node[above, midway] {Th. \ref{TheoDualitySpecCurves}};

\draw[->] (Lax1.south) .. controls +(down:35mm) and +(right:5mm) .. (sc1.north); 
\draw[->] (Lax2.south) .. controls +(down:35mm) and +(left:5mm) .. (sc2.north);
    \draw[->] (sc1) edge node[left] {$\hbar \to 0$} (25.center);
    \draw[->] (sc2) edge node[right] {$\hbar \to 0$} (26.center);
    \draw[<-] (sc1.east)  -- (27.center);
    \draw[<-] (sc2.west)  -- (28.center);
     \draw[-] (27.center)  edge node {$(q_0,p_0) \to (q,p)$} (29.center);
    \draw[-] (28.center)  edge node {$(Q_0,P_0) \to (Q,P)$} (30.center);
    \draw[color=blue][-] (JMU1.east)  -- (33.center);
    \draw[color=blue][-] (JMU2.west)  -- (34.center);
    
\draw[color=blue][-] let
    \p1=(31.center), \p2=(Ham1.east) in
    ($(\x1,\y1-1.5em)$) edge node[above] {$\hbar = 0$}  ($(\x2,\y2-1.5em)$) node[below, midway] {};
\draw[color=blue][-] let
    \p1=(31.center), \p2=(33.center) in
    ($(\x1,\y1-1.5em)$) -- ($(\x2,\y2)$) {};
\draw[color=blue][-] let
    \p1=(Ham2.west), \p2=(32.center) in
    ($(\x1,\y1-1.5em)$) edge node[above] {$\hbar = 0$} ($(\x2,\y2-1.5em)$) node[below, midway] {};
\draw[color=blue][-] let
    \p1=(32.center), \p2=(34.center) in
    ($(\x1,\y1-1.5em)$) -- ($(\x2,\y2)$) {};
\draw[color=blue][-] let
    \p1=(Ham2.west), \p2=(32.center) in
    ($(\x1,\y1-1.5em)$) edge node[below] {Prop. $\ref{ThJMUP4}$} ($(\x2,\y2-1.5em)$) node[below, midway] {};
    \draw[color=blue][-] let
    \p1=(31.center), \p2=(Ham1.east) in
    ($(\x1,\y1-1.5em)$) edge node[below] {Th. \ref{PropJMUDifferential}}  ($(\x2,\y2-1.5em)$) node[below, midway] {};

\draw[-] (p1.center) edge node[below] {Cor. \ref{SymplecticReduction}} (reduced2form1);
\draw[-] (p2.center) edge node[below] {Prop. \ref{PropReductionOmega2}} (reduced2form2);
\draw[->] (p1.center)-- (reduced2form1);
\draw[->] (p2.center) -- (reduced2form2);

\draw[densely dotted,color=green, very thick] (2MM) -- (1MM); 

\draw[color=black][-] let
    \p1=(f1.center), \p2=(p1.center) in
    ($(\x1,\y1)$) -- ($(\x2,\y2)$) ;
\draw[color=black][-] let
    \p1=(f2.center), \p2=(p2.center) in
    ($(\x1,\y1)$) -- ($(\x2,\y2)$) ;
\draw[color=black][-] let
    \p1=(f2.center), \p2=(Split2.north) in
    ($(\x1,\y1)$) -- ($(\x2+0.8cm,\y2)$) ;
\draw[color=black][-] let
    \p1=(f1.center), \p2=(Split1.north) in
    ($(\x1,\y1)$) -- ($(\x2-0.8cm,\y2)$) ;

\draw[dashed,color=white,very thick][-] let
    \p1=(2form1.east), \p2=(2form2.west) in
    ($(\x1,\y1+0.6cm)$) edge node[above]  {\textcolor{black}{\underline{\textbf{Section \ref{SecDuality}}}}} ($(\x2,\y2+0.6cm)$) ;
      
 \draw[color=black][-] let
    \p1=(s2.center), \p2=(s2.center) in
    ($(\x1,\y1)$) edge node[right,text width=20em] {\underline{\textbf{Section \ref{secGl3}}}} ($(\x2,\y2)$) ; 

 \draw[color=black][-] let
    \p1=(s4.center), \p2=(s4.center) in
    ($(\x1,\y1)$) edge node[right,text width=20em] {\underline{\textbf{Section \ref{SecP4JM}}}} ($(\x2,\y2)$) ; 

        \end{tikzpicture}   
}           
  \caption{\textit{Summary of the symplectic structures associated with the meromorphic connections on both sides and their relations by duality.}}\label{Fig2Diagram}
\end{figure} 
\end{landscape}
\end{center}
\restoregeometry
\newpage

\section{Meromorphic connections in $\mathfrak{gl}_3(\mathbb{C})$ with an irregular pole at infinity}\label{secGl3}
Our starting point is the space of generic complete flat symplectic connections in $\mathfrak{gl}_3(\mathbb{C})$ with only one irregular unramified pole located at $\lambda=\infty$ of order $r_\infty=3$:

\begin{definition}[Space of connections] \label{DefMeroGl3}
We define
\beq
F_{\{\infty\},3} := \left\{\hat{L}(\lambda) = \sum_{k=0}^{1} \hat{L}^{[\infty,k]} \lambda^{k}
\,\, \text{ with }\,\, \{\hat{L}^{[\infty,k]}\} \in \left(\mathfrak{gl}_3\right)^2\right\}/\text{ GL}_3(\mathbb{C}) 
\eeq
where the quotient by $\text{GL}_3(\mathbb{C})$ indicates the conjugation action of the reductive group. We shall also define $\hat{F}_{\{\infty\},3} \subset F_{\{\infty\},3} $ the subset of ``generic" connections whose leading order at $\infty$ has distinct eigenvalues:
\begin{align}
\hat{F}_{\{\infty\},3} :=& \Big\{\hat{L}(\lambda) = \sum_{k=0}^{1} \hat{L}^{[\infty,k]} \lambda^{k}
\,\, \text{ with }\,\, \{\hat{L}^{[\infty,k]}\} \in \left(\mathfrak{gl}_3\right)^{2}\cr&
\,\,\,\, \text{ and }\hat{L}^{[\infty,1]} \text{ has distinct eigenvalues}
\Big\}/\text{ GL}_3(\mathbb{C})
\end{align}
Given $\hat{L}(\lambda)$ in an orbit of the space $F_{\{\infty\},3}$ , we define the ($\hbar$-dependent) horizontal sections $\hat{\Psi}(\lambda;\hbar)$ as solutions of the following linear differential system  
\beq \hbar\partial_\lambda \hat{\Psi}(\lambda)=\hat{L}(\lambda)\hat{\Psi}(\lambda)\eeq
for $\hbar\in \mathbb{C}^*$.
This system is seen as a trivialization of the connection over the trivial vector bundle over $\mathbb{P}^1$. 
\end{definition}

In this article, we shall restrict our construction to $r_\infty=3$ for the sake of simplicity. However, we believe that this construction can be extended to any value of $r_\infty$. One may also add different finite poles and study connections of higher pole orders defined on rank$-d$ vector bundles, this remains beyond the scope of the present article that can be seen as a proof of concept for these future works.

\begin{remark}In general, for a given pole order $r_\infty$, the space $\hat{F}_{\{\infty\},r_\infty}$ may be endowed with a \textit{Poisson structure} inherited from the Poisson structure of a corresponding loop algebra. This turns our space into a \textit{Poisson manifold} of dimension
    \begin{align}
        \text{dim } \hat{F}_{\{\infty\},r_\infty} = 9 r_\infty - 17
    \end{align}
    This dimension is computed by exerting the gauge action and observing the remaining free entries of the Lax matrix.
\end{remark}

\begin{remark}The parameter $\hbar$ is not necessary to derive the Hamiltonian structure of isomonodromic deformations associated with $\hat{F}_{\{\infty\},3}$ (in which case one can take it equal to $\hbar=1$). However, it plays an important role in the duality studied in this paper and in relation with the quantization of classical spectral curves using the topological recursion. We also recall that the formal parameter $\hbar$ can be introduced from the standard setup corresponding to $\hbar=1$ by a specific rescaling of the irregular times, monodromies and wave matrices as described in \cite{marchal2023hamiltonian} (See \autoref{Remark23} for the present case). Apart from places where it is important for the discussion, we shall often omit $\hbar$ in the set of parameters of the various functions defined in this article.
\end{remark}

Since the connection is assumed to be generic, in other words $\hat{L}(\lambda) \in \hat{F}_{\{ \infty \},3}$, one can use the well-known result that $\hat{L}(\lambda)$ has a Birkhoff factorization (or formal normal solution or Turrittin-Levelt fundamental form) \cite{Birkhoff,Wasowbook} i.e. there exists a gauge matrix $G_\infty(\lambda)\in \text{GL}_{3}(\mathbb{C})$ holomorphic in a neighborhood of $\infty$, of the form
\beq G_\infty(\lambda)=\sum_{k=0}^{\infty} G_{\infty,k} \lambda^{-k} \eeq
such that $\Psi_\infty(\lambda):=G_\infty(\lambda) \hat{\Psi}(\lambda)$ is given by
\begin{align}\label{PsiTT} &\Psi_\infty(\lambda)= \Psi_\infty^{(\text{reg})}(\lambda)\exp\Big[\text{diag}\Bigg(\frac{1}{\hbar}\left(\frac{t_{\infty^{(1)},2}}{2}\lambda^2+t_{\infty^{(1)},1}\lambda+t_{\infty^{(1)},0}\ln \lambda\right),\cr&
\frac{1}{\hbar}\left(\frac{t_{\infty^{(2)},2}}{2}\lambda^2+t_{\infty^{(2)},1}\lambda+  t_{\infty^{(2)},0}\ln \lambda\right), \frac{1}{\hbar}\left(\frac{t_{\infty^{(3)},2}}{2}\lambda^2+t_{\infty^{(3)},1}\lambda+  t_{\infty^{(3)},0}\ln \lambda\right) \Big)\Bigg]\end{align}
where $\Psi_\infty^{(\text{reg})}(\lambda)$ is holomorphic around $\lambda=\infty$. This Turrittin-Levelt fundamental form provides a Lax matrix $L_\infty(\lambda)$ satisfying 
\begin{align} \label{2.7}
 L_\infty(\lambda) := &\, G_\infty(\lambda)\hat{L}(\lambda) G_\infty^{-1}(\lambda) +\hbar\partial_\lambda (G_\infty(\lambda))G_\infty^{-1}(\lambda)  \\ \nonumber
 = &\, \frac{d}{ d\lambda}Q_\infty( z_\infty(\lambda)) + \Lambda_\infty \frac{ z'_\infty(\lambda)}{ z_\infty(\lambda)} + O\left( (z'_\infty(\lambda))^{-1}\right)
 \,\,\text{where}\,\, Q_\infty( X) := \sum_{k=1}^{2} \frac{Q_{\infty,k}}{ X^{ k}} 
\end{align} 
with $z_\infty(\lambda):=\lambda^{-1}$ the local coordinate at infinity and
\beq 
\forall \, k\in \llbracket 1,2\rrbracket\,:\, Q_{\infty,k} :=  \diag \left( \frac{t_{\infty^{(1)},k}}{k}, \frac{t_{\infty^{(2)},k}}{k},  \frac{t_{\infty^{(3)},k}}{k} \right) \,\,\text{ , }\, \Lambda_\infty := -\diag(t_{\infty^{(1)},0},t_{\infty^{(2)},0},t_{\infty^{(3)},0})
\eeq
This implies that the local connection $L_\infty$ at infinity has the form:
\footnotesize{\bea
    L_\infty(\lambda )&=&  \text{diag}\left(t_{\infty^{(1)},2}\lambda+t_{\infty^{(1)},1}+\frac{t_{\infty^{(1)},0}}{\lambda}, t_{\infty^{(2)},2}\lambda+t_{\infty^{(2)},1}+  \frac{t_{\infty^{(2)},0}}{\lambda},t_{\infty^{(3)},2}\lambda+t_{\infty^{(3)},1}+  \frac{t_{\infty^{(3)},0}}{\lambda}\right) \cr
    &&+O\left(\lambda^{-2}\right)
\eea}

\normalsize{The} parameters $\mathbf{t}:=\left(t_{\infty^{(1)},2},t_{\infty^{(2)},2},t_{\infty^{(3)},2},t_{\infty^{(1)},1},t_{\infty^{(2)},1},t_{\infty^{(3)},1}\right)$ are usually called ``irregular times'' while the residues $\mathbf{t_0}:=\left(t_{\infty^{(1)},0},t_{\infty^{(2)},0},t_{\infty^{(3)},0}\right)$ are commonly known as the ``monodromies''. These sets define the ``generalized monodromy data" as in \cite{Boalch2001}. Note that the monodromies always satisfy:
\beq \label{Monodromiescondition}
t_{\infty^{(1)},0}+t_{\infty^{(2)},0}+t_{\infty^{(3)},0}=0\eeq

The assumption that the connection is generic, i.e. that the eigenvalues of the leading order are distinct implies that $t_{\infty^{(1)},2}\neq t_{\infty^{(2)},2}$, $t_{\infty^{(1)},2}\neq t_{\infty^{(3)},2}$ and $t_{\infty^{(2)},2}\neq t_{\infty^{(3)},2}$. 

Our strategy consists in building the explicit Hamiltonian system corresponding to the isomonodromic deformations following the strategy of \cite{marchal2023hamiltonian} and using the apparent singularities and their dual partner on the spectral curve as Darboux coordinates as initiated by the Japanese school for the Painlev\'{e} equations \cite{JimboMiwa,JimboMiwaUeno}. This approach heavily relies on the geometric knowledge at the pole, described by the set of irregular times that provides the natural space for isomonodromic deformations. In order to have explicit formula for the connection, we shall first choose a specific representative (this choice does not change the symplectic structure). Then, we shall also rewrite the meromorphic connection in its \textit{oper gauge} which is equivalent to the so-called ``quantum curve" or equivalently to the irregular Garnier system associated with the meromorphic connection. This gauge has many advantages, the first one being that it does not depend on the choice of representative in the orbit, and the most important of which is that the zero-curvature equation (also known as the compatibility equation or the isomonodromy equation) becomes much easier to handle since only $3$ entries of the Lax matrix remain to determine instead of nine.

\begin{remark}\label{Remark23} One may also introduce (or remove) the parameter $\hbar$ by a simple rescaling of the irregular times, monodromies and $\lambda$:
\bea\label{Rescaling} \lambda &\to& \hbar^{-1} \lambda,\cr
\forall \, k\in \llbracket 0,2\rrbracket\,:\, t_{\infty^{(i)},k} &\to& \hbar^{k-1}t_{\infty^{(i)},k}.
\eea  
\end{remark}

\subsection{Representative normalized at infinity}\label{SectionNormalized}
In order to have explicit formulas, we shall choose a specific representative in the orbit although it is well-known that the symplectic structure is independent of this choice. Using the conjugation action and the fact that the connection has a diagonalizable leading order, it is straightforward to see that there exists a unique element in the orbit admitting a diagonal leading order and a subleading order satisfying $\hat{L}^{[\infty,0]}[1,2]=\hat{L}^{[\infty,0]}[1,3]=1$. We shall denote by $\td{L}(\lambda)\in \hat{F}_{\{\infty\},3}$ this representative normalized at infinity. Combining this with the Turrittin-Levelt fundamental form, we obtain
\beq \td{L}(\lambda)=\diag\left(t_{\infty^{(1)},2},t_{\infty^{(2)},2},t_{\infty^{(3)},2}\right)\lambda +\td{L}^{[\infty,0]} \eeq
with 
\begin{align}
    \td{L}^{[\infty,0]} = & \begin{pmatrix} \beta_{1,1}&1&1\\  \beta_{2,1}& \beta_{2,2}& \beta_{2,3}\\  \beta_{3,1}& \beta_{3,2}&  \beta_{3,3} \end{pmatrix}
\end{align}
In the rest of the article, we shall use the notation $\td{L}(\lambda)$ to indicate the unique representative in the orbit having the above normalization at infinity.

Using the local diagonalization at infinity: $G_\infty(\lambda)=I_3+G_1\lambda^{-1}+O\left(\lambda^{-2}\right)$, it is easy to observe that
\beq \beta_{1,1}=t_{\infty^{(1)},1}\, \,,\,\, \beta_{2,2}=t_{\infty^{(2)},1}\, \,,\,\,\beta_{3,3}=t_{\infty^{(3)},1}
\eeq
so that only four entries in the Lax matrix remain to determine.

\subsection{Isomonodromic deformations and Darboux coordinates}
Let us define the base $\mathbb{B}$ as the set of irregular times satisfying the condition
\begin{align}
\mathbb{B} :=& \bigg\{\mathbf{t}=\left(t_{\infty^{(1)},2},t_{\infty^{(2)},2},t_{\infty^{(3)},2},t_{\infty^{(1)},1},t_{\infty^{(2)},1},t_{\infty^{(3)},1}\right)\in \mathbb{C}^{6} \cr  & \,\,/\,\,   t_{\infty^{(1)},2}\neq t_{\infty^{(2)},2} \,\,,\,\, t_{\infty^{(1)},2}\neq t_{\infty^{(3)},2} \,\,,\,\,t_{\infty^{(2)},2}\neq t_{\infty^{(3)},2}     
\bigg\}.
\end{align}
The space $\mathbb{B}$ is the space of isomonodromic deformations. Indeed, we recall that isomonodromic deformations are the horizontal lifts of parameter changes in the bundle of meromorphic connections, with flat connection defined by monodromy preservation. In our case, it means that any vector field $\partial_t \in T_{\mathbf t} \mathbb{B}$ gives rise to a deformation of $\tilde{L}(\lambda)$ preserving its generalized monodromy data.\footnote{Note that in the case of poles located at positions different than $\infty$ one needs to consider, in addition to deformations relatively to the irregular times, the standard deformation relatively to the position of the poles.}\\
Indeed, defining 
\beq \hat{F}_{\{\infty\},3,\mathbf{t_0}}:= \left\{\hat{L}(\lambda)\in \hat{F}_{\{\infty\},3}\,/\, \hat{L}(\lambda) \text{ has monodromies } \mathbf{t_0}\right\} \subset \hat{F}_{\{\infty\},3}\eeq
we have that above a point $\mathbf{t} \in \mathbb{B}$ the fiber is given by $\hat{\mathcal{M}}_{\{\infty\},3,\mathbf{t},\mathbf{t}_0}$ corresponding to elements of $\hat{F}_{\{\infty\},3,\mathbf{t_0}}$ with irregular times $\mathbf{t}$. Moreover, let us recall \cite{HarnadHurtubise,BoalchThesis,Harnad_1994} that
\beq
\hat{\mathcal{M}}_{\{\infty\},3,\mathbf{t},\mathbf{t_0}} =\left\{\hat{L}(\lambda) \in \hat{F}_{\{\infty\},3}\,\,/\,\, \hat{L}(\lambda) \,\text{ has irregular times } \mathbf{t}  \,\text{ and monodromies } \mathbf{t_0}\right\}
\eeq
is a symplectic manifold of dimension $2g$, where $g$ is the genus of the spectral curve defined by the algebraic equation $\det(yI_3-\hat{L}(\lambda) ) = 0$, that can be equipped with spectral Darboux coordinates $\left(q_i,p_i\right)_{1\leq i\leq g}$. The symplectic space is the phase space of the evolution equations of the Darboux coordinates \cite{Boalch2001}. In our case since $r_\infty=3$, we have $g=1$ hence generating a non-trivial symplectic structure for the fiber and we shall denote $(q,p)$ the Darboux coordinates for simplicity. For generic values of the irregular times $\mathbf{t}$ and monodromies $\mathbf{t_0}$, we define the Darboux coordinates $(q,p)$ in a natural way.
Let $\tilde{L}(\lambda) \in \hat{\mathcal{M}}_{\{\infty\},3,\mathbf{t},\mathbf{t}_0}$ be the unique representative normalized at infinity as explained previously. This representative admits a unique gauge transformation $\Psi(\lambda)=G(\lambda) \td{\Psi}(\lambda)$ giving rise to a Lax matrix $L(\lambda):= G(\lambda)\td{L}(\lambda)G(\lambda)^{-1}+\hbar (\partial_\lambda G(\lambda))G(\lambda)^{-1}$ that takes an oper form i.e. $L(\lambda)$ is of the form
\begin{align} \label{oper}
    L(\lambda)=\begin{pmatrix}0&1&0\\ 0&0&1\\ L_{3,1}(\lambda) & L_{3,2}(\lambda)& L_{3,3}(\lambda) \end{pmatrix}
\end{align}
which is equivalent to the so-called ``quantum curve" for the first line of $\td{\Psi}(\lambda)$:
\beq \label{quantumcurve} \left[(\hbar\partial_\lambda)^3-L_{3,3}(\lambda)(\hbar\partial_\lambda)^2- L_{3,2}(\lambda)\hbar\partial_\lambda-L_{3,1}(\lambda)\right]\Psi_{1,j}(\lambda)=0 \,,\,\, \forall \, j\in \llbracket 1,3\rrbracket
\eeq
Note that the first line of the gauge matrix $G(\lambda)$ is necessarily $(1,0,0)$ since the gauge transformation preserves the first line of $\td{\Psi}(\lambda)$.

\begin{remark}\label{RemarkWeyl}One may choose to write the quantum curve satisfied by the second (resp. third line) of $\td{\Psi}(\lambda)$. This also provides a companion-like form $L_2(\lambda)$ (resp. $L_3(\lambda)$) and the first line of the gauge matrix $G_2$ (resp. $G_3$) would be $(0,1,0)$ (resp. $(0,0,1)$). In fact, this is equivalent to act by the matrix $S_{1,2}:=\begin{pmatrix}0&1&0\\ 1&0&0\\0&0&1 \end{pmatrix}$ (resp. $S_{1,3}= \begin{pmatrix}0&0&1\\ 0&1&0\\1&0&0 \end{pmatrix}$): $G_2(\lambda)=G_1(\lambda) S_{1,2}$ (resp. $G_3(\lambda)=G_1(\lambda) S_{1,3}$). At the level of Lax matrices, this implies to act by conjugation on $\td{L}(\lambda)$ by $S_{1,2}$ (resp. $S_{1,3}$), i.e. to choose a different representative in the orbit corresponding to exchange irregular times and monodromies of sheet $1$ with those of sheet $2$ (resp. sheet $3$). Of course, this does not affect the symplectic structure since the construction is independent of the choice of representative and this recovers the fact that the symplectic structure is invariant by the action of the Weyl group which in our present situation is just the permutation group $\mathbb{S}_3$. 
\end{remark}

In the process, the coefficients $(L_{3,j}(\lambda))_{1\leq j\leq 3}$ remain rational functions with a pole at infinity but also with apparent singularities (i.e. simple poles that are not singularities of the wave matrix). This suggests using the apparent singularities as natural Darboux coordinates. In our case, since $g=1$, there is only $1$ apparent singularity that we shall denote $q$. We complement it with its dual partner on the spectral curve by defining $p$ such that
 \beq p=\frac{1}{\hbar}\underset{\lambda \to q}{\Res} L_{3,2}(\lambda) +\left(t_{\infty^{(1)},2}+t_{\infty^{(2)},2}+t_{\infty^{(3)},2}\right)q+ t_{\infty^{(1)},1}+t_{\infty^{(2)},1}+t_{\infty^{(3)},1}\eeq
which is such that
\beq 
 \det(p\, I_3- \td{L}(q)) = 0.
\eeq
so that $(q,p)$ is a point on the spectral curve $S(\lambda,y):=\det(y\, I_3-\td{L}(\lambda))=0$.

\begin{remark}Note that $q$ would not change if we had chosen the second (resp. third) line of $\td{\Psi}(\lambda)$ to write the quantum curve. Indeed, $q$ is defined as the zero of the determinant of the gauge matrix $G_1(\lambda)$ and since $G_2(\lambda)=G_1(\lambda) S_{1,2}$ (resp. $G_3(\lambda)=G_1(\lambda) S_{1,3}$), the determinant of $G_1(\lambda)$ and $G_2(\lambda)$ have the same zeros.
\end{remark}

\begin{remark} Note that the introduction/removal of $\hbar$ is straightforward at the level of the apparent singularities since it corresponds to 
\beq (q,p) \to \left(\hbar^{-1}q,\hbar\, p\right)  
\eeq
\end{remark}

\subsection{Explicit expressions of the Lax matrices}
In this section, we provide the explicit gauge transformation to transform the Lax matrix $\td{L}(\lambda)$ normalized at infinity according to \autoref{SectionNormalized} into a companion-like (i.e. oper form) Lax matrix $L(\lambda)$. We also provide the explicit expressions for the matrices $L(\lambda)$ and $\td{L}(\lambda)$. We recall that we take a representative $\td{L}(\lambda)$ normalized at infinity and we perform the gauge transformation to obtain
\begin{align}
 L(\lambda) := & G(\lambda)\td{L}(\lambda)G^{-1}(\lambda) +\hbar\partial_\lambda (G(\lambda))G^{-1}(\lambda)
\end{align}
where $L(\lambda)$ has the oper form described in \eqref{oper}. Using the local diagonalization at infinity, it allows us to obtain the entries of the gauge matrix in terms of the entries of $\td{L}(\lambda)$ as well as the expression of the oper Lax matrices in terms of the Darboux coordinates $(q,p)$. We find:

\begin{proposition} [Lax matrix and gauge matrix]\label{LaxMatrixgl3}\sloppy{The gauge matrix $G(\lambda)$ such that $\Psi(\lambda)=G(\lambda) \td{\Psi}(\lambda)$ is given by
    \beq G(\lambda)=\begin{pmatrix} 1 & 0 & 0 \\ t_{\infty^{(1)},2} \lambda + t_{\infty^{(1)},1} & 1 & 1 \\ G_{3,1}(\lambda) &   G_{3,2}(\lambda) &   G_{3,3}(\lambda)
    \end{pmatrix}\eeq
    where the last line of the gauge matrix is given by
    \small{\begin{align}
        G_{3,1}(\lambda) =& \left(t_{\infty^{(1)},2}\lambda+t_{\infty^{(1)},1}\right)^2-p^2+ \left( \left( t_{\infty^{(2)},2}+t_{\infty^{(3)},2} \right)q+ t_{\infty^{(2)},1} +t_{\infty^{(3)},1}   \right)p\cr
        &- t_{\infty^{(2)},2}t_{\infty^{(3)},2}q^2-\left( t_{\infty^{(2)},2}t_{\infty^{(3)},1} +t_{\infty^{(3)},2}t_{\infty^{(2)},1}  \right)q\cr
        &+ t_{\infty^{(1)},2} t_{\infty^{(1)},0}+t_{\infty^{(2)},2} t_{\infty^{(2)},0}+t_{\infty^{(3)},2} t_{\infty^{(3)},0}- t_{\infty^{(2)},1} t_{\infty^{(3)},1}+\hbar t_{\infty^{(1)},2}
        \cr
        G_{3,2}(\lambda) =& \left( t_{\infty^{(1)},2} + t_{\infty^{(2)},2} \right) \lambda + t_{\infty^{(3)},2} q  -p +t_{\infty^{(1)},1}+t_{\infty^{(2)},1}+t_{\infty^{(3)},1} \cr
         G_{3,3}(\lambda)=&  \left( t_{\infty^{(1)},2} + t_{\infty^{(3)},2} \right) \lambda + t_{\infty^{(2)},2} q  -p +t_{\infty^{(1)},1}+t_{\infty^{(2)},1}+t_{\infty^{(3)},1}
    \end{align}}
\normalsize{The} Lax matrix normalized at infinity is given by:    
    \begin{align}
    \tilde{L}(\lambda) =  \diag\left(t_{\infty^{(1)},2},t_{\infty^{(2)},2},t_{\infty^{(3)},2}\right)\lambda +\td{L}^{[\infty,0]}
\end{align}
where $\td{L}^{[\infty,0]}$ has the following form
\begin{align}
    \td{L}^{[\infty,0]}=\begin{pmatrix} t_{\infty^{(1)},1}  & 1  & 1  \\ \left[\td{L}^{[\infty,0]}\right]_{2,1}
    & t_{\infty^{(2)},1} & -p+ t_{\infty^{(2)},2}q + t_{\infty^{(2)},1} \\  
   \left[\td{L}^{[\infty,0]}\right]_{3,1} &-p+  t_{\infty^{(3)},2}q + t_{\infty^{(3)},1}& t_{\infty^{(3)},1}  \end{pmatrix}
\end{align}
with
\begin{align}
\left[\td{L}^{[\infty,0]}\right]_{2,1}=&\frac{t_{\infty^{(1)},2} -t_{\infty^{(2)},2}}{t_{\infty^{(2)},2} - t_{\infty^{(3)},2}}  \bigg(p^2-\left((t_{\infty^{(2)},2}+t_{\infty^{(3)},2})q+t_{\infty^{(2)},1}+t_{\infty^{(3)},1}  \right)p+ t_{\infty^{(2)},2}t_{\infty^{(3)},2}q^2\cr
&+(t_{\infty^{(2)},2}t_{\infty^{(3)},1}+t_{\infty^{(3)},2}t_{\infty^{(2)},1})q+ t_{\infty^{(2)},1}t_{\infty^{(3)},1}-t_{\infty^{(2)},0}(t_{\infty^{(2)},2}-t_{\infty^{(3)},2})\bigg)\cr
\left[\td{L}^{[\infty,0]}\right]_{3,1}=& -\frac{t_{\infty^{(1)},2} -t_{\infty^{(3)},2}}{t_{\infty^{(2)},2} - t_{\infty^{(3)},2}}  \bigg(
p^2-\left((t_{\infty^{(2)},2}+t_{\infty^{(3)},2})q+t_{\infty^{(2)},1}+t_{\infty^{(3)},1}  \right)p+ t_{\infty^{(2)},2}t_{\infty^{(3)},2}q^2\cr
&+(t_{\infty^{(2)},2}t_{\infty^{(3)},1}+t_{\infty^{(3)},2}t_{\infty^{(2)},1})q+ t_{\infty^{(2)},1}t_{\infty^{(3)},1}+t_{\infty^{(3)},0}(t_{\infty^{(2)},2}-t_{\infty^{(3)},2})
\bigg)
\end{align}
The Lax matrix $L(\lambda)$ in the oper gauge is characterized by its non-trivial last line:
\begin{align}
 L_{3,3}(\lambda)=& \frac{\hbar}{\lambda-q} + P_1(\lambda)
 \cr
L_{3,2}(\lambda)=&\frac{\hbar\left( p - P_1(q) \right)}{\lambda-q}  - P_2(\lambda)  +\hbar t_{\infty^{(1)},2} \cr
L_{3,1}(\lambda)= & \frac{\hbar\left( p^2 - P_1(q) p + P_2(q) - \hbar t_{\infty^{(1)},2}  \right)}{\lambda-q}   + P_3(\lambda) + p^3  - P_1(q) p^2 
+ (P_2(q)-\hbar t_{\infty^{(1)},2}) p \cr&
- P_3(q)-\hbar(t_{\infty^{(2)},2}+t_{\infty^{(3)},2})t_{\infty^{(1)},2}(\lambda-q)
\end{align} 
where the polynomials $P_1(\lambda)$, $P_2(\lambda)$ and $P_3(\lambda)$ are independent of the Darboux coordinates and admit the following expressions 
\small{\begin{align}
P_1(\lambda):=&\left(t_{\infty^{(1)},2}+t_{\infty^{(2)},2}+t_{\infty^{(3)},2}\right)\lambda+ t_{\infty^{(1)},1}+t_{\infty^{(2)},1}+t_{\infty^{(3)},1}\cr  
P_2(\lambda) :=&\left(t_{\infty^{(1)},2}t_{\infty^{(2)},2}+t_{\infty^{(1)},2}t_{\infty^{(3)},2}+t_{\infty^{(2)},2}t_{\infty^{(3)},2}\right)\lambda^2\cr
&+ \left(t_{\infty^{(1)},2}(t_{\infty^{(2)},1}+t_{\infty^{(3)},1})+t_{\infty^{(2)},2}(t_{\infty^{(1)},1}+t_{\infty^{(3)},1})+t_{\infty^{(3)},2}(t_{\infty^{(1)},1}+t_{\infty^{(2)},1}) \right)\lambda\cr
&-\left(t_{\infty^{(1)},2}t_{\infty^{(1)},0}+t_{\infty^{(2)},2}t_{\infty^{(2)},0}+t_{\infty^{(3)},2}t_{\infty^{(3)},0}\right)+t_{\infty^{(1)},1}t_{\infty^{(2)},1}+t_{\infty^{(1)},1}t_{\infty^{(3)},1}+t_{\infty^{(2)},1}t_{\infty^{(3)},1}\cr
P_3(\lambda) :=&   t_{\infty^{(1)},2}t_{\infty^{(2)},2}t_{\infty^{(3)},2} \lambda^3  +\left( t_{\infty^{(1)},2} t_{\infty^{(2)},2} t_{\infty^{(3)},1} +t_{\infty^{(1)},2} t_{\infty^{(3)},2}t_{\infty^{(2)},1}   + t_{\infty^{(2)},2} t_{\infty^{(3)},2} t_{\infty^{(1)},1}   \right)\lambda^2\cr
    & +\bigg( t_{\infty^{(1)},2} t_{\infty^{(2)},2} t_{\infty^{(3)},0}+t_{\infty^{(1)},2} t_{\infty^{(3)},2} t_{\infty^{(2)},0}\cr
    &+t_{\infty^{(2)},2} t_{\infty^{(3)},2} t_{\infty^{(1)},0}
+ t_{\infty^{(1)},2} t_{\infty^{(2)},1} t_{\infty^{(3)},1}+t_{\infty^{(2)},2} t_{\infty^{(1)},1} t_{\infty^{(3)},1}+t_{\infty^{(3)},2} t_{\infty^{(1)},1} t_{\infty^{(2)},1}
\bigg)\lambda\cr
\end{align}}\normalsize{}}
\end{proposition}

\begin{proof}The proof for the gauge matrix $G(\lambda)$ follows from direct computations. The asymptotic expansion at infinity of the wave matrix $\Psi(\lambda)$ (whose second and third lines are just the first and second derivatives of the first line) together with our choice of Darboux coordinates determines the matrix $L(\lambda)$. Finally, the gauge transformation determines $\td{L}(\lambda)$ from $L(\lambda)$ and $G(\lambda)$.
\end{proof}

\begin{remark}\label{rem2.7}
The general form of the gauge matrix $G(\lambda)$ in arbitrary dimension was already derived in \cite{Quantization_2021}. Our result is compatible with this result keeping in mind that the normalization at infinity is different between both formalisms. Indeed, we have
\beq
G(\lambda)^{-1}=G_{\text{norm}}(\lambda) J(\lambda) \eeq
with $J(\lambda)$ derived in \cite{Quantization_2021} and given by
\beq J(\lambda):=\begin{pmatrix}1&0&0\\ 0&1&0\\
\frac{p^2-P_1(q)p+P_2(q)-\hbar t_{\infty^{(1)},2}}{\lambda-q}&  \frac{p-P_1(q)}{\lambda-q} & \frac{1}{\lambda-q}\end{pmatrix}
\eeq
while the matrix $G_{\text{norm}}(\lambda)$ stands for our different choice of normalization at infinity and is given by
\beq G_{\text{norm}}(\lambda):=\begin{pmatrix}1&0&0\\
\left[G_{\text{norm}}(\lambda)\right]_{2,1}& -\frac{t_{\infty^{(1)},2}+t_{\infty^{(3)},2}}{t_{\infty^{(2)},2}-t_{\infty^{(3)},2}} & \frac{1}{t_{\infty^{(2)},2}-t_{\infty^{(3)},2}} \\
\left[G_{\text{norm}}(\lambda)\right]_{3,1}&\frac{t_{\infty^{(1)},1}+t_{\infty^{(2)},2}}{t_{\infty^{(2)},2}-t_{\infty^{(3)},2}}& -\frac{1}{t_{\infty^{(2)},2}-t_{\infty^{(3)},2}}
\end{pmatrix}\eeq
with
\small{\begin{align}
    \left[G_{\text{norm}}(\lambda)\right]_{2,1}=&\frac{t_{\infty^{(1)},2}t_{\infty^{(3)},2}\lambda +\left( (t_{\infty^{(2)},2}+t_{\infty^{(3)},2})q-p+t_{\infty^{(2)},1}+t_{\infty^{(3)},1}\right)t_{\infty^{(1)},2}+t_{\infty^{(1)},1}t_{\infty^{(3)},2}}{t_{\infty^{(2)},2}-t_{\infty^{(3)},2}}\\
\left[G_{\text{norm}}(\lambda)\right]_{3,1}=&-\frac{t_{\infty^{(1)},2}t_{\infty^{(2)},2}\lambda +\left( (t_{\infty^{(2)},2}+t_{\infty^{(3)},2})q-p+t_{\infty^{(2)},1}+t_{\infty^{(3)},1}\right)t_{\infty^{(1)},2}+t_{\infty^{(1)},1}t_{\infty^{(2)},2}}{t_{\infty^{(2)},2}-t_{\infty^{(3)},2}} 
\end{align}}
\normalsize{}  
\end{remark}

\subsection{Monodromy-preserving deformations and solving the compatibility equation}\label{SecGL3Iso}
\subsubsection{General monodromy-preserving deformations}
As the name implies, monodromy-preserving deformations, or equivalently isomonodromic deformations, are deformations that preserve the monodromies of the system. The set of irregular times introduced in the previous sections is a natural set of deformation parameters in $F_{\{\infty\},3}$. Thus, we define deformations with respect to these parameters as follows
\begin{definition}\label{GL3DefIsoDeform}[Space of isomonodromic deformations]
    We define the following general deformation operators in $\hat{F}_{\{\infty\},3}$: 
    \begin{align}
    \label{GeneralDeformationsDefinition} \mathcal{L}_{\boldsymbol{\alpha}}=\hbar \sum_{i=1}^3\sum_{k=1}^{2} \alpha_{\infty^{(i)},k} \partial_{t_{\infty^{(i)},k}}
    \end{align}
where we define the vector $\boldsymbol{\alpha}\in \mathbb{C}^{6}$ by
\beq \boldsymbol{\alpha}= \sum_{i=1}^3\sum_{k=1}^{2} \alpha_{\infty^{(i)},k}\mathbf{e}_{\infty^{(i)},k}
\eeq
with $(\mathbf{e}_{\infty^{(i)},k})_{i,k}$ the canonical basis of $\mathbb{C}^6$. 
\end{definition}

Note that the coefficients of $\boldsymbol{\alpha}$ may also depend on the set of monodromies of the system, namely $\left(t_{\infty^{(i)},0} \right)_{1 \leq i \leq 3}$.

When considering isomonodromic deformations, we associate auxiliary matrices $\td{A}_{\boldsymbol{\alpha}}(\lambda)$ and $A_{\boldsymbol{\alpha}}(\lambda)$ (depending on the gauge) by 
\beq \label{Auxiliary}  \td{A}_{\boldsymbol{\alpha}}(\lambda):=\mathcal{L}_{\boldsymbol{\alpha}}[\td{\Psi}(\lambda)]\td{\Psi}(\lambda)^{-1}\,\,\,,\,\,\,
A_{\boldsymbol{\alpha}}(\lambda):=\mathcal{L}_{\boldsymbol{\alpha}}[\Psi(\lambda)]\Psi(\lambda)^{-1}
\eeq
In other words:
\bea \mathcal{L}_{\boldsymbol{\alpha}}[\td{\Psi}(\lambda)]=\td{A}_{\boldsymbol{\alpha}}(\lambda)\td{\Psi}(\lambda)\,\,,\,\, \mathcal{L}_{\boldsymbol{\alpha}}[\Psi(\lambda)]=A_{\boldsymbol{\alpha}}(\lambda)\Psi(\lambda)
\eea 

In particular, $\td{A}_{\boldsymbol{\alpha}}(\lambda)$ has entries that are rational functions of $\lambda$ with poles only at $\{ \infty \}$ while $A_{\boldsymbol{\alpha}}(\lambda)$ may also admit additional simple poles at the apparent singularities.  

The compatibility of the differential systems in $\lambda$ and $\mathcal{L}_{\boldsymbol{\alpha}}$ is equivalent to the compatibility equations or isomonodromy equations (also known as the zero-curvature equation):

\bea \label{CompatibilityEquation}\mathcal{L}_{\boldsymbol{\alpha}}[L(\lambda)]&=&\hbar\partial_\lambda A_{\boldsymbol{\alpha}}(\lambda)+\left[A_{\boldsymbol{\alpha}}(\lambda),L(\lambda)\right]\cr
\Leftrightarrow\, \mathcal{L}_{\boldsymbol{\alpha}}[\td{L}(\lambda)]&=&\hbar\partial_\lambda \td{A}_{\boldsymbol{\alpha}}(\lambda)+\left[\td{A}_{\boldsymbol{\alpha}}(\lambda),\td{L}(\lambda)\right].
\eea

The system (\eqref{Auxiliary}) allows us to extract information on the general form of the auxiliary matrices from the expansion of the wave matrix around the pole at infinity. This knowledge is then used and complemented by the zero-curvature equation, which encodes the flatness of the connections that we are investigating, to provide the evolutions of the Darboux coordinates. We shall recover that these evolutions are indeed Hamiltonian as proved in the general context in \cite{Boalch2001,Boalch2012,Yamakawa2017TauFA}.

\begin{remark}\label{EquivalenceAtdA}
    One should keep in mind that the Lax pairs $\left(\td{L}(\lambda),\td{A}_{\boldsymbol{\alpha}}(\lambda)\right)$ and $\left(L(\lambda),A_{\boldsymbol{\alpha}}(\lambda)\right)$ are equivalent, in the sense that they admit the same deformations and provide the same Hamiltonian system but are just expressed in two different gauges. In particular, one can recover $\td{A}_{\boldsymbol{\alpha}}(\lambda)$ from $A_{\boldsymbol{\alpha}}(\lambda)$ by the usual formula:
    \begin{align} \td{A}_{\boldsymbol{\alpha}}(\lambda)=&G(\lambda)^{-1}A_{\boldsymbol{\alpha}}(\lambda)G(\lambda)-\hbar G(\lambda)^{-1} \mathcal{L}_{\boldsymbol{\alpha}}[G(\lambda)]\cr\Leftrightarrow A_{\boldsymbol{\alpha}}(\lambda)=&G(\lambda)\td{A}_{\boldsymbol{\alpha}}(\lambda)G(\lambda)^{-1}+\hbar  \mathcal{L}_{\boldsymbol{\alpha}}[G(\lambda)]G(\lambda)^{-1}
    \end{align}
\end{remark}

\subsubsection{Compatibility equations and Hamiltonian evolutions}
In this section, we consider the zero-curvature equation that provides the flows of the Darboux coordinates. There exist two equivalent ways to obtain the formulas. The first strategy is to use the local diagonalization at infinity to get that the auxiliary matrix $\td{A}_{\boldsymbol{\alpha}}(\lambda)$ is polynomial in $\lambda$ of the form
\beq \td{A}_{\boldsymbol{\alpha}}(\lambda)=\td{A}_{\boldsymbol{\alpha}}^{[\infty,2]}\lambda^2+\td{A}_{\boldsymbol{\alpha}}^{[\infty,1]}\lambda+\td{A}_{\boldsymbol{\alpha}}^{[\infty,0]}\eeq
and then use the compatibility equation 
\beq \mathcal{L}_{\boldsymbol{\alpha}}[\td{L}(\lambda)]=\hbar\partial_\lambda \td{A}_{\boldsymbol{\alpha}}(\lambda)+[\td{A}_{\boldsymbol{\alpha}}(\lambda),\td{L}(\lambda)]\eeq
together with the expression of $\td{L}(\lambda)$ of \autoref{LaxMatrixgl3} to determine entries of $\td{A}_{\boldsymbol{\alpha}}(\lambda)$ and the evolution of $(q,p)$ relatively to $\mathcal{L}_{\boldsymbol{\alpha}}$.

The second strategy is to study the compatibility equation in the oper gauge 
\beq \mathcal{L}_{\boldsymbol{\alpha}}[L(\lambda)]=\hbar\partial_\lambda A_{\boldsymbol{\alpha}}(\lambda)+\left[A_{\boldsymbol{\alpha}}(\lambda),L(\lambda)\right]\eeq
In particular, because the matrix $L(\lambda)$ is companion-like then one may obtain the second and third lines of $A_{\boldsymbol{\alpha}}(\lambda)$ in terms of its first line.
\begin{align}\label{SecondThirdLineA}
    [A_{\boldsymbol{\alpha}} (\lambda)]_{2,1} =& \hbar \partial_\lambda [A_{\boldsymbol{\alpha}}(\lambda)]_{1,1} + [A_{\boldsymbol{\alpha}}(\lambda)]_{1,3} L_{3,1}(\lambda) \cr
    [A_{\boldsymbol{\alpha}}(\lambda)]_{2,2} =& \hbar \partial_\lambda [A_{\boldsymbol{\alpha}}(\lambda)]_{1,2} +  [A_{\boldsymbol{\alpha}}(\lambda)]_{1,1} + [A_{\boldsymbol{\alpha}}(\lambda)]_{1,3} L_{3,2}(\lambda) \cr
     [A_{\boldsymbol{\alpha}}(\lambda)]_{2,3} =& \hbar \partial_\lambda [A_{\boldsymbol{\alpha}}(\lambda)]_{1,3} +  [A_{\boldsymbol{\alpha}}(\lambda)]_{1,2} + [A_{\boldsymbol{\alpha}}(\lambda)]_{1,3} L_{3,3}(\lambda)\cr
     [A_{\boldsymbol{\alpha}} (\lambda)]_{3,1} =& \hbar \partial_\lambda [A_{\boldsymbol{\alpha}}(\lambda)]_{2,1} + [A_{\boldsymbol{\alpha}}(\lambda)]_{2,3} L_{3,1}(\lambda) \cr
     [A_{\boldsymbol{\alpha}}(\lambda)]_{3,2} =& \hbar \partial_\lambda [A_{\boldsymbol{\alpha}}(\lambda)]_{2,2} +  [A_{\boldsymbol{\alpha}}(\lambda)]_{2,1} + [A_{\boldsymbol{\alpha}}(\lambda)]_{2,3} L_{3,2}(\lambda) \cr
     [A_{\boldsymbol{\alpha}}(\lambda)]_{3,3} =& \hbar \partial_\lambda [A_{\boldsymbol{\alpha}}(\lambda)]_{2,3} +  [A_{\boldsymbol{\alpha}}(\lambda)]_{2,2} + [A_{\boldsymbol{\alpha}}(\lambda)]_{2,3} L_{3,3}(\lambda) 
\end{align}
Thus, only the first line of $A_{\boldsymbol{\alpha}}(\lambda)$ remains to determine. These entries have simple pole at $\lambda=q$ (apparent singularity) and polynomial part of degree at most $3$. The remaining entries of the compatibility equation provide the evolution of our Darboux coordinates
\begin{align}
    \mathcal{L}_{\boldsymbol{\alpha}}[L_{3,1}(\lambda)] =& \hbar \partial_\lambda [A_{\boldsymbol{\alpha}}(\lambda)]_{3,1} +  [A_{\boldsymbol{\alpha}}(\lambda)]_{3,3} L_{3,1}(\lambda) - \cr & \left(  L_{3,1}(\lambda) [A_{\boldsymbol{\alpha}}(\lambda)]_{1,1} + L_{3,2}(\lambda) [A_{\boldsymbol{\alpha}}(\lambda)]_{2,1} +  L_{3,3}(\lambda) [A_{\boldsymbol{\alpha}}(\lambda)]_{3,1}    \right)  \cr
    \mathcal{L}_{\boldsymbol{\alpha}}[L_{3,2}(\lambda)] =& \hbar \partial_\lambda [A_{\boldsymbol{\alpha}}(\lambda)]_{3,2} + [A_{\boldsymbol{\alpha}}(\lambda)]_{3,1} + [A_{\boldsymbol{\alpha}}(\lambda)]_{3,3}  L_{3,2}(\lambda)  - \cr 
    & \left(  L_{3,1}(\lambda) [A_{\boldsymbol{\alpha}}(\lambda)]_{1,2} + L_{3,2}(\lambda) [A_{\boldsymbol{\alpha}}(\lambda)]_{2,2} +  L_{3,3}(\lambda) [A_{\boldsymbol{\alpha}}(\lambda)]_{3,2}  \right) \cr
    \mathcal{L}_{\boldsymbol{\alpha}}[L_{3,3}(\lambda)] =& \hbar \partial_\lambda [A_{\boldsymbol{\alpha}}(\lambda)]_{3,3} + [A_{\boldsymbol{\alpha}}(\lambda)]_{3,2} + [A_{\boldsymbol{\alpha}}(\lambda)]_{3,3}  L_{3,3}(\lambda) - \cr
    & \left(  L_{3,1}(\lambda) [A_{\boldsymbol{\alpha}}(\lambda)]_{1,3} + L_{3,2}(\lambda) [A_{\boldsymbol{\alpha}}(\lambda)]_{2,3} +  L_{3,3}(\lambda) [A_{\boldsymbol{\alpha}}(\lambda)]_{3,3}  \right) 
\end{align}

The expressions for the auxiliary matrices and Hamiltonian system for Darboux coordinates are rather long, but they may be easily summarized in specific directions that span the whole tangent space.

\begin{definition}\label{VectorsDeformations}We define the following vectors and their corresponding deformation operators
 \begin{itemize}
     \item The vectors $\mathbf{v}_{\infty,1}$ and $\mathbf{v}_{\infty,2}$ are defined by
\begin{align} \label{Defvinftyk} \mathcal{L}_{\mathbf{v}_{\infty,1}}:=&\hbar \left(\partial_{t_{\infty^{(1)},1}}+\partial_{t_{\infty^{(2)},1}}+\partial_{t_{\infty^{(3)},1}} \right)\cr
     \mathcal{L}_{\mathbf{v}_{\infty,2}}:=&\hbar \left(\partial_{t_{\infty^{(1)},2}}+\partial_{t_{\infty^{(2)},2}}+\partial_{t_{\infty^{(3)},2}} \right)
 \end{align}
     \item The vector $\mathbf{u}_{\infty,1}$:
      \beq \label{Defuinfty1} \mathcal{L}_{\mathbf{u}_{\infty,1}}:=\hbar \left( t_{\infty^{(1)},2}\partial_{t_{\infty^{(1)},1}} +t_{\infty^{(2)},2}  \partial_{t_{\infty^{(2)},1}} 
    +t_{\infty^{(3)},2}   \partial_{t_{\infty^{(3)},1}}\right)
\eeq
    \item The vector $\mathbf{u}_{\infty,2}$:
\begin{align} \label{Defuinfty2} \mathcal{L}_{\mathbf{u}_{\infty,2}}:=&\hbar\bigg(
t_{\infty^{(1)},1}\partial_{t_{\infty^{(1)},1}}+t_{\infty^{(2)},1}\partial_{t_{\infty^{(2)},1}}+t_{\infty^{(3)},1}\partial_{t_{\infty^{(3)},1}}\cr
&+2t_{\infty^{(1)},2}\partial_{t_{\infty^{(1)},2}}+2t_{\infty^{(2)},2}\partial_{t_{\infty^{(2)},2}}+2t_{\infty^{(3)},2}\partial_{t_{\infty^{(3)},2}}
\bigg) 
\end{align}
\item   The vectors $\mathbf{a}_1$, $\mathbf{a}_2$, $\mathbf{a}_3$:
\begin{align}
     \label{Defai} \mathcal{L}_{\mathbf{a}_1}:=&\hbar\bigg( 2(t_{\infty^{(1)},2}-t_{\infty^{(2)},2})(t_{\infty^{(1)},2}-t_{\infty^{(3)},2}) \partial_{t_{\infty^{(1)},2}}\cr
     &+ \bigg[t_{\infty^{(1)},2}(3t_{\infty^{(1)},1}-t_{\infty^{(2)},1}-t_{\infty^{(3)},1})-(t_{\infty^{(1)},2}+t_{\infty^{(2)},2}+t_{\infty^{(3)},2})t_{\infty^{(1)},1}\cr &+t_{\infty^{(2)},2}t_{\infty^{(3)},1}+t_{\infty^{(3)},2}t_{\infty^{(2)},1}\bigg]\partial_{t_{\infty^{(1)},1}}\bigg)\cr
 \mathcal{L}_{\mathbf{a}_2}:=&\hbar\bigg( 2(t_{\infty^{(2)},2}-t_{\infty^{(1)},2})(t_{\infty^{(2)},2}-t_{\infty^{(3)},2}) \partial_{t_{\infty^{(2)},2}}\cr 
     &+ \bigg[t_{\infty^{(2)},2}(3t_{\infty^{(2)},1}-t_{\infty^{(1)},1}-t_{\infty^{(3)},1})-(t_{\infty^{(1)},2}+t_{\infty^{(2)},2}+t_{\infty^{(3)},2})t_{\infty^{(2)},1}\cr &+t_{\infty^{(1)},2}t_{\infty^{(3)},1}+t_{\infty^{(3)},2}t_{\infty^{(1)},1}\bigg]\partial_{t_{\infty^{(2)},1}}\bigg)\cr
 \mathcal{L}_{\mathbf{a}_3}:=&\hbar\bigg( 2(t_{\infty^{(3)},2}-t_{\infty^{(1)},2})(t_{\infty^{(3)},2}-t_{\infty^{(2)},2}) \partial_{t_{\infty^{(3)},2}}\cr  
     &+ \bigg[t_{\infty^{(3)},2}(3t_{\infty^{(3)},1}-t_{\infty^{(1)},1}-t_{\infty^{(2)},1})-(t_{\infty^{(1)},2}+t_{\infty^{(2)},2}+t_{\infty^{(3)},2})t_{\infty^{(3)},1}\cr &+t_{\infty^{(1)},2}t_{\infty^{(2)},1}+t_{\infty^{(2)},2}t_{\infty^{(1)},1}\bigg]\partial_{t_{\infty^{(3)},1}}\bigg)
 \end{align}   
 \end{itemize} 
\end{definition}

Vectors $\mathbf{v}_{\infty,1}$ and $\mathbf{v}_{\infty,2}$ corresponds to the trace reduction of the Lax matrix $\td{L}(\lambda)$ that is well-known to give trivial directions for the evolutions of the Darboux coordinates. Vector $\mathbf{u}_{\infty,1}$ corresponds to a translation of $\lambda$ and thus shall also provide a trivial direction for the Hamiltonian evolution after a suitable rescaling of the Darboux coordinates. Finally $\mathbf{u}_{\infty,2}$ corresponds to a dilatation of $\lambda$ and thus shall also provide a trivial direction for the Hamiltonian evolution after a suitable rescaling of the Darboux coordinates. The additional directions $\mathbf{a}_1$, $\mathbf{a}_2$ and $\mathbf{a}_3$ are new and they only involve derivatives in one sheet. In our case, these directions also provide trivial evolutions for the Darboux in the sense that the evolutions are linear in $(q,p)$, therefore, a suitable symplectic change of coordinates eliminates these evolutions. Note that $(\mathbf{v}_{\infty,1},\mathbf{v}_{\infty,2},\mathbf{a}_1,\mathbf{a}_2,\mathbf{a}_3)$ are linearly independent in $T\mathbb{B}$, alternatively $(\mathbf{v}_{\infty,1},\mathbf{v}_{\infty,2},\mathbf{u}_{\infty,1},\mathbf{u}_{\infty,2},\mathbf{a}_1)$ are also linearly independent. One may also complement the previous sub-space by adding $\mathbf{e}_{\infty^{(1)},1}$ as a final independent direction. 

As mentioned above, these directions in the tangent space are chosen so that we have the following theorem.

\begin{theorem}\label{Defs}
The evolutions of the Darboux coordinates are given by
    \begin{align}
\mathcal{L}_{\mathbf{v}_{\infty,1}}[q] =&0\cr
          \mathcal{L}_{\mathbf{v}_{\infty,1}}[p] =& \hbar \cr
\mathcal{L}_{\mathbf{v}_{\infty,2}}[q] =&0\cr
        \mathcal{L}_{\mathbf{v}_{\infty,2}}[p] =& \hbar q   \cr
          \mathcal{L}_{\mathbf{u}_{\infty,1}}[q]  =&  - \hbar   \cr
       \mathcal{L}_{\mathbf{u}_{\infty,1}}[p] =&0 \cr
       \mathcal{L}_{\mathbf{u}_{\infty,2}}[q] = &-\hbar q \cr
      \mathcal{L}_{\mathbf{u}_{\infty,2}}[p] = & \hbar p\cr 
 \mathcal{L}_{\mathbf{a}_{1}}[q] = &\hbar\left( (t_{\infty^{(2)},2}+t_{\infty^{(3)},2})q-2p+t_{\infty^{(2)},1}+t_{\infty^{(3)},1})\right)  \cr
      \mathcal{L}_{\mathbf{a}_1}[p] = & \hbar\left[-(t_{\infty^{(2)},2}+t_{\infty^{(3)},2})p+2t_{\infty^{(2)},2}t_{\infty^{(3)},2}q +t_{\infty^{(3)},1}t_{\infty^{(2)},2}+t_{\infty^{(3)},2}t_{\infty^{(2)},1}\right]\cr
 \mathcal{L}_{\mathbf{a}_{2}}[q] = &\hbar\left( (t_{\infty^{(1)},2}+t_{\infty^{(3)},2})q-2p+t_{\infty^{(1)},1}+t_{\infty^{(3)},1})\right)  \cr
      \mathcal{L}_{\mathbf{a}_2}[p] = & \hbar\left[-(t_{\infty^{(1)},2}+t_{\infty^{(3)},2})p+2t_{\infty^{(1)},2}t_{\infty^{(3)},2}q +t_{\infty^{(3)},1}t_{\infty^{(1)},2}+t_{\infty^{(3)},2}t_{\infty^{(1)},1}\right]\cr
 \mathcal{L}_{\mathbf{a}_{3}}[q] = &\hbar\left( (t_{\infty^{(1)},2}+t_{\infty^{(2)},2})q-2p+t_{\infty^{(1)},1}+t_{\infty^{(2)},1})\right)  \cr
      \mathcal{L}_{\mathbf{a}_3}[p] = & \hbar\left[-(t_{\infty^{(1)},2}+t_{\infty^{(2)},2})p+2t_{\infty^{(1)},2}t_{\infty^{(2)},2}q +t_{\infty^{(2)},1}t_{\infty^{(1)},2}+t_{\infty^{(2)},2}t_{\infty^{(1)},1}\right]\cr
\mathcal{L}_{\mathbf{e}_{\infty^{(1)},1}}[q]=&\frac{1}{(t_{\infty^{(1)},2}-t_{\infty^{(2)},2})(t_{\infty^{(1)},2}-t_{\infty^{(3)},2})}\left[-3p^2+2P_1(q)p-P_2(q)  \right]\cr
\mathcal{L}_{\mathbf{e}_{\infty^{(1)},1}}[p]=&\frac{1}{(t_{\infty^{(1)},2}-t_{\infty^{(2)},2})(t_{\infty^{(1)},2}-t_{\infty^{(3)},2})}\left[-P_1'(q) p^2+P_2'(q)p-P_3'(q)  \right]      
  \end{align} 
The evolutions are Hamiltonian in each direction in the sense that
\beq \label{HamiltonianEvolutionsDef} \mathcal{L}_{\boldsymbol{\alpha}}[q]=\frac{\partial \text{Ham}_{(\boldsymbol{\alpha})}(q,p;\hbar)}{\partial p}\,,\, \mathcal{L}_{\boldsymbol{\alpha}}[p]=-\frac{\partial \text{Ham}_{(\boldsymbol{\alpha})}(q,p;\hbar)}{\partial q} \eeq
The explicit expression of the associated Hamiltonians is given in \autoref{DirectinHams}. 
\end{theorem}

For completeness, we also propose the formulas for the associated auxiliary matrices in \autoref{AppendixAuxiliaryGeneral}. 

\medskip

The Hamiltonians given in \autoref{DirectinHams} are determined up to purely time dependent terms (i.e. independent of the Darboux coordinates) that do not modify the evolutions of the Darboux coordinates in \autoref{Defs}. In \autoref{DirectinHams}, these purely time dependent terms are chosen (See \eqref{ConstantTermsHam}) so that the fundamental symplectic two-form $\Omega$ admits an easier reduction (\autoref{TheoNonTrivialEvolution}). However, this only determines the extra-terms up to an exact term $dG_0$ that we shall define later to match the Hamiltonian form evaluated at $\hbar=0$ with the Jimbo-Miwa-Ueno differential (\autoref{PropJMUDifferential}).

\medskip

One may regroup the Hamiltonians into a differential form.

\begin{definition}[Hamiltonian one-form]\label{DefHamiltonianOneFormGl3} We shall denote $\overline{\omega}(q,p;\hbar)$ the Hamiltonian one-form:
\beq \overline{\omega}(q,p;\hbar):= \underset{i=1}{\overset{2}{\sum}}\underset{k=1}{\overset{3}{\sum}}\,\text{Ham}_{(\mathbf{e}_{\infty^{(i)},k})}(q,p;\hbar) dt_{\infty^{(i)},k}\eeq
\end{definition}

\normalsize{Let} us finally mention that the existence of Hamiltonian systems of the form \eqref{HamiltonianEvolutionsDef} is equivalent to the fact that our choice of Darboux coordinates $(q,p)$ provides a fundamental symplectic two-form $\Omega$ as defined by D. Yamakawa in \cite{Yamakawa2019FundamentalTwoForms}.

\begin{definition}\label{DefinitionFundamentalTwoForm}The symplectic structure \eqref{HamiltonianEvolutionsDef} is characterized by the fundamental two-form
\beq
     \Omega :=\hbar  dq \wedge dp -\sum_{i=1}^3\sum_{k=1}^{2} dt_{\infty^{(i)},k}\wedge d\text{Ham}_{(\mathbf{e}_{\infty^{(i)},k})}(q,p;\hbar)
\eeq  
\end{definition}

Note that the dynamics of the Hamiltonian evolutions is not affected by adding purely time-dependent terms like \eqref{ConstantTermsHam} in the Hamiltonians. However, it certainly modifies $\Omega$ and the Jimbo-Miwa-Ueno isomonodromic tau-function. Hence, in the goal of having good properties for these quantities, we chose to add the terms given by \eqref{ConstantTermsHam}. 

\subsection{Symplectic reduction of the tangent space}\label{SecGL3Reduction}
\subsubsection{Shifted Darboux coordinates and definition of trivial and non-trivial times}
In the previous section we have obtained five directions in which the Hamiltonian system is linear in the Darboux coordinates. This means that there exists a symplectic change of coordinates that trivializes these directions. The purpose of this section is to introduce a symplectic change of Darboux coordinates and an explicit change of time coordinates to reduce the symplectic two-form $\Omega$ of \autoref{DefinitionFundamentalTwoForm} to an Arnold-Liouville integrable form. There are two steps in the process: the first one was done in \cite{marchal2023hamiltonian} and consists of removing the trace of the Lax matrices and to perform a translation/dilatation on the Darboux coordinates to trivialize the directions $\mathbf{v}_{\infty,1}, \mathbf{v}_{\infty,2}, \mathbf{u}_{\infty,1}, \mathbf{u}_{\infty,2}$. The second step is new, it consists in trivializing the direction $\mathbf{a}_1$ (or any of the other two similar directions $\mathbf{a}_2$ or $\mathbf{a}_3$ since they are not linearly independent with the previous directions). In order to achieve this, we need a more involved change of Darboux coordinates. Trivializing these directions implies that the time dependence of the new Darboux coordinates is presented in one remaining non-trivial time denoted $\tau$, the evolution with respect to this non-trivial direction is expressed explicitly.

\begin{definition}[Shifted Darboux coordinates]\label{DefShif} We define the shifted Darboux coordinates by
\begin{align}
\check{q}:=&\sqrt{\frac{t_{\infty^{(1)},2}-t_{\infty^{(3)},2}}{(t_{\infty^{(2)},2}-t_{\infty^{(1)},2})(t_{\infty^{(3)},2}-t_{\infty^{(2)},2})}}\left( -p  + t_{\infty^{(2)},2} q+t_{\infty^{(2)},1}\right)\cr 
\check{p}:=&\sqrt{\frac{t_{\infty^{(3)},2}-t_{\infty^{(2)},2}}{(t_{\infty^{(1)},2}-t_{\infty^{(3)},2})(t_{\infty^{(2)},2}-t_{\infty^{(1)},2})}}\left( p-t_{\infty^{(1)},2} q-t_{\infty^{(1)},1}\right)
\end{align}
as well as the new time coordinates:
\begin{align}
\tau:=& \frac{(t_{\infty^{(2)},1}-t_{\infty^{(3)},1})t_{\infty^{(1)},2}
+ (t_{\infty^{(3)},1}-t_{\infty^{(1)},1})t_{\infty^{(2)},2}+(t_{\infty^{(1)},1}-t_{\infty^{(2)},1})t_{\infty^{(3)},2}}{\left((t_{\infty^{(2)},2}-t_{\infty^{(1)},2})(t_{\infty^{(1)},2}-t_{\infty^{(3)},2})(t_{\infty^{(3)},2}-t_{\infty^{(2)},2})\right)^{\frac{1}{2}}}\cr
T_1:=&t_{\infty^{(1)},2}+t_{\infty^{(2)},2}+t_{\infty^{(3)},2}\cr
T_2:=&t_{\infty^{(1)},1}+t_{\infty^{(2)},1}+t_{\infty^{(3)},1}\cr
T_3:=&t_{\infty^{(2)},2} \cr
T_4:=&t_{\infty^{(1)},1}\cr
T_5:=&t_{\infty^{(3)},2}
\end{align}
\end{definition}
As we will see below, the crucial property of these shifted Darboux coordinates is that they satisfy \autoref{SymplecticReduction} and shall provide simpler Hamiltonian structures. Note that \autoref{DefShif} may be seen as a generalization of the shift introduced in \cite{marchal2023hamiltonian} for connections in $\mathfrak{gl}_3(\mathbb{C})$. 

\medskip

Let us now mention that one may easily invert the map from irregular times to the times defined in \autoref{DefShif}. Indeed, we have the following proposition.
\begin{proposition}\label{InversionTimes}The inverse map from $\mathbf{t}$ to $(\mathbf{T},\tau):=\left(T_1,T_2,T_3,T_4,T_5,\tau\right)$ is given by
\begin{align}t_{\infty^{(1)},2}=&T_1-T_3-T_5\cr
t_{\infty^{(2)},2}=&T_3\cr
t_{\infty^{(3)},2}=&T_5\cr
t_{\infty^{(1)},1}=&T_4\cr
t_{\infty^{(2)},1}=& \frac{\sqrt{(T_1-2T_3-T_5)(T_1-T_3-2T_5)(T_3-T_5)}\tau +T_1T_2-T_1T_4-2T_2T_3-T_2T_5+3T_3T_4}{2T_1-3T_3-3T_5}\cr
t_{\infty^{(3)},1}=&-\frac{\sqrt{(T_1-2T_3-T_5)(T_1-T_3-2T_5)(T_3-T_5)}\tau -T_1T_2+T_1T_4+T_2T_3+2T_2T_5-3T_4T_5}{2T_1-3T_3-3T_5}\cr
\end{align}   
\end{proposition}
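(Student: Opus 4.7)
The proof is essentially an exercise in inverting a linear change of variables, so the plan is to extract the easy identifications first and reduce what remains to a $2\times 2$ linear system.

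First I would note that four of the six equations of Definition \ref{DefShif} are trivial to invert: $T_3=t_{\infty^{(2)},2}$, $T_5=t_{\infty^{(3)},2}$, $T_4=t_{\infty^{(1)},1}$ are direct identifications, and $T_1 = t_{\infty^{(1)},2}+t_{\infty^{(2)},2}+t_{\infty^{(3)},2}$ then forces $t_{\infty^{(1)},2}=T_1-T_3-T_5$. After this step, only $t_{\infty^{(2)},1}$ and $t_{\infty^{(3)},1}$ remain unknown, and they are constrained by the two scalar equations coming from $T_2$ and $\tau$.

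Second, substituting the already-known values $t_{\infty^{(1)},2}=T_1-T_3-T_5$, $t_{\infty^{(2)},2}=T_3$, $t_{\infty^{(3)},2}=T_5$, $t_{\infty^{(1)},1}=T_4$ into the $\tau$ formula, a short computation rewrites its denominator as $\sqrt{(T_1-2T_3-T_5)(T_1-T_3-2T_5)(T_3-T_5)}$, which I shall call $D$, and produces
\begin{equation*}
\tau D = t_{\infty^{(2)},1}(T_1-T_3-2T_5) + t_{\infty^{(3)},1}(2T_3+T_5-T_1) + T_4(T_5-T_3).
\end{equation*}
Coupled with $t_{\infty^{(2)},1}+t_{\infty^{(3)},1}=T_2-T_4$, this is a $2\times 2$ linear system in $(t_{\infty^{(2)},1},t_{\infty^{(3)},1})$ whose coefficient matrix has determinant $3T_3+3T_5-2T_1$, i.e. minus the denominator appearing in the proposition.

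Third, I would apply Cramer's rule (or an elimination by substituting $t_{\infty^{(3)},1}=T_2-T_4-t_{\infty^{(2)},1}$), expand, and collect monomials in $T_1,T_2,T_3,T_4,T_5$. This step is purely mechanical and yields exactly the two claimed formulas for $t_{\infty^{(2)},1}$ and $t_{\infty^{(3)},1}$; the two signs in front of $\tau D$ in the proposition come naturally from the opposite roles of the two unknowns in the linear system. The only genuine assumption to flag is that $2t_{\infty^{(1)},2}-t_{\infty^{(2)},2}-t_{\infty^{(3)},2}=2T_1-3T_3-3T_5\neq 0$, which is a mild genericity condition compatible with the base $\mathbb{B}$ (the distinctness of the leading eigenvalues at infinity does not imply it by itself, but it holds on a dense open subset). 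There is no real obstacle here: the whole proposition amounts to checking that the Jacobian of the map $\mathbf{t}\mapsto(\mathbf{T},\tau)$ is invertible and writing down its inverse explicitly, and the only mild care is in matching signs and simplifying the denominator $D$.
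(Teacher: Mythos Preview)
Your proof is correct. The paper's own proof is a one-liner (``direct verification''): it simply plugs the proposed formulas for the $t_{\infty^{(i)},k}$ back into Definition~\ref{DefShif} and checks that they reproduce $(T_1,\dots,T_5,\tau)$. Your approach instead \emph{derives} the inverse by isolating the trivial identifications, rewriting the $\tau$-equation as a linear relation in $(t_{\infty^{(2)},1},t_{\infty^{(3)},1})$, and solving the resulting $2\times 2$ system by Cramer. Both routes are elementary; yours is more informative in that it explains where the formulas come from and makes the genericity condition $2T_1-3T_3-3T_5\neq 0$ explicit, while the paper's verification is shorter and sidesteps any sign bookkeeping in the square root $D$.
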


\begin{proof}
The proof follows from direct verification.
\end{proof}

\subsubsection{Reduced Hamiltonian evolutions}

The definition of the times $(\mathbf{T},\tau)$ and shifted Darboux coordinates $(\check{q},\check{p})$ is such that the following theorem holds.

\begin{corollary}\label{TheoNonTrivialEvolution}The Hamiltonian evolution of $(\check{q},\check{p})$ in the unique non-trivial direction is given by:
\begin{align}  \hbar \partial_\tau \check{q}=&-2\check{q}\check{p}-\check{q}^2 +\tau \check{q}+t_{\infty^{(2)},0}\cr
    \hbar \partial_\tau \check{p}=&\check{p}^2+2\check{q}\check{p}-\tau \check{p}+t_{\infty^{(1)},0}+\hbar 
\end{align}
so that the corresponding Hamiltonians are chosen as
\beq \label{DefHamtau1}\text{Ham}_{(\tau)}(\check{q},\check{p};\hbar)=-\left(\check{q}\check{p}^2+\check{q}^2\check{p} -\tau \check{q}\check{p}-t_{\infty^{(2)},0}\check{p} +(t_{\infty^{(1)},0}+\hbar)\check{q}\right) 
\eeq
and
\beq\label{DefHamtau2} \text{Ham}_{(T_i)}(\check{q},\check{p};\hbar)=0\,\,,\,\, \forall\, i\in\llbracket 1,5\rrbracket\eeq
where we have noted $\text{Ham}_{(\tau)}(\check{q},\check{p};\hbar)$ and $\text{Ham}_{(T_i)}(\check{q},\check{p};\hbar)$, the Hamiltonians corresponding to directions $\hbar \partial_\tau$ and $\hbar \partial_{T_i}$ for $i\in \llbracket 1,5\rrbracket$.
\end{corollary}

The proof can be done by direct computations following from the evolutions of $(q,p)$ of \autoref{Defs} and the definition of the shifted Darboux coordinates. In fact, we shall prove the following stronger result.

\begin{corollary}\label{SymplecticReduction}The fundamental symplectic two-form $\Omega$ (\autoref{DefinitionFundamentalTwoForm}) reduces to
\begin{align} \Omega=&\hbar d\check{q}\wedge d\check{p} -d\tau \wedge d\text{Ham}_{(\tau)}(\check{q},\check{p};\hbar) 
\end{align} 
In particular it implies that  
\beq \forall\, k\in \llbracket 1,5\rrbracket\,:\, \partial_{T_k}\check{q}=\partial_{T_k}\check{p}=0\eeq
so $(\check{q},\check{p})$ only depend on $\tau$ but not on $(T_k)_{1\leq k\leq 5}$.    
\end{corollary}

\begin{proof}
    The proof is carried out via direct (computer-assisted) computations and is presented in \autoref{ProofSymplecticReduction}.
\end{proof}

\sloppy{\begin{remark}Let us remark that the equality in \autoref{SymplecticReduction} includes the purely time-dependent terms. In fact, the simple form of the Theorem is achieved when the purely time-dependent terms \eqref{ConstantTermsHam} in the definition of the Hamiltonians are included. Note however that the addition of any exact differential $df(\mathbf{t})$ of the irregular times to the Hamiltonian one-form $\overline{\omega}(q,p;\hbar)$ preserves the fundamental symplectic two-form $\Omega$, and as we shall see below, these exact terms play a role in the identification with the Jimbo-Miwa-Ueno isomonodromic tau-function.
\end{remark}}

\begin{remark}\label{RemarkHalfReducedCoordinates}One may partially reduce the Hamiltonian structure by only killing the trace and perform translation/dilatation as done for connections in $\mathfrak{gl}_2(\mathbb{C})$ in \cite{MarchalAlameddineP1Hierarchy2023,marchal2023hamiltonian}. This corresponds to define 
\begin{align}
   \tau_1:=& \frac{t_{\infty^{(3)},2}-t_{\infty^{(2)},2}}{t_{\infty^{(2)},2}-t_{\infty^{(1)},2}}\cr
   \tau_2:=&\frac{(t_{\infty^{(2)},1}-t_{\infty^{(3)},1})t_{\infty^{(1)},2}
+ (t_{\infty^{(3)},1}-t_{\infty^{(1)},1})t_{\infty^{(2)},2}+(t_{\infty^{(1)},1}-t_{\infty^{(2)},1})t_{\infty^{(3)},2}
}{(t_{\infty^{(2)},2}-t_{\infty^{(1)},2})^{\frac{3}{2}}}\cr
\td{q}:=&(t_{\infty^{(2)},2}-t_{\infty^{(1)},2})^{\frac{1}{2}}\left(q +\frac{t_{\infty^{(2)},1}-t_{\infty^{(1)},1}}{t_{\infty^{(2)},2}-t_{\infty^{(1)},2}}\right)\cr
 \td{p}:=&(t_{\infty^{(2)},2}-t_{\infty^{(1)},2})^{-\frac{1}{2}}\left( p-t_{\infty^{(2)},2} q -t_{\infty^{(2)},1}\right)
\end{align}
and use $\mathbf{t}\mapsto (T_1,T_2,T_3,T_4,\tau_1,\tau_2)$ as a one-to-one change of time coordinates. In particular, $(\td{q},\td{p})$ are such that $\partial_{T_k}\td{q}=\partial_{T_k}\td{p}=0$ for $k\in \llbracket 1,4\rrbracket$ so that $(\td{q},\td{p})$ only depend on $\tau_1$ and $\tau_2$. Equivalently this corresponds to $\mathcal{L}_{\mathbf{v}_{\infty,k}}[\td{q}]=\mathcal{L}_{\mathbf{v}_{\infty,k}}[\td{p}]=\mathcal{L}_{\mathbf{u}_{\infty,k}}[\td{q}]=\mathcal{L}_{\mathbf{u}_{\infty,k}}[\td{p}]=0$ for $k\in \llbracket 1,2\rrbracket$.
\end{remark}

\autoref{SymplecticReduction} indicates that only one deformation is non-trivial for $(\check{q},\check{p})$: $\hbar\partial_{\tau}$, We denote $\td{A}_{\boldsymbol{\tau}}(\lambda)$ and $A_{\boldsymbol{\tau}}(\lambda)$ the corresponding deformation matrices. Moreover, the corresponding Hamiltonian evolutions of $(\check{q},\check{p})$ in this direction depend only on $\tau$ and contain no dependence on the trivial times $(T_k)_{1\leq k\leq 5}$. Note however that the Lax matrices $\td{L}(\lambda), \td{A}_{\boldsymbol{\tau}}(\lambda)$ may depend on the trivial times as we shall see below.

For completeness, we shall provide the expression of the Lax matrices $\td{L}(\lambda)$ using the shifted Darboux coordinates.

\begin{align}
    \left[\td{L}(\lambda)\right]_{1,1}=&t_{\infty^{(1)},2}\lambda+t_{\infty^{(1)},1}\cr
    \left[\td{L}(\lambda)\right]_{1,2}=&1\cr
    \left[\td{L}(\lambda)\right]_{1,3}=&1\cr
    \left[\td{L}(\lambda)\right]_{2,1}=&(t_{\infty^{(2)},2}-t_{\infty^{(1)},2})\left(-\check{q}\check{p} -\check{q}^2+\tau\check{q}+t_{\infty^{(2)},0}\right)\cr
    \left[\td{L}(\lambda)\right]_{2,2}=&t_{\infty^{(2)},2}\lambda+t_{\infty^{(2)},1}\cr
    \left[\td{L}(\lambda)\right]_{2,3}=&-\sqrt{\frac{(t_{\infty^{(3)},2}-t_{\infty^{(2)},2})(t_{\infty^{(2)},2}-t_{\infty^{(1)},2})}{t_{\infty^{(1)},2}-t_{\infty^{(3)},2}}}\check{q}\cr
    \left[\td{L}(\lambda)\right]_{3,1}=&(t_{\infty^{(1)},2}-t_{\infty^{(3)},2})\left(-\check{q}\check{p} -\check{q}^2+\tau\check{q}-t_{\infty^{(3)},0}\right)\cr
    \left[\td{L}(\lambda)\right]_{3,2}=&\sqrt{\frac{(t_{\infty^{(1)},2}-t_{\infty^{(3)},2})(t_{\infty^{(3)},2}-t_{\infty^{(2)},2})}{t_{\infty^{(2)},2}-t_{\infty^{(1)},2}}} \left(\check{p}-\check{q}-\tau\right)\cr
    \left[\td{L}(\lambda)\right]_{3,3}=&t_{\infty^{(3)},2}\lambda+t_{\infty^{(3)},1}
\end{align}

\subsection{The Jimbo-Miwa-Ueno isomonodromic $\tau$-function}\label{SecGl3JMU}
In \cite{JimboMiwaUeno}, Jimbo, Miwa and Ueno defined the generalization of the isomonodromic tau-function that was known for Fuchsian singularities since the works of Schlesinger \cite{schlesinger1912klasse} to meromorphic connections with irregular singularities. In our context, their definition is given by 
\beq \label{JMUGL3Def}\omega_{\text{JMU}}=-\Res_{\lambda \to \infty}\Tr\Big[ \td{\Psi}^{(\text{reg})}(\lambda)^{-1} (\hbar\partial_\lambda \td{\Psi}^{(\text{reg})}(\lambda)) dT(\lambda)\Big]\eeq
where $T(\lambda)=\text{diag}\left(t_{\infty^{(i)},2}\frac{\lambda^2}{2}+t_{\infty^{(i)},1}\lambda+t_{\infty^{(i)},0}\ln \lambda\right)_{1\leq i\leq 3}$ and $\td{\Psi}^{(\text{reg})}(\lambda)$ is given by the formal asymptotic expansion at infinity of $\td{\Psi}(\lambda)$:
\beq \td{\Psi}(\lambda)\overset{\lambda\to \infty}{\sim} \td{\Psi}^{(\text{reg})}(\lambda) e^{\frac{1}{\hbar}T(\lambda)}\eeq
The last formal asymptotic expansion can be rewritten using our former notation regarding the local diagonalization at infinity \eqref{PsiTT}:
\beq \td{\Psi}(\lambda)=G_\infty(\lambda)^{-1} \Psi_{\infty}(\lambda)=G_{\infty}(\lambda)^{-1}\Psi_{\infty}^{(\text{reg})}(\lambda)e^{\frac{1}{\hbar} T(\lambda)} \,\,\Rightarrow\,\, \td{\Psi}^{(\text{reg})}(\lambda)=G_{\infty}(\lambda)^{-1}\Psi_{\infty}^{(\text{reg})}(\lambda) 
\eeq
Let us write the fundamental normal form at infinity as
\beq \td{\Psi}(\lambda):=\left(I_3+\sum_{k=1}^{\infty} \frac{F_k(\mathbf{t})}{\lambda^k}\right) e^{\frac{1}{\hbar}T(\lambda)}\eeq
Then, the Jimbo-Miwa-Ueno differential defined by \eqref{JMUGL3Def} is given by
\begin{align}
    \omega_{\text{JMU}}=&-\hbar[F_1]_{1,1} dt_{\infty^{(1)},1} - \hbar[F_1]_{2,2} dt_{\infty^{(2)},1}-\hbar[F_1]_{3,3} dt_{\infty^{(3)},1}\cr& 
+\frac{\hbar}{2} \left(([F_1]_{1,1})^2+ [F_1]_{1,2}[F_1]_{2,1}+ [F_1]_{1,3}[F_1]_{3,1} -2[F_2]_{1,1}\right) dt_{\infty^{(1)},2}\cr
    &+\frac{\hbar}{2}\left(([F_1]_{2,2})^2+ [F_1]_{1,2}[F_1]_{2,1}+ [F_1]_{2,3}[F_1]_{3,2} -2[F_2]_{2,2}\right) dt_{\infty^{(2)},2}\cr
    &+\frac{\hbar}{2}\left(([F_1]_{3,3})^2+ [F_1]_{1,3}[F_1]_{3,1}+ [F_1]_{2,3}[F_1]_{3,2} -2[F_2]_{3,3}\right) dt_{\infty^{(3)},2}   
\end{align}
From the knowledge of $\td{L}(\lambda)$, one can easily compute the first terms $F_1$ and $F_2$ of the formal asymptotic expansion at infinity using $\hbar \partial_{\lambda} \td{\Psi}(\lambda)=\td{L}(\lambda) \td{\Psi}(\lambda)$. Inserting these results into the former expression provides the explicit expression for the Jimbo-Miwa-Ueno differential.

\begin{theorem}[Expression of the Jimbo-Miwa-Ueno differential]\label{PropJMUDifferential} 
The Jimbo-Miwa-Ueno differential is given by
\beq
  \omega_{\text{JMU}}=\text{Ham}_{(\tau)}(\check{q},\check{p};\hbar=0)d\tau +dG_0(\mathbf{T})=\left(\text{Ham}_{(\tau)}(\check{q},\check{p};\hbar) +\hbar \check{q}\right)d\tau +dG_0(\mathbf{T})
\eeq
or equivalently
\beq \omega_{\text{JMU}}=-\overline{\omega}(q,p;\hbar=0)+dG_0(\mathbf{t})=-\sum_{i=1}^3\sum_{k=1}^2 \text{Ham}_{(\mathbf{e}_{\infty^{(i)},k})}(q,p;\hbar=0) dt_{\infty^{(i)},k} +dG_0(\mathbf{t})\eeq
where the $\hbar=0$ condition is understood for $(q,p,\mathbf{t},\mathbf{t}_0)$ fixed or equivalently for $(\check{q},\check{p},\mathbf{T},\mathbf{t}_0)$ fixed in \eqref{Hamcertain}, \eqref{OtherHamiltonians} and \eqref{DefHamtau1} while  $G_0$ is defined by either
\footnotesize{\begin{align}\label{DefG0}
    G_0(\mathbf{t}):=&\frac{1}{2}\left(t_{\infty^{(1)},0}t_{\infty^{(3)},0}\ln(t_{\infty^{(3)},2}-t_{\infty^{(1)},2})+t_{\infty^{(1)},0}t_{\infty^{(2)},0}\ln(t_{\infty^{(2)},2}-t_{\infty^{(1)},2})+t_{\infty^{(2)},0}t_{\infty^{(3)},0}\ln(t_{\infty^{(2)},2}-t_{\infty^{(3)},2})\right)\cr&
-\frac{t_{\infty^{(1)},0}(t_{\infty^{(1)},1})^2}{2(t_{\infty^{(1)},2}-t_{\infty^{(3)},2})}-\frac{t_{\infty^{(2)},0}(t_{\infty^{(2)},1})^2}{2(t_{\infty^{(2)},2}-t_{\infty^{(3)},2})}
+\frac{t_{\infty^{(3)},0}(t_{\infty^{(3)},1})^2}{2(t_{\infty^{(1)},2}-t_{\infty^{(3)},2})}\cr&
-\frac{(t_{\infty^{(1)},2}-t_{\infty^{(2)},2})(t_{\infty^{(3)},1})^2t_{\infty^{(2)},0}}{2(t_{\infty^{(2)},2}-t_{\infty^{(3)},2})(t_{\infty^{(1)},2}-t_{\infty^{(3)},2})}
+\frac{t_{\infty^{(3)},1}t_{\infty^{(1)},1}t_{\infty^{(1)},0}}{(t_{\infty^{(1)},2}-t_{\infty^{(3)},2})}+\frac{t_{\infty^{(2)},0}t_{\infty^{(3)},1}t_{\infty^{(2)},1}}{t_{\infty^{(2)},2}-t_{\infty^{(3)},2}}
\end{align}}
\normalsize{or} in terms of the coordinates $\mathbf{T}:=(T_1,\dots,T_5,\tau)$ by:
\small{\begin{align}\label{G0T}G_0(\mathbf{T}):=&\frac{1}{2}\left(t_{\infty^{(1)},0}t_{\infty^{(3)},0}\ln(2T_5-T_1+T_3)+t_{\infty^{(1)},0}t_{\infty^{(2)},0}\ln(2T_3+T_5-T_1)+t_{\infty^{(2)},0}t_{\infty^{(3)},0}\ln(T_3-T_5)\right)\cr&
+\frac{(T_2-3T_4)^2((T_1-2T_3-T_5)t_{\infty^{(2)},0}+t_{\infty^{(3)},0}(T_1-T_3-2T_5))}{2(2T_1-3T_3-3T_5)^2}\cr&
+\frac{\sqrt{(T_1-T_3-2T_5)(T_1-2T_3-T_5)(T_3-T_5)}(T_2-3T_4)(t_{\infty^{(2)},0}-t_{\infty^{(3)},0})\tau}{(2T_1-3T_3-3T_5)^2}\cr&
-\frac{((4T_1-5T_3-7T_5)t_{\infty^{(2)},0}-t_{\infty^{(3)},0}(T_3-T_5))(T_1-2T_3-T_5)\tau^2}{2(2T_1-3T_3-3T_5)^2}
\end{align}}
\normalsize{}
\end{theorem}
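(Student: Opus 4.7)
The plan is to compute the formal expansion of $\tilde{\Psi}^{(\text{reg})}(\lambda)$ at infinity by plugging $\tilde{\Psi}(\lambda) = \left(I_3 + \sum_{k\geq 1} F_k(\mathbf{t})/\lambda^k\right) e^{T(\lambda)/\hbar}$ into the linear differential system $\hbar \partial_\lambda \tilde{\Psi}(\lambda) = \tilde{L}(\lambda)\tilde{\Psi}(\lambda)$. Matching powers of $\lambda$ yields a recursion: at order $\lambda^0$ one obtains the commutator relation $[F_1,T_2^{(\text{diag})}] = \tilde{L}^{[\infty,0]} - T_1^{(\text{diag})}$ which determines the off-diagonal entries $[F_1]_{i,j} = [\tilde{L}^{[\infty,0]}]_{i,j}/(t_{\infty^{(j)},2}-t_{\infty^{(i)},2})$; at order $\lambda^{-1}$ the diagonal part gives $\hbar [F_1]_{i,i} = t_{\infty^{(i)},0} - \sum_{k\neq i}[\tilde{L}^{[\infty,0]}]_{i,k}[F_1]_{k,i}$, while the off-diagonal part determines $[F_2]_{i,j}$; finally at order $\lambda^{-2}$ the diagonal part determines $[F_2]_{i,i}$. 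Using the explicit entries of $\tilde{L}^{[\infty,0]}$ from Theorem \ref{LaxMatrixgl3}, all these quantities become explicit rational expressions in $(q,p,\mathbf{t},\mathbf{t}_0,\hbar)$.

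The next step is to substitute the computed entries into the formula
\begin{align*}
\omega_{\text{JMU}} = &-\sum_{i=1}^3 \hbar[F_1]_{i,i}\, dt_{\infty^{(i)},1} \\
&+ \frac{\hbar}{2}\sum_{i=1}^3 \left([F_1]_{i,i}^2 + \sum_{j\neq i}[F_1]_{i,j}[F_1]_{j,i} - 2[F_2]_{i,i}\right)dt_{\infty^{(i)},2}
\end{align*}
and verify coefficient by coefficient the claimed identity $\omega_{\text{JMU}} = -\overline{\omega}(q,p;\hbar=0) + dG_0(\mathbf{t})$. For each $dt_{\infty^{(i)},1}$ direction, the computation is direct: $-\hbar[F_1]_{i,i}$ is $\hbar$-independent and one checks it matches $-\text{Ham}_{(\mathbf{e}_{\infty^{(i)},1})}(q,p;0)$ modulo the time-dependent contribution coming from $G_0$. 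For the $dt_{\infty^{(i)},2}$ coefficients, one uses that $[F_1]_{i,i}$ and $[F_2]_{i,i}$ individually carry $\hbar^{-1}$ and $\hbar^{-2}$ poles, so the key observation is that in the combination $\hbar([F_1]_{i,i}^2 - 2[F_2]_{i,i})$ the $\hbar^{-1}$ divergences cancel out, leaving a well-defined $\hbar\to 0$ limit that matches $-\text{Ham}_{(\mathbf{e}_{\infty^{(i)},2})}(q,p;0)$ modulo $dG_0$. The precise form of the purely time-dependent terms in \eqref{ConstantTermsHam} was chosen precisely for this identification to hold, the residual discrepancy being exactly $dG_0(\mathbf{t})$ with $G_0$ given by \eqref{DefG0}.

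For the reduced form in the $(\check{q},\check{p},\mathbf{T},\tau)$ coordinates, I would invoke Theorem \ref{SymplecticReduction}. One applies the change of coordinates of Proposition \ref{InversionTimes} to $-\overline{\omega}(q,p;0)$: the time-dependent prefactors conspire so that all coefficients of $dT_k$ for $k\in\llbracket 1,5\rrbracket$ reduce to exact differentials (absorbed into an updated $G_0(\mathbf{T})$ given by \eqref{G0T}), and only the $d\tau$ coefficient survives nontrivially, yielding $\text{Ham}_{(\tau)}(\check{q},\check{p};\hbar=0)d\tau$. The equivalence of both forms of the Hamiltonian $\text{Ham}_{(\tau)}(\check{q},\check{p};\hbar=0) = \text{Ham}_{(\tau)}(\check{q},\check{p};\hbar) + \hbar\check{q}$ follows directly by inspection of \eqref{DefHamtau1}.

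The main obstacle is the combinatorial explosion and the subtle $\hbar^{-1}$ cancellations in the $dt_{\infty^{(i)},2}$ coefficients: one must verify that despite individual quantities like $[F_1]_{i,i}^2$ and $[F_2]_{i,i}$ being $O(\hbar^{-2})$, the symmetric combination $\hbar([F_1]_{i,i}^2 + \sum_{j\neq i}[F_1]_{i,j}[F_1]_{j,i} - 2[F_2]_{i,i})$ is finite and equals the prescribed polynomial in $(q,p)$ given by the formulas \eqref{OtherHamiltonians}. The rest of the argument is a careful (computer-assisted) algebraic matching, fundamentally of the same nature as the computation behind Theorem \ref{SymplecticReduction} carried out in Appendix \ref{ProofSymplecticReduction}.
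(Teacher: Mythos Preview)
Your proposal is correct and follows essentially the same approach as the paper: the paper's proof is simply ``direct but lengthy computations'' with a reference to a Maple worksheet, and you have spelled out the structure of that computation (the recursion for $F_1,F_2$ from the linear system, substitution into the residue formula, and the coordinate change to the reduced variables). Your remark on the $\hbar^{-1}$ cancellations in the $dt_{\infty^{(i)},2}$ coefficients and the sign flip when passing to $(\check q,\check p)$ are exactly the subtleties the computation must handle.
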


\begin{remark}Note that there is a sign difference in the formal evaluation at $\hbar=0$ of the Hamiltonian differential between the reduced and the non-reduced case:
\beq\omega_{\text{JMU}}=-\overline{\omega}(q,p;\hbar=0)+dG_0(\mathbf{t})=\text{Ham}_{(\tau)}(\check{q},\check{p};\hbar=0)d\tau +dG_0(\mathbf{T})\eeq
This is because the change of Darboux coordinates $(q,p)\leftrightarrow(\check{q},\check{p})$ is not symplectic but rather $d\check{q}\wedge d\check{p}=-dq\wedge dp$.
\end{remark}

\begin{proof}The proof follows from direct but lengthy computations. A corresponding Maple worksheet is available at \url{http://math.univ-lyon1.fr/~marchal/AdditionalRessources/index.html}. Note that the exact term $dG_0$ (that does not modify $\Omega$) added is necessary and could have been included in the definition of the Hamiltonians without changing neither the dynamics of the Darboux coordinates nor $\Omega$.
\end{proof}

We also recall that the Jimbo-Miwa-Ueno differential is a closed form \cite{JimboMiwaUeno} so that one can define (up to a time-independent multiplicative constant) the Jimbo-Miwa-Ueno isomonodromic tau-function $\tau_{\text{JMU}}$ by:
\begin{align}\label{JMUtaufunction} \hbar^2\, d (\ln \tau_{\text{JMU}}(\mathbf{t})):=& \omega_{\text{JMU}}=-\sum_{i=1}^3\sum_{k=1}^2 \text{Ham}_{(\mathbf{e}_{\infty^{(i)},k})}(q,p;\hbar=0) dt_{\infty^{(i)},k} +dG_0(\mathbf{t})\cr
=&\text{Ham}_{(\tau)}(\check{q},\check{p};\hbar=0)d\tau +dG_0(\mathbf{T})\end{align}
In particular we note that the naive Hamiltonian form $\overline{\omega}(q,p;\hbar)=\underset{i=1}{\overset{3}{\sum}} \underset{k=1}{\overset{2}{\sum}}  \text{Ham}_{(\mathbf{e}_{\infty^{(i)},k})}(q,p;\hbar) dt_{\infty^{(i)},k}$ is not closed (because the Hamiltonians are explicitly time-dependent) but its formal evaluation at $\hbar=0$ (which equals $\omega_{\text{JMU}}$) is. We conjecture that this observation should hold for arbitrary untwisted meromorphic connections.

\subsection{Canonical choice of trivial times and reduced Lax matrices}
Since the evolutions of the shifted Darboux coordinates $(\check{q},\check{p})$ are independent of the trivial times $(T_1,T_2,T_3,T_4,T_5)$ we may choose values to these times without affecting the evolution of $(\check{q},\check{p})$. For applications it may be convenient to set these irrelevant times for the dynamics to some given values. The canonical choice that we shall propose corresponds to set the traces of the Lax matrices to zero, use the dilatation/translation to set $t_{\infty^{(1)},1}=\frac{1}{2}$ and $t_{\infty^{(2)},1}=0$. Finally, the last trivial direction allows one to set $t_{\infty^{(1)},2}=0$ so that $\td{L}^{[\infty,1]}=\text{diag}\left(\frac{1}{2},0,-\frac{1}{2}\right)$. 

\begin{definition}[Canonical choice of trivial times] \label{DefCanonicalTimesGl3}We shall define \beq\left(t_{\infty^{(1)},2},t_{\infty^{(2)},2},t_{\infty^{(3)},2},t_{\infty^{(1)},1},t_{\infty^{(2)},1},t_{\infty^{(3)},1}\right)=\left(\frac{1}{2},0,-\frac{1}{2},0,\frac{1}{3}t_{\infty^{(3)},1},-\frac{1}{3}t_{\infty^{(3)},1}\right)\eeq
as the canonical choice of trivial times. It is equivalent to set
\beq\left(T_1,T_2,T_3,T_4,T_5\right)=\left(0,0,0,0,-\frac{1}{2}\right)\eeq and we get $\tau=t_{\infty^{(2)},1}$ and $\partial_\tau=\frac{1}{3}\left(\partial_{t_{\infty^{(2)},1}}-\partial_{t_{\infty^{(3)},1}}\right) $.
\end{definition}

We have the following proposition
\begin{proposition}\label{PropLaxMatricesReduced}As mentioned in \autoref{TheoNonTrivialEvolution} the only non-trivial evolution of $(\check{q},\check{p})$ is characterized by the Hamiltonian system:
\begin{align*} \hbar \partial_{\tau}\check{q}=&-2\check{q}\check{p}-\check{q}^2+\tau \check{q}+t_{\infty^{(2)},0}\cr
\hbar \partial_{\tau}\check{p}=&\check{p}^2+2\check{q}\check{p}-\tau \check{p}+t_{\infty^{(1)},0}+\hbar
\end{align*}
whose corresponding Hamiltonian is
\beqq\text{Ham}_{(\tau)}(\check{q},\check{p};\hbar)=-\left(\check{q}\check{p}^2+\check{q}^2\check{p} -\tau \check{q}\check{p}-t_{\infty^{(2)},0}\check{p} +(t_{\infty^{(1)},0}+\hbar)\check{q}\right)\eeqq
Moreover, under the choice of canonical trivial times of \autoref{DefCanonicalTimesGl3}, we have
\beqq \check{q}=-2p+\frac{2}{3}\tau\,,\, \check{p}=p-\frac{1}{2}q  \,\Leftrightarrow\, q=-\check{q}-2\check{p}+\frac{2}{3}\tau\,,\, p=-\frac{1}{2}\check{q}+\frac{\tau}{3}\eeqq
and the Lax matrices simplify into
\begin{align*}
    \td{L}(\lambda)=&\begin{pmatrix}
        \frac{\lambda}{2}&1&1\\
\frac{1}{2}\left(\check{q}\check{p}+\check{q}^2-\tau \check{q}-t_{\infty^{(2)},0}\right)& \frac{\tau}{3}&\frac{\check{q}}{2}\\
-\check{q}\check{p}-\check{q}^2+\tau \check{q}-t_{\infty^{(3)},0}& \check{p}+\check{q}+\tau&-\frac{\lambda}{2}-\frac{\tau}{3}
    \end{pmatrix}\cr
\td{A}_{\boldsymbol{\tau}}(\lambda)=&
\frac{1}{3}\begin{pmatrix}0&-2&1\\
-(\check{q}\check{p} +\check{q}^2-\tau \check{q}-t_{\infty^{(2)},0}) & 
\lambda -3(\check{p}+\check{q}) +\frac{7\tau}{3}& 
2\check{q}\\
-(\check{q}\check{p} +\check{q}^2-\tau \check{q}+t_{\infty^{(3)},0})& 4(\check{p}+\check{q}-\tau)& -\lambda -3\check{q}-\frac{\tau}{3}
\end{pmatrix}
\end{align*}
and
\begin{align*}
    L_{3,1}(\lambda)=&\frac{\hbar\left(p^2+P_2(q)-\frac{\hbar}{2}\right)}{\lambda-q}+P_3(\lambda)+p^3+\left(P_2(q)-\frac{\hbar}{2}\right)p-P_3(q)+\frac{\hbar}{4}(\lambda-q)\notag\\
L_{3,2}(\lambda)=&\frac{\hbar p}{\lambda-q}-P_2(\lambda)+\frac{\hbar}{2}\notag\\
L_{3,3}(\lambda)=&\frac{\hbar}{\lambda-q} \,,\,\text{ with }\notag\\
P_1(\lambda)=&0\notag\\
P_2(\lambda)=&-\frac{1}{4}\lambda^2-\frac{1}{6}\tau \lambda +\frac{1}{2}t_{\infty^{(2)},0}+t_{\infty^{(3)},0}-\frac{\tau^2}{9}\notag\\ 
P_3(\lambda)=&-\frac{\lambda}{4}\left(\frac{1}{3}\tau \lambda +t_{\infty^{(2)},0}+\frac{2}{9}\tau^2 \right)
\end{align*}
and
\begin{align*}
\left[A_{\boldsymbol{\tau}}(\lambda)\right]_{1,1}=&\,\frac{(-2p^2-2P_2(q)+h)}{\lambda-q}-\frac{1}{3}\lambda+p-\frac{1}{2}q\cr 
\left[A_{\boldsymbol{\tau}}(\lambda)\right]_{1,2}=&-\frac{2p}{\lambda-q}+\frac{1}{3}\cr
\left[A_{\boldsymbol{\tau}}(\lambda)\right]_{1,3}=&-\frac{2}{\lambda-q}\cr
\left[A_{\boldsymbol{\tau}}(\lambda)\right]_{2,1}=&\, \frac{p(-2p^2-2P_2(q)+\hbar)}{\lambda-q}+\frac{1}{6}\tau\lambda+\frac{1}{6}\tau q+\frac{1}{9}\tau^2+\frac{1}{2}t_{\infty^{(2)},0}-\frac{\hbar}{6}\cr
\left[A_{\boldsymbol{\tau}}(\lambda)\right]_{2,2}=&-\frac{2p^2}{\lambda-q}-\frac{\lambda}{6} -p+\frac{\tau}{3} \cr
\left[A_{\boldsymbol{\tau}}(\lambda)\right]_{2,3}=&-\frac{2p}{\lambda-q}+\frac{1}{3}\cr
\left[A_{\boldsymbol{\tau}}(\lambda)\right]_{3,1}=&-\frac{(6p^2-\hbar)(p^2+P_2(q)-\frac{\hbar}{2})}{3(\lambda-q)}-\frac{1}{36}\tau \lambda^2+\frac{1}{6}\left(\tau p -\frac{\tau^2}{9}-\frac{t_{\infty^{(2)},0}}{2}+\frac{\hbar}{2}\right)\lambda\cr
&+\frac{1}{6}\Big( 2p^3-\frac{1}{2}\left(q^2-\frac{4}{3}\tau q-\frac{8}{9}\tau^2+8\hbar-8t_{\infty^{(2)},0}-4t_{\infty^{(3)},0}\right)p \cr&
+\frac{\tau^2 q}{9}+\frac{\tau q^2}{6}+\hbar \tau -\frac{\hbar}{2}q +\frac{t_{\infty^{(2)},0}
}{2}q\Big) \cr 
\left[A_{\boldsymbol{\tau}}(\lambda)\right]_{3,2}=&-\frac{p(6p^2-\hbar)}{3(\lambda-q)}+\frac{1}{12}\lambda^2+\left(\frac{2\tau}{9}- \frac{p}{2}\right)\lambda \cr&+\frac{1}{6}\left(-3pq+ (q-2p)\tau +\frac{8\tau^2}{9} + 2t_{\infty^{(2)},0}-2t_{\infty^{(3)},0}-\hbar \right) \cr
\left[A_{\boldsymbol{\tau}}(\lambda)\right]_{3,3}=&-\frac{6p^2-\hbar}{3(\lambda-q)}-\frac{\lambda}{6}+p-\frac{\tau}{3}
\end{align*}
\end{proposition}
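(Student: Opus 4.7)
The first statement of the proposition, concerning the reduced Hamiltonian evolution of $(\check{q},\check{p})$ along $\hbar\partial_\tau$, is immediate: it is exactly the content of Theorem \ref{TheoNonTrivialEvolution}, since the canonical specialization of the trivial times in Definition \ref{DefCanonicalTimesGl3} is a point of the base $\mathbb{B}$ and the reduced Hamiltonian $\text{Ham}_{(\tau)}$ is independent of the $T_i$'s.

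The plan for the linear change of coordinates is to specialize Definition \ref{DefShif} directly at $(t_{\infty^{(1)},2},t_{\infty^{(2)},2},t_{\infty^{(3)},2})=(\tfrac12,0,-\tfrac12)$ and $t_{\infty^{(1)},1}=0$. The two square-root prefactors collapse to elementary values: the one appearing in $\check q$ becomes $\sqrt{1/(1/4)}=2$, and the one appearing in $\check p$ becomes $\sqrt{(-1/2)/(-1/2)}=1$. Using Proposition \ref{InversionTimes} at $(T_1,\ldots,T_5)=(0,0,0,0,-\tfrac12)$ to translate between $\tau$ and $t_{\infty^{(2)},1}$, $t_{\infty^{(3)},1}$ then yields $\check q=-2p+\tfrac{2}{3}\tau$ and $\check p=p-\tfrac12 q$, which is a linear invertible map; solving gives the inverse formulas of the proposition.

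The reduced Lax matrix $\tilde L(\lambda)$ is obtained by plugging the canonical time values into Theorem \ref{LaxMatrixgl3} and then rewriting every quadratic form in $(q,p)$ in terms of $(\check q,\check p)$ via the inverse map above. The two nontrivial ratios $(t_{\infty^{(1)},2}-t_{\infty^{(2)},2})/(t_{\infty^{(2)},2}-t_{\infty^{(3)},2})=1$ and $(t_{\infty^{(1)},2}-t_{\infty^{(3)},2})/(t_{\infty^{(2)},2}-t_{\infty^{(3)},2})=2$ make the $(2,1)$ and $(3,1)$ entries simplify, and the common polynomial in $p$ combines with the shift to yield $\tfrac12(\check q\check p+\check q^2-\tau\check q-t_{\infty^{(2)},0})$ and its analogue on the third row. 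For the oper-gauge Lax matrix $L(\lambda)$, the polynomials $P_1,P_2,P_3$ depend only on the irregular times; a direct substitution shows that $P_1$ vanishes identically and $P_2,P_3$ reduce to the claimed quadratic and cubic in $\lambda$. The stated expressions for $L_{3,1}, L_{3,2}, L_{3,3}$ are then read off Theorem \ref{LaxMatrixgl3}.

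For the auxiliary matrices, I would use the linearity of the maps $\boldsymbol\alpha\mapsto \tilde A_{\boldsymbol\alpha}(\lambda)$ and $\boldsymbol\alpha\mapsto A_{\boldsymbol\alpha}(\lambda)$: from Proposition \ref{InversionTimes} specialized canonically we have $t_{\infty^{(2)},1}=\tau/3$ and $t_{\infty^{(3)},1}=-\tau/3$ (other $t_{\infty^{(i)},k}$ being $\tau$-independent), so $\hbar\partial_\tau=\tfrac{1}{3}\bigl(\mathcal{L}_{\mathbf e_{\infty^{(2)},1}}-\mathcal{L}_{\mathbf e_{\infty^{(3)},1}}\bigr)$ and hence $\tilde A_{\boldsymbol\tau}=\tfrac13(\tilde A_{\mathbf e_{\infty^{(2)},1}}-\tilde A_{\mathbf e_{\infty^{(3)},1}})$, with the analogous identity for $A_{\boldsymbol\tau}$. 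Plugging in the formulas from Appendix \ref{AppendixAuxiliaryGeneral}, substituting the canonical time values, and expressing $(q,p)$ via $(\check q,\check p)$ produces the two matrices claimed. As an internal consistency check one can verify that the residue at $\lambda=q$ of $A_{\boldsymbol\tau}(\lambda)$ has the correct rank-one structure and that the compatibility equation $\hbar\partial_\tau L=\hbar\partial_\lambda A_{\boldsymbol\tau}+[A_{\boldsymbol\tau},L]$ reproduces the $\tau$-flow of $(\check q,\check p)$ from Theorem \ref{TheoNonTrivialEvolution}. The only real obstacle is algebraic bookkeeping — in particular keeping track of the cancellations in $[A_{\boldsymbol\tau}]_{3,1}$ — which is most safely handled by a computer algebra system as indicated by the authors.
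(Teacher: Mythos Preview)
Your proposal is correct and follows exactly the route implicitly intended by the paper: the proposition is a direct specialization of Theorem \ref{LaxMatrixgl3}, Definition \ref{DefShif}, Proposition \ref{InversionTimes}, and the auxiliary-matrix formulas of Appendix \ref{AppendixAuxiliaryGeneral} to the canonical time values of Definition \ref{DefCanonicalTimesGl3}, combined with the identity $\hbar\partial_\tau=\tfrac{1}{3}\bigl(\mathcal{L}_{\mathbf e_{\infty^{(2)},1}}-\mathcal{L}_{\mathbf e_{\infty^{(3)},1}}\bigr)$ stated there. The paper offers no separate proof beyond this specialization, so your outline matches it essentially verbatim.
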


\subsection{A Hermitian two-matrix model for the $\mathfrak{gl}_3$ side and relation with the JMU tau-function}\label{SecGl32MM}
\subsubsection{Classical spectral curve and $\mathfrak{gl}_3$ duality gauge}\label{Sectiongl3duality}
Let us first define the spectral curve and classical spectral curve associated with $\td{L}(\lambda)$.

\begin{definition}[Spectral curves and classical spectral curves in the $\mathfrak{gl}_3$ setting]\label{DefGl3SepctralCurves}We shall define
\beq \mathcal{S}=\left\{(\lambda,y)\in \overline{\mathbb{C}}\times \overline{\mathbb{C}} \,\,\text{ such that }\,\, \det(y I_3-\td{L}(\lambda))=0\right\}
\eeq
as the spectral curve associated with the Lax matrix $\td{L}(\lambda) \in \mathfrak{gl}_3(\mathbb{C})$ defined in  \autoref{secGl3}.
Moreover, we shall define
\beq \mathcal{S}_0=\left\{(\lambda,y)\in \overline{\mathbb{C}}\times \overline{\mathbb{C}} \,\,\text{ such that }\,\, \underset{\hbar\to 0}{\lim} \det(y I_3-\td{L}(\lambda))=0\right\}
\eeq
as the classical spectral curve associated with the Lax matrix $\td{L}(\lambda) \in \mathfrak{gl}_3(\mathbb{C})$ defined in \autoref{secGl3}. 
where the limit as $\hbar\to 0$ is to be understood as a formal evaluation as $\hbar \to 0$ for fixed Darboux coordinates. 
\end{definition}

Let us remark that the notion of spectral curve (characteristic polynomial of the Lax matrix) does not depend on the choice of normalization of $\td{L}$ and therefore is a well-defined quantity on the space of connections. It is also independent of the choice of Darboux coordinates but it is modified by a $\lambda$-dependent gauge transformation. In particular, the spectral curve is not equal to the characteristic polynomial of the Lax matrix in the oper gauge $L(\lambda)$.

On the contrary, the classical spectral curve is equal to the characteristic polynomial of the Lax matrix in the oper gauge, i.e. to the quantum curve where $\hbar\partial_\lambda$ is formally replaced by $y$. On the down side, the classical spectral curve may depend on the choice of Darboux coordinates and may be badly defined if one performs a Darboux coordinates transformation explicitly depending on $\hbar$. Fortunately, the shifts introduced in \autoref{DefShif} (and later in \autoref{ShiftGl2} for the $\mathfrak{gl}_2$ setting) do not formally depend on $\hbar$ so that we may use both sets to define the classical spectral curves. Finally, we recall that
\beq \label{SpecCurvesGl3}
    \det\left(yI_3-\td{L}(\lambda)\right)=y^3-P_1(\lambda)y^2+P_2(\lambda)y-P_3(\lambda)- \left(p^3-P_1(q)p^2+P_2(q)p-P_3(q)\right)
\eeq
As we shall see in \autoref{SecDualitySpecCurves}, in order to identify the spectral curves by spectral duality and because the spectral curve is not invariant by gauge transformations, we need to adapt the gauges between both sides to obtain the duality. In particular, if we fix the matrix $\td{L}_{\text{P4}}(\xi)\in \mathfrak{gl}_2(\mathbb{C})$ as defined in  \autoref{SecP4JM}, then we need to perform a trivial diagonal gauge transformation to obtain the most general form of spectral duality.

\begin{definition}[$\mathfrak{gl}_3(\mathbb{C})$ duality gauge]\label{DefDualityGauge} Let $G_d(\lambda)=\exp\left(-\frac{t_{\infty^{(2)},2}}{2\hbar}\lambda^2\right) I_3$ and define $\hat{\Psi}_d(\lambda):=G_d(\lambda) \td{\Psi}(\lambda)$ so that 
\begin{align*}\hbar \partial_\lambda \hat{\Psi}_d(\lambda)=&\left(\td{L}(\lambda)-t_{\infty^{(2)},2}\lambda\right)\hat{\Psi}_d(\lambda):=\hat{L}_d(\lambda) \hat{\Psi}_d(\lambda)\cr
   \mathcal{L}_{\boldsymbol{\alpha}}[\hat{\Psi}_d(\lambda)]=&\left(\td{A}_{\boldsymbol{\alpha}}(\lambda)-\alpha_{\infty^{(2)},2}\frac{\lambda^2}{2}\right)\hat{\Psi}_d(\lambda):=\hat{A}_{{\boldsymbol{\alpha}},d}(\lambda) \hat{\Psi}_d(\lambda)
\end{align*}
The dual Darboux coordinate $p_d$ associated with $\hat{L}_d(\lambda)$ and $q$ is defined by
\beq p_d:=p-t_{\infty^{(2)},2}q\eeq
so that $\det\left(p_d I_3-\hat{L}_d(q)\right)=0$ and the corresponding (classical) spectral curve is defined by:
\begin{align}\mathcal{S}_d=&\left\{(\lambda,y)\in \overline{\mathbb{C}}\times \overline{\mathbb{C}} \,\,\text{ such that }\,\, \det(y I_3-\hat{L}_d(\lambda))=0\right\}\cr
\mathcal{S}_{d,0}=&\left\{(\lambda,y)\in \overline{\mathbb{C}}\times \overline{\mathbb{C}} \,\,\text{ such that }\,\, \underset{\hbar\to 0}{\lim}\det(y I_3-\hat{L}_d(\lambda))=0\right\}
\end{align}
\end{definition}

Note that we have $\det\left( (y+t_{\infty^{(2)},2}\lambda) I_3-\td{L}(\lambda)\right) =\det\left( yI_3-\hat{L}_d(\lambda)\right)$. In other words, the gauge transformation by a diagonal matrix is equivalent to a shift of $y$ in the (classical) spectral curve. Moreover, in this duality gauge and using the Darboux coordinates $(q,p_d)$, we have $\hat{L}_d(\lambda)=\text{diag}(t_{\infty^{(1)},2}-t_{\infty^{(2)},2},0,t_{\infty^{(3)},2}-t_{\infty^{(2)},2})+O(1)$. More precisely:   
   \begin{align}
    \hat{L}_d(\lambda) =  \text{diag}(t_{\infty^{(1)},2}-t_{\infty^{(2)},2},0,t_{\infty^{(3)},2}-t_{\infty^{(2)},2})\lambda +\td{L}^{[\infty,0]}
\end{align}
where $\td{L}^{[\infty,0]}$ is unchanged and has the following form
\begin{align*}
    \td{L}^{[\infty,0]}=\begin{pmatrix} t_{\infty^{(1)},1}  & 1  & 1  \\ \left[\td{L}^{[\infty,0]}\right]_{2,1} 
    & t_{\infty^{(2)},1} & -p_d + t_{\infty^{(2)},1} \\  
   \left[\td{L}^{[\infty,0]}\right]_{3,1} &-p_d+  (t_{\infty^{(3)},2}-t_{\infty^{(2)},2} )q + t_{\infty^{(3)},1}& t_{\infty^{(3)},1}  \end{pmatrix}
\end{align*}
which is expressed in the Darboux coordinates $(q,p_d)$ by
\footnotesize{\begin{align*}
\left[\td{L}^{[\infty,0]}\right]_{2,1}=&\frac{t_{\infty^{(1)},2} -t_{\infty^{(2)},2}}{t_{\infty^{(2)},2} - t_{\infty^{(3)},2}}  \bigg((p_d+t_{\infty^{(2)},2})^2-\left((t_{\infty^{(2)},2}+t_{\infty^{(3)},2})q+t_{\infty^{(2)},1}+t_{\infty^{(3)},1}  \right)(p_d+t_{\infty^{(2)},2})\cr&+ t_{\infty^{(2)},2}t_{\infty^{(3)},2}q^2
+(t_{\infty^{(2)},2}t_{\infty^{(3)},1}+t_{\infty^{(3)},2}t_{\infty^{(2)},1})q+ t_{\infty^{(2)},1}t_{\infty^{(3)},1}-t_{\infty^{(2)},0}(t_{\infty^{(2)},2}-t_{\infty^{(3)},2})\bigg)\cr
\left[\td{L}^{[\infty,0]}\right]_{3,1}=& -\frac{t_{\infty^{(1)},2} -t_{\infty^{(3)},2}}{t_{\infty^{(2)},2} - t_{\infty^{(3)},2}}  \bigg(
(p_d+t_{\infty^{(2)},2})^2-\left((t_{\infty^{(2)},2}+t_{\infty^{(3)},2})q+t_{\infty^{(2)},1}+t_{\infty^{(3)},1}  \right)(p_d+t_{\infty^{(2)},2})\cr&+ t_{\infty^{(2)},2}t_{\infty^{(3)},2}q^2
+(t_{\infty^{(2)},2}t_{\infty^{(3)},1}+t_{\infty^{(3)},2}t_{\infty^{(2)},1})q+ t_{\infty^{(2)},1}t_{\infty^{(3)},1}+t_{\infty^{(3)},0}(t_{\infty^{(2)},2}-t_{\infty^{(3)},2})
\bigg)
\end{align*}}
\normalsize{Consequently}, the spectral curve associated with $\hat{L}_d(\lambda)$ is given by
\begin{align} \label{SpecCurvesGl3shift}
    \det(yI_3-\hat{L}_d(\lambda))=&(y+t_{\infty^{(2)},2}\lambda)^3-P_1(\lambda)(y+t_{\infty^{(2)},2}\lambda)^2+P_2(\lambda)(y+t_{\infty^{(2)},2}\lambda)-P_3(\lambda)\cr&
    - \left(p^3-P_1(q)p^2+P_2(q)p-P_3(q)\right)
\end{align}

\subsubsection{A Hermitian two-matrix model for the $\mathfrak{gl}_3(\mathbb{C})$ duality gauge}
In this section, we shall derive a Hermitian two-matrix model whose classical spectral curve matches with
\beq\label{SpecCurveFor2MM} (y+t_{\infty^{(2)},2})^3-P_1(\lambda) (y+t_{\infty^{(2)},2})^2+P_2(\lambda) (y+t_{\infty^{(2)},2})-P_3(\lambda)- \left(p^3-P_1(q)p^2+P_2(q)p-P_3(q)\right)=0\eeq
given by \eqref{SpecCurvesGl3shift}. In order to achieve this, we look for a two-matrix model defined by the measure
\beq d\mu_{N}(M_1,M_2)\propto e^{-N\Tr\left(V_1(M_1)+V_2(M_2)-M_1M_2\right)}\eeq
where $(M_1,M_2)$ are two Hermitian matrices of size $N$ with $\hbar=N^{-1}$ and $V_1$, $V_2$ are two potentials whose derivatives will be rational functions in the present case. We shall also assume that there are no hard edges in the measure (i.e. we do not introduce an indicator function of some intervals for the eigenvalues). Finally, we shall denote $Z^{(\text{2MM})}(\mathbf{t};N)$ the associated partition function. The derivation of the classical spectral curve for the two-matrix models using the loop equations can be found in \cite{Eynard:2002kg} and we shall only use results for compactness. Let us define:
\beq P(\lambda,y):=\frac{1}{N}\left<\Tr \frac{V_1'(\lambda)-V_1'(M_1)}{\lambda-M_1}\frac{V_2'(y)-V_2'(M_2)}{y-M_2}\right>\eeq
and $P^{(0)}(x,y)$ the limit as $\hbar \to 0$ of $P(x,y)$. We recall that the bracket means the expected value relatively to the probability measure. The classical spectral curve is then given by:
\beq E^{(0)}(\lambda,y):=\left(V_1'(\lambda)-y\right)\left(V_2'(y)-\lambda\right)-P^{(0)}(\lambda,y)+1=0\eeq
Let us now take
\beq V_1'(\lambda)=a_{1}\lambda+a_2 \,\,,\,\, V_2'(y)=b_0y+b_1+\frac{b_2}{y-X_1}\eeq
for some unknowns $(a_1,a_2,b_0,b_1,b_2)\in \mathbb{C}^4$. We get that
\beq P(\lambda,y)=\frac{a_1}{N}\left(b_0-\frac{b_2}{y-X_1}\left<\Tr \frac{1}{M_2-X_1}\right>\right)\eeq
Therefore we obtain
\beq P^{(0)}(\lambda,y)=a_1\left( b_0-\frac{b_2}{y-X_1}C_0\right)  \,\,,\,\, C_0:=\left<\Tr \frac{1}{M_2-X_1}\right>^{(0)} \eeq
The classical spectral curve is thus
\begin{align}  E^{(0)}(\lambda,y)=&\frac{-b_0}{y-X_1}\Big[ y^3+\frac{1}{b_0}\left(-(1+a_1b_0)\lambda+b_1-b_0(X_1+a_2)\right)y^2\cr&+ \frac{1}{b_0}\left(a_1\lambda^2+((X_1b_0-b_1)a_1+X_1+a_2)\lambda+(X_1a_2+a_1)b_0-b_1X_1-a_2b_1+b_2-1\right)y\cr
&+\frac{1}{b_0}\left(-X_1a_1\lambda^2+((a_1b_1-a_2)X_1-a_1b_2)\lambda+(-a_1b_0+a_2b_1+1)X_1-a_2b_2-a_1b_2C_0\right)\Big]\cr
\end{align}
Identifying with the limit $\hbar\to 0$ of \eqref{SpecCurveFor2MM} provides
\begin{align}
&t_{\infty^{(1)},0}=-1\cr
&a_1=t_{\infty^{(1)},2}- t_{\infty^{(2)},2} \cr
&a_2=t_{\infty^{(1)},1}\cr
&b_0=\frac{1}{t_{\infty^{(3)},2}-t_{\infty^{(2)},2}}\cr
&b_1=-\frac{t_{\infty^{(3)},1}}{t_{\infty^{(3)},2}-t_{\infty^{(2)},2}}\cr
&b_2=t_{\infty^{(2)},0}\cr
&X_1=t_{\infty^{(2)},1}\cr
&t_{\infty^{(2)},0}(t_{\infty^{(1)},2}-t_{\infty^{(2)},2})(t_{\infty^{(3)},2}-t_{\infty^{(2)},2})
C_0=\left(p_0^3-P_1(q_0)p_0^2+P_2(q_0)p_0-P_3(q_0)\right)\cr
&- t_{\infty^{(1)},1}t_{\infty^{(2)},1}t_{\infty^{(3)},1} +t_{\infty^{(1)},1}t_{\infty^{(2)},0}(t_{\infty^{(2)},2}-t_{\infty^{(3)},2})-t_{\infty^{(2)},1}(t_{\infty^{(1)},2}-t_{\infty^{(3)},2})
\cr
\end{align}

Let us remark that the identification to recover the two-matrix models provides a specific value for $t_{\infty^{(1)},0}$ which does not play any role in the spectral duality. Moreover, it gives the potentials:
\begin{align}\label{PotentialsV1V2}V_1(\lambda)=&\frac{1}{2}(t_{\infty^{(1)},2}-t_{\infty^{(2)},2})\lambda^2 +t_{\infty^{(1)},1}\lambda \cr
V_2(y)=&\frac{y^2}{2(t_{\infty^{(3)},2}-t_{\infty^{(2)},2})} -\frac{ t_{\infty^{(3)},1} y }{t_{\infty^{(3)},2}-t_{\infty^{(2)},2}}+t_{\infty^{(2)},0}\ln(y-t_{\infty^{(2)},1})
\end{align}

We may summarize the results in the following proposition.

\begin{proposition}\label{Prop2MMnew}The two-matrix model with potentials $V_1$ and $V_2$ given by \eqref{PotentialsV1V2} has the same classical spectral curve as the one given by $\hat{L}_d(\lambda)$ for $t_{\infty^{(1)},0}=-1$. 
\end{proposition}

\section{The dual side: Isomonodromic deformations of $\mathfrak{gl}_2(\mathbb{C})$ meromorphic connections associated with the Painlev\'{e} IV equation}\label{SecP4JM}
In this section, we review the specific case of isomonodromic deformations in $\mathfrak{gl}_2(\mathbb{C})$ associated with the Painlev\'{e} IV equation as historically obtained by Jimbo-Miwa-Ueno \cite{JimboMiwa,JimboMiwaUeno}. The general theory and formulas for isomonodromic deformations associated with meromorphic connections in $\mathfrak{gl}_2(\mathbb{C})$ has been developed in \cite{MarchalAlameddineP1Hierarchy2023,marchal2023hamiltonian} and we shall refer to them for technical details. For the sake of the present paper, we shall only mention the results that are useful for the duality.

\subsection{Lax matrices in the oper and geometric gauge}
The Painlev\'{e} IV equation corresponds to isomonodromic deformations in $\mathfrak{gl}_2(\mathbb{C})$ with one irregular pole at infinity of order $r_\infty=3$ and one regular pole at $\lambda=X_1$. In other words:
\beq \hbar \partial_\xi \td{\Psi}_{\text{P4}}(\xi)=\td{L}_{\text{P4}}(\xi)\td{\Psi}_{\text{P4}}(\xi)\,\, \text{ with }\,\,  \td{L}_{\text{P4}}(\xi)= \frac{\td{L}_{\text{P4}}^{[X_1,0]}}{\xi-X_1}+\td{L}_{\text{P4}}^{[\infty,1]}\xi+ \td{L}_{\text{P4}}^{[\infty,0]}\eeq
As explained in \cite{marchal2023hamiltonian} one may always choose a representative such that $\td{L}_{\text{P4}}^{[\infty,1]}$ is diagonal and $\left[\td{L}_{\text{P4}}^{[\infty,0]}\right]_{1,2}=1$. Using the local diagonalization at each pole, one defines irregular times $\mathbf{s}:=\left(s_{\infty^{(1)},2},s_{\infty^{(2)},2},s_{\infty^{(1)},1},s_{\infty^{(2)},1}\right)$ and monodromies $\mathbf{s}_0:=\left(s_{\infty^{(1)},0},s_{\infty^{(2)},0},s_{X_1^{(1)},0},s_{X_1^{(2)},0}\right)$ such that $s_{\infty^{(1)},0}+s_{\infty^{(2)},0}+s_{X_1^{(1)},0}+s_{X_1^{(2)},0}=0$. The base is defined as follows
\beq \mathbb{B}_{\text{P4}}:=\left\{ \mathbf{s}\cup \{X_1\} \in \mathbb{C}^5 \,\text{ such that } \, s_{\infty^{(1)},2}\neq s_{\infty^{(2)},2} \,,\,s_{\infty^{(1)},1}\neq s_{\infty^{(2)},1}\right\} \eeq
and we shall always consider irregular times belonging to $\mathbb{B}_{\text{P4}}$. General isomonodromic deformations in this case correspond to
\beq \mathcal{L}_{\boldsymbol{\beta}}:=\hbar\left( \beta_{\infty^{(1)},2} \partial_{s_{\infty^{(1)},2}} + \beta_{\infty^{(2)},2} \partial_{s_{\infty^{(2)},2}}+\beta_{\infty^{(1)},1} \partial_{s_{\infty^{(1)},1}}+\beta_{\infty^{(2)},1} \partial_{s_{\infty^{(2)},1}}+ \beta_{X_1}\partial_{X_1}\right):=\hbar \partial_{\boldsymbol{\beta}} \eeq
where $\boldsymbol{\beta}:=\left( \beta_{\infty^{(1)},2},\beta_{\infty^{(2)},2},\beta_{\infty^{(1)},1},\beta_{\infty^{(2)},1},\beta_{X_1}\right)\in \mathbb{C}^5$. 

In order to avoid confusion with the previous section, we shall denote $(Q,P)$ the Darboux coordinates corresponding to apparent singularity and its dual partner on the spectral curve in this setup. Moreover, we shall define $\left(\mathbf{u}_{\infty^{(i)},j}\right)_{1\leq i,j\leq 2}$ the deformation vectors corresponding to $\mathcal{L}_{\mathbf{u}_{\infty^{(i)},j}}=\hbar \partial_{s_{\infty^{(i)},j}}$ for all $(i,j)\in\llbracket 1,2\rrbracket^2$. We also define:
\begin{align}\label{DefR1R2}
    R_1(\xi):=&\frac{s_{X_1^{(1)},0} +s_{X_1^{(2)},0}}{\xi-X_1}-(s_{\infty^{(1)},2}+s_{\infty^{(2)},2})\xi -s_{\infty^{(1)},1}-s_{\infty^{(2)},1}\cr
    R_2(\xi):=& \frac{s_{X_1^{(1)},0}s_{X_1^{(2)},0}}{(\xi-X_1)^2}+s_{\infty^{(1)},2}s_{\infty^{(2)},2}\xi^2+(s_{\infty^{(2)},2}s_{\infty^{(1)},1}+s_{\infty^{(1)},2}s_{\infty^{(2)},1})\xi\cr
&+s_{\infty^{(1)},0}s_{\infty^{(2)},2}+s_{\infty^{(1)},1}s_{\infty^{(2)},1}+s_{\infty^{(1)},2}s_{\infty^{(2)},0}
\end{align}
Results of \cite{marchal2023hamiltonian} indicate that the associated Lax matrix in the oper gauge is given by
\small{\begin{align}
    L_{\text{P4}}(\xi)=\begin{pmatrix}
        0&1\\
-\frac{\hbar P }{\xi -Q}-R_2(\xi)-\hbar s_{\infty^{(1)},2}+\frac{(Q-X_1)(P^2-R_1(Q)P+\frac{\hbar}{Q-X_1}P+R_2(Q)+\hbar s_{\infty^{(1)},2})}{\xi-X_1} & \frac{\hbar}{\xi-Q} +R_1(\xi) \end{pmatrix}
\end{align}}
\normalsize{or} equivalently in the initial geometric gauge
\small{\begin{align}
  \left[\td{L}_{\text{P4}}(\xi)\right]_{1,1}=&\frac{(Q-X_1)(s_{\infty^{(1)},2}Q+P+s_{\infty^{(1)},1})}{\xi-X_1} -s_{\infty^{(1)},2}\xi+s_{\infty^{(1)},1}\cr
  \left[\td{L}_{\text{P4}}(\xi)\right]_{1,2}=&1+\frac{X_1-Q}{\xi-X_1}\cr
  \left[\td{L}_{\text{P4}}(\xi)\right]_{2,1}=&\frac{1}{(Q-X_1)(\xi-X_1)} \left(s_{\infty^{(1)},2}Q^2+(s_{\infty^{(1)},1}+P-X_1 s_{\infty^{(1)},2})Q -(P+s_{\infty^{(1)},1})X_1-s_{X_1^{(1)},0}\right)\cr& \left(s_{\infty^{(1)},2}Q^2+(s_{\infty^{(1)},1}+P-X_1 s_{\infty^{(1)},2})Q -(P+s_{\infty^{(1)},1})X_1-s_{X_1^{(2)},0}\right)\cr&
+(s_{\infty^{(1)},2}-s_{\infty^{(2)},2})(Q-X_1)P+(s_{\infty^{(1)},2}-s_{\infty^{(2)},2})s_{\infty^{(1)},2}Q^2\cr&
-(s_{\infty^{(1)},2}-s_{\infty^{(2)},2})(s_{\infty^{(1)},2}X_1-s_{\infty^{(1)},1})q
-(s_{\infty^{(1)},1}X_1-s_{\infty^{(1)},0})(s_{\infty^{(1)},2}-s_{\infty^{(2)},2})\cr
\left[\td{L}_{\text{P4}}(\xi)\right]_{2,2}=&\frac{-s_{\infty^{(1)},2}Q^2+(s_{\infty^{(1)},2}X_1-P-s_{\infty^{(1)},1})Q+(s_{\infty^{(1)},1}+P)X_1+s_{X_1^{(1)},0}+s_{X_1^{(2)},0}}{\xi-X_1}\cr&
-s_{\infty^{(2)},2}\xi -s_{\infty^{(2)},1}
\end{align}}
\normalsize{}

\subsection{Isomonodromic deformations and auxiliary matrices}\label{SecGl2Iso}
In this setup, there are $4$ directions leading to trivial deformations (in the sense linear in $(Q,P)$). As in the previous $\mathfrak{gl}_3(\mathbb{C})$ case, one may perform a shift of the Darboux coordinates to have vanishing flows in these directions. It turns out that in $\mathfrak{gl}_2(\mathbb{C})$, the trivial directions only correspond to setting the trace of the Lax matrices to zero and use dilatation/translation on $\xi$. More precisely, the trivial directions are
\begin{align}\label{TrivialVectorsP4}
    \mathcal{L}_{\boldsymbol{\beta}_{\infty,2}}:=&\hbar (\partial_{s_{\infty^{(1)},2}}  +\partial_{s_{\infty^{(2)},2}})\cr
\mathcal{L}_{\boldsymbol{\beta}_{\infty,1}}:=&\hbar (\partial_{s_{\infty^{(1)},1}}  +\partial_{s_{\infty^{(2)},1}})\cr
\mathcal{L}_{\boldsymbol{\beta}_{\text{dil}}}:=&\hbar(
2 s_{\infty^{(1)},2} \partial_{s_{\infty^{(1)},2}}+2 s_{\infty^{(2)},2} \partial_{s_{\infty^{(2)},2}}+s_{\infty^{(1)},1} \partial_{s_{\infty^{(1)},1}}+ s_{\infty^{(2)},1} \partial_{s_{\infty^{(2)},1}}- X_1\partial_{X_1}  )\cr
\mathcal{L}_{\boldsymbol{\beta}_{\text{transl}}}:=&\hbar(s_{\infty^{(1)},2}\partial_{s_{\infty^{(1)},1}}+s_{\infty^{(2)},2}\partial_{s_{\infty^{(2)},1}}-\partial_{X_1})
\end{align} 
for which we find
\begin{align}\label{TrivialEvolutionsP4}\mathcal{L}_{\boldsymbol{\beta}_{\infty,2}}[Q]=&0 \,,\,\mathcal{L}_{\boldsymbol{\beta}_{\infty,2}}[P] =-\hbar Q\cr
\mathcal{L}_{\boldsymbol{\beta}_{\infty,1}}[Q]=&0 \,,\,\mathcal{L}_{\boldsymbol{\beta}_{\infty,1}}[P] =-\hbar \cr
\mathcal{L}_{\boldsymbol{\beta}_{\text{dil}}}[Q]=&-\hbar Q \,,\,\mathcal{L}_{\boldsymbol{\beta}_{\text{dil}}}[P] =\hbar P\cr
 \mathcal{L}_{\boldsymbol{\beta}_{\text{transl}}}[Q]=&-\hbar  \,,\,\mathcal{L}_{\boldsymbol{\beta}_{\text{transl}}}[P] =0   
\end{align}
More generally, the evolutions are provided by the following theorem of \cite{marchal2023hamiltonian}.

\begin{proposition}[General Hamiltonian evolutions (See \cite{marchal2023hamiltonian})]\label{TheoHamP4} The evolutions of the Darboux coordinates $(Q,P)$ are Hamiltonian and the Hamiltonian for a general isomonodromic deformation is given by
\begin{align}\label{HamilP4General} &\text{Ham}^{(\text{P4})}_{(\boldsymbol{\beta})}(Q,P;\hbar):=\mu^{(\boldsymbol{\beta})}\left(P^2-R_1(Q)P+\frac{\hbar P}{Q-X_1}+R_2(Q)+\hbar s_{\infty^{(1)},2}\right)\cr&
-\hbar \nu_0^{(\boldsymbol{\beta})}P -\hbar\nu_{-1}^{(\boldsymbol{\beta})} QP -\hbar c_{1}^{(\boldsymbol{\beta})}Q-\hbar c_{2}^{(\boldsymbol{\beta})} Q^2+ \sum_{i=1}^2\sum_{k=1}^2 \beta_{\infty^{(i)},k} u_{\infty^{(i)},k}(\mathbf{s}) +\beta_{X_1}u_{X_1}(\mathbf{s}) 
\end{align}
where
\begin{align}  \nu_{-1}^{(\boldsymbol{\beta})}:=&\frac{\beta_{\infty^{(1)},2}-\beta_{\infty^{(2)},2} }{2(s_{\infty^{(1)},2}- s_{\infty^{(2)},2})} \cr
\nu_{0}^{(\boldsymbol{\beta})}:=&\frac{2(\beta_{\infty^{(1)},1}-\beta_{\infty^{(2)},1}) (s_{\infty^{(1)},2}-s_{\infty^{(2)},2})+(\beta_{\infty^{(1)},2}-\beta_{\infty^{(2)},2})(s_{\infty^{(1)},1}-s_{\infty^{(2)},1})}{2(s_{\infty^{(1)},2}- s_{\infty^{(2)},2})^2}\cr
c_{1}^{(\boldsymbol{\beta})}:=&\frac{(s_{\infty^{(1)},2}s_{\infty^{(2)},1}-s_{\infty^{(1)},1}s_{\infty^{(2)},2})(\beta_{\infty^{(1)},2}-\beta_{\infty^{(2)},2})}{2(s_{\infty^{(1)},2}- s_{\infty^{(2)},2})^2 }+\frac{\beta_{\infty^{(1)},1}s_{\infty^{(2)},2}- \beta_{\infty^{(2)},1}s_{\infty^{(1)},2}}{s_{\infty^{(1)},2}- s_{\infty^{(2)},2}}\cr 
c_{2}^{(\boldsymbol{\beta})}:=& \frac{\beta_{\infty^{(1)},2}s_{\infty^{(2)},2}-\beta_{\infty^{(2)},2}s_{\infty^{(1)},2} }{2(s_{\infty^{(1)},2}- s_{\infty^{(2)},2})}\cr
\mu^{(\boldsymbol{\beta})}:=&(Q-X_1)\bigg(\beta_{X_1}+\frac{\beta_{\infty^{(1)},1}-\beta_{\infty^{(2)},1}}{s_{\infty^{(1)},2}- s_{\infty^{(2)},2}}\cr&
+\frac{(X_1(s_{\infty^{(1)},2}-s_{\infty^{(2)},2})+s_{\infty^{(2)},1}-s_{\infty^{(1)},1})(\beta_{\infty^{(1)},2}-\beta_{\infty^{(2)},2}) }{2(s_{\infty^{(1)},2}- s_{\infty^{(2)},2})^2}\bigg)
\end{align}
We shall also define the Hamiltonian differential:
\beq\overline{\omega}^{(\text{P4})}(Q,P;\hbar)=\sum_{i=1}^2\sum_{j=1}^2\text{Ham}^{(\text{P4})}_{(\boldsymbol{u}_{\infty^{(i)},j})}(Q,P;\hbar) ds_{\infty^{(i)},j}+ \text{Ham}^{(\text{P4})}_{(\beta_{X_1})}(Q,P;\hbar) dX_1\eeq
\end{proposition}

Note that the purely time-dependent terms $\left(u_{\infty^{(i)},k}(\mathbf{s})\right)_{1\leq i,k\leq 2}$, $u_{X_1}(\mathbf{s})$ 
do not modify the evolutions of $(Q,P)$ and were omitted in \cite{marchal2023hamiltonian}. The dependence of the symplectic fundamental two-form and the Hamiltonian differential on these terms forces their presence, even if they are irrelevant for the dynamics of the Darboux coordinates. they admit the following expressions:
\footnotesize{\begin{align}\label{uterms}
u_{\infty^{(1)},1}(\mathbf{s}):=&\frac{(s_{\infty^{(2)},2}X_1+s_{\infty^{(2)},1})s_{\infty^{(1)},0}+(s_{\infty^{(1)},2}X_1+s_{\infty^{(1)},1})s_{\infty^{(2)},0}}{s_{\infty^{(1)},2}-s_{\infty^{(2)},2}}+ \frac{s_{\infty^{(2)},2} X_1 \hbar}{s_{\infty^{(1)},2}-s_{\infty^{(2)},2}}\cr
u_{\infty^{(2)},1}(\mathbf{s}):=&-\frac{(s_{\infty^{(2)},2}X_1+s_{\infty^{(2)},1})s_{\infty^{(1)},0}+(s_{\infty^{(1)},2}X_1+s_{\infty^{(1)},1})s_{\infty^{(2)},0}}{s_{\infty^{(1)},2}-s_{\infty^{(2)},2}}- \frac{s_{\infty^{(1)},2} X_1 \hbar}{s_{\infty^{(1)},2}-s_{\infty^{(2)},2}}\cr
u_{\infty^{(1)},2}(\mathbf{s}):=&\frac{(X_1(s_{\infty^{(1)},2}-s_{\infty^{(2)},2})-s_{\infty^{(1)},1}+s_{\infty^{(2)},1})((s_{\infty^{(1)},0}s_{\infty^{(2)},2}+s_{\infty^{(1)},2}s_{\infty^{(2)},0})X_1+s_{\infty^{(2)},1}s_{\infty^{(1)},0}+s_{\infty^{(1)},1}s_{\infty^{(2)},0})}{2(s_{\infty^{(1)},2}-s_{\infty^{(2)},2})^2}\cr&
+\frac{(-X_1(s_{\infty^{(2)},2})^2+(X_1s_{\infty^{(1)},2}-s_{\infty^{(1)},1})s_{\infty^{(2)},2}+s_{\infty^{(1)},2}s_{\infty^{(2)},1})X_1\hbar}{2(s_{\infty^{(1)},2}-s_{\infty^{(2)},2})^2}\cr
u_{\infty^{(2)},2}(\mathbf{s}):=&-\frac{(X_1(s_{\infty^{(1)},2}-s_{\infty^{(2)},2})-s_{\infty^{(1)},1}+s_{\infty^{(2)},1})((s_{\infty^{(1)},0}s_{\infty^{(2)},2}+s_{\infty^{(1)},2}s_{\infty^{(2)},0})X_1+s_{\infty^{(2)},1}s_{\infty^{(1)},0}+s_{\infty^{(1)},1}s_{\infty^{(2)},0})}{2(s_{\infty^{(1)},2}-s_{\infty^{(2)},2})^2}\cr&
-\frac{(X_1(s_{\infty^{(1)},2})^2+(-X_1s_{\infty^{(2)},2}+s_{\infty^{(2)},1})s_{\infty^{(1)},2}-s_{\infty^{(1)},1}s_{\infty^{(2)},2})X_1\hbar}{2(s_{\infty^{(1)},2}-s_{\infty^{(2)},2})^2}\cr
u_{X_1}(\mathbf{s}):=&(X_1s_{\infty^{(2)},2}+s_{\infty^{(2)},1})s_{\infty^{(1)},0}+(X_1s_{\infty^{(1)},2}+s_{\infty^{(1)},1})s_{\infty^{(2)},0}
\end{align}}\normalsize{}

\subsection{Shift of Darboux coordinates and reduced system}\label{SecGl2Reduction}
One may shift the Darboux coordinates to remove the trivial evolutions by performing a simple change of coordinates.

\begin{definition}[Shifted Darboux coordinates and change of times]\label{ShiftGl2} One may map $(\mathbf{s},X_1)$ to $\mathbf{S}:=(S_{\infty,1},S_{\infty,2},S_1,S_2,\td{X}_1)$ with
\begin{align}
    S_{\infty,1}:=&s_{\infty^{(1)},1}+s_{\infty^{(2)},1}\,\,,\,\, S_{\infty,2}:=s_{\infty^{(1)},2}+s_{\infty^{(2)},2}\cr
S_1:=& \frac{s_{\infty^{(1)},1}-s_{\infty^{(2)},1}}{\sqrt{2(s_{\infty^{(1)},2}-s_{\infty^{(2)},2})}}\,\,,\,\,
S_2:=\sqrt{\frac{s_{\infty^{(1)},2}-s_{\infty^{(2)},2}}{2}}\cr
\td{X}_1:=&\sqrt{\frac{s_{\infty^{(1)},2}-s_{\infty^{(2)},2}}{2}} X_1+\frac{s_{\infty^{(1)},1}-s_{\infty^{(2)},1}}{\sqrt{2(s_{\infty^{(1)},2}-s_{\infty^{(2)},2})}}
\end{align}
and define the shifted Darboux coordinates $(\check{Q},\check{P})$ by the symplectic change of coordinates
\beq \check{Q}=S_2 Q+S_1 \,\,,\,\, \check{P}:=S_2^{-1}\left(P-\frac{1}{2}R_1(Q)\right)\eeq   
\end{definition}

As proved in \cite{marchal2023hamiltonian}, these shifted Darboux coordinates have only non-trivial evolutions in direction $\partial_{\td{X}_1}$.

\begin{proposition}[Reduced Hamiltonian system for the Painlev\'{e} IV equation]\label{PropReducedP4} The shifted Darboux coordinates $(\check{Q},\check{P})$ have only one non-trivial evolution in direction $\hbar \partial_{\td{X}_1}$. The Hamiltonians are given by:
\begin{align} 
\text{Ham}^{(\text{P4})}_{(\td{X}_1)}(\check{Q},\check{P};\hbar)=&(\check{Q}-\td{X}_1) \check{P}^2 +\hbar \check{P}
-\check{Q}^3+\td{X}_1\check{Q}^2\cr&
-(s_{\infty^{(1)},0}-s_{\infty^{(2)},0} -\hbar)\check{Q}-\frac{(s_{X_1^{(1)},0}-s_{X_1^{(2)},0})^2}{4(\check{Q}-\td{X}_1)}\cr
\text{Ham}^{(\text{P4})}_{(S_{\infty,1})}(\check{Q},\check{P};\hbar)=&\, 0\cr
\text{Ham}^{(\text{P4})}_{(S_{\infty,2})}(\check{Q},\check{P};\hbar)=&\, 0\cr
\text{Ham}^{(\text{P4})}_{(S_{1})}(\check{Q},\check{P};\hbar)=&\, 0\cr
\text{Ham}^{(\text{P4})}_{(S_{2})}(\check{Q},\check{P};\hbar)=& \,0
\end{align}
\end{proposition}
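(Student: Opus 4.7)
The plan is to reduce the statement to a direct verification from Theorem \ref{TheoHamP4} combined with the chain rule for the change of time coordinates of Definition \ref{ShiftGl2}. The key observation is that the trivial deformation vectors \eqref{TrivialVectorsP4} span a $4$-dimensional subspace of $T\mathbb{B}_{\text{P4}}$ in which the evolutions \eqref{TrivialEvolutionsP4} are linear in $(Q,P)$, and Definition \ref{ShiftGl2} is tailored so that $\partial_{S_{\infty,1}}, \partial_{S_{\infty,2}}, \partial_{S_1}, \partial_{S_2}$ exactly span the image of this subspace while $\partial_{\tilde{X}_1}$ provides the transverse non-trivial direction.

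First, I would invert the map $(\mathbf{s}, X_1) \mapsto \mathbf{S}$ of Definition \ref{ShiftGl2} and express each of $\partial_{S_{\infty,1}}, \partial_{S_{\infty,2}}, \partial_{S_1}, \partial_{S_2}, \partial_{\tilde{X}_1}$ as a linear combination of $\partial_{s_{\infty^{(i)},j}}$ and $\partial_{X_1}$, reading off the associated vectors $\boldsymbol{\beta}$. A short computation would show that $\partial_{S_{\infty,1}}, \partial_{S_{\infty,2}}, \partial_{S_1}, \partial_{S_2}$ are $\mathbb{C}(\mathbf{s})$-linear combinations of $\boldsymbol{\beta}_{\infty,1}, \boldsymbol{\beta}_{\infty,2}, \boldsymbol{\beta}_{\text{dil}}, \boldsymbol{\beta}_{\text{transl}}$ from \eqref{TrivialVectorsP4}, while $\partial_{\tilde{X}_1}$ has a non-zero component transverse to this subspace.

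Second, using the inverse relations $Q = S_2^{-1}(\check{Q} - S_1)$ and $P = S_2 \check{P} + \frac{1}{2} R_1(Q)$, I would substitute into the general Hamiltonian formula \eqref{HamilP4General}. For each trivial direction, the coefficients $\nu_{-1}^{(\boldsymbol{\beta})}, \nu_0^{(\boldsymbol{\beta})}, c_1^{(\boldsymbol{\beta})}, c_2^{(\boldsymbol{\beta})}, \mu^{(\boldsymbol{\beta})}$ must combine with the terms $u_{\infty^{(i)},k}, u_{X_1}$ of \eqref{uterms} so that the resulting Hamiltonian becomes purely time-dependent and in fact vanishes after the shift; this is guaranteed by the derivation in \cite{marchal2023hamiltonian} since \eqref{uterms} were precisely determined to enforce this vanishing. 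For the transverse direction $\partial_{\tilde{X}_1}$, only the $\mu^{(\boldsymbol{\beta})}$-prefactor and the $\beta_{X_1}$-contribution remain, and identifying $\mu^{(\boldsymbol{\beta})} = Q - X_1 = S_2^{-1}(\check{Q} - \tilde{X}_1)$ produces, after expanding $R_1$ and $R_2$ in the shifted coordinates, exactly the stated Painlev\'{e} IV Hamiltonian
\beq
\text{Ham}^{(\text{P4})}_{(\tilde{X}_1)}(\check{Q},\check{P};\hbar) = (\check{Q}-\tilde{X}_1)\check{P}^2 + \hbar \check{P} - \check{Q}^3 + \tilde{X}_1 \check{Q}^2 - (s_{\infty^{(1)},0}-s_{\infty^{(2)},0}-\hbar)\check{Q} - \frac{(s_{X_1^{(1)},0}-s_{X_1^{(2)},0})^2}{4(\check{Q}-\tilde{X}_1)}.
\eeq

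The main obstacle is the algebraic bookkeeping in the second step: the shift $\check{P} = S_2^{-1}(P - \tfrac{1}{2} R_1(Q))$ involves the simple pole of $R_1$ at $X_1$, so the substitution produces apparent singularities at $Q = X_1$ that must cancel once the contributions from $R_2(Q)$, from the $\frac{\hbar P}{Q-X_1}$ term, and from the purely time-dependent $u$-functions of \eqref{uterms} are combined. Verifying these cancellations cleanly is the only non-trivial point and is most efficiently handled by a computer algebra check, which reproduces the derivation already performed in \cite{marchal2023hamiltonian}.
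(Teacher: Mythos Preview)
Your proposal is correct and amounts to the same direct computation that the paper invokes from \cite{marchal2023hamiltonian}. The only difference is in packaging: the paper does not reprove Proposition~\ref{PropReducedP4} but records it as a consequence of the reduction of the fundamental symplectic two-form (Proposition~\ref{PropReductionOmega2}), which in turn is established by the same substitution of $(\check{Q},\check{P},\mathbf{S})$ into Theorem~\ref{TheoHamP4} that you outline. Your route---chain rule on the vector fields followed by direct substitution into \eqref{HamilP4General}---is exactly the computation underlying both propositions, so there is no genuine divergence; the paper simply phrases the result one level up, at the level of $\Omega^{(\text{P4})}$, which has the mild advantage of explaining structurally why the four trivial directions vanish simultaneously (they contribute an exact two-form) rather than checking each one separately.
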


\normalsize{The} last proposition is a consequence of the fact that the fundamental symplectic two-form characterizing the symplectic structure is reduced with this shift.

\begin{proposition}[See \cite{marchal2023hamiltonian}]\label{PropReductionOmega2}The fundamental symplectic two-form of the Jimbo-Miwa Painlev\'{e} IV system is
\beq \Omega^{(\text{P4})}=\hbar dQ\wedge dP - \sum_{i=1}^2\sum_{k=1}^2 ds_{\infty^{(i)},k}\wedge d\text{Ham}_{(\mathbf{u}_{\infty^{(i)},k})}(Q,P;\hbar) -dX_1\wedge d \text{Ham}_{(X_1)}(Q,P;\hbar)\eeq
and reduces to an Arnold-Liouville form
\beq \Omega^{(\text{P4})}=\hbar d\check{Q}\wedge d\check{P}- d\td{X}_1\wedge d\text{Ham}_{(\td{X}_1)}(\check{Q},\check{P};\hbar)\eeq
In particular $(\check{Q},\check{P})$ are independent of $(S_1,S_2,S_{\infty,1},S_{\infty,2})$.
\end{proposition}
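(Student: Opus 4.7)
The expression for $\Omega^{(\text{P4})}$ before reduction is a direct specialization of Yamakawa's general construction of the fundamental symplectic two-form for moduli of meromorphic connections \cite{Yamakawa2019FundamentalTwoForms} to the present $\mathfrak{gl}_2$ Painlev\'{e} $4$ case. It is the exact analogue of Definition \ref{DefinitionFundamentalTwoForm} on the $\mathfrak{gl}_3$ side, adapted to the set of deformation parameters $\left(s_{\infty^{(i)},k},X_1\right)$ present in Section \ref{SecP4JM}. Alternatively, one may check that this expression reproduces the Hamiltonian evolutions \eqref{TrivialEvolutionsP4} and those of Theorem \ref{TheoHamP4} via contraction with the corresponding vector fields, which uniquely characterizes $\Omega^{(\text{P4})}$ up to a closed two-form that vanishes on the base, and a direct contraction shows no such extra closed term is present.

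The core of the proposition is the reduction to Arnold-Liouville form under the change of variables of Definition \ref{ShiftGl2}. My plan is to compute this reduction by brute substitution. First I would invert the change of coordinates to write
\begin{equation*}
Q=\frac{\check{Q}-S_1}{S_2},\qquad P=S_2\check{P}+\tfrac{1}{2}R_1(Q),
\end{equation*}
and then compute $\hbar\,dQ\wedge dP$ using the chain rule, treating $(\check{Q},\check{P},S_1,S_2,S_{\infty,1},S_{\infty,2},\td{X}_1)$ as independent coordinates. This yields $\hbar\,d\check Q\wedge d\check P$ plus a long list of cross terms of the form $d\check{Q}\wedge dT$ and $d\check{P}\wedge dT$, where $T$ runs over the new time coordinates, together with pure time-time wedges arising from the differential of $\frac{1}{2}R_1(Q)$ viewed as a function of $(Q,\mathbf{s},X_1)$.

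Next I would expand the sum $\sum_{i,k}ds_{\infty^{(i)},k}\wedge d\text{Ham}^{(\text{P4})}_{(\mathbf{u}_{\infty^{(i)},k})}+dX_1\wedge d\text{Ham}^{(\text{P4})}_{(X_1)}$ using the explicit formulas of Theorem \ref{TheoHamP4} together with the time-dependent constants \eqref{uterms}, and re-express each $ds_{\infty^{(i)},k}$ and $dX_1$ via the Jacobian of the map $(\mathbf{s},X_1)\mapsto(\mathbf{S},\td{X}_1)$. After collecting terms, the goal is to check that all coefficients of $d\check Q\wedge dS_j$ and $d\check P\wedge dS_j$ for $S_j\in\{S_{\infty,1},S_{\infty,2},S_1,S_2\}$ vanish identically, and that the coefficients of $d\check Q\wedge d\td X_1$ and $d\check P\wedge d\td X_1$ recombine to $-\partial_{\check P}\text{Ham}^{(\text{P4})}_{(\td X_1)}$ and $\partial_{\check Q}\text{Ham}^{(\text{P4})}_{(\td X_1)}$ respectively, with $\text{Ham}^{(\text{P4})}_{(\td X_1)}$ as in Proposition \ref{PropReducedP4}. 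The pure time-time part must likewise reduce to $-d\td X_1\wedge d\text{Ham}^{(\text{P4})}_{(\td X_1)}$ evaluated at $(\check Q,\check P)$ treated as independent from $\td X_1$; this is precisely what is enforced by the choice of constants \eqref{uterms}.

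The main obstacle is the bookkeeping in this algebraic verification: the combination $P-\tfrac12 R_1(Q)$ is designed so that the trace and dilatation/translation reductions work simultaneously, and checking that the cross terms involving $dS_1\wedge dS_2$, $dS_{\infty,k}\wedge d\td X_1$, etc., vanish requires carefully using the identities satisfied by $R_1(Q)$ and $R_2(Q)$ from \eqref{DefR1R2} as well as the specific form of $\mu^{(\boldsymbol{\beta})}$, $\nu_0^{(\boldsymbol{\beta})}$, $\nu_{-1}^{(\boldsymbol{\beta})}$, $c_1^{(\boldsymbol{\beta})}$, $c_2^{(\boldsymbol{\beta})}$. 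In practice this is best handled by a symbolic computation, and a Maple verification analogous to the one announced for Theorem \ref{SymplecticReduction} in Appendix \ref{ProofSymplecticReduction} would conclude the reduction. Finally, the independence of $(\check Q,\check P)$ from $(S_{\infty,1},S_{\infty,2},S_1,S_2)$ follows immediately from the reduced form of $\Omega^{(\text{P4})}$: since the Hamiltonian one-form now involves only $d\td X_1$, the Hamilton equations associated to the vector fields $\hbar\partial_{S_j}$ for $j\in\{\infty,1;\infty,2;1;2\}$ give $\mathcal{L}_{\hbar\partial_{S_j}}\check Q=\mathcal{L}_{\hbar\partial_{S_j}}\check P=0$, which is the desired statement.
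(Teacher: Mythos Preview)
Your approach is correct, but it differs from the paper's in scope. The paper does not redo the full brute-force reduction here: the bulk of the computation (the reduction of $\Omega^{(\text{P4})}$ to Arnold--Liouville form) was already carried out in the cited reference \cite{marchal2023hamiltonian}, but \emph{without} the purely time-dependent terms $u_{\infty^{(i)},k}(\mathbf{s})$ and $u_{X_1}(\mathbf{s})$ of \eqref{uterms}. The paper's proof therefore consists of a single additional check: that these extra terms do not spoil the reduction, i.e.\ that
\[
\sum_{i=1}^2\sum_{k=1}^2 ds_{\infty^{(i)},k}\wedge d u_{\infty^{(i)},k}(\mathbf{s})+dX_1\wedge du_{X_1}(\mathbf{s})=0,
\]
which is a short direct computation from the explicit formulas \eqref{uterms}. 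Your proposal would reproduce the entire computation from scratch, which is valid and self-contained, but heavier; the paper's route is more economical precisely because it isolates the only new ingredient (the time-dependent constants) and verifies that it contributes nothing to the two-form.
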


\begin{proof}The proof was done in \cite{marchal2023hamiltonian} without the purely time-dependent terms. But it is then a straightforward computation to check that from \eqref{DefKgl2} we have 
\beq \sum_{i=1}^2\sum_{k=1}^2ds_{\infty^{(i)},k}\wedge d u_{\infty^{(i)},k}(\mathbf{s})+dX_1\wedge du_{X_1}(\mathbf{s})=0 \eeq
to complete the proof.    
\end{proof}

\subsection{Jimbo-Miwa-Ueno isomonodromic tau-function}\label{SecGl2JMU}
Similarly to the previous section, one may compute the Jimbo-Miwa-Ueno differential $\omega_{\text{JMU}}^{(\text{P4})}$ associated with $\td{L}_{\text{P4}}(\xi)$. The definition \cite{JimboMiwaUeno} is equivalent to
\begin{align} \label{JMUGL2Def}\omega_{\text{JMU}}^{(\text{P4})}:=&-\Res_{\xi \to \infty}\Tr\Big[ \td{\Psi}_{\text{P4},\infty}^{(\text{reg})}(\xi)^{-1} (\hbar\partial_\xi \td{\Psi}_{\text{P4},\infty}^{(\text{reg})}(\xi)) dS_\infty(\xi)\Big]\cr&
-\Res_{\xi \to X_1}\Tr\Big[ \td{\Psi}_{\text{P4},X_1}^{(\text{reg})}(\xi)^{-1} (\hbar\partial_\xi \td{\Psi}_{\text{P4},X_1}^{(\text{reg})}(\xi)) dS_{X_1}(\xi)\Big]
\end{align}
where
\beq\label{FormalAsymptP4}
\td{\Psi}_{\text{P4}}(\xi)\overset{\xi\to \infty}{\sim}\td{\Psi}_{\text{P4},\infty}^{(\text{reg})}(\xi)e^{\frac{1}{\hbar} \Lambda_\infty(\xi)} \,\,\text{ and }\,\, 
\td{\Psi}_{\text{P4}}\overset{\xi\to X_1}{\sim} N_0 \td{\Psi}_{\text{P4},X_1}^{(\text{reg})}(\xi) e^{\frac{1}{\hbar}\Lambda_{X_1}(\xi)}
\eeq
with
\begin{align}
\Lambda_\infty:=&\text{diag}\left(-s_{\infty^{(1)},0}\ln(\xi)-s_{\infty^{(1)},1}\xi-\frac{s_{\infty^{(1)},2}}{2}\xi^2,-s_{\infty^{(2)},0}\ln(\xi)-s_{\infty^{(2)},1}\xi-\frac{s_{\infty^{(2)},2}}{2}\xi^2\right)\cr
\Lambda_{X_1}(\xi):=&\text{diag}\left(s_{X_1^{(1)},0}\ln(\xi-X_1),s_{X_1^{(2)},0}\ln(\xi-X_1)\right)\cr
\td{\Psi}_{\text{P4},\infty}^{(\text{reg})}(\xi):=&I_2+\sum_{k=1}^\infty J_k\xi^{-k}\cr
\td{\Psi}_{\text{P4},X_1}^{(\text{reg})}(\xi) :=&I_2+ \sum_{k=1}^\infty N_k(\xi-X_1)^{k}\cr
N_0:=&\begin{pmatrix}
    Q-X_1&Q-X_1\\
\left[N_0\right]_{2,1}& \left[N_0\right]_{2,2}\end{pmatrix}\cr
\left[N_0\right]_{2,1}:=&P(Q-X_1)+s_{\infty^{(1)},2}Q^2+(s_{\infty^{(1)},1}-s_{\infty^{(1)},2}X_1)Q-s_{\infty^{(1)},1}X_1-s_{X_1^{(1)},0}\cr
\left[N_0\right]_{2,2}:=&P(Q-X_1)+s_{\infty^{(1)},2}Q^2+(s_{\infty^{(1)},1}-s_{\infty^{(1)},2}X_1)Q-s_{\infty^{(1)},1}X_1-s_{X_1^{(2)},0}
\end{align}
Note that $N_0$ is a matrix diagonalizing the leading order of the singular part of $\td{L}(\xi)$ at $\xi=X_1$.

Computing \eqref{JMUGL2Def} provides:
\begin{align}
  \omega_{\text{JMU}}^{(\text{P4})}=&\,\hbar(s_{X_1^{(1)},0} \left[N_1\right]_{1,1} +s_{X_1^{(2)},0} \left[N_1\right]_{2,2})  dX_1 +\hbar \left[J_1\right]_{1,1} d s_{\infty^{(1)},1}+\hbar \left[J_1\right]_{2,2} d s_{\infty^{(2)},1}  \cr&
-\frac{\hbar}{2}\left( \left(\left[J_1\right]_{1,1}\right)^2+ \left[J_1\right]_{1,2}\left[J_1\right]_{2,1}-2\left[J_2\right]_{1,1} \right) ds_{\infty^{(1)},2}\cr&
-\frac{\hbar}{2}\left( \left(\left[J_1\right]_{2,2}\right)^2+ \left[J_1\right]_{1,2}\left[J_1\right]_{2,1}-2\left[J_2\right]_{2,2} \right) ds_{\infty^{(2)},2}
\end{align}
Solving the differential equation with the formal asymptotic expansions \eqref{FormalAsymptP4} provides the first orders required to compute the last formula. We find the following theorem.

\begin{proposition}[Expression of the Jimbo-Miwa-Ueno isomonodromic differential]\label{ThJMUP4} The Jimbo-Miwa-Ueno differential is given by
\begin{align}
\omega_{\text{JMU}}^{(\text{P4})}=& \overline{\omega}^{(\text{P4})}(Q,P;\hbar=0) +dK_0(\mathbf{s})\cr
=&\sum_{i,k=1}^2\text{Ham}_{(\mathbf{u}_{\infty^{(i)},k})}^{(\text{P4})}(Q,P;\hbar=0) ds_{\infty^{(i)},k} +\text{Ham}_{(\mathbf{u}_{X_1})}^{(\text{P4})}(Q,P;\hbar=0) dX_1 +dK_0(\mathbf{s})\cr
=&\text{Ham}_{(\td{X}_1)}^{(\text{P4})}(\check{Q},\check{P};\hbar=0)d\td{X}_1+ dK_0(\mathbf{S})\cr
=&\left(\text{Ham}_{(\td{X}_1)}^{(\text{P4})}(\check{Q},\check{P};\hbar)-\hbar \check{P}-\hbar \check{Q}\right)d\td{X}_1+ dK_0(\mathbf{S}) 
\end{align}
where $K_0$ is given by either
\begin{align}\label{DefKgl2}
K_0(\mathbf{s}):=&\frac{1}{2}(s_{X_1^{(1)},0}+s_{\infty^{(2)},0})(s_{X_1^{(1)},0}+s_{\infty^{(1)},0}) \ln\left(\frac{s_{\infty^{(1)},2}-s_{\infty^{(2)},2}}{2}\right)\cr& +\frac{X_1}{2}\left((X_1s_{\infty^{(1)},2}+2s_{\infty^{(1)},1})s_{\infty^{(1)},0}+s_{\infty^{(2)},0}(X_1s_{\infty^{(2)},2}+2s_{\infty^{(2)},1})\right)
\end{align}
or in terms of the $\mathbf{S}$ coordinates by
\begin{align}\label{KNewCoordinates}&K_0(\mathbf{S})=(s_{X_1^{(1)},0}+s_{\infty^{(2)},0})(s_{X_1^{(1)},0}+s_{\infty^{(1)},0})\ln(S_2)-\frac{(S_1-\td{X_1})(s_{\infty^{(1)},0}+s_{\infty^{(2)},0})S_{\infty,1}}{2S_2}\cr&
+\frac{(S_1-\td{X_1})^2(s_{\infty^{(1)},0}+s_{\infty^{(2)},0})S_{\infty,2}}{4S_2^2}-\frac{1}{2}(S_1-\td{X_1})(s_{\infty^{(1)},0}-s_{\infty^{(2)},0})(S_1+\td{X_1})
\end{align}
In other words, up to the exact term $dK_0$, the Jimbo-Miwa-Ueno differential is given by the Hamiltonian one form evaluated formally at $\hbar=0$.
\end{proposition}

\begin{proof}The proof follows from direct but lengthy computations. A corresponding Maple worksheet is available at \url{http://math.univ-lyon1.fr/~marchal/AdditionalRessources/index.html} for completeness.  
\end{proof}

Similarly to \autoref{PropJMUDifferential}, we find that the formal evaluation at $\hbar=0$ of the Hamiltonian one form provides the Jimbo-Miwa-Ueno differential. In particular, it provides the specific terms (in the reduced case $\hbar (\check{Q}+\check{P})$ in the present example) that one needs to subtract in order to obtain a closed differential from the naive Hamiltonian one-form. We shall develop this observation in  \autoref{SectionOutlooks} and formulate a conjecture. Since the Jimbo-Miwa-Ueno differential is a closed form, we may define the associated Jimbo-Miwa-Ueno isomonodromic tau-function:
\begin{align}\label{JMUtaufunctionP4} &\hbar^2\, d (\ln \tau_{\text{JMU}}^{(\text{P4})}(\mathbf{s})):= \omega_{\text{JMU}}^{(\text{P4})}\cr
&=\sum_{i=1}^2\sum_{k=1}^2 \text{Ham}^{(\text{P4})}_{(\mathbf{u}_{\infty^{(i)},k})}(Q,P;\hbar=0) ds_{\infty^{(i)},k}+ \text{Ham}_{(\mathbf{u}_{X_1})}^{(\text{P4})}(Q,P;\hbar=0) dX_1+dK_0(\mathbf{s})\cr
&=\text{Ham}^{(\text{P4})}_{(\td{X}_1)}(\check{Q},\check{P};\hbar=0)d\td{X}_1+ dK_0(\mathbf{S})
\end{align}
Note that by construction the one-form $\text{Ham}^{(\text{P4})}_{(\td{X}_1)}\left(\check{Q},\check{P};\hbar=0\right)d\td{X}_1$ is also a closed form.

\subsection{Canonical choice of trivial times}
As discussed in \cite{marchal2023hamiltonian}, there are two canonical choices of trivial times that are used in the literature. The Jimbo-Miwa Lax system corresponds to the choice $s_{\infty^{(1)},2}=-s_{\infty^{(2)},2}=1$, $s_{\infty^{(1)},1}=-s_{\infty^{(2)},1}$ (i.e. restrict to $\mathfrak{sl}_2(\mathbb{C})$) and take $X_1=0$ (fixing the position of the finite pole). It is equivalent to $S_{\infty,1}=S_{\infty,2}=0$, $S_2=1$ and $\td{X}_1=S_1$.

\begin{definition}\label{CanonicalChoiceP4JM}The canonical choice of times to recover the Jimbo-Miwa Painlev\'{e} IV Lax pair is to take $s_{\infty^{(1)},2}=-s_{\infty^{(2)},2}=1$, $\sigma:=s_{\infty^{(1)},1}=-s_{\infty^{(2)},1}$ and $X_1=0$.
It gives $\check{P}=P-\frac{s_{X_1^{(1)},0}+s_{X_1^{(2)},0}}{2Q}$ and $\check{Q}=Q+\sigma$.
\end{definition}
For this particular choice we obtain the standard Jimbo-Miwa Lax matrices.

\begin{proposition}\label{ReducedEvolutionP4}Under the canonical choice of trivial times of \autoref{CanonicalChoiceP4JM}, the only non-trivial evolution is relatively to $\sigma:=s_{\infty^{(1)},1}$ and its Hamiltonian (which is independent of the choice of trivial times) is given by
\begin{align} \text{Ham}^{(\text{P4})}(\check{Q},\check{P})=&(\check{Q}-\sigma) \check{P}^2 +\hbar \check{P}
-\check{Q}^3+\sigma\check{Q}^2
-(s_{\infty^{(1)},0}-s_{\infty^{(2)},0} -\hbar)\check{Q}
-\frac{(s_{X_1^{(1)},0}-s_{X_1^{(2)},0})^2
}{4(\check{Q}-\sigma)}  \cr&
\end{align}

It corresponds to the Painlev\'{e} IV equation for $\check{Q}$ (See \cite{marchal2023hamiltonian} for details). The corresponding Lax pairs denoted $\left(\td{L}^{(\text{P4})}_0(\xi),\td{A}^{(\text{P4})}_0(\xi)\right)$ are
\small{\bea\left[\td{L}^{(\text{P4})}_0(\xi)\right]_{1,1}&=&-\xi-\sigma+\frac{\check{Q}(\check{P}+\check{Q}-\sigma) -\sigma \check{P}+\frac{1}{2}\left(s_{X_1^{(1)},0}+s_{X_1^{(2)},0}\right)}{\xi}\cr
\left[\td{L}^{(\text{P4})}_0(\xi)\right]_{1,2}&=&1+\frac{\sigma-\check{Q}}{\xi}\cr
\left[\td{L}^{(\text{P4})}_0(\xi)\right]_{2,1}&=&2(\check{Q}-\sigma)(\check{Q}+\check{P})+s_{\infty^{(1)},0}-s_{\infty^{(2)},0}+\frac{(\check{Q}+\check{P})^2(\check{Q}-\sigma)-\frac{\left(s_{X_1^{(1)},0}-s_{X_1^{(2)},0}\right)^2}{4(\check{Q}-\sigma)}}{\xi}\cr
[\td{L}^{(\text{P4})}_0(\xi)]_{2,2}&=&\xi+\sigma+\frac{-\check{Q}(\check{P}+\check{Q}-\sigma) +\sigma \check{P}+\frac{1}{2}\left(s_{X_1^{(1)},0}+s_{X_1^{(2)},0}\right)}{\xi}\cr
\td{A}^{(\text{P4})}_0(\xi)&=&\begin{pmatrix}-\xi-\check{Q}& 1\\ 2(\check{Q}-\sigma)(\check{P}+\check{Q})+s_{\infty^{(1)},0}-s_{\infty^{(2)},0})
&\xi+\check{Q}
\end{pmatrix}
\eea}
\normalsize{or} equivalently in the oper gauge:
\footnotesize{\begin{align} L^{(\text{P4})}_0(\xi)=&\begin{pmatrix}
    0&1\\ -\frac{\hbar P}{\xi-Q}-R_2(\xi)-\hbar +\frac{QP^2+Q R_2(Q) +\hbar (Q+P)-\hbar(s_{X_1^{(1)},0}+s_{X_1^{(2)},0})P }{\xi}& \frac{\hbar}{\xi-Q}+\frac{s_{X_1^{(1)},0}+s_{X_1^{(2)},0}- \hbar}{\xi}\end{pmatrix}\cr
   A^{(\text{P4})}_0(\xi)=&\begin{pmatrix}-\frac{PQ}{\xi-Q}& 
   1+\frac{Q}{\xi-Q}\\
\frac{P(PQ-s_{X_1^{(1)},0}-s_{X_1^{(2)},0})}{\xi-Q}+ \frac{s_{X_1^{(1)},0}s_{X_1^{(2)},0}}{Q \xi}+\xi^2+(Q+2\sigma)\xi+(Q+ \sigma)^2 +s_{\infty^{(1)},0}-s_{\infty^{(2)},0}-\hbar  
&\frac{s_{X_1^{(1)},0}+s_{X_1^{(2)},0}-PQ}{\xi-Q}
   \end{pmatrix} \cr&&
\end{align}} 
\normalsize{with} 
\begin{align}
R_1(\xi)=&\frac{s_{X_1^{(1)},0}+s_{X_1^{(2)},0}}{\xi}\cr
    R_2(\xi)=& \frac{s_{X_1^{(1)},0}s_{X_1^{(2)},0}  }{\xi^2}-(\xi+\sigma)^2 -s_{\infty^{(1)},0}+s_{\infty^{(2)},0}
\end{align} 
\end{proposition}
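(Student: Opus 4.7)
The plan is to specialize the general reduced system of Proposition \ref{PropReducedP4} and the geometric-gauge formulas of Section \ref{SecP4JM} to the canonical choice of trivial times of Definition \ref{CanonicalChoiceP4JM}. First I would plug $s_{\infty^{(1)},2}=1$, $s_{\infty^{(2)},2}=-1$, $s_{\infty^{(1)},1}=\sigma$, $s_{\infty^{(2)},1}=-\sigma$, $X_1=0$ into Definition \ref{ShiftGl2}, obtaining $S_{\infty,1}=S_{\infty,2}=0$, $S_2=1$, $S_1=\sigma$, and $\td{X}_1=\sigma$. Together with the canonical evaluation $R_1(Q)=\frac{s_{X_1^{(1)},0}+s_{X_1^{(2)},0}}{Q}$ read off from \eqref{DefR1R2}, the identities $\check{Q}=Q+\sigma$ and $\check{P}=P-\frac{s_{X_1^{(1)},0}+s_{X_1^{(2)},0}}{2Q}$ stated in the proposition follow immediately.

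Next I would argue that the flow $\partial_\sigma$, restricted to the canonical slice, acts on $(\check{Q},\check{P})$ exactly as $\partial_{\td{X}_1}$ does. Under the canonical choice the four trivial coordinates $S_{\infty,1}, S_{\infty,2}, S_1, S_2$ are locked at constant values while varying $\sigma$ simultaneously shifts $S_1$ and $\td{X}_1$; however Proposition \ref{PropReducedP4} asserts that $\text{Ham}^{(\text{P4})}_{(S_1)}=0$, so the contribution coming from $\partial_{S_1}$ to the Hamiltonian evolution of the shifted Darboux coordinates vanishes. Substituting $\td{X}_1=\sigma$ in the expression of $\text{Ham}^{(\text{P4})}_{(\td{X}_1)}$ of Proposition \ref{PropReducedP4} then produces exactly the Hamiltonian displayed in the statement, and this Hamiltonian is manifestly independent of the choice of trivial times since it only depends on $\sigma$ and on the monodromies.

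For the Lax pair, I would substitute the canonical values of the irregular times directly into the entries of $\td{L}_{\text{P4}}(\xi)$ displayed in Section \ref{SecP4JM} and then use the inverse change of Darboux coordinates $(Q,P)=(\check{Q}-\sigma,\check{P}+\frac{s_{X_1^{(1)},0}+s_{X_1^{(2)},0}}{2(\check{Q}-\sigma)})$ to express everything in shifted coordinates. The resulting (now traceless) matrix matches $\td{L}^{(\text{P4})}_0(\xi)$, and the oper-gauge version $L^{(\text{P4})}_0(\xi)$ is obtained by the same substitution in the oper-gauge formula for $L_{\text{P4}}(\xi)$. For the auxiliary matrix $\td{A}^{(\text{P4})}_0(\xi)$, I would either invoke the general construction of deformation matrices of \cite{marchal2023hamiltonian} specialized to the vector $\boldsymbol{\beta}$ corresponding to $\partial_{\td{X}_1}$, or equivalently solve the zero-curvature equation $\hbar\partial_\sigma \td{L}^{(\text{P4})}_0(\xi)=\hbar\partial_\xi \td{A}^{(\text{P4})}_0(\xi)+[\td{A}^{(\text{P4})}_0(\xi),\td{L}^{(\text{P4})}_0(\xi)]$ by the knowledge of the evolution of $(\check{Q},\check{P})$ obtained in the previous step, matching poles in $\xi$ at $\xi=0$ and at infinity.

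The main obstacle is not conceptual but organizational: one must verify carefully that the reduction from five deformation parameters to the single non-trivial time $\sigma$ is internally consistent on the canonical slice, namely that the $\partial_\sigma$ flow, which simultaneously drags the trivial coordinate $S_1$ and the non-trivial coordinate $\td{X}_1$, nevertheless induces the correct dynamics on $(\check{Q},\check{P})$ thanks to the vanishing of $\text{Ham}^{(\text{P4})}_{(S_1)}$. A concluding cross-check would be to eliminate $\check{P}$ from the Hamiltonian system generated by $\text{Ham}^{(\text{P4})}(\check{Q},\check{P})$ and recover explicitly the standard Painlev\'{e} $4$ equation for $\check{Q}$ as claimed.
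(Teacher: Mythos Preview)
Your proposal is correct and follows the same route the paper takes implicitly: the proposition is stated there without proof, as a direct specialization of Proposition~\ref{PropReducedP4} and of the general $\mathfrak{gl}_2$ Lax formulas of Section~\ref{SecP4JM} to the canonical values of Definition~\ref{CanonicalChoiceP4JM}. Your observation that on the canonical slice $\partial_\sigma=\partial_{S_1}+\partial_{\td{X}_1}$, with the $\partial_{S_1}$ contribution vanishing on $(\check{Q},\check{P})$ by Proposition~\ref{PropReducedP4}, is exactly the consistency check the paper sweeps under the phrase ``it is equivalent to $S_{\infty,1}=S_{\infty,2}=0$, $S_2=1$ and $\td{X}_1=S_1$'' preceding Definition~\ref{CanonicalChoiceP4JM}.
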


\subsection{A Hermitian one-matrix model for the $\mathfrak{gl}_2$ side and relation with the JMU tau-function}\label{SecGl21MM}
Let us first define the spectral and classical spectral curve associated with $\td{L}_{\text{P4}}(\xi)$.

\begin{definition}[Spectral curves and classical spectral curves associated with $\td{L}_{\text{P4}}(\xi)$]\label{DefSepctralCurvesGl2 }We shall define
\beq \mathcal{S}_{\text{P4}}=\left\{(\xi,Y)\in \overline{\mathbb{C}}\times \overline{\mathbb{C}} \,\,\text{ such that }\,\, \det(Y I_2-\td{L}_{\text{P4}}(\xi))=0\right\}
\eeq
as the spectral curve associated with the Lax matrix $\td{L}_{\text{P4}}(\xi)$.
Moreover, we shall define
\beq \mathcal{S}_{\text{P4},0}=\left\{(\xi,Y)\in \overline{\mathbb{C}}\times \overline{\mathbb{C}} \,\,\text{ such that }\,\, \underset{\hbar\to 0}{\lim}\det(Y I_2-\td{L}_{\text{P4}}(\xi))=0\right\}
\eeq
as the classical spectral curve associated with the Lax matrix $\td{L}_{\text{P4}}(\xi) \in \mathfrak{gl}_2(\mathbb{C})$
where the limit as $\hbar\to 0$ is to be understood as a formal evaluation at $\hbar=0$ for fixed Darboux coordinates. 
\end{definition}

In the present Painlev\'{e} IV case, we have
\beq \label{ExplicitSpecCurvegl2}\mathcal{S}_{\text{P4}}=Y^2-R_1(\xi)Y+R_2(\xi)-\frac{Q-X_1}{\xi-X1}\left(P^2-R_1(Q)P+R_2(Q)\right)\eeq

\medskip

We shall now produce a one-matrix model with $\hbar=N^{-1}$ whose classical spectral curve recovers the one given by \eqref{ExplicitSpecCurvegl2} when  $s_{X_1^{(1)},0}s_{X_1^{(2)},0}=0$. Indeed, let us first reduce the spectral curve to its standard hyper-elliptic form by setting $Y_0:=Y-\frac{1}{2}R_1(\xi)$. We get that the spectral curve is reduced to
\beq \label{Y00}Y_0^2= \frac{1}{4}R_1(\xi)^2-R_2(\xi)+\frac{U}{\xi-X_1}\,\,\text{ with } U:=(Q-X_1)\left(P^2-R_1(Q)P+R_2(Q)\right)\eeq
Let us now look at the $N\times N$ Hermitian matrix model given by
\beq d\mu_N(M)\propto e^{-N\Tr V(M)} dM\eeq
where the potential $V$ is given by
\beq V(\xi):=\frac{a}{2}\xi^2+b\xi +c\ln(\xi-X_1)\eeq
We shall denote $Z^{(\text1MM)}(\mathbf{s};N)$ the associated partition function.
Defining $y^{(\text{1MM})}_0(\xi)=W_{1}^{(0)}(\xi) -\frac{1}{2}V'(\xi)$ the standard shifted one first correlation function with $W_1^{(0)}(\xi)=\left<\frac{1}{N}\underset{i=1}{\overset{N}{\sum}}\frac{1}{\xi-\lambda_i}\right>^{(0)}$ where the superscript $\,^{(0)}$ indicates that we take the large $N$ limit, we obtain the classical spectral curve given by \cite{AKEMANN1997475,AMBJORN1993127,E1MM}:
\beq \label{Y01} y_0^{(\text{1MM})}(\xi)^2=\frac{1}{4}V'(\xi)^2-\left<\frac{1}{N}\sum_{i=1}^N \frac{V'(\xi)-V'(\lambda_i)}{\xi-\lambda_i}\right>^{(0)}\eeq
The general form of $V'(\xi)=a\xi+b+\frac{c}{\xi-X_1}$ implies that the last term is of the form $a-\frac{c_0}{\xi-X_1}$ where $c_0=\left<\frac{1}{N}\underset{i=1}{\overset{N}{\sum}}\frac{1}{\lambda_i-X_1}\right>^{(0)}$. In the case $s_{X_1^{(1)},0}s_{X_1^{(2)},0}=0$, we have 
\begin{align}
    R_1(\xi)=&-(s_{\infty^{(1)},2}+s_{\infty^{(2)},2})\xi-(s_{\infty^{(1)},1}+s_{\infty^{(2)},1})-\frac{s_{\infty^{(1)},0}+s_{\infty^{(2)},0}}{\xi-X_1}\cr
    R_2(\xi)=&s_{\infty^{(1)},2}s_{\infty^{(2)},2}\xi^2+(s_{\infty^{(1)},1}s_{\infty^{(2)},2}+s_{\infty^{(1)},2}s_{\infty^{(2)},1})\xi\cr&
+s_{\infty^{(1)},0}s_{\infty^{(2)},2}+s_{\infty^{(1)},1}s_{\infty^{(2)},1}+s_{\infty^{(1)},2}s_{\infty^{(2)},0}
\end{align}
so that an immediate computation shows that one may identify the classical spectral curve associated with \eqref{Y00} and the classical spectral curve given by \eqref{Y01} by taking
\begin{align} 
s_{\infty^{(2)},0}=&-1\cr
a=&s_{\infty^{(2)},2}-s_{\infty^{(1)},2}\cr
b=&s_{\infty^{(2)},1}-s_{\infty^{(1)},1}\cr
c=&-(s_{\infty^{(1)},0}+s_{\infty^{(2)},0})\cr
U_0:=&(Q_0-X_1)\left(P_0^2-R_1(Q_0)P_0+R_2(Q_0)\right)=-c_0-(s_{\infty^{(1)},0}+s_{\infty^{(2)},0})(X_1s_{\infty^{(2)},2}+s_{\infty^{(2)},1}) \cr
\end{align}
In other words, we may identify the classical spectral curve associated with $\td{L}_{\text{P4}}(\xi)$ with $s_{X_1^{(1)},0}s_{X_1^{(2)},0}=0$ and $s_{\infty^{(2)},0}=-1$ with a Hermitian one-matrix model with potential
\beq V'(\xi)=(s_{\infty^{(2)},2}-s_{\infty^{(1)},2})\xi+s_{\infty^{(2)},1}-s_{\infty^{(1)},1}+\frac{1-s_{\infty^{(1)},0}}{\xi-X_1}\eeq
In the Hermitian matrix model, the remaining parameter $c_0$ corresponds to the fact that the classical spectral curve is generically of genus $1$ so that one should fix the filling fraction $\epsilon$ in order to determine completely the classical spectral curve and this is not given by the loop equation. In the identification, we impose $c_0=U_0$ that can be determined explicitly since $(Q_0,P_0)$ are solutions to algebraic equations given by the formal $\hbar=0$ limit of the Hamiltonian evolutions \eqref{HamilP4General}
\begin{align}0=&2P_0-R_1(Q_0)\cr
0=&-R_1'(Q_0)P_0+R_2'(Q_0)
\end{align}

This can be summarized into the following proposition.

\begin{proposition}\label{PartitionFunctionP4new}The classical spectral curve associated with the Hermitian one-matrix model with potential $V(\xi)=\frac{1}{2}(s_{\infty^{(2)},2}-s_{\infty^{(1)},2})\xi^2+(s_{\infty^{(2)},1}-s_{\infty^{(1)},1})\xi+(1-s_{\infty^{(1)},0})\ln(\xi-X_1)$ matches with the classical spectral curve associated with $\td{L}_{\text{P4}}(\xi)$ for $s_{X_1^{(1)},0}s_{X_1^{(2)},0}=0$ and $s_{\infty^{(2)},0}=-1$.
\end{proposition}

Note that the conditions to identify the matrix model only implies conditions on the monodromies that do not change isomonodromic deformations.

\section{Duality at different levels}\label{SecDuality}

In this section, we propose several perspectives to understand the duality of the two isomonodromic deformations of the meromorphic connections introduced in \autoref{secGl3} and  \autoref{SecP4JM}. Note that the duality is non-trivial because both the ranks of the connections differ ($\mathfrak{gl}_3(\mathbb{C})$ versus $\mathfrak{gl}_2(\mathbb{C})$) and the number of irregular times and monodromies: there are $6$ irregular times $\mathbf{t}=\left(t_{\infty^{(i)},k}\right)_{1\leq i\leq 3,1\leq k\leq 2}$ and $2$ independent monodromies $(t_{\infty^{(1)},0},t_{\infty^{(2)},0})$ in the $\mathfrak{gl}_3(\mathbb{C})$ setting of \autoref{secGl3} while we have $4$ irregular times $\left(s_{\infty^{(i)},k}\right)_{1\leq i,k\leq 2}$, one position of a pole $X_1$ and $3$ independent monodromies $(s_{\infty^{(1)},0},s_{\infty^{(2)},0},s_{X_1^{(1)},0})$ in the $\mathfrak{gl}_2(\mathbb{C})$ setting of \autoref{SecP4JM}. We shall propose several identifications of the duality. 

First, at the level of the reduced Hamiltonian systems and the shifted Darboux coordinates, the identification is simpler because there is only one non-trivial evolution. 

Then, the duality can be observed directed at the level of (classical) spectral curves in the exchange $\lambda \leftrightarrow y$. This gives rise to dual classical spectral curves (that are independent of the choice of gauge) of genus $1$ that can be identified up to terms proportional to $\hbar$ that disappear in the classical limit. These terms can be obtained at the level of spectral curves with the main limitation that these terms depend on the gauge choice. As we shall see, the identification of the duality at the level of spectral curves is possible but requires to turn one of the two Lax matrices in a different gauge where the leading term at infinity is no longer diagonal but rather lower triangular. 

Finally, one can try a dual identification at the level of quantum curves, i.e. at the level of the wave functions using the Chekhov-Eynard-Orantin topological recursion \cite{EO07} and its $x-y$ symmetry as well as the recent works \cite{Quantization_2021} building the wave functions and the Lax matrices starting from the classical spectral curves.

Let us also mention that all these perspectives should be part of some Harnad's duality picture \cite{Harnad_1994,woodhouse2007duality} that we hope to unveil for any untwisted meromorphic connections.

\subsection{Duality at the level of reduced Hamiltonian systems}
The first natural identification on both side can be made from the Hamiltonian evolutions of $(\check{q},\check{p})$ and $(\check{Q},\check{P})$. Indeed, in both cases the evolutions of these coordinates are trivial except for one direction, namely $\hbar \partial_\tau$ for $(\check{q},\check{p})$ and $\hbar \partial_{\td{X}_1}$ for $(\check{Q},\check{P})$. The identification is given by the following theorem.

\begin{theorem}[Dual correspondence of the reduced Hamiltonian systems]\label{TheoDualReducedHamiltonian} The Hamiltonian system for $(\check{q},\check{p})$ in \autoref{TheoNonTrivialEvolution} can be identified with the Hamiltonian system for $(\check{Q},\check{P})$ in  \autoref{PropReducedP4} under the correspondence:
\begin{align}
    \tau=&\sqrt{2}\td{X}_1\cr
    \check{q}=&-\sqrt{2}\left(\check{Q}-\td{X}_1\right)\cr
    \check{p}=&-\frac{1}{\sqrt{2}}\left(\check{P}-\check{Q}+\frac{t_{\infty^{(2)},0}}{2(\check{Q}-\td{X}_1)}\right)\cr
    \left(s_{X_1^{(1)},0}-s_{X_1^{(2)},0}\right)^2=&\left(t_{\infty^{(2)},0}\right)^2\cr
    s_{\infty^{(1)},0}-s_{\infty^{(2)},0}=&-t_{\infty^{(2)},0}-2t_{\infty^{(1)},0}
\end{align}
which is equivalent to
\begin{align}\td{X}_1=&\frac{\tau}{\sqrt{2}}\cr
    \check{Q}=&-\frac{1}{\sqrt{2}}(\check{q}-\tau)\cr
    \check{P}=&-\sqrt{2}\left(\check{p}+\frac{1}{2}\check{q}-\frac{\tau}{2}-\frac{t_{\infty^{(2)},0}}{2\check{q}}\right)\cr
    \left(t_{\infty^{(2)},0}\right)^2=& \left(s_{X_1^{(1)},0}-s_{X_1^{(2)},0}\right)^2\cr
    t_{\infty^{(2)},0}+2t_{\infty^{(1)},0}=& -s_{\infty^{(1)},0}+s_{\infty^{(2)},0}
\end{align}
\end{theorem}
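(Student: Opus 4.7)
The plan is to verify the duality by direct substitution, leveraging the fact that both sides have been reduced to a single nontrivial direction of evolution (Theorem \ref{TheoNonTrivialEvolution} and Proposition \ref{PropReducedP4}). Matching the two systems therefore reduces to a two-dimensional Hamiltonian problem together with the identification of the monodromy parameters.

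First, I would verify that at fixed times related by $\tau = \sqrt{2}\,\td{X}_1$, the proposed map is canonical. From the explicit formulas a short Jacobian computation yields $d\check{q}\wedge d\check{p} = d\check{Q}\wedge d\check{P}$, so the spatial part of the transformation preserves the symplectic form. The map is nonetheless time-dependent through the factor $\check{Q}-\td{X}_1$ appearing in the denominator of $\check{p}$, which is the source of the main technical difficulty below.

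Next, I would compute $\hbar\,\partial_\tau \check{q}$ and $\hbar\,\partial_\tau \check{p}$ along the Painlev\'e $4$ flow via the chain rule, using $\partial_\tau = \tfrac{1}{\sqrt{2}}\partial_{\td{X}_1}$ together with the evolutions $\hbar\,\partial_{\td{X}_1}\check{Q} = \partial_{\check{P}}\text{Ham}^{(\text{P4})}_{(\td{X}_1)}$ and $\hbar\,\partial_{\td{X}_1}\check{P} = -\partial_{\check{Q}}\text{Ham}^{(\text{P4})}_{(\td{X}_1)}$. The resulting expressions, after simplification via the inverse map, should exactly reproduce the right-hand sides of Theorem \ref{TheoNonTrivialEvolution}. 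Equivalently, one may substitute the map directly into $\text{Ham}_{(\tau)}(\check{q},\check{p};\hbar)$ and compare $\sqrt{2}\,\text{Ham}_{(\tau)}$ (the $\sqrt{2}$ absorbing the time reparameterization) to $\text{Ham}^{(\text{P4})}_{(\td{X}_1)}(\check{Q},\check{P};\hbar)$ modulo a purely time-dependent contribution and a generating-function correction coming from the explicit $\td{X}_1$-dependence of the transformation.

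The parameter identifications are then fixed by matching polynomial coefficients. The coefficient of $\check{Q}$ in both Hamiltonians yields $s_{\infty^{(1)},0} - s_{\infty^{(2)},0} = -t_{\infty^{(2)},0} - 2t_{\infty^{(1)},0}$, while the residue at the simple pole $\check{Q} = \td{X}_1$ produces $(s_{X_1^{(1)},0} - s_{X_1^{(2)},0})^2 = (t_{\infty^{(2)},0})^2$. The appearance of a square reflects the fact that this difference enters $\text{Ham}^{(\text{P4})}_{(\td{X}_1)}$ only squared, leaving a sheet-labeling ambiguity on the $\mathfrak{gl}_2$ side; compatibility with the trace conditions $t_{\infty^{(1)},0}+t_{\infty^{(2)},0}+t_{\infty^{(3)},0}=0$ and $s_{\infty^{(1)},0}+s_{\infty^{(2)},0}+s_{X_1^{(1)},0}+s_{X_1^{(2)},0}=0$ can then be verified directly. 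The main obstacle is algebraic bookkeeping: substituting $\frac{t_{\infty^{(2)},0}}{2(\check{Q}-\td{X}_1)}$ into the cubic $\text{Ham}_{(\tau)}$ produces poles of order up to three at $\check{Q} = \td{X}_1$, and only a precise polynomial cancellation leaves exactly the simple pole of $\text{Ham}^{(\text{P4})}_{(\td{X}_1)}$. A computer-algebra verification, in the spirit of the Maple worksheets referenced elsewhere in the paper, would make this transparent.
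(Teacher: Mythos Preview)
Your approach is essentially the same as the paper's: both proofs reduce to a direct, computational verification that the flows match under the proposed time-dependent change of coordinates, with the paper explicitly warning (as you do) that because the map depends on $\td{X}_1$ one must track the evolutions $\hbar\partial_\tau\check{q}$, $\hbar\partial_\tau\check{p}$ through the chain rule rather than merely substitute into the Hamiltonians. Your identification of the monodromy matchings from the linear-in-$\check{Q}$ term and the pole at $\check{Q}=\td{X}_1$ is exactly how the comparison falls out.

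One small inaccuracy: in your alternative route (substituting into $\text{Ham}_{(\tau)}$), you anticipate poles of order up to three at $\check{Q}=\td{X}_1$. In fact $\text{Ham}_{(\tau)}$ is only quadratic in $\check{p}$, and the $\check{q}\check{p}^2$ term already carries a factor of $\check{q}\propto(\check{Q}-\td{X}_1)$, so the worst singularity produced is a simple pole; no higher-order cancellation is needed. This only makes the bookkeeping lighter than you suggest, and does not affect the validity of your argument.
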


\begin{proof}The proof follows from straightforward computations. Note however that the identification between $(\check{q},\check{p})$ and $(\check{Q},\check{P})$ is time-dependent so that one cannot simply replace the Darboux coordinates in the Hamiltonians but one needs to track the flows $\hbar\partial_\tau \check{q}$ and $\hbar\partial_\tau \check{p}$ to get $\hbar \partial_\tau \check{Q}$ and $\hbar \partial_\tau \check{P}$ and then $\hbar \partial_{\td{X}_1}\check{Q}$ and $\hbar \partial_{\td{X}_1}\check{P}$ and match both sides.
\end{proof}

\autoref{TheoDualReducedHamiltonian} implies that the isomonodromic deformations of meromorphic connections in $\mathfrak{gl}_3(\mathbb{C})$ with an irregular pole at infinity of order $r_\infty=3$ essentially corresponds to the Painlev\'{e} IV Hamiltonian systems. In particular, the Lax pair or \autoref{PropLaxMatricesReduced} provides a Lax representation in $\mathfrak{gl}_3(\mathbb{C})$ of the Painlev\'{e} IV Hamiltonian system. Therefore the Painlev\'{e} IV Hamiltonian system may be seen either as isomonodromic deformations of meromorphic connections in $\mathfrak{gl}_2(\mathbb{C})$ with one irregular pole at infinity of order $r_\infty=3$ and one regular finite pole or isomonodromic deformations of meromorphic connections in $\mathfrak{gl}_3(\mathbb{C})$ with only one irregular pole of order $r_\infty=3$.

\medskip

As we will see in the next section, \autoref{TheoDualReducedHamiltonian} follows from the more general identification of the spectral curves by duality that provides complete identification of all times and monodromies on both sides.

\subsection{Duality at the level of spectral curves}\label{SecDualitySpecCurves}
The second perspective to consider duality is to study the correspondence of the spectral curves under the exchange $\lambda\leftrightarrow y$. This point of view is the standard and historical one for the duality and it is the main subject of this section. As discussed in \autoref{Sectiongl3duality} and because the spectral curves are not invariant by gauge transformations, we need to adapt the gauge between both sides to obtain spectral duality. This was the main motivation to introduce the duality gauge $\hat{L}_d(\lambda)$ in the $\mathfrak{gl}_3(\mathbb{C})$ setting. Let us finally remind that in the $\mathfrak{gl}_2(\mathbb{C})$ case, the spectral curve is explicitly given by:
\beq\label{SpectralCurveGl2} \det(YI_2-\td{L}_{\text{P4}}(\xi))=Y^2-R_1(\xi)Y+R_2(\xi)-\frac{Q-X1}{\xi-X_1}\left(P^2-R_1(Q)P+R_2(Q)\right)\eeq
while in the $\mathfrak{gl}_3(\mathbb{C})$ case the spectral curve associated with $\hat{L}_d(\lambda)$ is given by
\begin{align} \label{SpectralCurveGl3} \det(yI_3-\hat{L}_{d}(\lambda))=&(y+t_{\infty^{(2)},2}\lambda)^3-P_1(\lambda)(y+t_{\infty^{(2)},2}\lambda)^2+P_2(\lambda)(y+t_{\infty^{(2)},2}\lambda)\cr
&-P_3(\lambda) -\left(p^3-P_1(q)p^2+P_2(q)p-P_3(q)\right)\end{align}
These two spectral curves do not depend explicitly on $\hbar$ and thus one may obtain the classical spectral curves by merely replacing $(q,p,Q,P)$ by their formal limit $(q_0,p_0,Q_0,P_0)$ as $\hbar\to 0$. Both spectral curves are generically of genus $1$. However, it is important to notice that the shift in $y$ performed on the $\mathfrak{gl}_3(\mathbb{C})$ side modifies the Newton polygon by removing the lower-right triangle as presented in \autoref{figurepolygons}. In other words the Newton polygon associated with $\td{L}(\lambda)$ is not the same as the Newton polygon for $\hat{L}_d(\lambda)$.\footnote{The reason is that the point $(3,0)$ in the Newton polygon corresponds to the coefficient of $\lambda^3y^0$ in the polynomial $\det(yI_3-\hat{L}_d(\lambda))$, i.e. to the cubic term in $\det \hat{L}_d(\lambda)$. But this coefficient is obviously vanishing because $\hat{L}_d(\lambda)\overset{\lambda\to \infty}{=}\text{diag}\left(t_{\infty^{(1)},2}-t_{\infty^{(2)},2},0, t_{\infty^{(1)},2}-t_{\infty^{(1)},2}\right)\lambda +O(1)= O(\lambda^2)$.}  In the $\mathfrak{gl}_2(\mathbb{C})$ case, the Newton polygon also changes drastically when $s_{X_1^{(1)},0}s_{X_1^{(2)},0}=0$, i.e. when one monodromy (at least) at $\xi=X_1$ is vanishing. Indeed, in this case, we have that both $\Tr \,\td{L}_{\text{P4}}(\xi)$ and $ \det \td{L}_{\text{P4}}(\xi)$ have only a simple pole at $\xi=X_1$ (otherwise $\det \td{L}_{\text{P4}}(\xi)$ would have a double pole). This is the only possible case for which the spectral curve associated with $\hat{L}_d$ and $\td{L}_{\text{P4}}$ have dual Newton polygons (i.e. symmetric relatively to the $x-y$ swap).

\begin{figure}[H]\centering
\includegraphics[scale=0.3]{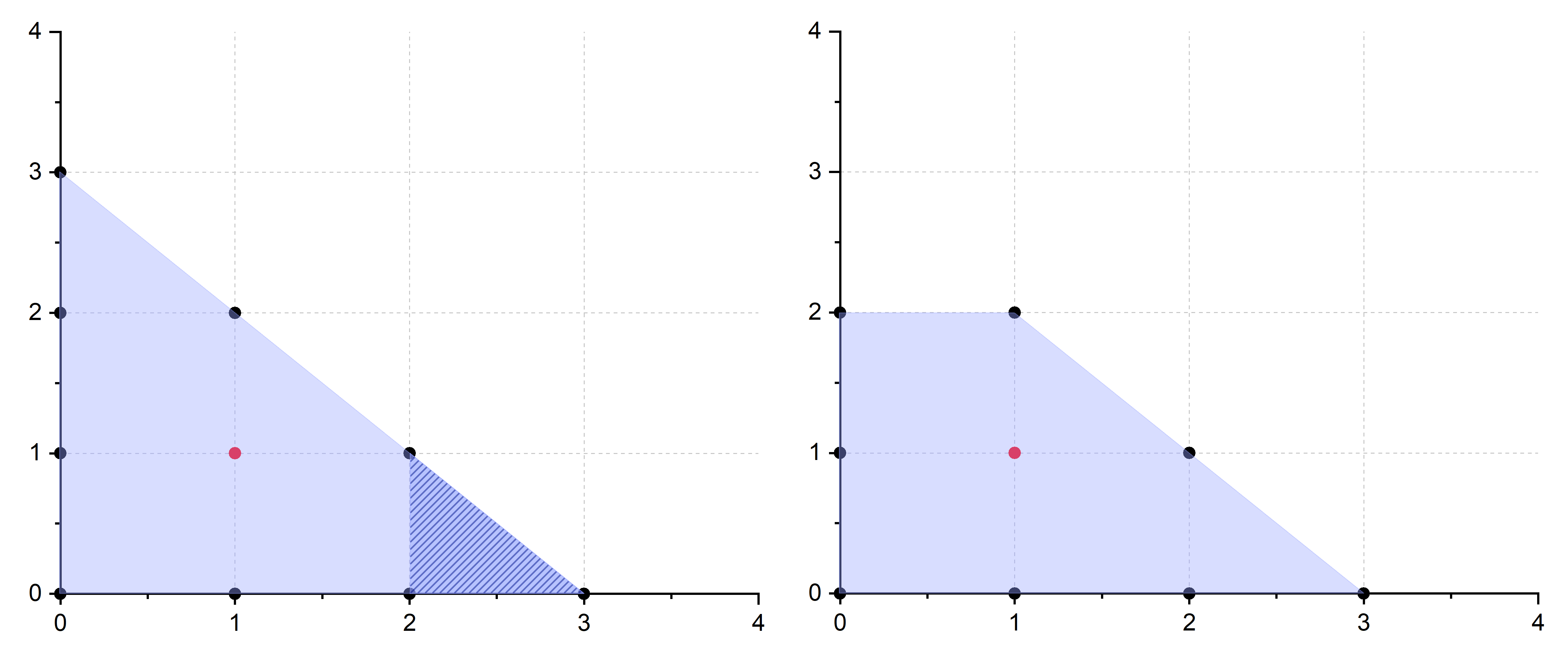}
\caption{\textit{Newton polygons on both sides. The left figure represents the Newton polygons of $\mathcal{S}_d$ (light blue) or $\mathcal{S}$ (light and dark blue). The right figure represents the Newton polygon of $\mathcal{S}_{\text{P4}}$ under the constraint $s_{X_1^{(1)},0}s_{X_1^{(2)},0}=0$.}}
\label{figurepolygons}
\end{figure}

Moreover, the duality gauge $\hat{L}_d(\lambda)$ is chosen so that the following theorem holds.

\begin{theorem}[Duality at the level of spectral curves]\label{TheoDualitySpecCurves} We have
\beq \frac{1}{(t_{\infty^{(3)},2}-t_{\infty^{(2)},2})(t_{\infty^{(1)},2}-t_{\infty^{(2)},2})}\det(\lambda I_3-\hat{L}_d(y))= (\lambda-X_1)\det(y I_2-\td{L}_{\text{P4}}(\lambda))\eeq
upon the identification of the times, Darboux coordinates and monodromies given by
\begin{align}
q=&P\cr
p_d=&Q\,\,\Leftrightarrow\,\, p=Q+t_{\infty^{(2)},2}P\,\, \Leftrightarrow \,\, Q=p-t_{\infty^{(2)},2}q \cr
t_{\infty^{(2)},1}=&X_1\cr
t_{\infty^{(1)},2}=& t_{\infty^{(2)},2}-\frac{1}{s_{\infty^{(2)},2}}\cr
t_{\infty^{(3)},2}=& t_{\infty^{(2)},2}-\frac{1}{s_{\infty^{(1)},2}}\cr
t_{\infty^{(1)},1}=&-\frac{s_{\infty^{(2)},1}}{s_{\infty^{(2)},2}}\cr
t_{\infty^{(3)},1}=&-\frac{s_{\infty^{(1)},1}}{s_{\infty^{(1)},2}}\cr
t_{\infty^{(1)},0}=&s_{\infty^{(2)},0}\cr
t_{\infty^{(2)},0}=&-s_{\infty^{(1)},0}-s_{\infty^{(2)},0}\cr
t_{\infty^{(3)},0}=&s_{\infty^{(1)},0}\cr
0=&s_{X_1^{(1)},0}s_{X_1^{(2)},0}
\end{align}
In other words, one may identify both spectral curves loci $\det(\lambda I_3-\hat{L}_d(y))=0$ and $\det(y I_2-\td{L}_{\text{P4}}(\lambda))=0$ upon the swap $(\lambda,y)\leftrightarrow (y,\lambda)$ corresponding to the standard terminology of spectral duality.
\end{theorem}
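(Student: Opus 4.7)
The plan is to verify the theorem by a direct polynomial identification, but the key preparatory observation is that both sides have to be brought to the same Newton polygon. The left-hand side $\det(\lambda I_3 - \hat{L}_d(y))$ is, after swapping $\lambda\leftrightarrow y$ in \eqref{SpecCurvesGl3shift}, a genuine polynomial in $(\lambda,y)$ of bidegree $(3,3)$ with the lower-right triangle of the polygon absent (cf.\ Figure \ref{figurepolygons}). By contrast, the right-hand side $\det(yI_2-\td{L}_{\text{P4}}(\lambda))$ given by \eqref{SpectralCurveGl2} is rational in $\lambda$ with a simple pole from $R_1$ and a double pole from $R_2$ at $\lambda=X_1$. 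The constraint $s_{X_1^{(1)},0}s_{X_1^{(2)},0}=0$ coming from the statement is precisely what kills the double pole of $R_2(\lambda)$ in \eqref{DefR1R2}, so that after multiplying through by $(\lambda-X_1)$ the right-hand side becomes a polynomial of the same bidegree and with the same missing corner as the left-hand side. Thus the monodromy constraint in the theorem is not an extra hypothesis but rather the Newton-polygon compatibility condition for duality.

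With the two sides put on equal footing, I would then expand them fully. On the left, using Theorem \ref{LaxMatrixgl3}, the polynomial is
\[
(\lambda-t_{\infty^{(2)},2})^{3}-P_1(y)(\lambda-t_{\infty^{(2)},2})^{2}+P_2(y)(\lambda-t_{\infty^{(2)},2})-P_3(y)-\bigl(p^3-P_1(q)p^2+P_2(q)p-P_3(q)\bigr),
\]
and on the right, after clearing the denominator, one gets
\[
(\lambda-X_1)\bigl(y^2-R_1(\lambda)y+R_2(\lambda)\bigr)-(Q-X_1)\bigl(P^2-R_1(Q)P+R_2(Q)\bigr).
\]
I would then match coefficients of $\lambda^i y^j$ in a hierarchical order dictated by the bidegrees. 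The top-degree relations in $\lambda$ (at $y$ frozen), namely the coefficients of $\lambda^3 y^0$, $\lambda^3 y^1$ and $\lambda^2 y^2$, fix the symmetric functions of $\{t_{\infty^{(1)},2}-t_{\infty^{(2)},2},\,0,\,t_{\infty^{(3)},2}-t_{\infty^{(2)},2}\}$ in terms of $\{s_{\infty^{(1)},2},s_{\infty^{(2)},2}\}$, and force $t_{\infty^{(2)},1}=X_1$ and the relations $t_{\infty^{(1)},2}-t_{\infty^{(2)},2}=-1/s_{\infty^{(2)},2}$, $t_{\infty^{(3)},2}-t_{\infty^{(2)},2}=-1/s_{\infty^{(1)},2}$. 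The subleading coefficients in $\lambda$ then yield $t_{\infty^{(1)},1}=-s_{\infty^{(2)},1}/s_{\infty^{(2)},2}$ and $t_{\infty^{(3)},1}=-s_{\infty^{(1)},1}/s_{\infty^{(1)},2}$. Finally the coefficients of $y^0$ of the regular (in $\lambda$) part provide the monodromy identification $t_{\infty^{(1)},0}=s_{\infty^{(2)},0}$, $t_{\infty^{(3)},0}=s_{\infty^{(1)},0}$, and hence $t_{\infty^{(2)},0}=-s_{\infty^{(1)},0}-s_{\infty^{(2)},0}$ via \eqref{Monodromiescondition}.

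The remaining task is to identify the Darboux coordinate parts. After substituting all the time and monodromy identifications just obtained, the difference of the two polynomials reduces to
\[
-\bigl(p^3-P_1(q)p^2+P_2(q)p-P_3(q)\bigr)+(Q-X_1)\bigl(P^2-R_1(Q)P+R_2(Q)\bigr),
\]
times an overall nonzero factor. Both summands are residuals that must vanish on the respective spectral curves, and hence the identification $q=P$, $p_d=Q$ (equivalently $Q=p-t_{\infty^{(2)},2}q$) is forced by matching the linear-in-coordinate terms and then verified on the cubic/quadratic parts. The relation $(s_{X_1^{(1)},0}-s_{X_1^{(2)},0})^2=(t_{\infty^{(2)},0})^2$ appears as a consistency condition: once $s_{X_1^{(1)},0}s_{X_1^{(2)},0}=0$ is imposed, the sum $s_{X_1^{(1)},0}+s_{X_1^{(2)},0}$ is determined by the sum-of-monodromies condition up to sign, and that sign ambiguity reflects the exchange $s_{X_1^{(1)},0}\leftrightarrow s_{X_1^{(2)},0}$, i.e.\ the Weyl freedom at the finite pole.

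The main obstacle is not conceptual but combinatorial: the explicit polynomial expansion is voluminous, with many cross-terms coming from $P_1,P_2,P_3$ crossed with powers of $(\lambda-t_{\infty^{(2)},2})$ on one side, and from $R_1R_2$ products on the other. I would therefore organize the verification by bidegree in $(\lambda,y)$, solve the linear subsystem for the $t_{\infty^{(i)},k}$ first (which is forced by the highest-degree monomials) and delegate the mechanical verification of the residual monomials (constants in $y$, mixed low-degree terms, and the Darboux-coordinate residual) to a symbolic computation, consistently with the Maple worksheets the authors already reference for related identities.
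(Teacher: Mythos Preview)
Your approach---expand both characteristic polynomials explicitly and match coefficients monomial by monomial---is exactly what the paper does; their proof consists of the single sentence ``the proof follows from straightforward computations of both spectral curves and matching of each terms,'' and your Newton-polygon rationale for the constraint $s_{X_1^{(1)},0}s_{X_1^{(2)},0}=0$ already appears in the discussion preceding Figure~\ref{figurepolygons}. One small correction to your displayed expansion of the left-hand side: since $\hat{L}_d(y)=\td{L}(y)-t_{\infty^{(2)},2}\,y\,I_3$, the correct substitution is $(\lambda+t_{\infty^{(2)},2}y)^{3}-P_1(y)(\lambda+t_{\infty^{(2)},2}y)^{2}+\cdots$ rather than $(\lambda-t_{\infty^{(2)},2})^{3}-\cdots$; the $y^3$ coefficient then cancels as $(t_{\infty^{(2)},2}-t_{\infty^{(1)},2})(t_{\infty^{(2)},2}-t_{\infty^{(2)},2})(t_{\infty^{(2)},2}-t_{\infty^{(3)},2})=0$, but this does not affect your argument.
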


\begin{proof}The proof follows from straightforward computations of both spectral curves and matching of each terms giving the identification of \autoref{TheoDualitySpecCurves}. It is worth noticing that both sides are polynomial in $(\lambda,y)$ of respective degrees $3$ and $2$. The l.h.s is because $\hat{L}_d$ is chosen such that $\det \hat{L}_d(\lambda)\overset{\lambda\to \infty}{=} (t_{\infty^{(3)},2}-t_{\infty^{(2)},2})(t_{\infty^{(1)},2}-t_{\infty^{(2)},2})\lambda^2 +O(\lambda)$ while the r.h.s. is because $s_{X_1^{(1)},0}s_{X_1^{(2)},0}=0$ so that $\det \td{L}_{\text{P4}}(\lambda)$ has only a simple pole at $\lambda=X_1$.
\end{proof}

Note that the parameter $t_{\infty^{(2)},2}$ remains free in the identification so that the spectral duality is valid for any value of $t_{\infty^{(2)},2}$. When $t_{\infty^{(2)},2}$ is set to $0$ as proposed in the canonical choice of trivial times of \autoref{DefCanonicalTimesGl3} then the spectral duality  directly happens at the level of the Lax matrices $\td{L}(\lambda)$ and $\td{L}_{\text{P4}}(\lambda)$ for the Darboux coordinates $(q,p)$ and $(Q,P)$.

\begin{remark}Let us mention that in the initial gauge $\td{L}(\lambda)$, the spectral curve is dual to itself. Indeed, the exchange $(\lambda,y)\leftrightarrow (y,\lambda)$ produces a spectral curve of the form
\beq \det(\lambda I_3-\td{L}(y))=-t_{\infty^{(1)},2}t_{\infty^{(2)},2}t_{\infty^{(3)},2}\left[ y^3-Q_1(\lambda)y^2+Q_2(\lambda)y-Q_3(\lambda)\right]\eeq
where $Q_1$, $Q_2$ and $Q_3$ correspond to $P_1$, $P_2$ and $P_3$ with the correspondence of the times, monodromies and Darboux coordinates given by \beq \left(t_{\infty^{(i)},2},t_{\infty^{(i)},1},t_{\infty^{(i)},0},Q,P\right) \rightarrow\left( \frac{1}{t_{\infty^{(i)},2}},-\frac{t_{\infty^{(i)},1}}{t_{\infty^{(i)},2}},-t_{\infty^{(i)},0},P,Q\right)\,\,,\,\, \forall \, i\in \llbracket1,3\rrbracket.
\eeq
In other words, the spectral duality applied to $\td{L}(\lambda)$ would provide a similar meromorphic connection but with different values of times, monodromies and exchange of Darboux coordinates. However, this trivial spectral duality cannot be performed if one of the leading irregular term $\left(t_{\infty^{(i)},2}\right)_{1\leq i\leq 3}$ is vanishing which is precisely happening after the shift to get $\hat{L}_d$.
\end{remark}

\begin{remark}One may wonder what is the dual of the $\mathfrak{gl}_2(\mathbb{C})$ case when $s_{X_1^{(1)},0}s_{X_1^{(2)},0}\neq 0$. In this case, the dual spectral curve would be of the form $Y^4+ U_1(\xi)Y^3+U_2(\xi)Y^2+V_2(\xi)Y+W_2(\xi)=0$ where the degree of the polynomials $(U_1,U_2,V_2,W_2)$ is given by their index. This implies a meromorphic connection in $\mathfrak{gl}_4(\mathbb{C})$ with a single pole at infinity and with local diagonalization at infinity given by
\beq \td{L}^{\text{dual}}_{\text{P4}}(\xi)=\text{diag}(0,0,s_1,s_2)\xi +\td{L}_{0,\text{P4}}^{\text{dual}}\,\,\text{with}\,\, \det \td{L}_{0,\text{P4}}^{\text{dual}}=s_{X_1^{(1)},0}s_{X_1^{(2)},0}\eeq
Note that the special case $s_{X_1^{(1)},0}s_{X_1^{(2)},0}=0$ corresponds to a case where $\td{L}_{0,\text{P4}}^{\text{dual}}$ collapses to a block-diagonal form $\td{L}_{0,\text{P4}}^{\text{dual}}=\diag(0,M_3)$ so that the meromorphic connection $\td{L}^{\text{dual}}_{\text{P4}}(\xi)$ can be reduced to a $\mathfrak{gl}_3(\mathbb{C})$ block only.
\end{remark}

Let us finally explain how to guess the trivial directions $\left(\mathcal{L}_{\mathbf{a}_i}\right)_{1\leq i\leq 3}$ in the $\mathfrak{gl}_3(\mathbb{C})$ setting using spectral duality. Indeed, the spectral duality implies a one-to-one map $\mathbf{t}\to (\mathbf{s},t_{\infty^{(2)},2})$ that translates immediately to the tangent spaces. However, in the dual $\mathfrak{gl}_2(\mathbb{C})$ setting, we know that the directions corresponding to dilatation/translation/trace given by \eqref{TrivialVectorsP4} must give linear evolutions for the Darboux coordinates. Translating this fact by spectral duality (that only exchanges $(q,p_d)$ to $(P,Q)$) to the $\mathfrak{gl}_3(\mathbb{C})$ case provides from \eqref{CorrespondanceTrivialDirections} that $\mathcal{L}_{\mathbf{a}_2}$ must provide linear evolutions for the Darboux coordinates (and of course, a similar reasoning provides the same result for $\mathcal{L}_{\mathbf{a}_1}$ and $\mathcal{L}_{\mathbf{a}_3}$ because in \eqref{CorrespondanceTrivialDirections} one could have used any of these three operators). This observation may be useful for general meromorphic connections but shall not provide all the trivial directions if none of the sides happens to be in $\mathfrak{gl}_2(\mathbb{C})$. This is why it would be useful to have a proper understanding of the trivial directions for meromorphic connections in $\mathfrak{gl}_d(\mathbb{C})$ without using spectral duality and we let this open question for future works.

\subsection{Duality at the level of Hamiltonian evolutions and fundamental symplectic two-forms}

The spectral duality of \autoref{TheoDualitySpecCurves} allows one to verify whether the Hamiltonian evolutions on both sides are also dual. 

\begin{theorem}[Dual Hamiltonian evolutions]\label{TheoDualHamiltonians}The general Hamiltonian evolutions of \autoref{Defs} and \autoref{TheoHamP4} are identical under the identification given by the duality at the level of spectral curves of \autoref{TheoDualitySpecCurves}.
\end{theorem}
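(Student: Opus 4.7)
The plan is to reduce the claim to a chain-rule bookkeeping exercise, using the very explicit form of the identification in Theorem \ref{TheoDualitySpecCurves}. First, I would view that identification as defining a smooth map $\Phi$ from $\mathbb{B}_{\text{P4}} \times \mathbb{C}_{t_{\infty^{(2)},2}}$ (the $\mathfrak{gl}_2$ base augmented by the gauge freedom $t_{\infty^{(2)},2}$) onto $\mathbb{B}$, together with the coordinate map $q = P$, $p = Q + t_{\infty^{(2)},2} P$ on the fibres. Computing the Jacobian of the time part of $\Phi$ (which is triangular in a suitable ordering of variables and whose entries involve only $s_{\infty^{(i)},2}^{-1}$ and $s_{\infty^{(i)},2}^{-2}$) produces an explicit linear expression of each basic operator $\mathcal{L}_{\mathbf{e}_{\infty^{(i)},k}}$ as a combination of $\mathcal{L}_{\mathbf{u}_{\infty^{(j)},l}}$, $\hbar\partial_{X_1}$ and $\hbar\partial_{t_{\infty^{(2)},2}}$.

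Second, I would substitute the time and Darboux coordinate identifications into the eight $\mathfrak{gl}_3$ Hamiltonians \eqref{Hamcertain}--\eqref{OtherHamiltonians} and compare with the general $\mathfrak{gl}_2$ Hamiltonian \eqref{HamilP4General} evaluated at the vector $\boldsymbol{\beta}$ read off from the chain-rule expansion. Since Theorem \ref{TheoDualitySpecCurves} identifies the spectral curves $\det(\lambda I_3 - \hat L_d(y)) = \det(y I_2 - \td{L}_{\text{P4}}(\lambda))$ as polynomials in $(\lambda, y, \mathbf{s}, X_1, Q, P)$, the coefficients $P_1, P_2, P_3$ and $R_1, R_2$ are algebraically related under the identification; this immediately identifies the polynomial dependence in $(q,p)$ of the $\mathfrak{gl}_3$ Hamiltonians with that of the $\mathfrak{gl}_2$ Hamiltonians, up to the sign reflecting that the coordinate swap $q = P$, $p_d = Q$ is anti-canonical ($dq \wedge dp_d = -dQ\wedge dP$). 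The purely time-dependent terms \eqref{ConstantTermsHam} and \eqref{uterms} are irrelevant for the evolutions of Darboux coordinates, so they need not match term by term; only the $(q,p)$-dependent parts must agree, which is a finite algebraic check.

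Third, the $t_{\infty^{(2)},2}$-direction deserves separate attention: under $p = Q + t_{\infty^{(2)},2}P$, the vector field $\hbar\partial_{t_{\infty^{(2)},2}}$ (at fixed $(Q,P,\mathbf{s},X_1)$) acts as the shift $p \mapsto p + \hbar P$ with $q$ fixed, which by \eqref{Hamcertain} is generated precisely by $\text{Ham}_{(\mathbf{v}_{\infty,2})} = -\tfrac{1}{2}\hbar q^2$ on the $\mathfrak{gl}_3$ side. This confirms that the $t_{\infty^{(2)},2}$-direction is a pure gauge direction (in agreement with $G_d(\lambda) = \exp(-t_{\infty^{(2)},2}\lambda^2/(2\hbar)) I_3$ being a gauge transformation) and that it contributes nothing to the $\mathfrak{gl}_2$ dynamics, consistently with the absence of a corresponding direction on the $\mathfrak{gl}_2$ base.

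The main obstacle is the sign and factor gymnastics produced by the simultaneous Darboux coordinate swap and the shift $p = p_d + t_{\infty^{(2)},2} q$: the $\mathfrak{gl}_2$ Hamilton equations $\mathcal{L}_{\boldsymbol{\beta}}[Q] = \partial_P \text{Ham}^{(\text{P4})}_{(\boldsymbol{\beta})}$, $\mathcal{L}_{\boldsymbol{\beta}}[P] = -\partial_Q \text{Ham}^{(\text{P4})}_{(\boldsymbol{\beta})}$ must match, after chain rule, the $\mathfrak{gl}_3$ equations $\mathcal{L}_{\mathbf{e}_{\infty^{(i)},k}}[q] = \partial_p \text{Ham}_{(\mathbf{e}_{\infty^{(i)},k})}$ and $\mathcal{L}_{\mathbf{e}_{\infty^{(i)},k}}[p] = -\partial_q \text{Ham}_{(\mathbf{e}_{\infty^{(i)},k})}$, with the expected global sign from $dq \wedge dp_d = -dQ\wedge dP$. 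Since all expressions involved are explicit and polynomial (or rational) in the irregular times and Darboux coordinates, this verification is most efficiently carried out with a computer algebra check along the lines of the Maple worksheet announced in the proof of Theorem \ref{PropJMUDifferential}, and indeed the expected consistency is guaranteed conceptually by the fact that both Hamiltonian systems arise as the symplectic reductions of the same isomonodromic flow on the dual moduli spaces of meromorphic connections.
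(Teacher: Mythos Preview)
Your proposal is essentially the same as the paper's proof: a direct chain-rule verification in each direction of the tangent space, with the final algebraic identities delegated to a computer algebra check. The paper organizes the computation slightly differently---it starts from the four trivial $\mathfrak{gl}_2$ directions $(\mathcal{L}_{\boldsymbol{\beta}_{\infty,1}},\mathcal{L}_{\boldsymbol{\beta}_{\infty,2}},\mathcal{L}_{\boldsymbol{\beta}_{\text{dil}}},\mathcal{L}_{\boldsymbol{\beta}_{\text{transl}}})$, expresses each explicitly as a combination of the $\mathfrak{gl}_3$ trivial directions $(\mathcal{L}_{\mathbf{v}_{\infty,k}},\mathcal{L}_{\mathbf{u}_{\infty,k}},\mathcal{L}_{\mathbf{a}_2})$, and then treats the single non-trivial direction $\hbar\partial_{X_1}=\hbar\partial_{t_{\infty^{(2)},1}}$ by a direct comparison of the explicit flows---whereas you propose to go the other way and pull back the $\mathfrak{gl}_3$ basis vectors $\mathcal{L}_{\mathbf{e}_{\infty^{(i)},k}}$ to the $\mathfrak{gl}_2$ side. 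Your observation that $\hbar\partial_{t_{\infty^{(2)},2}}$ at fixed $(\mathbf{s},X_1)$ corresponds exactly to $\mathcal{L}_{\mathbf{v}_{\infty,2}}$ is correct and matches the paper's remark that $t_{\infty^{(2)},2}$ is a pure gauge parameter on the $\mathfrak{gl}_2$ side; the paper does not carry out that direction explicitly.
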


\begin{proof}The proof is done by direct verification in each direction of the tangent space. It is presented in  \autoref{AppendixIdentificationHamiltonian}. We note in particular that the trivial directions on each side are mapped with each other. 
\end{proof}

\begin{remark}The duality equation $\det(\lambda I_3-\hat{L}_d(y))= \det(y I_2-\td{L}_{\text{P4}}(\lambda))$ provides two different correspondences of the times and monodromies. However, the second solution does not satisfy \autoref{TheoDualHamiltonians} because the evolutions on both sides differ in an additional term $\hbar(s_{\infty^{(2)},2} -s_{\infty^{(1)},2})$. Note also that matching $\det((-\lambda) I_3-\hat{L}_d(y))= \det(y I_2-\td{L}_{\text{P4}}(\lambda))$ (i.e. perform $(\lambda,y)\leftrightarrow (y,-\lambda)$ is possible and provides two distinct identifications of times and monodromies. However, it does not allow one to satisfy  \autoref{TheoDualHamiltonians} because the evolutions are opposite and we get extra factors of type $\hbar( \alpha s_{\infty^{(2)},2} +\beta s_{\infty^{(1)},2})$.
\end{remark}

\begin{remark}The spectral identification of  \autoref{TheoDualitySpecCurves} may be seen as a change of times and Darboux coordinates. On the $\mathfrak{gl}_3(\mathbb{C})$ side we have $(\mathbf{t},q,p)$ while on the $\mathfrak{gl}_2(\mathbb{C})$ side we would have $(t_{\infty^{(2)},2},\mathbf{s})$. However, note that in order to have tangent spaces with the same dimension, we need to keep $t_{\infty^{(2)},2}$ in the $\mathfrak{gl}_2(\mathbb{C})$ side although it does not appear naturally there. In particular, one could transfer the evolutions of $(q,p)$ given by $\text{Ham}_{\mathbf{e}_{\infty^{(2)},2}}(q,p;\hbar)$ to obtain the Hamiltonian evolutions of $(Q,P)$ relatively to $t_{\infty^{(2)},2}$ on the $\mathfrak{gl}_2(\mathbb{C})$. However since $t_{\infty^{(2)},2}$ is not a natural parameter on the $\mathfrak{gl}_2(\mathbb{C})$ side, we do not see any interest in these evolutions and shall not reproduce them here.      
\end{remark}

Let us observe that the spectral duality identification of \autoref{TheoDualitySpecCurves} implies that the reduced Darboux coordinates $(\check{q},\check{p})$ and $(\check{Q},\check{P})$ are identified by:
\begin{align} \label{EquivalenceReduced}\tau=&\sqrt{2} \td{X}_1\cr
\check{q}=&-\sqrt{2}(\check{Q}-\td{X}_1)\cr
\check{p}=&-\frac{1}{\sqrt{2}}\left(\check{P}-\check{Q}-\frac{s_{\infty^{(1)},0}+s_{\infty^{(2)},0}}{2(\check{Q}-\td{X}_1)} \right)\cr
t_{\infty^{(1)},0}=&s_{\infty^{(2)},0}\cr
t_{\infty^{(2)},0}=&-s_{\infty^{(1)},0}-s_{\infty^{(2)},0}\cr
t_{\infty^{(3)},0}=&s_{\infty^{(1)},0}
\end{align}
or equivalently
\begin{align}
    \td{X}_1=&\frac{1}{\sqrt{2}}\tau\cr
    \check{Q}=&-\frac{1}{\sqrt{2}}(\check{q}-\tau)\cr
    \check{P}=&-\sqrt{2}\left(\check{p}+\frac{1}{2}\check{q}-\frac{\tau}{2}-\frac{t_{\infty^{(2)},0}}{2\check{q}}\right)\cr
    s_{X_1^{(1)},0}s_{X_1^{(2)},0}=&0\cr
    s_{\infty^{(1)},0}=& t_{\infty^{(3)},0}\cr
    s_{\infty^{(2)},0}=& t_{\infty^{(1)},0}
\end{align}
This is consistent with the identification made in  \autoref{TheoDualReducedHamiltonian} that can be seen as a by-product of \autoref{TheoDualHamiltonians}.

\medskip

\autoref{TheoDualHamiltonians} may be extended to the fundamental symplectic two-forms.

\begin{theorem}[Duality for the fundamental symplectic two-forms]\label{TheoCorrespondenceFundamentalTwoForms} The two fundamental symplectic two-forms
\begin{align}
    \Omega^{(\text{P4})}=&\hbar dQ\wedge dP - \sum_{i=1}^2\sum_{k=1}^2 ds_{\infty^{(i)},k}\wedge d\text{Ham}_{(\mathbf{u}_{\infty^{(i)},k})}(Q,P) -dX_1\wedge d \text{Ham}_{(X_1)}(Q,P)\cr
   \Omega =&\hbar  dq \wedge dp -\sum_{i=1}^3\sum_{k=1}^{2} dt_{\infty^{(i)},k}\wedge d\text{Ham}_{(\mathbf{e}_{\infty^{(i)},k})}(q,p)
\end{align}
 satisfy $\Omega=\Omega^{(\text{P4})}$ under the identification of the spectral duality given by \autoref{TheoDualitySpecCurves}.
\end{theorem}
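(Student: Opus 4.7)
The plan is to leverage the two Arnold--Liouville reductions already available: Theorem \ref{SymplecticReduction} gives
$$\Omega = \hbar\, d\check{q}\wedge d\check{p} - d\tau \wedge d\text{Ham}_{(\tau)}(\check{q},\check{p};\hbar),$$
and Proposition \ref{PropReductionOmega2} gives
$$\Omega^{(\text{P4})} = \hbar\, d\check{Q}\wedge d\check{P} - d\td{X}_1 \wedge d\text{Ham}_{(\td{X}_1)}^{(\text{P4})}(\check{Q},\check{P};\hbar).$$
Before any computation, I would remark that the two bases have different dimensions (six irregular times on the $\mathfrak{gl}_3$ side against four plus the pole position $X_1$ on the $\mathfrak{gl}_2$ side); the free parameter $t_{\infty^{(2)},2}$ introduced by the duality gauge of Definition \ref{DefDualityGauge} parametrises exactly a one-dimensional leaf on which the $\mathfrak{gl}_3$ side has no counterpart motion, so the equality $\Omega = \Omega^{(\text{P4})}$ has to be understood after restriction to a fixed value of $t_{\infty^{(2)},2}$ (equivalently, after fixing the additive shift by the central $y \mapsto y - t_{\infty^{(2)},2}\lambda$ performed in Section \ref{Sectiongl3duality}).

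The main step is then to substitute the explicit, time-dependent change of variables \eqref{EquivalenceReduced},
$$\tau = \sqrt{2}\,\td{X}_1,\qquad \check{q}=-\sqrt{2}(\check{Q}-\td{X}_1),\qquad \check{p}=-\frac{1}{\sqrt{2}}\Bigl(\check{P}-\check{Q}-\frac{s_{\infty^{(1)},0}+s_{\infty^{(2)},0}}{2(\check{Q}-\td{X}_1)}\Bigr),$$
together with the monodromy identification, into the reduced form of $\Omega$ and to collect the resulting pieces. Since the transformation is explicitly $\td{X}_1$-dependent, $d\check{q}\wedge d\check{p}$ splits into a ``pure fibre'' part (with no $d\td{X}_1$) producing exactly $d\check{Q}\wedge d\check{P}$ after the determinant of the constant-coefficient Jacobian is computed, and ``mixed'' pieces of the form $d\td{X}_1\wedge (\alpha\, d\check{Q}+\beta\, d\check{P})$ coming from $\partial_{\td{X}_1}\check{q}$ and $\partial_{\td{X}_1}\check{p}$. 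These mixed pieces must be absorbed into the Hamiltonian term: expanding $d\tau\wedge d\text{Ham}_{(\tau)}(\check{q},\check{p};\hbar)$ in the dual variables via Theorem \ref{TheoDualReducedHamiltonian} and comparing with $d\td{X}_1\wedge d\text{Ham}_{(\td{X}_1)}^{(\text{P4})}(\check{Q},\check{P};\hbar)$ produces exactly the required compensation, the $(\check{Q}-\td{X}_1)^{-1}$ term in $\check{p}$ being tuned so that the ``monodromy'' contribution $(s_{\infty^{(1)},0}+s_{\infty^{(2)},0})/[4(\check{Q}-\td{X}_1)]$ in $\text{Ham}_{(\td{X}_1)}^{(\text{P4})}$ is generated.

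An alternative, more direct route that I would pursue in parallel as a consistency check is to bypass the reduction altogether: contract the un-reduced two-forms $\Omega$ and $\Omega^{(\text{P4})}$ with a basis of tangent vectors on each side, apply the spectral identifications of Theorem \ref{TheoDualitySpecCurves} including the induced Jacobian of the nonlinear change of times (e.g.\ $t_{\infty^{(i)},2} = t_{\infty^{(2)},2} - 1/s_{\infty^{(j)},2}$), and use Theorem \ref{TheoDualHamiltonians} to match each Hamiltonian differential. Because $\Omega$ is characterised by its contractions $\iota_{\partial_t}\Omega = -d\,\text{Ham}_{(\partial_t)}$ together with the kinematic piece $\hbar\, dq\wedge dp$, and both sides now agree on the Hamiltonian side after the duality, it suffices to check that $\hbar\, dq\wedge dp$ transforms to $\hbar\, dQ\wedge dP$ modulo exact time differentials that can be absorbed into the Hamiltonian terms. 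Using $q=P$ and $p=Q+t_{\infty^{(2)},2}P$ at fixed $t_{\infty^{(2)},2}$, one gets $dq\wedge dp = -dQ\wedge dP$; the residual $dt_{\infty^{(2)},2}\wedge P\,dP$ contribution is precisely the piece living on the one-dimensional leaf mentioned above and vanishes on the restricted space.

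The main obstacle is the careful bookkeeping of the time-dependent coordinate change: many cross-terms appear and they must be grouped into the right combinations to match the Hamiltonian differential of the dual system. Given the rather intricate dependence of all the auxiliary quantities on the irregular times, I expect that in practice this verification will be performed with computer assistance (along the lines of the Maple worksheet mentioned after Theorem \ref{PropJMUDifferential} and Theorem \ref{ThJMUP4}), which would also give an independent confirmation of the monodromy identifications and of the compatibility with the purely time-dependent terms \eqref{ConstantTermsHam} and \eqref{uterms} that were chosen precisely to make the two-forms take these canonical reduced shapes.
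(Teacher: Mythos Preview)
Your primary approach—reduce both sides via Theorem \ref{SymplecticReduction} and Proposition \ref{PropReductionOmega2}, then substitute the explicit time-dependent change of variables \eqref{EquivalenceReduced} and check that the mixed $d\td{X}_1$-pieces coming from $\partial_{\td{X}_1}\check{q}$, $\partial_{\td{X}_1}\check{p}$ recombine with the Hamiltonian term—is exactly what the paper does, written out as a direct chain of wedge-product equalities (using Hamilton's equations to convert $-d\tau\wedge d\text{Ham}_{(\tau)}$ into $\hbar\,\tfrac{d\check{p}}{d\td{X}_1}\,d\td{X}_1\wedge d\check{q}-\hbar\,\tfrac{d\check{q}}{d\td{X}_1}\,d\td{X}_1\wedge d\check{p}$ and cancelling term by term). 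Your caveat about restricting to fixed $t_{\infty^{(2)},2}$ is unnecessary in this route, since the reduced coordinates $(\check{q},\check{p},\tau)$ and $(\check{Q},\check{P},\td{X}_1)$ are independent of that parameter by construction, so the paper's computation never has to mention it; likewise no computer assistance is invoked here, the cancellation being short enough to do by hand.
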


\begin{proof}From \autoref{SymplecticReduction} and \autoref{PropReductionOmega2} we already know that the fundamental symplectic two-forms are equal to their reduced forms. Moreover, we have from \eqref{EquivalenceReduced}:
\begin{align}\Omega\overset{\text{Th.} \ref{SymplecticReduction}}{=}&\hbar d\check{q}\wedge d\check{p}-d\tau\wedge d\text{Ham}_{(\tau)}(\check{q},\check{p})\cr
=& \hbar(d\check{Q}-d\td{X}_1)\wedge\left(d\check{P}-d\check{Q}+\frac{s_{\infty^{(1)},0}+s_{\infty^{(2)},0}}{2(\check{Q}-\td{X}_1)^2}(d\check{Q}-d\td{X}_1)\right)\cr
&+\hbar  \frac{d \check{p}}{d\td{X}_1} d\td{X}_1\wedge d\check{q}-\hbar \frac{d\check{q}}{d\td{X}_1} d\td{X}_1\wedge d\check{p}\cr
=&\hbar d\check{Q}\wedge d\check{P}-\hbar d\td{X}_1\wedge d\check{P}+\hbar d\td{X}_1\wedge d\check{Q}+ \hbar  \frac{d \check{p}}{d\td{X}_1} d\td{X}_1\wedge d\check{q}-\hbar \frac{d\check{q}}{d\td{X}_1} d\td{X}_1\wedge d\check{p}\cr
=& \hbar d\check{Q}\wedge d\check{P}-\hbar d\td{X}_1\wedge d\check{P}+\hbar d\td{X}_1\wedge d\check{Q}+\hbar \frac{d}{d\td{X}_1}\left[\check{P}-\check{Q}-\frac{s_{\infty^{(1)},0}+s_{\infty^{(2)},0}}{2(\check{Q}-\td{X}_1)}\right] d\td{X}_1\wedge d\check{Q} \cr
&-\hbar \frac{d}{d\td{X}_1}\left[\check{Q}-\td{X}_1\right] d\td{X}_1\wedge \left(d\check{P}-d\check{Q}+\frac{s_{\infty^{(1)},0}+s_{\infty^{(2)},0}}{2(\check{Q}-\td{X}_1)^2}(d\check{Q}-d\td{X}_1)\right)\cr
=&\hbar d\check{Q}\wedge d\check{P}+ \hbar \frac{d \check{P}}{d\td{X}_1} d\td{X}_1\wedge d\check{Q} -\hbar \frac{d \check{Q}}{d\td{X}_1} d\td{X}_1\wedge d\check{P}  -\hbar d\td{X}_1\wedge d\check{P}+\hbar d\td{X}_1\wedge d\check{Q}\cr
&-\hbar \frac{d \check{Q}}{d\td{X}_1} d\td{X}_1\wedge d\check{Q}+  \hbar \frac{s_{\infty^{(1)},0}+s_{\infty^{(2)},0}}{2(\check{Q}-\td{X}_1)^2} \left(\frac{d \check{Q}}{d\td{X}_1}-1\right) d\td{X}_1\wedge d\check{Q}\cr
&+\hbar  d\td{X}_1\wedge d\check{P}+\frac{d\check{Q}}{d\td{X}_1}d\td{X}_1\wedge d\check{Q}-\hbar d\td{X}_1\wedge d\check{Q}- \hbar \frac{s_{\infty^{(1)},0}+s_{\infty^{(2)},0}}{2(\check{Q}-\td{X}_1)^2}\left(\frac{d\check{Q}}{d\td{X}_1}-1\right)d\td{X}_1\wedge d\check{Q}\cr
=&\hbar d\check{Q}\wedge d\check{P}+ \hbar \frac{d \check{P}}{d\td{X}_1} d\td{X}_1\wedge d\check{Q} -\hbar \frac{d \check{Q}}{d\td{X}_1} d\td{X}_1\wedge d\check{P}\cr
=&\hbar d\check{Q}\wedge d\check{P}-d\td{X}_1\wedge d\text{Ham}_{(\td{X}_1)}(\check{Q},\check{P})\cr
\overset{\text{Prop.} \ref{PropReductionOmega2}}{=}&\Omega^{(\text{P4})}
\end{align}
\end{proof}

\begin{remark}
    The last theorem completes the statement that the symplectic structure is invariant under the duality both at the level of the base manifolds but also at the level of the reduced coordinates since we have in the end
    \begin{align}
        \Omega=&\hbar  dq \wedge dp -\sum_{i=1}^3\sum_{k=1}^{2} dt_{\infty^{(i)},k}\wedge d\text{Ham}_{(\mathbf{e}_{\infty^{(i)},k})}(q,p)\cr
        =& \hbar d\check{q}\wedge d\check{p}-d\tau\wedge d\text{Ham}_{(\tau)}(\check{q},\check{p})\cr
        =&\hbar d\check{Q}\wedge d\check{P}-d\td{X}_1\wedge d\text{Ham}_{(\td{X}_1)}(\check{Q},\check{P})\cr
        =&\hbar dQ\wedge dP - \sum_{i=1}^2\sum_{k=1}^2 ds_{\infty^{(i)},k}\wedge d\text{Ham}_{(\mathbf{u}_{\infty^{(i)},k})}(Q,P) -dX_1\wedge d \text{Ham}_{(X_1)}(Q,P)\cr
        =&\Omega^{(\text{P4})}
    \end{align}
\end{remark}

\subsection{Duality at the level of the Jimbo-Miwa-Ueno isomonodromic tau-functions}
Since the Jimbo-Miwa-Ueno isomonodromic differentials and associated tau-functions present a central object in integrable systems, it is natural to ask if the spectral duality given by \autoref{TheoDualitySpecCurves} extends to these quantities. Let us first remark that the definition of $\omega_{\text{JMU}}$ depends on the gauge that is chosen for $\td{L}(\lambda)$. Indeed, even if $\omega_{\text{JMU}}$ holds no dependence on $\lambda$-independent gauge transformations, it may however depend on $\lambda$-dependent gauge transformations that would still preserve the meromorphic structure of $\td{L}(\lambda)$. In particular, this is the case for the shift to $\hat{L}_d(\lambda)$ required to obtain spectral duality that might modify the Jimbo-Miwa-Ueno isomonodromic tau-function on the $\mathfrak{gl}_3$ side. Indeed, let us denote $\omega_{\text{JMU,d}}$ the Jimbo-Miwa-Ueno differential associated with $\hat{L}_d(\lambda)$:
\beq \label{omegaJMUd}\omega_{\text{JMU,d}}:=-\Res_{\lambda \to \infty}\Tr\Big[ \hat{\Psi}_{\text{d}}^{(\text{reg})}(\lambda)^{-1} \left(\hbar\partial_\lambda \hat{\Psi}_{\text{d}}^{(\text{reg})}(\lambda)\right) dT_{\text{d}}(\lambda)\Big]
\eeq
where we have $\hbar\partial_\lambda \hat{\Psi}_{\text{d}}(\lambda)= \hat{L}_{\text{d}}(\lambda) \hat{\Psi}_{\text{d}}(\lambda)$ and the formal asymptotic expansion of $\hat{\Psi}_{\text{d}}(\lambda)$ at infinity is given by
\beq \hat{\Psi}_{\text{d}}(\lambda)\overset{\lambda \to \infty}{\sim} \hat{\Psi}_{\text{d}}^{(\text{reg})}(\lambda) e^{T_{\text{d}}(\lambda)}\eeq
where $T_{\text{d}}(\lambda)=T_{d,2}\lambda^2+T_{d,1}\lambda+T_{d,0}$ is diagonal. Since we have from \autoref{DefDualityGauge}
\beq \hat{\Psi}_{\text{d}}(\lambda)=\exp\left(-\frac{t_{\infty^{(2)},2} \lambda^2}{2\hbar}\right) I_3\td{\Psi}(\lambda)\eeq
we have $\hat{\Psi}_{\text{d}}^{(\text{reg})}(\lambda)=\td{\Psi}^{(\text{reg})}(\lambda)$ and $T_{\text{d}}(\lambda)=T(\lambda)-\frac{t_{\infty^{(2)},2} \lambda^2}{2\hbar} I_3$. From \eqref{omegaJMUd} we get:
\begin{align} \omega_{\text{JMU,d}}=&\omega_{\text{JMU}}+\frac{1}{2\hbar}dt_{\infty^{(2)},2}\Res_{\lambda \to \infty}\Tr\Big[ \td{\Psi}^{(\text{reg})}(\lambda)^{-1} (\hbar\partial_\lambda \td{\Psi}^{(\text{reg})}(\lambda)) \lambda^2 \Big]\cr
=&\omega_{\text{JMU}}+\frac{1}{2}\Tr(2F_2-F_1^2)dt_{\infty^{(2)},2}
\end{align}
Ultimately, it is then a straightforward computation from the resolution of $(F_1,F_2)$ in $\hbar \partial_\lambda \td{\Psi}= \td{L}\td{\Psi}$ with $\td{\Psi}(\lambda)=I_3+F_1\lambda^{-1}+F_2\lambda^{-2}+\dots$ to get that 
$\Tr( 2F_2-F_1^2)=0$ so that
\beq \omega_{\text{JMU,d}}=\omega_{\text{JMU}}\eeq
Hence, the Jimbo-Miwa-Ueno isomonodromic $\tau$-function is invariant by the gauge transformation that is necessary to obtain the spectral duality and we may not worry about it for the dual identification of the isomonodromic tau-functions. Let us now state the duality theorem corresponding to Jimbo-Miwa-Ueno differentials.

\begin{theorem}[Duality at the level of Jimbo-Miwa-Ueno isomonodromic tau-functions]\label{TheoDualityOmegaJMU} Under the identification of the times, monodromies and Darboux coordinates given by spectral duality of \autoref{TheoDualitySpecCurves}, we have
\beq \omega_{\text{JMU}}=\omega_{\text{JMU}}^{(\text{P4})}+df\eeq
with
\beq f(\mathbf{s}):=\frac{1}{2}\left((s_{\infty^{(1)},0})^2\ln(s_{\infty^{(1)},2})+(s_{\infty^{(2)},0})^2\ln(s_{\infty^{(2)},2})+\frac{s_{\infty^{(1)},0}(s_{\infty^{(1)},1})^2}{s_{\infty^{(1)},2}}+\frac{s_{\infty^{(2)},0}(s_{\infty^{(2)},1})^2}{s_{\infty^{(2)},2}}
\right)\eeq
or equivalently
\begin{align}
    f(\mathbf{t}):=&-\frac{1}{2}\Bigg((t_{\infty^{(1)},0})^2\ln(t_{\infty^{(1)},2}-t_{\infty^{(2)},2})+(t_{\infty^{(3)},0})^2\ln(t_{\infty^{(2)},2}-t_{\infty^{(3)},2})\cr&
   +\frac{t_{\infty^{(3)},0}(t_{\infty^{(3)},1})^2}{t_{\infty^{(3)},2}-t_{\infty^{(2)},2}}+\frac{t_{\infty^{(1)},0}(t_{\infty^{(1)},1})^2}{t_{\infty^{(1)},2}-t_{\infty^{(2)},2}}\Bigg)
\end{align}
\end{theorem}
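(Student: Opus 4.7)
}
The plan is to compare the two Jimbo-Miwa-Ueno differentials after performing the spectral duality identification of Theorem \ref{TheoDualitySpecCurves} and to exhibit their difference as an explicit exact one-form. My starting point is to use the \emph{reduced} forms of both differentials obtained in Theorems \ref{PropJMUDifferential} and \ref{ThJMUP4}:
\begin{equation*}
\omega_{\text{JMU}} \;=\; \text{Ham}_{(\tau)}(\check q,\check p;0)\,d\tau + dG_0(\mathbf T),
\qquad
\omega_{\text{JMU}}^{(\text{P4})} \;=\; \text{Ham}^{(\text{P4})}_{(\td X_1)}(\check Q,\check P;0)\,d\td X_1 + dK_0(\mathbf S).
\end{equation*}
Since the five additional trivial directions $T_i$ (resp.\ $S_i,S_{\infty,\cdot}$) are coordinates orthogonal to the only non-trivial time on each side, and since the spectral duality of Theorem \ref{TheoDualitySpecCurves} maps the non-trivial direction to the non-trivial direction (as already checked in \eqref{EquivalenceReduced}), the task reduces to a comparison in the single variable $\td X_1$ (equivalently $\tau=\sqrt2\,\td X_1$), plus a bookkeeping of the trivial-time contributions absorbed in $G_0$ and $K_0$.

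The key step is then to substitute the time-dependent change of Darboux coordinates
\[
\check q=-\sqrt2(\check Q-\td X_1),\qquad
\check p=-\tfrac{1}{\sqrt2}\!\left(\check P-\check Q-\tfrac{s_{\infty^{(1)},0}+s_{\infty^{(2)},0}}{2(\check Q-\td X_1)}\right),
\]
from \eqref{EquivalenceReduced} into $\text{Ham}_{(\tau)}(\check q,\check p;0)\,d\tau$ and expand. Because this change of coordinates is explicitly $\td X_1$-dependent and is \emph{not} the identity on the symplectic form (the factor $\frac{1}{\check Q-\td X_1}$ produces extra terms), I expect the substitution to generate, in addition to $\text{Ham}^{(\text{P4})}_{(\td X_1)}(\check Q,\check P;0)\,d\td X_1$, several terms that depend only on $\td X_1$ and $(\check Q,\check P)$. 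However, Theorem \ref{TheoCorrespondenceFundamentalTwoForms} already guarantees $\Omega=\Omega^{(\text{P4})}$ under the same identification, so the difference of the two Hamiltonian one-forms is automatically closed; in the reduced setting with a single non-trivial time this forces it to be exact (a total derivative in $\td X_1$ composed with the trivial-time coordinates). I would therefore integrate this difference in $\td X_1$ and check that it produces exactly $dK_0(\mathbf S)-dG_0(\mathbf T)+df$ with $f$ as stated.

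To identify $f$ explicitly it is simplest to work in the $\mathbf s$-variables: rewrite $dG_0(\mathbf t)$ via the duality dictionary of Theorem \ref{TheoDualitySpecCurves} as a one-form on the $\mathbf s$-side (noting that the parameter $t_{\infty^{(2)},2}$ is free but does not contribute, since we have already established $\omega_{\text{JMU,d}}=\omega_{\text{JMU}}$ in the discussion preceding the theorem), and then compute $\omega_{\text{JMU}}-\omega_{\text{JMU}}^{(\text{P4})}$ term by term. The monodromy part must produce the $\tfrac12(s_{\infty^{(i)},0})^2\ln s_{\infty^{(i)},2}$ contributions and the irregular-times part the rational contributions $\tfrac12 s_{\infty^{(i)},0}(s_{\infty^{(i)},1})^2/s_{\infty^{(i)},2}$. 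Finally, one verifies the equivalent $\mathbf t$-expression by substituting $s_{\infty^{(i)},k}=s_{\infty^{(i)},k}(\mathbf t)$ from Theorem \ref{TheoDualitySpecCurves} and checking the equality of the two expressions for $f$ up to an additive constant.

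The main obstacle is not conceptual but computational: the substitution produces many cross terms coming from the $\frac{1}{\check Q-\td X_1}$ piece and from the non-symplectic nature of the change of Darboux coordinates, so a careful organization (or a computer-algebra verification, as the authors already use for Theorems \ref{SymplecticReduction} and \ref{PropJMUDifferential}) is required to isolate the exact part and match it with the proposed $f$. A useful consistency check along the way is to verify that $df$ depends only on the monodromies and irregular times but not on the position $X_1$ of the finite pole, which is manifest on the $\mathbf s$-form of $f$.
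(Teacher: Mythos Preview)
Your proposal follows essentially the same route as the paper's proof: reduce both JMU differentials via Theorems \ref{PropJMUDifferential} and \ref{ThJMUP4}, substitute the duality change of coordinates \eqref{EquivalenceReduced} into $\text{Ham}_{(\tau)}(\check q,\check p;0)\,d\tau$, and then compare $dG_0$ and $dK_0$ through the time dictionary to extract $f$. The one point where you overcomplicate things is the expectation that the substitution will produce extra terms depending on $(\check Q,\check P)$ and the appeal to Theorem \ref{TheoCorrespondenceFundamentalTwoForms} to argue these form a closed (hence exact) piece. In fact the explicit computation shows that after substitution one gets exactly
\[
\text{Ham}_{(\tau)}(\check q,\check p;0)\,d\tau=\bigl(\text{Ham}^{(\text{P4})}_{(\td X_1)}(\check Q,\check P;0)-2s_{\infty^{(2)},0}\td X_1\bigr)\,d\td X_1,
\]
so the only extra term is already purely time-dependent and manifestly exact; no abstract closedness argument is needed. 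From there the paper proceeds as you describe: write $\omega_{\text{JMU}}-\omega_{\text{JMU}}^{(\text{P4})}=dG_0-dK_0-2s_{\infty^{(2)},0}\td X_1\,d\td X_1$, push $dG_0(\mathbf t)$ through the duality map to the $\mathbf s$-side, and verify the result equals $df(\mathbf s)$.
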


\begin{proof}We have already shown that the Jimbo-Miwa-Ueno differentials reduce nicely to $\text{Ham}_{(\tau)}(\check{q},\check{p};\hbar=0)d\tau$ or $\text{Ham}_{(\td{X}_1)}^{(\text{P4})}(\check{Q},\check{P};\hbar=0) d\td{X}_1$ up to some exact terms (\autoref{PropJMUDifferential} and \autoref{ThJMUP4}). Moreover, we have:
\small{\begin{align}
    &\text{Ham}_{(\tau)}(\check{q},\check{p};\hbar=0)d\tau= \left[-\left(\check{q}\check{p}^2+\check{q}^2\check{p} -\tau \check{q}\check{p}-t_{\infty^{(2)},0}\check{p} +t_{\infty^{(1)},0}\check{q}\right) 
    \right]\sqrt{2}\, d\td{X}_1\cr&
    =\Big[\sqrt{2}(\check{Q}-\td{X}_1)\left(-\frac{1}{\sqrt{2}}\left(\check{P}-\check{Q}+\frac{t_{\infty^{(2)},0}}{2(\check{Q}-\td{X}_1)}\right)\right)^2+ \left(\sqrt{2}(\check{Q}-\td{X}_1)\right)^2\frac{1}{\sqrt{2}}\left(\check{P}-\check{Q}+\frac{t_{\infty^{(2)},0}}{2(\check{Q}-\td{X}_1)}\right)\cr&
    +\sqrt{2}\td{X}_1(\check{Q}-\td{X}_1)\left(\check{P}-\check{Q}+\frac{t_{\infty^{(2)},0}}{2(\check{Q}-\td{X}_1)}\right)- t_{\infty^{(2)},0}\frac{1}{\sqrt{2}}\left(\check{P}-\check{Q}+\frac{t_{\infty^{(2)},0}}{2(\check{Q}-\td{X}_1)}\right)+t_{\infty^{(1)},0}\sqrt{2}(\check{Q}-\td{X}_1)
\Big]\sqrt{2}\, d\td{X}_1\cr
&= \Big[(\check{Q}-\td{X}_1)\left(\check{P}-\check{Q}+\frac{t_{\infty^{(2)},0}}{2(\check{Q}-\td{X}_1)}\right)^2+ 2(\check{Q}-\td{X}_1)^2\left(\check{P}-\check{Q}+\frac{t_{\infty^{(2)},0}}{2(\check{Q}-\td{X}_1)}\right)\cr&
    +2\td{X}_1(\check{Q}-\td{X}_1)\left(\check{P}-\check{Q}+\frac{t_{\infty^{(2)},0}}{2(\check{Q}-\td{X}_1)}\right)- t_{\infty^{(2)},0}\left(\check{P}-\check{Q}+\frac{t_{\infty^{(2)},0}}{2(\check{Q}-\td{X}_1)}\right)+2t_{\infty^{(1)},0}(\check{Q}-\td{X}_1)
\Big]\, d\td{X}_1\cr
&= \Big[(\check{Q}-\td{X}_1)\check{P}^2+ \left[-2\check{Q}(\check{Q}-\td{X}_1)+t_{\infty^{(2)},0}+2(\check{Q}-\td{X}_1)^2+2\td{X}_1(\check{Q}-\td{X}_1)-t_{\infty^{(2)},0}  \right]\check{P}\cr&   
+ \check{Q}^2(\check{Q}-\td{X}_1)+\frac{(t_{\infty^{(2)},0})^2}{4(\check{Q}-\td{X}_1)} -\check{Q}t_{\infty^{(2)},0}- 2(\check{Q}-\td{X}_1)^2\check{Q}+ t_{\infty^{(2)},0} (\check{Q}-\td{X}_1)\cr&
-2\td{X}_1(\check{Q}-\td{X}_1)\check{Q}+t_{\infty^{(2)},0} \td{X}_1+ t_{\infty^{(2)},0}\check{Q}-\frac{(t_{\infty^{(2)},0})^2}{2( \check{Q}-\td{X}_1)}++2t_{\infty^{(1)},0}(\check{Q}-\td{X}_1)\Big]d\td{X}_1\cr
&= \Big[(\check{Q}-\td{X}_1)\check{P}^2-\check{Q}^3+\td{X}_1\check{Q}+(2t_{\infty^{(1)},0} +t_{\infty^{(2)},0})\check{Q} -2t_{\infty^{(1)},0}\td{X}_1 -\frac{(t_{\infty^{(2)},0})^2}{4(\check{Q}-\td{X}_1)}  \Big]d\td{X}_1\cr
&= \Big[(\check{Q}-\td{X}_1)\check{P}^2-\check{Q}^3+\td{X}_1\check{Q}-(s_{\infty^{(1)},0} -s_{\infty^{(2)},0})\check{Q} -2s_{\infty^{(2)},0}\td{X}_1 -\frac{(s_{X_1^{(1)},0}+s_{X_1^{(2)},0})^2}{4(\check{Q}-\td{X}_1)}  \Big]d\td{X}_1\cr
&=\left(\text{Ham}_{(\td{X}_1)}^{(\text{P4})}(\check{Q},\check{P};\hbar=0)-2s_{\infty^{(2)},0}\td{X}_1\right)d\td{X}_1
\end{align}}
\normalsize{from} \autoref{PropReducedP4}. Thus, we have from the definition of the differentials:
\beq \omega_{\text{JMU}}-\omega_{\text{JMU}}^{(\text{P4})}=dG_0-dK_0-2s_{\infty^{(2)},0}\td{X}_1d\td{X}_1\eeq
It is then a straightforward computation using the map $\mathbf{t}\mapsto (\mathbf{s},t_{\infty^{(2)},2})$ given by  \autoref{TheoDualitySpecCurves} to get
\begin{align}
    dt_{\infty^{(1)},1}=&-\frac{1}{s_{\infty^{(2)},2}}d s_{\infty^{(2)},1}+\frac{s_{\infty^{(2)},1}}{(s_{\infty^{(2)},2})^2}d s_{\infty^{(2)},2}\cr
    dt_{\infty^{(2)},1}=&d X_1\cr
    dt_{\infty^{(3)},1}=&-\frac{1}{s_{\infty^{(1)},2}}d s_{\infty^{(1)},1}+\frac{s_{\infty^{(1)},1}}{(s_{\infty^{(1)},2})^2}d s_{\infty^{(1)},2}\cr
dt_{\infty^{(1)},2}=&dt_{\infty^{(2)},2}+\frac{1}{(s_{\infty^{(2)},2})^2}d s_{\infty^{(2)},2}\cr
dt_{\infty^{(2)},2}=&dt_{\infty^{(2)},2}\cr
dt_{\infty^{(3)},2}=&dt_{\infty^{(2)},2}+\frac{1}{(s_{\infty^{(1)},2})^2}d s_{\infty^{(1)},2}
\end{align}
so that we have
\beq  dK_0-dG_0-2s_{\infty^{(2)},0}\td{X}_1d\td{X}_1=df(\mathbf{s})\eeq 
with 
\beqq f(\mathbf{s}):=\frac{1}{2}\left((s_{\infty^{(1)},0})^2\ln(s_{\infty^{(1)},2})+(s_{\infty^{(2)},0})^2\ln(s_{\infty^{(2)},2})+\frac{s_{\infty^{(1)},0}(s_{\infty^{(1)},1})^2}{s_{\infty^{(1)},2}}+\frac{s_{\infty^{(2)},0}(s_{\infty^{(2)},1})^2}{s_{\infty^{(2)},2}}
\right)\eeqq
\end{proof}

\begin{remark}
    Note that the exact term $df$ is independent of $\hbar$ and hence shall only contribute by $\hbar^0$ in the case of formal asymptotic expansions in $\hbar$.
\end{remark}

Let us briefly mention that \autoref{TheoDualityOmegaJMU} is equivalent to say
\begin{align}\label{IdentificationHamForm} &-\sum_{i=1}^3\sum_{k=1}^2 \text{Ham}_{(\mathbf{e}_{\infty^{(i)},k})}(q,p;\hbar=0) dt_{\infty^{(i)},k}+dG_0(\mathbf{t})=\text{Ham}_{(\tau)}(\check{q},\check{p};\hbar=0)d\tau +dG_0(\mathbf{T})\cr&=\text{Ham}_{(\td{X}_1)}^{(\text{P4})}(\check{Q},\check{P};\hbar=0) d\td{X}_1+dK_0(\mathbf{S})+df(\mathbf{S})\cr&
=\sum_{i=1}^2\sum_{k=1}^2 \text{Ham}_{(\mathbf{u}_{\infty^{(i)},k})}^{(\text{P4})}(Q,P;\hbar=0) ds_{\infty^{(i)},k} +\text{Ham}_{(\mathbf{u}_{X_1})}^{(\text{P4})}(Q,P;\hbar=0) dX_1+dK_0(\mathbf{s})+df(\mathbf{s})\cr
\end{align}
In particular, using from \eqref{IdentificationHamForm}, one can interpret the duality of the Jimbo-Miwa-Ueno isomonodromic tau-function as a duality of the Hamiltonian differentials evaluated at $\hbar=0$ up to so exact normalizing one-forms.  

\medskip

We finally conclude that since the spectral duality implies the correspondence of the fundamental symplectic two-forms and of the Jimbo-Miwa-Ueno differentials, then it implies the correspondence of the entire symplectic structure as expected from Harnad's duality \cite{Harnad_1994} and results of P. Boalch \cite{Boalch2012}. In the next section, we shall see that these identities can be interpreted using Hermitian matrix models and topological recursion and may have applications towards combinatorics and enumerative geometry.

\subsection{Duality at the level of the Hermitian matrix integrals and topological recursion}
\subsubsection{Topological recursion and perturbative TR-partition function} 
In this section, we shall define the perturbative partition functions associated with the classical spectral curves using the Chekhov-Eynard-Orantin topological recursion (TR) defined in \cite{EO07}. We shall not rewrite the proper definition of the topological recursion and refer to \cite{EO07} for the definitions and properties as well as \cite{TRReview,Norbury_survey} for some reviews on topological recursion and quantum curves. In order to apply the topological recursion of \cite{EO07}, we need to choose a Torelli marking on each Riemann surface associated with the classical spectral curves on each side. Thus, we shall denote $\left(\mathcal{A},\mathcal{B}\right)$ a basis of cycles on $\Sigma$ and $\left(\mathcal{A}^{(\text{P4})},\mathcal{B}^{(\text{P4})}\right)$ a basis of cycles on $\Sigma_{\text{P4}}$. We shall also define:
\beq \epsilon:= \oint_{\mathcal{A}} y(\lambda)d\lambda \,\, \text{ and }\,\, \epsilon^{(\text{P4})}:=\oint_{\mathcal{A}^{(\text{P4})}} Y(\xi)d\xi\eeq
the filling fraction on each side. Finally, we shall denote
\beq \omega_{0,1}(\lambda):=y(\lambda)d\lambda \,\, \text{ and }\,\,  \omega^{(\text{P4})}_{0,1}(\xi):=Y(\xi)d\xi
\eeq
and $\omega_{0,2}$ (resp. $\omega^{(\text{P4})}_{0,2}$) as the unique symmetric $(1\boxtimes 1)$-form on $\Sigma^2$ (resp. $\Sigma_{\text{P4}}^2$) with a unique double pole on the diagonal, without residue, bi-residue equal to $1$ and normalized on the $\mathcal{A}$-cycle (resp. $\mathcal{A}^{(\text{P4})}$-cycle)  by
\beq
 \oint_{z_1 \in \mathcal{A}} \om_{0,2}(z_1,z_2) = 0 \,\, \text{ (resp. } \oint_{z_1 \in \mathcal{A}^{(\text{P4})} } \om^{(\text{P4})}_{0,2}(z_1,z_2) = 0 \text{  ).}
\eeq
Using these materials, one may apply the topological recursion on each side (See \cite{EO07} for definition) and obtain the so-called Eynard-Orantin differentials $\left(\omega_{h,n}\right)_{h\geq 0,n\geq 0}$ and $\left(\omega^{(\text{P4})}_{h,n}\right)_{h\geq 0,n\geq 0}$. In this notation, the free energies correspond to $F^{(h)}:=\omega_{h,0}$ for all $h\geq 0$. 

Let us also mention that the free energies produced by the topological recursion are usually stacked into a formal generating series to produce the ``perturbative TR-partition function".

\begin{definition}[Perturbative topological recursion partition functions]\label{DefTRPartitionFunctions}One may define the perturbative TR-partition functions on both sides using the free energies generated by topological recursion by
\begin{align}
    Z_{\text{TR}}(\mathbf{t};\hbar) :=&\exp\left(\sum_{h=0}^{\infty} \hbar^{2h-2} \omega_{h,0}\right)\cr
    Z_{\text{TR}}^{(\text{P4})}(\mathbf{s};\hbar) :=&\exp\left(\sum_{h=0}^{\infty} \hbar^{2h-2} \omega^{(\text{P4})}_{h,0}\right)
\end{align}
where $\ln  Z_{\text{TR}}$ and $\ln Z_{\text{TR}}^{(\text{P4})}$ are understood as formal power series in $\hbar$.
\end{definition}

\subsubsection{Duality for degeneration of classical spectral curves to  genus $0$}
For generic values of the irregular times, the Darboux coordinates are very complicated solutions of non-linear PDEs. Similarly, the reduced Darboux coordinates $(\check{q},\check{p})$ are usually transcendental solutions of an ODE with very complicated dependence on $\hbar$ in relation with the Painlev\'{e} IV equation after a proper change of variables. Using formal quantization of classical spectral curves, it has been proved that generic values of irregular times imply the need to consider $(\check{q},\check{p})$ algebraically as formal trans-series in $\hbar$ \cite{Quantization_2021}. Unfortunately reconstructing the analytic properties like Stokes phenomenon, resurgence, etc. from formal transseries is presently out of reach even if this issue is currently being investigating by numerous works. In particular, the perturbative partition functions defined in \autoref{DefTRPartitionFunctions} requires additional transseries corrections (reminiscent of Theta functions arising in Hermitian matrix models \cite{BorotGuionnet,Guionnet}). It is also a key feature to observe that the derivation of the Hamiltonian systems and of the results of the present paper remain perfectly valid when particularizing the irregular times to values for which the genus of the spectral curve falls to $0$. In fact, as observed in \cite{Quantization_2021}, only the formal type of solutions to the Hamiltonian systems is changed for these specific choice of times but the PDEs or ODEs remain the same. In particular taking singular times to obtain a completely degenerate genus $0$ curve is equivalent to look for formal power series solutions at the level of the Darboux coordinates.

\begin{definition}[Degenerate times and monodromies]\label{DefSingularTimes} We shall denote $\mathbf{t}_{\text{deg}}$ (resp. $\mathbf{s}_{\text{deg}}$) the set of times and monodromies for which the classical spectral curve degenerates to genus $0$ or equivalently for which the Hamiltonian system admits solutions $(q,p)$ (resp. $(Q,P)$) that are formal power series in $\hbar$.  
\end{definition}

Degeneration to genus $0$ classical spectral curves allows one to make connections between the tau-function and the free energies generated by topological recursion \cite{IwakiMarchalSaenz,MOsl2}. It is also well-known that in the one-cut case (i.e. genus $0$), the logarithm of the partition function of Hermitian matrix integrals, when properly normalized with a $\frac{N^{-\frac{1}{12}}}{N!}$ factor, has a simple power series expansion in $N^{-1}$  \cite{Borot2011AsymptoticEO,E1MM}.

\begin{proposition}[Identification of JMU tau-functions with TR for degenerate genus $0$ cases]\label{TheoremDegeneration} For any value of degenerate times and monodromies of \autoref{DefSingularTimes} providing degenerate genus $0$ classical spectral curves we have under the spectral duality of \autoref{TheoDualitySpecCurves}:
\bea d(\ln\tau_{\text{JMU}}(\mathbf{t};\hbar)) &=&d(\ln Z_{\text{TR}}(\mathbf{t};\hbar))\cr
d(\ln\tau^{(\text{P4})}_{\text{JMU}}(\mathbf{s};\hbar)) &=&d(\ln Z_{\text{TR}}^{(\text{P4})}(\mathbf{s};\hbar))\cr
\hbar^2 d(\ln\tau_{\text{JMU}}(\mathbf{t};\hbar)) =\omega_{\text{JMU}}(\mathbf{t};\hbar)&\overset{\text{Th. } \ref{TheoDualityOmegaJMU}}{=} & \omega_{\text{JMU}}^{(\text{P4})}(\mathbf{s};\hbar)+df(\mathbf{s})= \hbar^2d(\ln\tau^{(\text{P4})}_{\text{JMU}}(\mathbf{s};\hbar))  +df(\mathbf{s}) \cr&&
\eea
where $d$ is the differential relatively to the deformation parameters (i.e. irregular times and location of finite poles).\\
In other words, one can interpret the $x-y$ symmetry in topological recursion as equivalent to the duality of the Jimbo-Miwa-Ueno isomonodromic tau-function in this degenerate setup.
\end{proposition}

\begin{proof}The proof follows from the fact that in the genus $0$ degeneration case, the Lax systems satisfy the topological type property \cite{bergre2009determinantal,BergereBorotEynard} on both sides and thus coefficients of the JMU differentials are reconstructed by TR up to an overall normalization factor (independent of the deformation parameters). In particular, in genus $0$ case, the wave matrices admit a formal WKB expansion giving the fact that the JMU tau-function has a formal power series expansion in $\hbar$. The proof of the topological type property on the $\mathfrak{gl}_2(\mathbb{C})$ side was done in \cite{IwakiMarchalSaenz}. On the $\mathfrak{gl}_3(\mathbb{C})$ side, it follows from the general results on quantization of classical spectral curves \cite{Quantization_2021} that reconstruct the isomonodromic Lax system from topological recursion.
\end{proof}

We may now identify the partition functions of the matrix models when the spectral curves degenerate to genus $0$. 

\begin{proposition}[Duality at the level of Hermitian matrix models for degenerate genus $0$ curves]\label{TheoDualitynew} For any value of degenerate times and monodromies of \autoref{DefSingularTimes} providing degenerate genus $0$ classical spectral curves we have under spectral duality of \autoref{TheoDualitySpecCurves} and the compatible condition $t_{\infty^{(1)},0}=-1=s_{\infty^{(2)},0}$ together with $s_{X_1^{(1)},0}s_{X_1^{(2)},0}=0$ that
\bea
    d(\ln\tau_{\text{JMU}}(\mathbf{t};\hbar=N^{-1})) &=&d(\ln Z_{\text{TR}}(\mathbf{t};\hbar=N^{-1}))=d(\ln Z^{(\text{2MM})}(\mathbf{t};N))  \cr
    d(\ln\tau^{(\text{P4})}_{\text{JMU}}(\mathbf{s};\hbar=N^{-1})) &=&d(\ln Z_{\text{TR}}^{(\text{P4})}(\mathbf{s};\hbar))=d(\ln Z^{(\text{1MM})}(\mathbf{s};N)) \cr
    \hbar^2 d(\ln\tau_{\text{JMU}}(\mathbf{t};\hbar)) =\omega_{\text{JMU}}(\mathbf{t};\hbar)&\overset{\text{Th. } \ref{TheoDualityOmegaJMU}}{=} & \omega_{\text{JMU}}^{(\text{P4})}(\mathbf{s};\hbar)+df(\mathbf{s})= \hbar^2d(\ln\tau^{(\text{P4})}_{\text{JMU}}(\mathbf{s};\hbar))  +df(\mathbf{s}) \cr&&
\eea
where $d$ is the differential relatively to the deformation parameters (i.e. irregular times and location of finite poles). In particular, it implies that the free energies $\omega_{h,0}$ and $\omega_{h,0}^{(\text{P4})}$ match up to some constants (independent of the deformation parameters).
\end{proposition}

\begin{remark} Let us note that the condition $t_{\infty^{(1)},0}=-1$ is equivalent to $s_{\infty^{(2)},0}=-1$ by the spectral duality so that the last theorem is consistent.
\end{remark}

\begin{proof}For genus $0$ spectral curves, the partition functions of Hermitian matrix models, when properly normalized in $N$ (i.e. with a factor $\frac{N^{-\frac{1}{12}}}{N!}$), admit a formal expansion of the form $\ln Z= \underset{k=-1}{\overset{\infty}{\sum}} Z_k N^{-2k}$ from \cite{Borot2011AsymptoticEO} and coefficients can be matched with TR because loop equations (Virasoro constraints) match the bilinear identities of the tau-function. Alternatively, it follows from the fact that the correlation functions of Hermitian matrix models satisfy the same loop equations as in TR and that the topological type property holds for these correlation functions.  
\end{proof}

\subsubsection{Conjecture for genus $1$ curve}

Let us recall that the Eynard-Orantin differentials $(\omega_{h,n})_{h\geq 0,n\geq 1}$ are trivially rescaled by $a^{2h+n-2}$ under the affine change $\lambda\to a\lambda+b$. On the contrary, the $x-y$ swap (which in our notation corresponds to $\lambda\leftrightarrow y$) has a more complex history. Indeed, it was first claimed in \cite{EO07} that the free energies are invariant under this swap but the original statement required some care on integration constants that were only partly fixed in \cite{EO2MM,EOxy}. Nowadays, the free energies are known to be invariant under symplectomorphisms, i.e. transformations that preserves the symplectic form $d\lambda \wedge dy$ when the spectral curve is of genus $0$. However, for higher genus spectral curves, the symplectic invariance remains problematic since some counter-examples exist. Moreover, the expansion of partition functions, JMU differentials are no longer power series but rather trans-series and one needs to add oscillatory theta functions terms to the perturbative TR-partition function in the quantization process. This is why we restricted to degenerate genus $0$ spectral curves in the previous section. Nevertheless, all quantities can be defined and lead to the following conjectures regarding topological recursion and matrix models.

Let us first formulate the conjectures regarding the matrix models partition function and the corresponding tau-functions:

\begin{conjecture}\label{Prop2MM}[Identification of the 2MM partition function with the JMU tau-function] For $t_{\infty^{(1)},0}=-1$, the partition function of the two-matrix models with potentials $V_1$ and $V_2$ given by \eqref{PotentialsV1V2} identifies with the JMU tau-function
\beq d(\ln Z_N^{(\text{2MM})}(\mathbf{t}))=d(\ln \tau_{\text{JMU}}(\mathbf{t};\hbar=N^{-1})) \eeq       
\end{conjecture}

\begin{conjecture}\label{PartitionFunctionP4}[Identification of the 1MM partition function with the JMU tau-function] Let $Z^{(\text{1MM})}_{N}(\mathbf{s};N)$ be the partition function of the Hermitian one-matrix model with potential $V(\xi)=\frac{1}{2}(s_{\infty^{(2)},2}-s_{\infty^{(1)},2})\xi^2+(s_{\infty^{(2)},1}-s_{\infty^{(1)},1})\xi+(1-s_{\infty^{(1)},0})\ln(\xi-X_1)$. Then we have for $s_{X_1^{(1)},0}s_{X_1^{(2)},0}=0$ and $s_{\infty^{(2)},0}=-1$:
\beq  d(\ln Z_N^{(\text{1MM})}(\mathbf{s};N))=d(\ln \tau_{\text{JMU}}^{(P4)}(\mathbf{s};\hbar=N^{-1})) \eeq 
\end{conjecture}

Then, we conjecture the following duality relations:

\begin{conjecture}[Conjecture for duality at the level of TR and Hermitian matrix models for genus $1$ curves]\label{ConjTRHMM} Under the spectral duality of \autoref{TheoDualitySpecCurves} we have:
\begin{itemize}\item Symplectic invariance of the free energies generated by topological recursion: 
   \beq \omega_{h,0}=\omega_{h,0}^{(\text{P4})} \,\, ,\,\, \forall \, h\geq 0\eeq  
   \item Identification of the non-perturbative TR-partition function with the JMU tau-functions:
   \bea  d(\ln \tau_{\text{JMU}}(\mathbf{t};\hbar))&=& d(\ln Z_{\text{TR, NP}}(\mathbf{t};\hbar))\cr
   d(\ln \tau_{\text{JMU}}^{(P4)}(\mathbf{s};\hbar))&=& d(\ln Z_{\text{TR, NP}}^{(\text{P4})}(\mathbf{s};\hbar))
   \eea
   where the non-perturbative TR-partition functions are defined in \cite{Quantization_2021} by adding Theta-functions terms to the perturbative TR-partition functions.
\end{itemize}  
\end{conjecture}

\section{Conclusion and outlooks}\label{SectionOutlooks}
In this article, we analyzed using explicit formulas two non-trivial examples of isomonodromic deformations of meromorphic connections that we related using Harnad's duality at different levels (See \autoref{Fig2Diagram}). The main advantage of our approach compared to abstract settings developed for example in \cite{Boalch2012,yamakawa2014fourierlaplace,Yamakawa2017TauFA,Yamakawa2019FundamentalTwoForms} is that we obtain explicit expressions for the Lax matrices and Hamiltonians at each step that can be used directly for people interested in applications. 
There are also many interesting questions associated with the present important example that we list below for future works
\begin{itemize}
    \item If \autoref{TheoDualReducedHamiltonian} identifies both reduced Hamiltonian systems without restrictions, the Lax matrices, the spectral curves and JMU differentials can only be identified on both sides in a specific gauge on the $\mathfrak{gl}_3(\mathbb{C})$ sides and with the condition $s_{X_1^{(1)},0}s_{X_1^{(2)},0}=0$ on the monodromies at $X_1$ on the $\mathfrak{gl}_2(\mathbb{C})$ side. In particular, this last condition cannot be achieved for general values of the parameters in the Painlev\'{e} IV equation in the standard $\mathfrak{sl}_2(\mathbb{C})$ setting where $s_{X_1^{(2)},0}=-s_{X_1^{(1)},0}$. This suggests that considering the full space of deformation parameters (i.e., before reduction of the tangent space and the canonical choice of times) and the full space of monodromies is important for understanding $x-y$ duality. In particular, it allows one to reach a larger set of Lax representations. With this in mind, it would be interesting to see if other new Lax representations for the other Painlev\'{e} equations can be obtained from general $\mathfrak{gl}_2(\mathbb{C})$ Lax representation with arbitrary monodromy parameters by $x-y$ duality.
    \item The most natural issue is to generalize this work to all unramified meromorphic connections. The general formula for the gauge matrix of \autoref{LaxMatrixgl3} is already existing  \cite{Quantization_2021} and the simple form of the matrix $L(\lambda)$ in the oper gauge in  \autoref{LaxMatrixgl3} indicates that this part should generalize easily using the Darboux coordinates defined by the apparent singularities and their dual on the spectral curve. The next step is to solve the compatibility equations and obtain the Hamiltonian evolutions. This has been achieved for arbitrary meromorphic connections in $\mathfrak{gl}_2(\mathbb{C})$ \cite{MarchalAlameddineP1Hierarchy2023,marchal2023hamiltonian} and we believe that this technical step could be mastered. The third step is then to split the tangent space into trivial and non-trivial directions and we expect the non-trivial directions to be a subspace of dimension equal to the genus of the spectral curve. This part is still mysterious because one needs to understand the geometric origin of the trivial directions which is unclear in the present $\mathfrak{gl}_3(\mathbb{C})$ example (the directions $\left(\mathcal{L}_{\mathbf{a}_i}\right)_{1\leq i\leq 3}$ do not have clear explanations apart from pulling back the known trivial directions in $\mathfrak{gl}_2(\mathbb{C})$ by spectral duality). Eventually, the final step would be to reduce the Jimbo-Miwa-Ueno differentials and the fundamental symplectic two-forms and to study the spectral duality and the potential relation with Hermitian matrix models.
    \item An interesting observation made in this article, and that was also made earlier for all $\mathfrak{gl}_2(\mathbb{C})$ Lax pairs for the six Painlev\'{e} equations is the fact that the Jimbo-Miwa-Ueno differential is equal to the Hamiltonian form evaluated at $\hbar=0$ ( \autoref{PropJMUDifferential} and \autoref{ThJMUP4}). This leads us to propose the following conjecture
    \begin{conjecture}\label{ConjectureJMU}For any isomonodromic deformations of $\hbar$-deformed meromorphic connections, let $\mathbf{t}$ be the deformation times and $\overline{\omega}$ be the associated Hamiltonian differential, then we conjecture that
    \beq \omega_{JMU}(\mathbf{q},\mathbf{p},\mathbf{t})=\epsilon\,\overline{\omega}(\mathbf{q},\mathbf{p},\mathbf{t};\hbar=0) +dU(\mathbf{t})\eeq
    where $\epsilon$ is a constant and $(\mathbf{q},\mathbf{p})$ are the Darboux coordinates associated with the apparent singularities and their dual partner on the spectral curve (or any other Darboux coordinates related to this set by $\hbar$-independent transformations) and $dU$ is a purely time-dependent exact term that could be inserted in the definition of the Hamiltonians.
    \end{conjecture}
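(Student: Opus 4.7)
The plan is to compare both sides of the conjecture directly using the Birkhoff factorization data that underlies each quantity. The JMU differential is defined as a sum of residues
\[
\omega_{\text{JMU}} = -\sum_{\text{poles}} \Res \Tr\!\big[(\td{\Psi}^{(\text{reg})})^{-1} (\hbar \partial_\lambda \td{\Psi}^{(\text{reg})}) \, dT(\lambda)\big],
\]
whose evaluation produces an algebraic expression in the coefficients $F_k$ of the formal expansion $\td{\Psi}^{(\text{reg})} = I + F_1 \lambda^{-1} + F_2 \lambda^{-2} + \cdots$. These $F_k$ are determined recursively by substituting into $\hbar \partial_\lambda \td{\Psi} = \td{L} \td{\Psi}$, yielding polynomial expressions in the Darboux coordinates $(\mathbf{q},\mathbf{p})$ with explicit $\hbar$-dependence. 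The first step is thus to express both $\omega_{\text{JMU}}$ and $\overline{\omega}(\mathbf{q},\mathbf{p},\mathbf{t};\hbar)$ in a unified language using the $F_k$'s together with the compatibility equation $\hbar \partial_{\boldsymbol{\alpha}} \td{L} = \hbar\partial_\lambda \td{A}_{\boldsymbol{\alpha}} + [\td{A}_{\boldsymbol{\alpha}}, \td{L}]$, which determines the auxiliary matrices from the same local data.

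The second step is to exploit the fact that, in the formal limit $\hbar \to 0$ at fixed Darboux coordinates, the auxiliary matrices $\td{A}_{\boldsymbol{\alpha}}$ become algebraic functions on the classical spectral curve and the Hamiltonians reduce to residues of $y\, d\lambda$ against tangent vectors in $T\mathbb{B}$. A natural strategy is to use a semiclassical/WKB analysis: at leading order, $\td{\Psi}(\lambda) \sim e^{S(\lambda)/\hbar}$ where $S$ is the classical action associated with the spectral curve, and the JMU residue formula should reduce to a residue computation involving $y\, dT(\lambda)$ at each pole. This is precisely the structure that the Hamiltonians at $\hbar=0$ encode, which should provide the identification $\omega_{\text{JMU}} = \epsilon\, \overline{\omega}|_{\hbar=0} + dU$ up to boundary terms. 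The constant $\epsilon$ would be fixed by comparing normalizations of the residue formulas on both sides (it is $-1$ in the $\mathfrak{gl}_3$ case of Theorem \ref{PropJMUDifferential} and $+1$ in the $\mathfrak{gl}_2$ case of Theorem \ref{ThJMUP4}, reflecting the choice of sign conventions).

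The third step, and the main obstacle, is handling the exact term $dU(\mathbf{t})$ in full generality. The purely time-dependent contributions appearing as $G_0$ in Theorem \ref{PropJMUDifferential} and as $K_0$ in Theorem \ref{ThJMUP4} have nontrivial logarithmic structures that depend on the pole configuration and on the choice of canonical trivial times. A universal construction of $U(\mathbf{t})$ would require a canonical normalization for both the tau-function and for the Hamiltonians, which is not standardized in the literature for arbitrary unramified meromorphic connections. One possible route is to differentiate both sides and use the known closedness of $\omega_{\text{JMU}}$ together with the symplectic reduction of $\overline{\omega}(\mathbf{q},\mathbf{p};\hbar=0)$ onto the non-trivial directions, reducing the statement to showing that the failure of closedness of $\overline{\omega}|_{\hbar=0}$ is exact in the trivial times; this would match the general expectation that trivial directions contribute only through the normalization. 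Carrying this out in the generality of arbitrary unramified meromorphic connections in $\mathfrak{gl}_d(\mathbb{C})$ will require combining the still-conjectural general Hamiltonian formulation with a systematic residue analysis at each pole, and this identification of $U(\mathbf{t})$ is where I expect the real difficulty to lie.
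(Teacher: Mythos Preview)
The statement you are addressing is a \emph{conjecture}, not a theorem: the paper does not prove it in general. The paper's only evidence consists of the two explicit verifications in Theorems~\ref{PropJMUDifferential} and~\ref{ThJMUP4}, each obtained by brute-force computation of the first few coefficients $F_k$ (or $J_k$, $N_k$) of the Birkhoff expansion, plugging into the residue definition of $\omega_{\text{JMU}}$, and comparing term by term with the explicit Hamiltonians already listed. No general mechanism is proposed; the conjecture is offered in the outlook section precisely because the authors do not know how to prove it.

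Your proposal is therefore not to be compared against a proof in the paper, but assessed on its own. As a strategy it has two genuine gaps. First, the WKB step is too coarse: the JMU residue involves $\td{\Psi}^{(\text{reg})}$, which is the \emph{regular} part after stripping the exponential singularity, so the leading $e^{S/\hbar}$ factor cancels in $(\td{\Psi}^{(\text{reg})})^{-1}\hbar\partial_\lambda\td{\Psi}^{(\text{reg})}$ and what survives is an $O(\hbar)$ quantity whose residue against $dT(\lambda)$ is $O(1)$. Identifying this $O(1)$ term with the $\hbar=0$ Hamiltonian is exactly the content of the conjecture and is not delivered by leading-order WKB alone; you need the next order, and controlling it uniformly across arbitrary pole structures is where the work lies. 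Second, your claim that at $\hbar=0$ the Hamiltonians become residues of $y\,d\lambda$ against tangent vectors on $\mathbb{B}$ is itself a nontrivial statement (essentially the isospectral limit of the isomonodromic Hamiltonians) that you have not justified and that the paper does not establish in general either. Your honest acknowledgment that the exact term $dU(\mathbf{t})$ is the main obstacle is correct, but the two earlier steps are also not yet proofs.
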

    This conjecture is important because it would provide some geometric understanding of the formal parameter $\hbar$ that could be seen as an interpolation parameter between the classical or isospectral world ($\hbar=0$) and the standard isomonodromic Hamiltonian setting ($\hbar=1$).
    \item The relation with Hermitian matrix models is also interesting and would deserve investigations. The first natural question is the following: ``Can any classical spectral curves associated with isomonodromic deformations of meromorphic connections be obtained by the classical spectral curve of a Hermitian matrix models?"
    It is known that solutions of loop equation are random matrices \cite{EynardSolutionsLoopEquations} and using logarithmic terms in the potentials or hard edges it might be possible to obtain sufficient classical spectral curves to obtain the one coming from meromorphic connections. Even if the answer is no, the present examples shows that for many cases, the connection with Hermitian matrix models exists by identifying $\hbar=N^{-1}$. The main advantage of Hermitian matrix models is that one can obtain recursive relation from $N$ to $N+1$ and this has been used to derive the fact the partition function of the matrix model is an isomonodromic tau-function. Thus, these recursive relations (standardly obtained by orthogonal polynomials and their three terms relations) interpolate between the classical world ($\hbar=0 \,\Leftrightarrow\, N\to \infty$) and the standard Hamiltonian world ($\hbar=1 \,\Leftrightarrow\, N=1$). Consequently, it indicates that some relations for the wave matrices for different values of $\hbar$ should exist and it would be interesting to obtain them without reference to Hermitian matrix integrals. Of particular interests is then the possibility to use these recursive relations to quantize from the classical world to the $\hbar=1$ case (whereas the recursive relations in matrix models goes the other way from $N$ to $N+1$, hence from Hamiltonian to classical world). 
    \item Finally, the $x-y$ symmetry, i.e. Harnad's duality generalized to some abstract Laplace-Fourier transforms by P. Boalch and D. Yamakawa \cite{BoalchKlein2004,Boalch2012,yamakawa2014fourierlaplace} using the action of the Weyl algebra on the system could be made more explicit by some explicit knowledge of the Hamiltonians and their reduction to fewer non-trivial directions (i.e. a Lagrangian reduction). It would also be interesting to study the consequences of this symmetry at the level of enumerative geometry in particular for map enumeration in relation with matrix models or, if one could generalize the present results to ``exponential variables" $X:=\text{exp}(\lambda)$ and $Y:=\text{exp}(y)$ (or equivalently when the base curve is no longer $\overline{\mathbb{C}}$ but $\overline{\mathbb{C}}\setminus\{0\}$) to enumeration of Gromov-Witten invariants. This topic has been very recently tackled in \cite{Weller2024} for some genus $0$ cases with the help of the log-TR of \cite{ABDKSLogTR2024}.
\end{itemize}

\section*{Acknowledgments}The authors would like to thank Nicolas Orantin, Gabriele Rembado and Jean Dou\c{c}ot for fruitful discussions and explanations.

\appendix
\renewcommand{\theequation}{\thesection-\arabic{equation}}

\section{Explicit Hamiltonians associated with \autoref{Defs}} \label{DirectinHams}
The evolutions of the Darboux coordinates of \autoref{Defs} are Hamiltonian. For completeness, we provide here the explicit expressions of the associated Hamiltonians in each direction. 
\small{\begin{align}\label{Hamcertain}\text{Ham}_{(\mathbf{v}_{\infty,1})}(q,p;\hbar)=&-\hbar q  +v_{\infty,1}(\mathbf{t};\hbar) 
\cr
  \text{Ham}_{(\mathbf{v}_{\infty,2})}(q,p;\hbar)=&-\frac{1}{2}\hbar q^2 +v_{\infty,2}(\mathbf{t};\hbar) 
  \cr
\text{Ham}_{(\mathbf{u}_{\infty,1})}(q,p;\hbar)=&-\hbar p+u_{\infty,1}(\mathbf{t};\hbar)
\cr
\text{Ham}_{(\mathbf{u}_{\infty,2})}(q,p;\hbar)=&-\hbar q p +u_{\infty,2}(\mathbf{t};\hbar)
\cr
\text{Ham}_{(\mathbf{a}_{1})}(q,p;\hbar)=&\hbar \bigg(-p^2+(t_{\infty^{(2)},2}+t_{\infty^{(3)},2})q p+ (t_{\infty^{(2)},1}+t_{\infty^{(3)},1})p\cr
&-t_{\infty^{(2)},2}t_{\infty^{(3)},2}q^2-(t_{\infty^{(2)},2}t_{\infty^{(3)},1}+t_{\infty^{(3)},2}t_{\infty^{(2)},1})q\bigg)+a_{1}(\mathbf{t};\hbar)
\cr
\text{Ham}_{(\mathbf{a}_{2})}(q,p;\hbar)=&\hbar \bigg(-p^2+(t_{\infty^{(1)},2}+t_{\infty^{(3)},2})q p+ (t_{\infty^{(1)},1}+t_{\infty^{(3)},1})p\cr
&-t_{\infty^{(1)},2}t_{\infty^{(3)},2}q^2-(t_{\infty^{(1)},2}t_{\infty^{(3)},1}+t_{\infty^{(3)},2}t_{\infty^{(1)},1})q\bigg)+a_{2}(\mathbf{t};\hbar)
\cr
\text{Ham}_{(\mathbf{a}_{3})}(q,p;\hbar)=&\hbar \bigg(-p^2+(t_{\infty^{(1)},2}+t_{\infty^{(2)},2})q p+ (t_{\infty^{(1)},1}+t_{\infty^{(2)},1})p\cr
&-t_{\infty^{(1)},2}t_{\infty^{(2)},2}q^2-(t_{\infty^{(1)},2}t_{\infty^{(2)},1}+t_{\infty^{(2)},2}t_{\infty^{(1)},1})q\bigg)+a_{3}(\mathbf{t};\hbar)
\cr
\text{Ham}_{(\mathbf{e}_{\infty^{(1)},1})}(q,p;\hbar)=&\frac{-p^3+P_1(q)p^2-P_2(q)p+P_3(q)}{(t_{\infty^{(3)},2}- t_{\infty^{(1)},2})(t_{\infty^{(2)},2}- t_{\infty^{(1)},2})}+e_{1,1}(\mathbf{t};\hbar)
\cr
\text{Ham}_{(\mathbf{e}_{\infty^{(2)},1})}(q,p;\hbar)=&\frac{-p^3+P_1(q)p^2-P_2(q)p+P_3(q)}{(t_{\infty^{(3)},2}- t_{\infty^{(2)},2})(t_{\infty^{(1)},2}- t_{\infty^{(2)},2})} +\hbar\frac{p-t_{\infty^{(3)},2}q}{t_{\infty^{(3)},2}- t_{\infty^{(2)},2}}+e_{2,1}(\mathbf{t};\hbar)
\cr
\text{Ham}_{(\mathbf{e}_{\infty^{(3)},1})}(q,p;\hbar)=&\frac{-p^3+P_1(q)p^2-P_2(q)p+P_3(q)}{(t_{\infty^{(1)},2}- t_{\infty^{(3)},2})(t_{\infty^{(2)},2}- t_{\infty^{(3)},2})} -\hbar\frac{p-t_{\infty^{(2)},2}q}{t_{\infty^{(3)},2}- t_{\infty^{(2)},2}}+e_{3,1}(\mathbf{t};\hbar)
  \end{align}}

\normalsize{As} explained below \autoref{Defs}, the purely time-dependent terms (that do not modify the Hamiltonian evolutions) of the Hamiltonians are chosen so that the fundamental symplectic two-form  and the JMU differential have simpler forms. These terms  are explicitly given by:  

\footnotesize{\begin{align}\label{ConstantTermsHam}
    &v_{\infty,1}(\mathbf{t};\hbar):=\frac{\hbar}{2}\left( \frac{t_{\infty^{(3)},1}-t_{\infty^{(2)},1}}{t_{\infty^{(2)},2}-t_{\infty^{(3)},2}} +\left(\ln(t_{\infty^{(1)},2}-t_{\infty^{(2)},2})-\ln(t_{\infty^{(2)},2}-t_{\infty^{(3)},2})\right)\frac{t_{\infty^{(2)},1}+t_{\infty^{(3)},1}}{t_{\infty^{(1)},2}-t_{\infty^{(3)},2}}\right)\cr
    &v_{\infty,2}(\mathbf{t};\hbar):=0\cr
    &u_{\infty,1}(\mathbf{t};\hbar):=\frac{\hbar}{2}\Bigg[\frac{t_{\infty^{(2)},2}t_{\infty^{(3)},1}-t_{\infty^{(2)},1}t_{\infty^{(3)},2}}{t_{\infty^{(2)},2}-t_{\infty^{(3)},2}}\cr&
    +\left(\ln(t_{\infty^{(1)},2}-t_{\infty^{(2)},2})-\ln(t_{\infty^{(2)},2}-t_{\infty^{(3)},2})\right)\frac{t_{\infty^{(2)},1}t_{\infty^{(3)},2}+t_{\infty^{(2)},2}t_{\infty^{(3)},1}}{t_{\infty^{(1)},2}-t_{\infty^{(3)},2}}\Bigg]\cr
    &u_{\infty,2}(\mathbf{t};\hbar):=0\cr
    &a_{1}(\mathbf{t};\hbar):=\hbar \frac{t_{\infty^{(2)},1}t_{\infty^{(3)},1}(t_{\infty^{(1)},2}-t_{\infty^{(2)},2})}{(t_{\infty^{(1)},2}-t_{\infty^{(3)},2})}\left(\ln(t_{\infty^{(2)},2}-t_{\infty^{(3)},2})-\ln(t_{\infty^{(1)},2}-t_{\infty^{(2)},2})\right)\cr
    &a_{2}(\mathbf{t};\hbar):= -\frac{\hbar t_{\infty^{(3)},1}}{2(t_{\infty^{(2)},2}-t_{\infty^{(3)},2})(t_{\infty^{(1)},2}-t_{\infty^{(3)},2})}\Bigg[\cr&
    ((t_{\infty^{(2)},2}-t_{\infty^{(3)},2})(\ln(t_{\infty^{(1)},2}-t_{\infty^{(2)},2})-\ln(t_{\infty^{(2)},2}-t_{\infty^{(3)},2}))+t_{\infty^{(1)},2}-t_{\infty^{(3)},2})\cr
    &((t_{\infty^{(1)},1}-2t_{\infty^{(2)},1}+t_{\infty^{(3)},1})t_{\infty^{(2)},2}+(t_{\infty^{(2)},1}-t_{\infty^{(1)},1})t_{\infty^{(3)},2}+t_{\infty^{(1)},2}(t_{\infty^{(2)},1}-t_{\infty^{(3)},1}))\Bigg]\cr
    &a_{3}(\mathbf{t};\hbar):=-\frac{\hbar t_{\infty^{(2)},1}}{2(t_{\infty^{(2)},2}-t_{\infty^{(3)},2})(t_{\infty^{(1)},2}-t_{\infty^{(3)},2})}\Bigg[\cr&
    (t_{\infty^{(3)},2}-t_{\infty^{(2)},2})[t_{\infty^{(1)},2}(t_{\infty^{(2)},1}-t_{\infty^{(3)},1})+(t_{\infty^{(1)},1}+t_{\infty^{(3)},1})t_{\infty^{(2)},2}\cr& -(t_{\infty^{(1)},1}+t_{\infty^{(2)},1})t_{\infty^{(3)},2}]\left(\ln(t_{\infty^{(1)},2}-t_{\infty^{(2)},2})-\ln(t_{\infty^{(2)},2}-t_{\infty^{(3)},2})\right) \cr
    &+(t_{\infty^{(1)},2}-t_{\infty^{(3)},2})(t_{\infty^{(1)},2}(t_{\infty^{(2)},1}-t_{\infty^{(3)},1})+t_{\infty^{(2)},2}(t_{\infty^{(1)},1}-t_{\infty^{(3)},1})-(t_{\infty^{(1)},1}+t_{\infty^{(2)},1}-2t_{\infty^{(3)},1})t_{\infty^{(3)},2})\Bigg]\cr&
    e_{1,1}(\mathbf{t};\hbar)=\frac{((-t_{\infty^{(1)},0}t_{\infty^{(1)},2}+t_{\infty^{(3)},1}t_{\infty^{(1)},1}-t_{\infty^{(3)},2}t_{\infty^{(2)},0}-t_{\infty^{(3)},0}t_{\infty^{(3)},2})t_{\infty^{(2)},1}-t_{\infty^{(1)},1}t_{\infty^{(2)},0}(t_{\infty^{(2)},2}-t_{\infty^{(3)},2}))}{(t_{\infty^{(1)},2}-t_{\infty^{(3)},2})(t_{\infty^{(1)},2}-t_{\infty^{(2)},2})}\cr&
 e_{2,1}(\mathbf{t};\hbar)=\frac{((t_{\infty^{(3)},2}-t_{\infty^{(1)},2}) t_{\infty^{(2)},0}-t_{\infty^{(3)},1} t_{\infty^{(1)},1}-t_{\infty^{(1)},2} t_{\infty^{(3)},0}+t_{\infty^{(3)},0} t_{\infty^{(3)},2}) t_{\infty^{(2)},1}+t_{\infty^{(1)},1} t_{\infty^{(2)},0} (t_{\infty^{(2)},2}-t_{\infty^{(3)},2})}{(t_{\infty^{(2)},2}-t_{\infty^{(3)},2}) (t_{\infty^{(1)},2}-t_{\infty^{(2)},2})}\cr&
 +\frac{\hbar t_{\infty^{(3)},1}}{2}\left(\frac{1}{t_{\infty^{(2)},2}-t_{\infty^{(3)},2}}+\frac{\ln(t_{\infty^{(1)},2}-t_{\infty^{(2)},2})-\ln(t_{\infty^{(2)},2}-t_{\infty^{(3)},2})}{t_{\infty^{(1)},2}-t_{\infty^{(3)},2}}\right)\cr&
 e_{3,1}(\mathbf{t};\hbar)=\frac{((t_{\infty^{(1)},0} t_{\infty^{(3)},2}+t_{\infty^{(3)},1} t_{\infty^{(1)},1}+t_{\infty^{(1)},2} t_{\infty^{(2)},0}+t_{\infty^{(1)},2} t_{\infty^{(3)},0}) t_{\infty^{(2)},1}-t_{\infty^{(1)},1} t_{\infty^{(2)},0} (t_{\infty^{(2)},2}-t_{\infty^{(3)},2}))}{((t_{\infty^{(2)},2}-t_{\infty^{(3)},2}) (t_{\infty^{(1)},2}-t_{\infty^{(3)},2}))}\cr&
 +\frac{\hbar t_{\infty^{(2)},1}}{2} \left(-\frac{1}{t_{\infty^{(2)},2}-t_{\infty^{(3)},2}}+ \frac{\ln(t_{\infty^{(1)},2}-t_{\infty^{(2)},2})-ln(t_{\infty^{(2)},2}-t_{\infty^{(3)},2})}{t_{\infty^{(1)},2}-t_{\infty^{(3)},2}}\right)\cr&
e_{1,2}(\mathbf{t};\hbar)=\frac{1}{2(t_{\infty^{(1)},2}-t_{\infty^{(3)},2})^2 (t_{\infty^{(1)},2}-t_{\infty^{(2)},2})^2}\Bigg[\cr&
\left((t_{\infty^{(3)},2}-t_{\infty^{(1)},2})t_{\infty^{(2)},1}+(2t_{\infty^{(1)},1}-t_{\infty^{(3)},1})t_{\infty^{(1)},2}-(t_{\infty^{(2)},2}+t_{\infty^{(3)},2})t_{\infty^{(1)},1}+t_{\infty^{(2)},2} t_{\infty^{(3)},1}\right) \cr&
\left( (t_{\infty^{(1)},0} t_{\infty^{(1)},2}-t_{\infty^{(3)},1} t_{\infty^{(1)},1}-t_{\infty^{(1)},0} t_{\infty^{(3)},2}) t_{\infty^{(2)},1}+t_{\infty^{(1)},1} t_{\infty^{(2)},0} (t_{\infty^{(2)},2}-t_{\infty^{(3)},2})\right)\Bigg]\cr&
+\frac{\hbar t_{\infty^{(2)},1}t_{\infty^{(3)},1}}{2}\frac{\ln(t_{\infty^{(2)},2}-t_{\infty^{(3)},2})-\ln(t_{\infty^{(1)},2}-t_{\infty^{(2)},2})}{(t_{\infty^{(1)},2}-t_{\infty^{(3)},2})^2}\cr&
e_{2,2}(\mathbf{t};\hbar)=-\frac{1}{2(t_{\infty^{(2)},2}-t_{\infty^{(3)},2})^2(t_{\infty^{(1)},2}-t_{\infty^{(2)},2})^2}\cr&
\left((-t_{\infty^{(3)},1}t_{\infty^{(1)},1}+t_{\infty^{(1)},0}(t_{\infty^{(1)},2}-t_{\infty^{(3)},2}))t_{\infty^{(2)},1}+t_{\infty^{(1)},1}t_{\infty^{(2)},0}(t_{\infty^{(2)},2}-t_{\infty^{(3)},2})\right)\cr&
\left((t_{\infty^{(1)},2}-2t_{\infty^{(2)},2}+t_{\infty^{(3)},2})t_{\infty^{(2)},1}+(t_{\infty^{(1)},1}+t_{\infty^{(3)},1})t_{\infty^{(2)},2}-t_{\infty^{(3)},2}t_{\infty^{(1)},1}-t_{\infty^{(1)},2}t_{\infty^{(3)},1}\right)\cr&
e_{3,2}(\mathbf{t};\hbar)=\frac{1}{2(t_{\infty^{(2)},2}-t_{\infty^{(3)},2})^2 (t_{\infty^{(1)},2}-t_{\infty^{(3)},2})^2}\Bigg[\cr&
\left((t_{\infty^{(1)},0} t_{\infty^{(1)},2}-t_{\infty^{(1)},0} t_{\infty^{(3)},2}-t_{\infty^{(3)},1} t_{\infty^{(1)},1})t_{\infty^{(2)},1}+t_{\infty^{(1)},1} t_{\infty^{(2)},0} (t_{\infty^{(2)},2}-t_{\infty^{(3)},2})\right)\cr&
\left( (t_{\infty^{(1)},2}-t_{\infty^{(3)},2})t_{\infty^{(2)},1}+(2t_{\infty^{(3)},1}-t_{\infty^{(1)},1})t_{\infty^{(3)},2}-(t_{\infty^{(1)},2}+t_{\infty^{(2)},2})t_{\infty^{(3)},1}+t_{\infty^{(2)},2}t_{\infty^{(1)},1} \right)
\Bigg]\cr&
+\frac{\hbar t_{\infty^{(2)},1} t_{\infty^{(3)},1}}{2}\frac{ \ln(t_{\infty^{(1)},2}-t_{\infty^{(2)},2})-\ln(t_{\infty^{(2)},2}-t_{\infty^{(3)},2})}{(t_{\infty^{(1)},2}-t_{\infty^{(3)},2})^2}
\end{align}}

\normalsize{Let} us mention that the Hamiltonians corresponding to directions $\left(\mathbf{e}_{\infty^{(i)},2}\right)_{1\leq i\leq 3}$ can be obtained from the previous ones and are given by:

\footnotesize{\begin{align}\label{OtherHamiltonians}
&\text{Ham}_{(\mathbf{e}_{\infty^{(1)},2})}(q,p;\hbar)=\frac{(2t_{\infty^{(1)},1}-t_{\infty^{(2)},1}-t_{\infty^{(3)},1})t_{\infty^{(1)},2}+(t_{\infty^{(3)},1}-t_{\infty^{(1)},1})t_{\infty^{(2)},2}-t_{\infty^{(3)},2}(t_{\infty^{(1)},1}-t_{\infty^{(2)},1})}{2(t_{\infty^{(1)},2}-t_{\infty^{(3)},2})^2(t_{\infty^{(1)},2}-t_{\infty^{(2)},2})^2}\cr&
\left(-P_3(q)+P_2(q)p-P_1(q)p^2+p^3\right)\cr&
+\frac{\hbar}{2(t_{\infty^{(1)},2}-t_{\infty^{(2)},2})(t_{\infty^{(1)},2}-t_{\infty^{(3)},2})}\big[2p^2+((t_{\infty^{(2)},2}+t_{\infty^{(3)},2})q+t_{\infty^{(2)},1}+t_{\infty^{(3)},1})p-t_{\infty^{(3)},2}t_{\infty^{(2)},2}q^2\cr&
+q(t_{\infty^{(3)},1}t_{\infty^{(2)},2}+t_{\infty^{(3)},2}t_{\infty^{(2)},1})\big] 
\cr
&\text{Ham}_{(\mathbf{e}_{\infty^{(2)},2})}(q,p;\hbar)=\frac{(2t_{\infty^{(2)},1}-t_{\infty^{(1)},1}-t_{\infty^{(3)},1})t_{\infty^{(2)},2}+(t_{\infty^{(3)},1}-t_{\infty^{(2)},1})t_{\infty^{(1)},2}-t_{\infty^{(3)},2}(t_{\infty^{(2)},1}-t_{\infty^{(1)},1})}{2(t_{\infty^{(2)},2}-t_{\infty^{(3)},2})^2(t_{\infty^{(2)},2}-t_{\infty^{(1)},2})^2}\cr&
\left(-P_3(q)+P_2(q)p-P_1(q)p^2+p^3\right)\cr&
+\frac{\hbar}{2(t_{\infty^{(2)},2}-t_{\infty^{(1)},2})(t_{\infty^{(2)},2}-t_{\infty^{(3)},2})}
\left(2p^2+((t_{\infty^{(1)},2}+t_{\infty^{(3)},2})q+t_{\infty^{(1)},1}+t_{\infty^{(3)},1})p-t_{\infty^{(3)},2}t_{\infty^{(1)},2}q^2\right)\cr&
-\frac{\hbar q}{2(t_{\infty^{(1)},2}-t_{\infty^{(2)},2}))(t_{\infty^{(2)},2}-t_{\infty^{(3)},2})^2}\Big[((t_{\infty^{(3)},1}-2t_{\infty^{(2)},1})t_{\infty^{(3)},2}-t_{\infty^{(1)},2}t_{\infty^{(3)},1})t_{\infty^{(2)},2}\cr&
+t_{\infty^{(2)},1}t_{\infty^{(3)},2}(t_{\infty^{(1)},2}+t_{\infty^{(3)},2})\Big]+e_{2,2}(\mathbf{t};\hbar)
\cr
&\text{Ham}_{(\mathbf{e}_{\infty^{(3)},2})}(q,p;\hbar)=\frac{(2t_{\infty^{(3)},1}-t_{\infty^{(2)},1}-t_{\infty^{(1)},1})t_{\infty^{(3)},2}+(t_{\infty^{(1)},1}-t_{\infty^{(3)},1})t_{\infty^{(2)},2}-t_{\infty^{(1)},2}(t_{\infty^{(3)},1}-t_{\infty^{(2)},1})}{2(t_{\infty^{(3)},2}-t_{\infty^{(1)},2})^2(t_{\infty^{(3)},2}-t_{\infty^{(2)},2})^2}\cr&
\left(-P_3(q)+P_2(q)p-P_1(q)p^2+p^3\right)\cr&
+\frac{\hbar}{2(t_{\infty^{(3)},2}-t_{\infty^{(2)},2})(t_{\infty^{(3)},2}-t_{\infty^{(1)},2})}\left(2p^2+((t_{\infty^{(2)},2}+t_{\infty^{(1)},2})q+t_{\infty^{(2)},1}+t_{\infty^{(1)},1})p-t_{\infty^{(1)},2}t_{\infty^{(2)},2}q^2\right)\cr&
+\frac{\hbar q}{2(t_{\infty^{(1)},2}-t_{\infty^{(3)},2})(t_{\infty^{(2)},2}-t_{\infty^{(3)},2})^2}\Big[((2t_{\infty^{(3)},1}-t_{\infty^{(2)},1})t_{\infty^{(3)},2}
-t_{\infty^{(1)},2}t_{\infty^{(3)},1})t_{\infty^{(2)},2}\cr&-(t_{\infty^{(2)},2})^2t_{\infty^{(3)},1}+t_{\infty^{(1)},2}t_{\infty^{(2)},1}t_{\infty^{(3)},2}\Big]+e_{3,2}(\mathbf{t};\hbar)
\end{align}}
\normalsize{}

\section{Expression of the auxiliary matrices}\label{AppendixAuxiliaryGeneral}
In this appendix, we give the expressions of the auxiliary matrices $\td{A}_{\boldsymbol{\alpha}}(\lambda)$ for specific directions spanning the whole tangent space.
\begin{align}
    \td{A}_{\mathbf{v}_{\infty,1}}(\lambda)=&\text{diag}(\lambda,\lambda,\lambda)\cr
    \td{A}_{\mathbf{v}_{\infty,2}}(\lambda)=&\text{diag}\left(\frac{\lambda^2}{2},\frac{\lambda^2}{2},\frac{\lambda^2}{2}\right)
\end{align}
\begin{align}   \td{A}_{\mathbf{u}_{\infty,1}}(\lambda)=&\begin{pmatrix}t_{\infty^{(1)},2}\lambda& 1&1\\ \left[\td{A}_{\mathbf{u}_{\infty,1}}\right]_{2,1}
 &t_{\infty^{(2)},2}\lambda+t_{\infty^{(2)},1}-t_{\infty^{(1)},1}&t_{\infty^{(2)},2}q-p+t_{\infty^{(2)},1}\\
\left[\td{A}_{\mathbf{u}_{\infty,1}}\right]_{3,1}& t_{\infty^{(3)},2}q-p+t_{\infty^{(3)},1}& t_{\infty^{(3)},2}\lambda+t_{\infty^{(3)},1}-t_{\infty^{(1)},1} \end{pmatrix}\cr
\left[\td{A}_{\mathbf{u}_{\infty,1}}\right]_{2,1}=&\frac{1}{t_{\infty^{(2)},2}-t_{\infty^{(3)},2}}\Big[((t_{\infty^{(3)},2}q^2-pq+qt_{\infty^{(3)},1}-t_{\infty^{(2)},0})t_{\infty^{(2)},2}+\cr&(t_{\infty^{(2)},1}q-pq+t_{\infty^{(2)},0})t_{\infty^{(3)},2}+(p-t_{\infty^{(3)},1})(p-t_{\infty^{(2)},1}))(t_{\infty^{(1)},2}-t_{\infty^{(2)},2})\Big]\cr
\left[\td{A}_{\mathbf{u}_{\infty,1}}\right]_{3,1}=&\frac{1}{t_{\infty^{(2)},2}-t_{\infty^{(3)},2}}\Big[((t_{\infty^{(2)},2}q^2-pq+t_{\infty^{(2)},1}q-t_{\infty^{(3)},0})t_{\infty^{(3)},2}+\cr&
(t_{\infty^{(3)},1}q-pq+t_{\infty^{(3)},0})t_{\infty^{(2)},2}+(p-t_{\infty^{(3)},1})(p-t_{\infty^{(2)},1}))(t_{\infty^{(3)},2}-t_{\infty^{(1)},2})\Big]\cr
\end{align}
\begin{align}   \td{A}_{\mathbf{u}_{\infty,2}}(\lambda)=&\begin{pmatrix}\lambda(t_{\infty^{(1)},2}\lambda +t_{\infty^{(1)},1}) & \lambda&\lambda\\ \lambda\left[\td{A}_{\mathbf{u}_{\infty,1}}\right]_{2,1}
 &t_{\infty^{(2)},2}\lambda^2+t_{\infty^{(2)},1}\lambda+\hbar&(t_{\infty^{(2)},2}q-p+t_{\infty^{(2)},1})\lambda\\
\lambda\left[\td{A}_{\mathbf{u}_{\infty,1}}\right]_{3,1}& (t_{\infty^{(3)},2}q-p+t_{\infty^{(3)},1})\lambda& t_{\infty^{(3)},2}\lambda^2+t_{\infty^{(3)},1}\lambda+\hbar \end{pmatrix}\cr
\end{align}
\footnotesize{\begin{align}
    \td{A}_{\mathbf{e}_{\infty^{(1)},1}}(\lambda)=&\begin{pmatrix}
        \lambda &\frac{1}{t_{\infty^{(1)},2}-t_{\infty^{(2)},2}}&\frac{1}{t_{\infty^{(1)},2}-t_{\infty^{(3)},2}}\\
        \left[\td{A}_{\mathbf{e}_{\infty^{(1)},1}}\right]_{2,1}&\left[\td{A}_{\mathbf{e}_{\infty^{(1)},1}}\right]_{2,2}&0\\
        \left[\td{A}_{\mathbf{e}_{\infty^{(1)},1}}\right]_{3,1}&0&\left[\td{A}_{\mathbf{e}_{\infty^{(1)},1}}\right]_{3,3}
    \end{pmatrix}\cr
 \left[\td{A}_{\mathbf{e}_{\infty^{(1)},1}}\right]_{2,1}=&
\frac{(t_{\infty^{(3)},2}q^2-pq+t_{\infty^{(3)},1}q-t_{\infty^{(2)},0})t_{\infty^{(2)},2}+(t_{\infty^{(2)},1}q-pq+t_{\infty^{(2)},0})t_{\infty^{(3)},2}+(p-t_{\infty^{(3)},1})(p-t_{\infty^{(2)},1})}{t_{\infty^{(2)},2}-t_{\infty^{(3)},2}}\cr
\left[\td{A}_{\mathbf{e}_{\infty^{(1)},1}}\right]_{3,1}=&\frac{(pq-t_{\infty^{(3)},2}q^2-t_{\infty^{(3)},1}q-t_{\infty^{(3)},0})t_{\infty^{(2)},2}+(pq-t_{\infty^{(2)},1}q+t_{\infty^{(3)},0})t_{\infty^{(3)},2}-(p-t_{\infty^{(3)},1})(p-t_{\infty^{(2)},1})}{t_{\infty^{(2)},2}-t_{\infty^{(3)},2}}\cr
\left[\td{A}_{\mathbf{e}_{\infty^{(1)},1}}\right]_{2,2}=&
\frac{(t_{\infty^{(3)},2}q-p-t_{\infty^{(1)},1}+t_{\infty^{(2)},1}+t_{\infty^{(3)},1})t_{\infty^{(1)},2}-(t_{\infty^{(2)},2}q+t_{\infty^{(2)},1}-t_{\infty^{(1)},1})t_{\infty^{(3)},2}+t_{\infty^{(2)},2}(p-t_{\infty^{(3)},1})}{(t_{\infty^{(1)},2}-t_{\infty^{(3)},2})(t_{\infty^{(1)},2}-t_{\infty^{(2)},2})}\cr
\left[\td{A}_{\mathbf{e}_{\infty^{(1)},1}}\right]_{3,3}=&
\frac{(t_{\infty^{(2)},2}q-p-t_{\infty^{(1)},1}+t_{\infty^{(2)},1}+t_{\infty^{(3)},1})t_{\infty^{(1)},2}-(t_{\infty^{(3)},2}q+t_{\infty^{(3)},1}-t_{\infty^{(1)},1})t_{\infty^{(2)},2}+t_{\infty^{(3)},2}(p-t_{\infty^{(2)},1})}{(t_{\infty^{(1)},2}-t_{\infty^{(3)},2})(t_{\infty^{(1)},2}-t_{\infty^{(2)},2})}\cr
\end{align}}
\footnotesize{\begin{align}\td{A}_{\mathbf{e}_{\infty^{(2)},1}}(\lambda)=&\begin{pmatrix}
        0 &\frac{1}{t_{\infty^{(2)},2}-t_{\infty^{(1)},2}}&0\\
        \left[\td{A}_{\mathbf{e}_{\infty^{(2)},1}}\right]_{2,1}&\lambda+\left[\td{A}_{\mathbf{e}_{\infty^{(2)},1}}\right]_{2,2}&\frac{t_{\infty^{(2)},2}q-p+t_{\infty^{(2)},1}}{t_{\infty^{(2)},2}-t_{\infty^{(3)},2}}\\
        0&\frac{(t_{\infty^{(3)},2}q-p+t_{\infty^{(3)},1})}{t_{\infty^{(2)},2}-t_{\infty^{(3)},2}} & \frac{(t_{\infty^{(1)},2}-t_{\infty^{(3)},2})(p-t_{\infty^{(2)},2}q-t_{\infty^{(2)},1})}{(t_{\infty^{(1)},2}-t_{\infty^{(2)},2})(t_{\infty^{(2)},2}-t_{\infty^{(3)},2})}
    \end{pmatrix}\cr
 \left[\td{A}_{\mathbf{e}_{\infty^{(2)},1}}\right]_{2,1}=&\frac{(pq-t_{\infty^{(3)},2}q^2-t_{\infty^{(3)},1}q+t_{\infty^{(2)},0})t_{\infty^{(2)},2}+(pq-t_{\infty^{(2)},1}q-t_{\infty^{(2)},0})t_{\infty^{(3)},2}-(p-t_{\infty^{(3)},1})(p-t_{\infty^{(2)},1})}{t_{\infty^{(2)},2}-t_{\infty^{(3)},2}}\cr
 \left[\td{A}_{\mathbf{e}_{\infty^{(2)},1}}\right]_{2,2}=&
\frac{(t_{\infty^{(3)},2}q-p+t_{\infty^{(1)},1}-t_{\infty^{(2)},1}+t_{\infty^{(3)},1})t_{\infty^{(2)},2}-(t_{\infty^{(1)},2}q+t_{\infty^{(1)},1}-t_{\infty^{(2)},1})t_{\infty^{(3)},2}+(p-t_{\infty^{(3)},1})t_{\infty^{(1)},2}}{(t_{\infty^{(1)},2}-t_{\infty^{(2)},2})(t_{\infty^{(2)},2}-t_{\infty^{(3)},2})}\cr
\end{align}}
\footnotesize{\begin{align}\td{A}_{\mathbf{e}_{\infty^{(3)},1}}(\lambda)=&\begin{pmatrix}
        0 &0&\frac{1}{t_{\infty^{(3)},2}-t_{\infty^{(1)},2}}\\0&
       \frac{(t_{\infty^{(1)},2}-t_{\infty^{(2)},2})(t_{\infty^{(3)},2}q-p+t_{\infty^{(3)},1})}{(t_{\infty^{(2)},2}-t_{\infty^{(3)},2})(t_{\infty^{(1)},2}-t_{\infty^{(3)},2})}&\frac{(p-t_{\infty^{(2)},2}q-t_{\infty^{(2)},1})}{t_{\infty^{(2)},2}-t_{\infty^{(3)},2}}\\
\left[\td{A}_{\mathbf{e}_{\infty^{(3)},1}}\right]_{3,1}&\frac{p-t_{\infty^{(3)},2}q-t_{\infty^{(3)},1}}{t_{\infty^{(2)},2}-t_{\infty^{(3)},2}} &\lambda+\left[\td{A}_{\mathbf{e}_{\infty^{(3)},1}}\right]_{3,3}
    \end{pmatrix}\cr
 \left[\td{A}_{\mathbf{e}_{\infty^{(3)},1}}\right]_{3,1}=&\frac{(t_{\infty^{(3)},2}q^2-pq+t_{\infty^{(3)},1}q+t_{\infty^{(3)},0})t_{\infty^{(2)},2}+(t_{\infty^{(2)},1}q-pq-t_{\infty^{(3)},0})t_{\infty^{(3)},2}+(p-t_{\infty^{(3)},1})(p-t_{\infty^{(2)},1})}{t_{\infty^{(2)},2}-t_{\infty^{(3)},2}}\cr
 \left[\td{A}_{\mathbf{e}_{\infty^{(3)},1}}\right]_{3,3}=&
\frac{(p-t_{\infty^{(2)},2}q-t_{\infty^{(1)},1}-t_{\infty^{(2)},1}+t_{\infty^{(3)},1})t_{\infty^{(3)},2}+(t_{\infty^{(1)},2}q+t_{\infty^{(1)},1}-t_{\infty^{(3)},1})t_{\infty^{(2)},2}-(p-t_{\infty^{(2)},1})t_{\infty^{(1)},2}}{(t_{\infty^{(2)},2}-t_{\infty^{(3)},2})(t_{\infty^{(1)},2}-t_{\infty^{(3)},2})}\cr
\end{align}}
\footnotesize{\begin{align}&\left[\td{A}_{\mathbf{a}_2}(\lambda)\right]_{1,1}=0\cr
&\left[\td{A}_{\mathbf{a}_2}(\lambda)\right]_{1,2}=(t_{\infty^{(2)},2}-t_{\infty^{(3)},2})\lambda+t_{\infty^{(3)},2}q-p+t_{\infty^{(2)},1}\cr
&\left[\td{A}_{\mathbf{a}_2}(\lambda)\right]_{1,3}=t_{\infty^{(2)},2}q-p+t_{\infty^{(2)},1}\cr
&\left[\td{A}_{\mathbf{a}_2}(\lambda)\right]_{2,1}=
(t_{\infty^{(1)},2}-t_{\infty^{(2)},2})\Big[(t_{\infty^{(3)},2}q^2-pq+t_{\infty^{(3)},1}q-t_{\infty^{(2)},0})t_{\infty^{(2)},2}+p^2-(t_{\infty^{(3)},2}q+t_{\infty^{(2)},1}+t_{\infty^{(3)},1})p\cr&
+t_{\infty^{(2)},1}t_{\infty^{(3)},2}q+t_{\infty^{(2)},1}t_{\infty^{(3)},1}+t_{\infty^{(2)},0}t_{\infty^{(3)},2}\Big]\lambda\cr&
+\frac{1}{t_{\infty^{(2)},2}-t_{\infty^{(3)},2}}\Big[ (t_{\infty^{(1)},2}-t_{\infty^{(3)},2})p^3+\left((t_{\infty^{(3)},2})^2-(2t_{\infty^{(2)},2}+t_{\infty^{(3)},2})(t_{\infty^{(1)},2}-t_{\infty^{(3)},2})\right)qp^2\cr&
+(2t_{\infty^{(2)},1}t_{\infty^{(3)},2}-2t_{\infty^{(1)},2}t_{\infty^{(3)},1}-t_{\infty^{(1)},2}t_{\infty^{(2)},1}-t_{\infty^{(2)},1}t_{\infty^{(2)},2}+t_{\infty^{(2)},2}t_{\infty^{(3)},1}+t_{\infty^{(3)},1}t_{\infty^{(3)},2})p^2\cr&
+t_{\infty^{(2)},2}(t_{\infty^{(2)},2}+2t_{\infty^{(3)},2})(t_{\infty^{(1)},2}-t_{\infty^{(3)},2})q^2p+\cr&
\Big[(t_{\infty^{(2)},1}-t_{\infty^{(3)},1})(t_{\infty^{(2)},2})^2+(t_{\infty^{(2)},1}+3t_{\infty^{(3)},1})(t_{\infty^{(1)},2}-t_{\infty^{(3)},2})t_{\infty^{(2)},2}\cr&
+((t_{\infty^{(1)},2}-2t_{\infty^{(3)},2})t_{\infty^{(2)},1}+t_{\infty^{(3)},1}t_{\infty^{(1)},2})t_{\infty^{(3)},2}\Big]qp\cr&
+\Big[t_{\infty^{(3)},0}(t_{\infty^{(3)},2})^2-(t_{\infty^{(1)},2}t_{\infty^{(3)},0}+(t_{\infty^{(2)},1})^2+2t_{\infty^{(2)},1}t_{\infty^{(3)},1}+t_{\infty^{(2)},2}t_{\infty^{(3)},0})t_{\infty^{(3)},2}\cr&
+(2t_{\infty^{(2)},1}t_{\infty^{(3)},1}+t_{\infty^{(2)},2}t_{\infty^{(3)},0}+(t_{\infty^{(3)},1})^2)t_{\infty^{(1)},2}+((t_{\infty^{(2)},1})^2-(t_{\infty^{(3)},1})^2)t_{\infty^{(2)},2}\Big]p\cr&
-t_{\infty^{(3)},2}(t_{\infty^{(1)},2}-t_{\infty^{(3)},2})(t_{\infty^{(2)},2})^2q^3\cr&
-t_{\infty^{(2)},2}(((t_{\infty^{(2)},1}-2t_{\infty^{(3)},1})t_{\infty^{(2)},2}+t_{\infty^{(1)},2}(t_{\infty^{(2)},1}+t_{\infty^{(3)},1}))t_{\infty^{(3)},2}+t_{\infty^{(3)},1}t_{\infty^{(1)},2}t_{\infty^{(2)},2}-2t_{\infty^{(2)},1}(t_{\infty^{(3)},2})^2)q^2\cr&
\Big[(-t_{\infty^{(1)},2}t_{\infty^{(3)},0}-t_{\infty^{(2)},1}t_{\infty^{(3)},1}+t_{\infty^{(3)},0}t_{\infty^{(3)},2}+(t_{\infty^{(3)},1})^2)(t_{\infty^{(2)},2})^2\cr&
+(-t_{\infty^{(3)},0}(t_{\infty^{(3)},2})^2+(t_{\infty^{(1)},2}t_{\infty^{(3)},0}-(t_{\infty^{(2)},1})^2+3t_{\infty^{(2)},1}t_{\infty^{(3)},1})t_{\infty^{(3)},2}-t_{\infty^{(1)},2}t_{\infty^{(3)},1}(t_{\infty^{(2)},1}+t_{\infty^{(3)},1}))t_{\infty^{(2)},2}\cr&-t_{\infty^{(1)},2}t_{\infty^{(2)},1}t_{\infty^{(3)},1}t_{\infty^{(3)},2}+(t_{\infty^{(2)},1})^2(t_{\infty^{(3)},2})^2\Big]q\cr&
-t_{\infty^{(3)},1}(t_{\infty^{(2)},2}-t_{\infty^{(3)},2})(t_{\infty^{(2)},1})^2+\big[t_{\infty^{(2)},0}(t_{\infty^{(2)},2})^2\cr&+((-t_{\infty^{(1)},2}-t_{\infty^{(3)},2})t_{\infty^{(2)},0}+(t_{\infty^{(3)},1})^2-t_{\infty^{(3)},0}(t_{\infty^{(1)},2}-t_{\infty^{(3)},2}))t_{\infty^{(2)},2}\cr&
+t_{\infty^{(1)},2}t_{\infty^{(3)},2}t_{\infty^{(2)},0}-t_{\infty^{(1)},2}(t_{\infty^{(3)},1})^2+t_{\infty^{(3)},0}t_{\infty^{(3)},2}(t_{\infty^{(1)},2}-t_{\infty^{(3)},2})\Big]t_{\infty^{(2)},1}\cr&
+t_{\infty^{(2)},0}t_{\infty^{(3)},1}(t_{\infty^{(2)},2}-t_{\infty^{(3)},2})(t_{\infty^{(1)},2}-t_{\infty^{(2)},2})\Big]\cr&
\left[\td{A}_{\mathbf{a}_2}(\lambda)\right]_{2,2}= -(t_{\infty^{(2)},2}-t_{\infty^{(3)},2})(t_{\infty^{(1)},2}-t_{\infty^{(2)},2})\lambda^2\cr&
+\left((2t_{\infty^{(2)},2}-t_{\infty^{(1)},2}-t_{\infty^{(3)},2})t_{\infty^{(2)},1}-(t_{\infty^{(1)},1}+t_{\infty^{(3)},1})t_{\infty^{(2)},2}+t_{\infty^{(1)},1}t_{\infty^{(3)},2}+t_{\infty^{(3)},1}t_{\infty^{(1)},2}\right)\lambda\cr&
+\frac{1}{t_{\infty^{(3)},2}-t_{\infty^{(2)},2}}\Big[(2t_{\infty^{(3)},2}-t_{\infty^{(1)},2}-t_{\infty^{(2)},2})p^2\cr&+
(t_{\infty^{(1)},2}-2t_{\infty^{(3)},2}+t_{\infty^{(2)},2})(q(t_{\infty^{(2)},2}+t_{\infty^{(3)},2})+t_{\infty^{(2)},1}+t_{\infty^{(3)},1})p\cr&
-((t_{\infty^{(2)},1}t_{\infty^{(3)},2}+t_{\infty^{(2)},2}t_{\infty^{(3)},1})-t_{\infty^{(2)},2}t_{\infty^{(3)},2}q)(t_{\infty^{(1)},2}+t_{\infty^{(2)},2}-2t_{\infty^{(3)},2})q\cr&
+t_{\infty^{(1)},1}t_{\infty^{(2)},1}t_{\infty^{(2)},2}-t_{\infty^{(1)},1}t_{\infty^{(2)},1}t_{\infty^{(3)},2}-t_{\infty^{(1)},1}t_{\infty^{(2)},2}t_{\infty^{(3)},1}+t_{\infty^{(1)},1}t_{\infty^{(3)},1}t_{\infty^{(3)},2}\cr&-t_{\infty^{(1)},2}t_{\infty^{(2)},1}t_{\infty^{(3)},1}-t_{\infty^{(1)},2}t_{\infty^{(2)},2}t_{\infty^{(3)},0}+t_{\infty^{(1)},2}t_{\infty^{(3)},0}t_{\infty^{(3)},2}
-(t_{\infty^{(2)},1})^2t_{\infty^{(2)},2}+(t_{\infty^{(2)},1})^2t_{\infty^{(3)},2}\cr&+t_{\infty^{(2)},1}t_{\infty^{(3)},1}t_{\infty^{(3)},2}+t_{\infty^{(2)},2}t_{\infty^{(3)},0}t_{\infty^{(3)},2}-t_{\infty^{(3)},0}(t_{\infty^{(3)},2})^2
-\hbar(t_{\infty^{(2)},2}-t_{\infty^{(3)},2})^2\Big]\cr&
\left[\td{A}_{\mathbf{a}_2}(\lambda)\right]_{2,3}=(t_{\infty^{(1)},2}-t_{\infty^{(2)},2})(p-t_{\infty^{(2)},2}q-t_{\infty^{(2)},1})\lambda\cr&
+\frac{1}{t_{\infty^{(2)},2}-t_{\infty^{(3)},2}}\Big[(t_{\infty^{(1)},2}-t_{\infty^{(2)},2})p^2-(t_{\infty^{(2)},2}+t_{\infty^{(3)},2})(t_{\infty^{(1)},2}-t_{\infty^{(2)},2})qp\cr&
+(t_{\infty^{(2)},2}-t_{\infty^{(3)},2})t_{\infty^{(1)},1}-(t_{\infty^{(3)},1}+t_{\infty^{(2)},1})t_{\infty^{(1)},2}+t_{\infty^{(3)},2}t_{\infty^{(2)},1}+t_{\infty^{(3)},1}t_{\infty^{(2)},2})p\cr&
+t_{\infty^{(2)},2}t_{\infty^{(3)},2}(t_{\infty^{(1)},2}-t_{\infty^{(2)},2})q^2\cr&
+\Big[(t_{\infty^{(2)},1}-t_{\infty^{(1)},1}-t_{\infty^{(3)},1})(t_{\infty^{(2)},2})^2+(t_{\infty^{(1)},1}t_{\infty^{(3)},2}+t_{\infty^{(1)},2}t_{\infty^{(3)},1}-2t_{\infty^{(2)},1}t_{\infty^{(3)},2})t_{\infty^{(2)},2}+t_{\infty^{(1)},2}t_{\infty^{(2)},1}t_{\infty^{(3)},2}\Big]q\cr&
+(t_{\infty^{(2)},2}-t_{\infty^{(3)},2})(t_{\infty^{(2)},1})^2+(t_{\infty^{(1)},1}t_{\infty^{(3)},2}+t_{\infty^{(3)},1}t_{\infty^{(1)},2})t_{\infty^{(2)},1}-(t_{\infty^{(1)},1}+t_{\infty^{(3)},1})t_{\infty^{(2)},2}+\cr&
-t_{\infty^{(2)},0}(t_{\infty^{(2)},2}-t_{\infty^{(3)},2})(t_{\infty^{(1)},2}-t_{\infty^{(2)},2})
\Big]\cr&
\left[\td{A}_{\mathbf{a}_2}(\lambda)\right]_{3,1}=\frac{(t_{\infty^{(3)},2}q-p+t_{\infty^{(3)},1})(t_{\infty^{(1)},2}-t_{\infty^{(2)},2})}{t_{\infty^{(2)},2}-t_{\infty^{(3)},2}}\Big[\cr&
(t_{\infty^{(3)},2}q^2-pq+t_{\infty^{(3)},1}q-t_{\infty^{(2)},0})t_{\infty^{(2)},2}+(t_{\infty^{(2)},1}q-pq+t_{\infty^{(2)},0})t_{\infty^{(3)},2}+(p-t_{\infty^{(3)},1})(p-t_{\infty^{(2)},1})\Big]\cr&
\left[\td{A}_{\mathbf{a}_2}(\lambda)\right]_{3,2}=(t_{\infty^{(1)},2}-t_{\infty^{(2)},2})(p-t_{\infty^{(3)},2}q-t_{\infty^{(3)},1})\lambda\cr&
+\frac{1}{t_{\infty^{(2)},2}-t_{\infty^{(3)},2}}\Big[(t_{\infty^{(3)},2}-t_{\infty^{(1)},2})p^2+(t_{\infty^{(2)},2}+t_{\infty^{(3)},2})(t_{\infty^{(1)},2}-t_{\infty^{(3)},2})qp\cr&
+\left((t_{\infty^{(2)},2}-t_{\infty^{(3)},2})t_{\infty^{(1)},1}+t_{\infty^{(1)},2}(t_{\infty^{(2)},1}+t_{\infty^{(3)},1})-t_{\infty^{(2)},1}t_{\infty^{(2)},2}-t_{\infty^{(3)},1}t_{\infty^{(3)},2}\right)p\cr&
-t_{\infty^{(2)},2}t_{\infty^{(3)},2}(t_{\infty^{(1)},2}-t_{\infty^{(3)},2})q^2\cr&
+(t_{\infty^{(1)},1}(t_{\infty^{(3)},2})^2+((t_{\infty^{(2)},1}+t_{\infty^{(3)},1}-t_{\infty^{(1)},1})t_{\infty^{(2)},2}-t_{\infty^{(1)},2}t_{\infty^{(2)},1})t_{\infty^{(3)},2}-t_{\infty^{(3)},1}t_{\infty^{(1)},2}t_{\infty^{(2)},2})q\cr&
+((t_{\infty^{(3)},2}-t_{\infty^{(1)},2})t_{\infty^{(3)},0}-(t_{\infty^{(1)},1}-t_{\infty^{(2)},1})t_{\infty^{(3)},1})t_{\infty^{(2)},2}+t_{\infty^{(3)},0}t_{\infty^{(3)},2}(t_{\infty^{(1)},2}-t_{\infty^{(3)},2})\cr&
+t_{\infty^{(3)},1}(t_{\infty^{(1)},1}t_{\infty^{(3)},2}-t_{\infty^{(1)},2}t_{\infty^{(2)},1})\Big]\cr&
\left[\td{A}_{\mathbf{a}_2}(\lambda)\right]_{3,3}=\frac{1}{t_{\infty^{(2)},2}-t_{\infty^{(3)},2}}\Big[
(2t_{\infty^{(2)},2}-t_{\infty^{(1)},2}-t_{\infty^{(3)},2})p^2\cr&
+(q(t_{\infty^{(2)},2}+t_{\infty^{(3)},2})+t_{\infty^{(2)},1}+t_{\infty^{(3)},1})(t_{\infty^{(1)},2}-2t_{\infty^{(2)},2}+t_{\infty^{(3)},2})p\cr&
-t_{\infty^{(2)},2}t_{\infty^{(3)},2}(t_{\infty^{(1)},2}-2t_{\infty^{(2)},2}+t_{\infty^{(3)},2})q^2-(t_{\infty^{(2)},1}t_{\infty^{(3)},2}+t_{\infty^{(2)},2}t_{\infty^{(3)},1})(t_{\infty^{(1)},2}-2t_{\infty^{(2)},2}+t_{\infty^{(3)},2})q\cr&
+t_{\infty^{(2)},0}(t_{\infty^{(2)},2}-t_{\infty^{(3)},2})(t_{\infty^{(1)},2}-t_{\infty^{(2)},2})-t_{\infty^{(2)},1}t_{\infty^{(3)},1}(t_{\infty^{(1)},2}-2t_{\infty^{(2)},2}+t_{\infty^{(3)},2})
\Big]
\end{align}
\normalsize{One} may then obtain the auxiliary matrices in the other directions by taking linear combinations of the previous matrices. Indeed we have:
\tiny{\begin{align}
    &\begin{pmatrix}\mathcal{L}_{\mathbf{v}_{\infty,2}}\\
\mathcal{L}_{\mathbf{u}_{\infty,2}}\\
\mathcal{L}_{\mathbf{a}_{2}}
\end{pmatrix}=\begin{pmatrix}
    1&1&1\\2t_{\infty^{(1)},2}& 2t_{\infty^{(2)},2}&2t_{\infty^{(3)},2}\\
    0&2(t_{\infty^{(2)},2}-t_{\infty^{(1)},2})(t_{\infty^{(2)},2}-t_{\infty^{(3)},2})&0
\end{pmatrix}\begin{pmatrix}
    \mathcal{L}_{\mathbf{e}_{\infty^{(1)},2}}\\
\mathcal{L}_{\mathbf{e}_{\infty^{(2)},2}}\\
\mathcal{L}_{\mathbf{e}_{\infty^{(3)},2}}
\end{pmatrix}+\cr&
\begin{pmatrix}
    0&0&0\\t_{\infty^{(1)},1}&t_{\infty^{(2)},1}&t_{\infty^{(3)},1} \\
0&t_{\infty^{(2)},2}(3t_{\infty^{(2)},1}-t_{\infty^{(1)},1}-t_{\infty^{(3)},1})- (t_{\infty^{(1)},2}+t_{\infty^{(2)},2}+t_{\infty^{(3)},2})t_{\infty^{(2)},1}+t_{\infty^{(1)},2}t_{\infty^{(3)},1}+t_{\infty^{(3)},2}t_{\infty^{(1)},1}&0
\end{pmatrix}\begin{pmatrix}
    \mathcal{L}_{\mathbf{e}_{\infty^{(1)},1}}\\
\mathcal{L}_{\mathbf{e}_{\infty^{(2)},1}}\\
\mathcal{L}_{\mathbf{e}_{\infty^{(3)},1}}
\end{pmatrix} \cr&
\end{align}
}
\normalsize{so that}
\begin{align}
  &\mathcal{L}_{\mathbf{e}_{\infty^{(1)},2}}= \frac{t_{\infty^{(3)},2}}{t_{\infty^{(3)},2}-t_{\infty^{(1)},2}}\mathcal{L}_{\mathbf{v}_{\infty,2}}
  + \frac{1}{2(t_{\infty^{(1)},2}-t_{\infty^{(3)},2})}\mathcal{L}_{\mathbf{u}_{\infty,2}}\cr&+ \frac{1}{2((t_{\infty^{(1)},2})^2-t_{\infty^{(1)},2}t_{\infty^{(2)},2}-t_{\infty^{(1)},2}t_{\infty^{(3)},2}+t_{\infty^{(2)},2}t_{\infty^{(3)},2})}\mathcal{L}_{\mathbf{a}_{2}}\cr&
  -\frac{1}{2(t_{\infty^{(1)},2}-t_{\infty^{(3)},2})}\Big[t_{\infty^{(1)},1}\mathcal{L}_{\mathbf{e}_{\infty^{(1)},1}}+t_{\infty^{(2)},1}\mathcal{L}_{\mathbf{e}_{\infty^{(2)},1}}+t_{\infty^{(3)},1}\mathcal{L}_{\mathbf{e}_{\infty^{(3)},1}}\Big]+\cr&
\frac{t_{\infty^{(1)},2}(t_{\infty^{(2)},1}-t_{\infty^{(3)},1})+(t_{\infty^{(1)},1}-2t_{\infty^{(2)},1}+t_{\infty^{(3)},1})t_{\infty^{(2)},2}-t_{\infty^{(3)},2}(t_{\infty^{(1)},1}-t_{\infty^{(2)},1})}{(2(t_{\infty^{(1)},2}-t_{\infty^{(3)},2}))(t_{\infty^{(1)},2}-t_{\infty^{(2)},2})}\mathcal{L}_{\mathbf{e}_{\infty^{(2)},1}}\cr&
\mathcal{L}_{\mathbf{e}_{\infty^{(2)},2}}=\frac{1}{2(t_{\infty^{(1)},2}t_{\infty^{(3)},2}-t_{\infty^{(1)},2}t_{\infty^{(2)},2}+(t_{\infty^{(2)},2})^2-t_{\infty^{(2)},2}t_{\infty^{(3)},2})}\mathcal{L}_{\mathbf{a}_{2}}+\cr&
\frac{(2t_{\infty^{(2)},1}-t_{\infty^{(1)},1}-t_{\infty^{(3)},1})t_{\infty^{(2)},2}+(t_{\infty^{(3)},1}-t_{\infty^{(2)},1})t_{\infty^{(1)},2}+t_{\infty^{(3)},2}(t_{\infty^{(1)},1}-t_{\infty^{(2)},1})}{2(t_{\infty^{(2)},2}-t_{\infty^{(3)},2})(t_{\infty^{(1)},2}-t_{\infty^{(2)},2})}\mathcal{L}_{\mathbf{e}_{\infty^{(2)},1}}\cr&
\mathcal{L}_{\mathbf{e}_{\infty^{(3)},2}}=\frac{t_{\infty^{(1)},2}}{t_{\infty^{(1)},2}-t_{\infty^{(3)},2}}\mathcal{L}_{\mathbf{v}_{\infty,2}}-\frac{1}{2(t_{\infty^{(1)},2}-t_{\infty^{(3)},2})}\mathcal{L}_{\mathbf{u}_{\infty,2}}\cr&
+\frac{1}{2(t_{\infty^{(1)},2}t_{\infty^{(2)},2}-t_{\infty^{(1)},2}t_{\infty^{(3)},2}-t_{\infty^{(2)},2}t_{\infty^{(3)},2}+(t_{\infty^{(3)},2})^2)}\mathcal{L}_{\mathbf{a}_{2}}\cr&
+\frac{1}{2(t_{\infty^{(1)},2}-t_{\infty^{(3)},2})}\Big[t_{\infty^{(1)},1}\mathcal{L}_{\mathbf{e}_{\infty^{(1)},1}}+t_{\infty^{(2)},1}\mathcal{L}_{\mathbf{e}_{\infty^{(2)},1}}+t_{\infty^{(3)},1}\mathcal{L}_{\mathbf{e}_{\infty^{(3)},1}}\Big]+\cr&
\frac{(t_{\infty^{(1)},1}-2t_{\infty^{(2)},1}+t_{\infty^{(3)},1})t_{\infty^{(2)},2}+(t_{\infty^{(2)},1}-t_{\infty^{(1)},1})t_{\infty^{(3)},2}+t_{\infty^{(1)},2}(t_{\infty^{(2)},1}-t_{\infty^{(3)},1})}{(2(t_{\infty^{(2)},2}-t_{\infty^{(3)},2}))(t_{\infty^{(1)},2}-t_{\infty^{(3)},2})}\mathcal{L}_{\mathbf{e}_{\infty^{(2)},1}}\cr
\end{align}
giving the other auxiliary matrices as linear combinations of the one given in this appendix.

\section{Proof of \autoref{SymplecticReduction}}
\label{ProofSymplecticReduction}
Let us first observe that adding some exact terms $df$ to the definition of the Hamiltonian form does not contribute in the definition of $\Omega$. Let us also observe that the map $\mathbf{t}\to (T_1, T_2, T_3, T_4, T_5, \tau)$ implies that:
\beq \partial_\tau=\frac{\sqrt{(t_{\infty^{(2)},2}-t_{\infty^{(1)},2})(t_{\infty^{(1)},2}-t_{\infty^{(3)},2})(t_{\infty^{(3)},2}-t_{\infty^{(2)},2})}}{2t_{\infty^{(1)},2}-t_{\infty^{(2)},2}-t_{\infty^{(3)},2}}(\partial_{t_{\infty^{(2)},1}}-\partial_{t_{\infty^{(3)},1}})
\eeq
It implies that up to purely time-dependent terms we have
\footnotesize{\begin{align}\label{Hamtau} \text{Ham}_{(\tau)}(\check{q},\check{p})=&\frac{\sqrt{(t_{\infty^{(2)},2}-t_{\infty^{(1)},2})(t_{\infty^{(1)},2}-t_{\infty^{(3)},2})(t_{\infty^{(3)},2}-t_{\infty^{(2)},2})}}{2t_{\infty^{(1)},2}-t_{\infty^{(2)},2}-t_{\infty^{(3)},2}}\left(\text{Ham}_{\mathbf{e}_{\infty^{(2)},1}}(\check{q},\check{p})-\text{Ham}_{\mathbf{e}_{\infty^{(3)},1}}(\check{q},\check{p})\right)
\end{align}}
\normalsize{Let} us recall that we have from the Hamiltonian system
\begin{align}\label{AppendixHam1}
\text{Ham}_{(\mathbf{e}_{\infty^{(1)},1})}(q,p)=&\frac{-p^3+P_1(q)p^2-P_2(q)p+P_3(q)}{(t_{\infty^{(3)},2}- t_{\infty^{(1)},2})(t_{\infty^{(2)},2}- t_{\infty^{(1)},2})} +e_{1,1}(\mathbf{t})\cr
\text{Ham}_{(\mathbf{e}_{\infty^{(2)},1})}(q,p)=&\frac{-p^3+P_1(q)p^2-P_2(q)p+P_3(q)}{(t_{\infty^{(3)},2}- t_{\infty^{(2)},2})(t_{\infty^{(1)},2}- t_{\infty^{(2)},2})} +\hbar\frac{p-t_{\infty^{(3)},2}q}{t_{\infty^{(3)},2}- t_{\infty^{(2)},2}}+e_{2,1}(\mathbf{t})\cr
\text{Ham}_{(\mathbf{e}_{\infty^{(3)},1})}(q,p)=&\frac{-p^3+P_1(q)p^2-P_2(q)p+P_3(q)}{(t_{\infty^{(1)},2}- t_{\infty^{(3)},2})(t_{\infty^{(2)},2}- t_{\infty^{(3)},2})} -\hbar\frac{p-t_{\infty^{(2)},2}q}{t_{\infty^{(3)},2}- t_{\infty^{(2)},2}}+e_{3,1}(\mathbf{t})
\end{align}
In particular, this gives the evolutions of $ \mathcal{L}_{\mathbf{e}_{\infty^{(2)},1}}[q]$,  $\mathcal{L}_{\mathbf{e}_{\infty^{(2)},1}}[p]$,  $\mathcal{L}_{\mathbf{e}_{\infty^{(3)},1}}[q]$,  $\mathcal{L}_{\mathbf{e}_{\infty^{(2)},1}}[p]$
by the standard Hamiltonian derivatives. Using the shift $(q,p)\leftrightarrow (\check{q},\check{p})$, one may then easily compute the evolution of $ \mathcal{L}_{\mathbf{e}_{\infty^{(2)},1}}[\check{q}]$,  $\mathcal{L}_{\mathbf{e}_{\infty^{(2)},1}}[\check{p}]$,  $\mathcal{L}_{\mathbf{e}_{\infty^{(3)},1}}[\check{q}]$,  $\mathcal{L}_{\mathbf{e}_{\infty^{(2)},1}}[\check{p}]$ 
and obtain the Hamiltonians (up to purely time-dependent terms that are not determined by the dynamics)
\footnotesize{\begin{align}\text{Ham}_{\mathbf{e}_{\infty^{(1)},1}}(\check{q},\check{p})=&
-\sqrt{\frac{t_{\infty^{(3)},2} -t_{\infty^{(2)},2}}{(t_{\infty^{(2)},2} -t_{\infty^{(1)},2})(t_{\infty^{(3)},2} -t_{\infty^{(1)},2})}}\left(\check{q}\check{p}^2+\check{q}^2\check{p} +\tau \check{q}\check{p}+t_{\infty^{(2)},0}\check{p} -(t_{\infty^{(1)},0}+\hbar)\check{q}\right)
\cr
\text{Ham}_{\mathbf{e}_{\infty^{(2)},1}}(\check{q},\check{p})=&\sqrt{\frac{t_{\infty^{(3)},2} -t_{\infty^{(1)},2}}{(t_{\infty^{(2)},2} -t_{\infty^{(1)},2})(t_{\infty^{(3)},2} -t_{\infty^{(2)},2})}}\left(\check{q}\check{p}^2+\check{q}^2\check{p} +\tau \check{q}\check{p}+t_{\infty^{(2)},0}\check{p} -(t_{\infty^{(1)},0}+\hbar)\check{q}\right)\cr
\text{Ham}_{\mathbf{e}_{\infty^{(3)},1}}(\check{q},\check{p})=&-\sqrt{\frac{t_{\infty^{(2)},2} -t_{\infty^{(1)},2}}{(t_{\infty^{(3)},2} -t_{\infty^{(2)},2})(t_{\infty^{(3)},2} -t_{\infty^{(1)},2})}}\left(\check{q}\check{p}^2+\check{q}^2\check{p} +\tau \check{q}\check{p}+t_{\infty^{(2)},0}\check{p} -(t_{\infty^{(1)},0}+\hbar)\check{q}\right)
\end{align}}
\normalsize{Note} that since the shift is time-dependent, it does not necessarily correspond to plugging the expression of $(q,p)$ in terms of $(\check{q},\check{p})$ into \eqref{AppendixHam1}. We then immediately get from \eqref{Hamtau}:
\begin{align}
     \text{Ham}_{(\tau)}(\check{q},\check{p})=&-\left(\check{q}\check{p}^2+\check{q}^2\check{p}-\tau \check{q}\check{p}-t_{\infty^{(2)},0}\check{p} +(t_{\infty^{(1)},0}+\hbar)\check{q}\right) 
\end{align}
i.e.
\begin{align}
    \hbar \partial_\tau \check{q}=&-(2\check{q}\check{p}+\check{q}^2 -\tau \check{q}-t_{\infty^{(2)},0})\cr
    \hbar \partial_\tau \check{p}=&\check{p}^2+2\check{q}\check{p}-\tau \check{p}+t_{\infty^{(1)},0}+\hbar 
\end{align}
Since we have
\footnotesize{\begin{align}
    &d\tau=\sqrt{\frac{t_{\infty^{(3)},2} -t_{\infty^{(2)},2}}{(t_{\infty^{(2)},2} -t_{\infty^{(1)},2})(t_{\infty^{(1)},2} -t_{\infty^{(3)},2})}} \Bigg(dt_{\infty^{(1)},1}-\cr
&\frac{t_{\infty^{(2)},2}(t_{\infty^{(1)},1}-t_{\infty^{(3)},1})+t_{\infty^{(3)},2}(t_{\infty^{(1)},1}-t_{\infty^{(2)},1})+t_{\infty^{(1)},2}(t_{\infty^{(2)},1}+t_{\infty^{(3)},1}-2t_{\infty^{(1)},1})}{2(t_{\infty^{(2)},2} -t_{\infty^{(1)},2})(t_{\infty^{(1)},2} -t_{\infty^{(3)},2})}dt_{\infty^{(1)},2}\Bigg)\cr
&+\sqrt{\frac{t_{\infty^{(1)},2} -t_{\infty^{(3)},2}}{(t_{\infty^{(2)},2} -t_{\infty^{(1)},2})(t_{\infty^{(3)},2} -t_{\infty^{(2)},2})}} \Bigg(dt_{\infty^{(2)},1}-\cr
&\frac{t_{\infty^{(1)},2}(t_{\infty^{(2)},1}-t_{\infty^{(3)},1})+t_{\infty^{(3)},2}(t_{\infty^{(2)},1}-t_{\infty^{(1)},1})+t_{\infty^{(2)},2}(t_{\infty^{(1)},1}+t_{\infty^{(3)},1}-2t_{\infty^{(2)},1})}{2(t_{\infty^{(2)},2} -t_{\infty^{(1)},2})(t_{\infty^{(3)},2} -t_{\infty^{(2)},2})}dt_{\infty^{(2)},2}\Bigg)\cr
&+\sqrt{\frac{t_{\infty^{(2)},2} -t_{\infty^{(1)},2}}{(t_{\infty^{(1)},2} -t_{\infty^{(3)},2})(t_{\infty^{(3)},2} -t_{\infty^{(2)},2})}} \Bigg(dt_{\infty^{(3)},1}+\cr
&\frac{t_{\infty^{(2)},2}(t_{\infty^{(1)},1}-t_{\infty^{(3)},1})+t_{\infty^{(1)},2}(t_{\infty^{(2)},1}-t_{\infty^{(3)},1})+t_{\infty^{(3)},2}(2t_{\infty^{(3)},1}-t_{\infty^{(1)},1}-t_{\infty^{(2)},1})}{2(t_{\infty^{(1)},2} -t_{\infty^{(3)},2})(t_{\infty^{(3)},2} -t_{\infty^{(2)},2})} dt_{\infty^{(3)},2}\Bigg)
\end{align}}
\normalsize{It} is then a straightforward but long computation to obtain $d \text{Ham}_{(\tau)}(\check{q},\check{p}) \wedge d\tau$ in terms of $(q,p,\mathbf{t})$. Similarly the shift in coordinates allows one to get $d\check{q}\wedge d\check{p}$ in terms of $(q,p,\mathbf{t})$ so that one can compute $\Omega_2:=d\, \text{Ham}_{(\tau)}(\check{q},\check{p}) \wedge d\tau +\hbar d\check{q}\wedge d\check{p}$ in terms of $(q,p,\mathbf{t})$.

\sloppy{On the other side, expression of the Hamiltonians $\left(\text{Ham}_{\mathbf{e}_{\infty^{(i)},k}}(q,p)\right)_{1\leq i\leq 3,1\leq k\leq 2}$ are known from \eqref{AppendixHam1} and \eqref{OtherHamiltonians} so that one can immediately compute $\Omega=\hbar dq\wedge dp + \underset{i=1}{\overset{3}{\sum}}\underset{k=1}{\overset{2}{\sum}} d\,\text{Ham}_{\mathbf{e}_{\infty^{(i)},k}}(q,p) \wedge d t_{\infty^{(i)},k}$ in terms of $(q,p,\mathbf{t})$. Then, one finally observe that $\Omega$ and $\Omega_2$ are identical.
\normalsize{This} lengthy but direct computation has been done using Maple software and the code is available at \url{http://math.univ-lyon1.fr/~marchal/AdditionalRessources/index.html}.

\section{Proof of \autoref{TheoDualHamiltonians}}\label{AppendixIdentificationHamiltonian}
In this appendix, we prove that the identification of times/monodromies/Darboux coordinates following from the duality of spectral curves (\autoref{TheoDualitySpecCurves}) is compatible with the general Hamiltonian evolutions of \autoref{Defs} and \autoref{TheoHamP4}. We shall proceed first in the trivial directions $(\mathcal{L}_{\boldsymbol{\beta}_{\infty,1}},\mathcal{L}_{\boldsymbol{\beta}_{\infty,2}},\mathcal{L}_{\boldsymbol{\beta}_{\text{dil}}},\mathcal{L}_{\boldsymbol{\beta}_{\text{transl}}})$ of the tangent space and then deal with the direction $\partial_{X_1}$. Note that this final direction could also be seen as a consequence of the identification in reduced coordinates given by  \autoref{TheoDualReducedHamiltonian} but for completeness we provide a direct proof.
\subsection{Trivial directions}
Let us first observe that the identification of both sets of times implies at the level of tangent spaces that
\begin{align}
    \partial_{X_1}=&\partial_{t_{\infty^{(2)},1}}\cr
    \partial_{s_{\infty^{(1)},2}}=&\frac{1}{(s_{\infty^{(1)},2})^2}\partial_{t_{\infty^{(3)},2}}+\frac{s_{\infty^{(1)},1}}{(s_{\infty^{(1)},2})^2}\partial_{t_{\infty^{(3)},1}} \cr
    =&(t_{\infty^{(3)},2}-t_{\infty^{(2)},2})^2\partial_{t_{\infty^{(3)},2}}+(t_{\infty^{(3)},2}-t_{\infty^{(2)},2})t_{\infty^{(3)},1}\partial_{t_{\infty^{(3)},1}}\cr
    \partial_{s_{\infty^{(2)},2}}=&\frac{1}{(s_{\infty^{(2)},2})^2}\partial_{t_{\infty^{(1)},2}}+\frac{s_{\infty^{(2)},1}}{(s_{\infty^{(2)},2})^2}\partial_{t_{\infty^{(1)},1}} \cr
    =&(t_{\infty^{(1)},2}-t_{\infty^{(2)},2})^2\partial_{t_{\infty^{(1)},2}}+(t_{\infty^{(1)},2}-t_{\infty^{(2)},2})t_{\infty^{(1)},1}\partial_{t_{\infty^{(1)},1}}\cr
\partial_{s_{\infty^{(1)},1}}=&-\frac{1}{s_{\infty^{(1)},2}}\partial_{t_{\infty^{(3)},1}}=(t_{\infty^{(3)},2}-t_{\infty^{(2)},2})\partial_{t_{\infty^{(3)},1}}\cr
\partial_{s_{\infty^{(2)},1}}=&-\frac{1}{s_{\infty^{(2)},2}}\partial_{t_{\infty^{(1)},1}}=(t_{\infty^{(1)},2}-t_{\infty^{(2)},2})\partial_{t_{\infty^{(1)},1}}
\end{align}
Moreover, we have $P=q$ and $Q=p-t_{\infty^{(2)},2}q$, i.e. $\mathcal{L}_{\boldsymbol{\beta}}[Q]=\mathcal{L}_{\boldsymbol{\beta}}[p] -t_{\infty^{(2)},2}\mathcal{L}_{\boldsymbol{\beta}}[q]-q\mathcal{L}_{\boldsymbol{\beta}}[t_{\infty^{(2)},2}]$ and $\mathcal{L}_{\boldsymbol{\beta}}[P]=\mathcal{L}_{\boldsymbol{\beta}}[q]$ for any deformation vector. Finally, we have from \eqref{TrivialVectorsP4}:
\begin{align}\label{CorrespondanceTrivialDirections}
    \mathcal{L}_{\boldsymbol{\beta}_{\infty,1}}=&\hbar \partial_{s_{\infty^{(1)},1}}+\hbar \partial_{s_{\infty^{(2)},1}}=\hbar(t_{\infty^{(1)},2}-t_{\infty^{(2)},2})\partial_{t_{\infty^{(1)},1}}+ \hbar(t_{\infty^{(3)},2}-t_{\infty^{(2)},2})\partial_{t_{\infty^{(3)},1}}\cr&=\mathcal{L}_{\mathbf{u}_{\infty,1}}-t_{\infty^{(2)},2}\mathcal{L}_{\mathbf{v}_{\infty,1}}\cr
\mathcal{L}_{\boldsymbol{\beta}_{\infty,2}}=&\hbar \partial_{s_{\infty^{(1)},2}}+\hbar \partial_{s_{\infty^{(2)},2}}\cr
=& \hbar(t_{\infty^{(1)},2}-t_{\infty^{(2)},2})\left((t_{\infty^{(1)},2}-t_{\infty^{(2)},2})\partial_{t_{\infty^{(1)},2}}+t_{\infty^{(1)},1}\partial_{t_{\infty^{(1)},1}}\right)\cr&
+\hbar(t_{\infty^{(3)},2}-t_{\infty^{(2)},2})\left( (t_{\infty^{(3)},2}-t_{\infty^{(2)},2})\partial_{t_{\infty^{(3)},2}}+t_{\infty^{(3)},1}\partial_{t_{\infty^{(3)},1}}\right)\cr
=&-\frac{1}{2}(t_{\infty^{(3)},2}t_{\infty^{(1)},1}+t_{\infty^{(1)},2}t_{\infty^{(3)},1})\mathcal{L}_{\mathbf{v}_{\infty,1}}-(t_{\infty^{(1)},2}t_{\infty^{(3)},2}-(t_{\infty^{(2)},2})^2)\mathcal{L}_{\mathbf{v}_{\infty,2}}\cr&
+\frac{1}{2}(t_{\infty^{(1)},1}+t_{\infty^{(3)},1})\mathcal{L}_{\mathbf{u}_{\infty,1}}-\frac{1}{2}(2t_{\infty^{(2)},2}-t_{\infty^{(1)},2}-t_{\infty^{(3)},2})\mathcal{L}_{\mathbf{u}_{\infty,2}}+\frac{1}{2}\mathcal{L}_{\mathbf{a}_{2}}\cr
\mathcal{L}_{\boldsymbol{\beta}_{\text{dil}}}=&\hbar(
2 s_{\infty^{(1)},2} \partial_{s_{\infty^{(1)},2}}+2 s_{\infty^{(2)},2} \partial_{s_{\infty^{(2)},2}}+s_{\infty^{(1)},1} \partial_{s_{\infty^{(1)},1}}+ s_{\infty^{(2)},1} \partial_{s_{\infty^{(2)},1}}- X_1\partial_{X_1}  )\cr
=&  -2\hbar(t_{\infty^{(1)},2}-t_{\infty^{(2)},2})\partial_{t_{\infty^{(1)},2}} -2\hbar(t_{\infty^{(3)},2}-t_{\infty^{(2)},2})\partial_{t_{\infty^{(3)},2}}\cr&
-\hbar( t_{\infty^{(1)},1} \partial_{t_{\infty^{(1)},1}}+ t_{\infty^{(2)},1} \partial_{t_{\infty^{(2)},1}}  + t_{\infty^{(3)},1} \partial_{t_{\infty^{(3)},1}}) \cr 
=&2t_{\infty^{(2)},2}\mathcal{L}_{\mathbf{v}_{\infty,2}}-\mathcal{L}_{\mathbf{u}_{\infty,2}}\cr
\mathcal{L}_{\boldsymbol{\beta}_{\text{transl}}}=&\hbar(s_{\infty^{(1)},2}\partial_{s_{\infty^{(1)},1}}+s_{\infty^{(2)},2}\partial_{s_{\infty^{(2)},1}}-\partial_{X_1})
 =  -\hbar \left( \partial_{t_{\infty^{(1)},1}} + \partial_{t_{\infty^{(2)},1}} + \partial_{t_{\infty^{(3)},1}}  \right) \cr
 = &-\mathcal{L}_{\mathbf{v}_{\infty,1}}
 \end{align}
so that one can verify that the duality gives:
\begin{align}\label{Lbetainfty1}  \mathcal{L}_{\boldsymbol{\beta}_{\infty,1}}[Q]=&\mathcal{L}_{\boldsymbol{\beta}_{\infty,1}}[p] -t_{\infty^{(2)},2}\mathcal{L}_{\boldsymbol{\beta}_{\infty,1}}[q]-q\mathcal{L}_{\boldsymbol{\beta}_{\infty,1}}[t_{\infty^{(2)},2}]\cr
=& \mathcal{L}_{\mathbf{u}_{\infty,1}}[p]-t_{\infty^{(2)},2}\mathcal{L}_{\mathbf{v}_{\infty,1}}[p] -t_{\infty^{(2)},2}( \mathcal{L}_{\mathbf{u}_{\infty,1}}[q]-t_{\infty^{(2)},2}\mathcal{L}_{\mathbf{v}_{\infty,1}}[q])\cr&
-q(\mathcal{L}_{\mathbf{u}_{\infty,1}}[t_{\infty^{(2)},2}]-t_{\infty^{(2)},2}\mathcal{L}_{\mathbf{v}_{\infty,1}}[t_{\infty^{(2)},2}])\cr
\overset{\text{Th.} \ref{Defs}}{=}&0\cr
\mathcal{L}_{\boldsymbol{\beta}_{\infty,1}}[P]=&\mathcal{L}_{\boldsymbol{\beta}_{\infty,1}}[q]= \mathcal{L}_{\mathbf{u}_{\infty,1}}[q]-t_{\infty^{(2)},2}\mathcal{L}_{\mathbf{v}_{\infty,1}}[q]\overset{\text{Th.} \ref{Defs}}{=}-\hbar
\end{align}
and
\begin{align}\label{Lbetainfty2} 
\mathcal{L}_{\boldsymbol{\beta}_{\infty,2}}[Q]=&\mathcal{L}_{\boldsymbol{\beta}_{\infty,2}}[p] -t_{\infty^{(2)},2}\mathcal{L}_{\boldsymbol{\beta}_{\infty,2}}[q]-q\mathcal{L}_{\boldsymbol{\beta}_{\infty,2}}[t_{\infty^{(2)},2}]\cr
=&\mathcal{L}_{\boldsymbol{\beta}_{\infty,2}}[p] -t_{\infty^{(2)},2}\mathcal{L}_{\boldsymbol{\beta}_{\infty,2}}[q]\cr
=&-\frac{1}{2}(t_{\infty^{(3)},2}t_{\infty^{(1)},1}+t_{\infty^{(1)},2}t_{\infty^{(3)},1})\mathcal{L}_{\mathbf{v}_{\infty,1}}[p]-(t_{\infty^{(1)},2}t_{\infty^{(3)},2}-(t_{\infty^{(2)},2})^2)\mathcal{L}_{\mathbf{v}_{\infty,2}}[p]\cr&
+\frac{1}{2}(t_{\infty^{(1)},1}+t_{\infty^{(3)},1})\mathcal{L}_{\mathbf{u}_{\infty,1}}[p]-\frac{1}{2}(2t_{\infty^{(2)},2}-t_{\infty^{(1)},2}-t_{\infty^{(3)},2})\mathcal{L}_{\mathbf{u}_{\infty,2}}[p]+\frac{1}{2}\mathcal{L}_{\mathbf{a}_{2}}[p]\cr
&+t_{\infty^{(2)},2}\Big[\frac{1}{2}(t_{\infty^{(3)},2}t_{\infty^{(1)},1}+t_{\infty^{(1)},2}t_{\infty^{(3)},1})\mathcal{L}_{\mathbf{v}_{\infty,1}}[q]+(t_{\infty^{(1)},2}t_{\infty^{(3)},2}-(t_{\infty^{(2)},2})^2)\mathcal{L}_{\mathbf{v}_{\infty,2}}[q]\cr&
-\frac{1}{2}(t_{\infty^{(1)},1}+t_{\infty^{(3)},1})\mathcal{L}_{\mathbf{u}_{\infty,1}}[q]+\frac{1}{2}(2t_{\infty^{(2)},2}-t_{\infty^{(1)},2}-t_{\infty^{(3)},2})\mathcal{L}_{\mathbf{u}_{\infty,2}}[q]-\frac{1}{2}\mathcal{L}_{\mathbf{a}_{2}}[q]\Big]\cr
\overset{\text{Th.} \ref{Defs}}{=}& \textcolor{blue}{-\frac{\hbar}{2}(t_{\infty^{(3)},2}t_{\infty^{(1)},1}+t_{\infty^{(1)},2}t_{\infty^{(3)},1})}-\hbar (\textcolor{red}{t_{\infty^{(1)},2}t_{\infty^{(3)},2}}\textcolor{yellow}{-(t_{\infty^{(2)},2})^2})q
\textcolor{green}{-\frac{\hbar}{2}(2t_{\infty^{(2)},2}-t_{\infty^{(1)},2}-t_{\infty^{(3)},2})p}\cr&
+\frac{\hbar}{2}\Big[\textcolor{green}{-(t_{\infty^{(1)},2}+t_{\infty^{(3)},2})p}+\textcolor{red}{2t_{\infty^{(1)},2}t_{\infty^{(3)},2}q}\textcolor{blue}{+t_{\infty^{(3)},1}t_{\infty^{(1)},2}+t_{\infty^{(3)},2}t_{\infty^{(1)},1}}\Big]\cr&
+t_{\infty^{(2)},2}\Big[\frac{\hbar}{2}(t_{\infty^{(1)},1}+t_{\infty^{(3)},1})-\frac{\hbar}{2}(\textcolor{yellow}{2t_{\infty^{(2)},2}}\textcolor{orange}{-t_{\infty^{(1)},2}-t_{\infty^{(3)},2}})q\cr&
-\frac{\hbar}{2}\left(\textcolor{orange}{(t_{\infty^{(1)},2}+t_{\infty^{(3)},2})q}\textcolor{green}{-2p}+t_{\infty^{(1)},1}+t_{\infty^{(3)},1}\right) \Big]
\cr
=&0\cr
\mathcal{L}_{\boldsymbol{\beta}_{\infty,2}}[P]=&\mathcal{L}_{\boldsymbol{\beta}_{\infty,2}}[q]=-\frac{1}{2}(t_{\infty^{(3)},2}t_{\infty^{(1)},1}+t_{\infty^{(1)},2}t_{\infty^{(3)},1})\mathcal{L}_{\mathbf{v}_{\infty,1}}[q]-(t_{\infty^{(1)},2}t_{\infty^{(3)},2}-(t_{\infty^{(2)},2})^2)\mathcal{L}_{\mathbf{v}_{\infty,2}}[q]\cr&
+\frac{1}{2}(t_{\infty^{(1)},1}+t_{\infty^{(3)},1})\mathcal{L}_{\mathbf{u}_{\infty,1}}[q]-\frac{1}{2}(2t_{\infty^{(2)},2}-t_{\infty^{(1)},2}-t_{\infty^{(3)},2})\mathcal{L}_{\mathbf{u}_{\infty,2}}[q]+\frac{1}{2}\mathcal{L}_{\mathbf{a}_{2}}[q]\cr
\overset{\text{Th.} \ref{Defs}}{=}&-\frac{\hbar}{2}(t_{\infty^{(1)},1}+t_{\infty^{(3)},1})+\frac{\hbar }{2}(2t_{\infty^{(2)},2}-t_{\infty^{(1)},2}-t_{\infty^{(3)},2})q\cr&
+\frac{\hbar}{2}\left((t_{\infty^{(1)},2}-t_{\infty^{(3)},2})q-2p+(t_{\infty^{(1)},1}+t_{\infty^{(3)},1}\right)=\hbar t_{\infty^{(2)},2} q-\hbar p\cr
=&-\hbar Q 
\end{align}
and
\begin{align}\label{Lbetadil}
   \mathcal{L}_{\boldsymbol{\beta}_{\text{dil}}} [Q] = & \mathcal{L}_{\boldsymbol{\beta}_{\text{dil}}}[p] -t_{\infty^{(2)},2}\mathcal{L}_{\boldsymbol{\beta}_{\text{dil}}}[q]-q\mathcal{L}_{\boldsymbol{\beta}_{\text{dil}}}[t_{\infty^{(2)},2}]\cr
   =&\mathcal{L}_{\boldsymbol{\beta}_{\text{dil}}}[p] -t_{\infty^{(2)},2}\mathcal{L}_{\boldsymbol{\beta}_{\text{dil}}}[q]\cr
   =&2t_{\infty^{(2)},2}\mathcal{L}_{\mathbf{v}_{\infty,2}}[p]-\mathcal{L}_{\mathbf{u}_{\infty,2}}[p] -2(t_{\infty^{(2)},2})^2\mathcal{L}_{\mathbf{v}_{\infty,2}}[q]+t_{\infty^{(2)},2}\mathcal{L}_{\mathbf{u}_{\infty,2}}[q]\cr
\overset{\text{Th.} \ref{Defs}}{=}&2\hbar t_{\infty^{(2)},2}q-\hbar p-\hbar t_{\infty^{(2)},2}q=-\hbar(p-t_{\infty^{(2)},2}q)\cr
=&-\hbar Q\cr
 \mathcal{L}_{\boldsymbol{\beta}_{\text{dil}}} [P] = & \mathcal{L}_{\boldsymbol{\beta}_{\text{dil}}} [q] =2t_{\infty^{(2)},2}\mathcal{L}_{\mathbf{v}_{\infty,2}}[q]-\mathcal{L}_{\mathbf{u}_{\infty,2}}[q]
 \overset{\text{Th.} \ref{Defs}}{=}\hbar q\cr=&\hbar P
\end{align}
and finally in the last trivial direction
\begin{align}\label{Lbetatransl}
    \mathcal{L}_{\boldsymbol{\beta}_{\text{transl}}} [Q] = &  -\mathcal{L}_{\mathbf{v}_{\infty,1}} [p] +  t_{\infty^{(2)},2} \mathcal{L}_{\mathbf{v}_{\infty,1}} [q] +  q\mathcal{L}_{\mathbf{v}_{\infty,1}} [t_{\infty^{(2)},2}] \overset{\text{Th.} \ref{Defs}}{=} -\hbar \cr
     \mathcal{L}_{\boldsymbol{\beta}_{\text{transl}}} [P] = & -\mathcal{L}_{\mathbf{v}_{\infty,1}} [q] \overset{\text{Th.} \ref{Defs}}{=} 0 
\end{align}
so that \eqref{Lbetainfty1}, \eqref{Lbetainfty2}, \eqref{Lbetadil}, \eqref{Lbetatransl} are consistent with \eqref{TrivialEvolutionsP4}.

\subsection{Non-trivial direction}
Let us finally look at the direction $\mathcal{L}_{X_1}$ in order to complement the tangent space. We have from \eqref{HamilP4General}:
\beq
    \text{Ham}^{(\text{P4})}_{X_1}=(Q-X_1)\left( P^2-R_1(Q)P+\frac{\hbar P}{Q-X_1} +R_2(Q)+\hbar s_{\infty^{(1)},2}\right)
\eeq
so that the evolutions of $(Q,P)$ are given by
\begin{align}
    \mathcal{L}_{X_1}[Q]=&(Q-X_1)\left(2P-R_1(Q)\right) +\hbar\cr
    \mathcal{L}_{X_1}[P]=&-P^2+R_1(Q)P-R_2(Q)-\hbar s_{\infty^{(1)},2}+(Q-X_1)\left(R_1'(Q)P-R_2'(Q)\right)
\end{align}
Thus, the identification from spectral duality gives from \eqref{DefR1R2}:
\begin{align}\label{R1R2Dual}
    R_1(\xi)=&\frac{t_{\infty^{(2)},0}}{\xi-t_{\infty^{(2)},1}}+\left(\frac{1}{t_{\infty^{(1)},2}-t_{\infty^{(2)},2}}  +\frac{1}{t_{\infty^{(3)},2}-t_{\infty^{(2)},2}}\right)\xi \cr&-\frac{t_{\infty^{(1)},1}}{t_{\infty^{(1)},2}-t_{\infty^{(2)},2}}-\frac{t_{\infty^{(3)},1}}{t_{\infty^{(3)},2}-t_{\infty^{(2)},2}}\cr
    R_2(\xi)
=& \frac{\xi^2-(t_{\infty^{(1)},1}+t_{\infty^{(3)},1})\xi +t_{\infty^{(1)},1}t_{\infty^{(3)},1} }{(t_{\infty^{(1)},2}-t_{\infty^{(2)},2})(t_{\infty^{(3)},2}-t_{\infty^{(2)},2})}-\frac{t_{\infty^{(1)},0}}{t_{\infty^{(3)},2}-t_{\infty^{(2)},2}}-\frac{t_{\infty^{(3)},0}}{t_{\infty^{(1)},2}-t_{\infty^{(2)},2}}
\end{align}
and
\begin{align}\label{CheckDualityNonTrivialDirection}\hbar \partial_{t_{\infty^{(2)},1}}[q]=&\mathcal{L}_{X_1}[q]= \mathcal{L}_{X_1}[P] \cr
=&-P^2+R_1(Q)P-R_2(Q)-\hbar s_{\infty^{(1)},2}+(Q-X_1)\left(R_1'(Q)P-R_2'(Q)\right)\cr
=&-q^2+qR_1(p-t_{\infty^{(2)},2}q)-R_2(p-t_{\infty^{(2)},2}q)+\frac{\hbar}{t_{\infty^{(3)},2}-t_{\infty^{(2)},2}}\cr
&+(p-t_{\infty^{(2)},2}q -t_{\infty^{(2)},1})\left(qR_1'(p-t_{\infty^{(2)},2}q)-R_2'(p-t_{\infty^{(2)},2}q)\right)\cr
\hbar \partial_{t_{\infty^{(2)},1}}[p]=&\mathcal{L}_{X_1}[p]=\mathcal{L}_{X_1}[Q+t_{\infty^{(2)},2}P]=\mathcal{L}_{X_1}[Q]+t_{\infty^{(2)},2} \mathcal{L}_{X_1}[P]\cr
=& (Q-X_1)\left(2P-R_1(Q)\right) +\hbar\cr
&+ t_{\infty^{(2)},2}\Big[-P^2+R_1(Q)P-R_2(Q)-\hbar s_{\infty^{(1)},2}+(Q-X_1)\left(R_1'(Q)P-R_2'(Q)\right)\Big]\cr
=&  \left(p-t_{\infty^{(2)},2}q-t_{\infty^{(2)},1}\right)\left(2q-R_1(-p+t_{\infty^{(2)},2}q)\right)\cr
&+t_{\infty^{(2)},2}\Big[-q^2+qR_1(p-t_{\infty^{(2)},2}q)-R_2(p-t_{\infty^{(2)},2}q)\cr
&+\left(p-t_{\infty^{(2)},2}q-t_{\infty^{(2)},1}\right)(qR_1'(p-t_{\infty^{(2)},2}q)-R_2'(p-t_{\infty^{(2)},2}q)  \Big]+\frac{\hbar t_{\infty^{(3)},2}}{t_{\infty^{(3)},2}-t_{\infty^{(2)},2}}\cr
\end{align}
Finally we recall that from \autoref{Defs} we have the expression of $\text{Ham}_{(\mathbf{e}_{\infty^{(2)},1})}(q,p)$ that provides
\begin{align}\label{Evolutionr21} \hbar \partial_{t_{\infty^{(2)},1}}q=&\frac{-3p^2+2P_1(q)p-P_2(q)}{(t_{\infty^{(3)},2}- t_{\infty^{(2)},2})(t_{\infty^{(1)},2}- t_{\infty^{(2)},2})} +\frac{\hbar}{t_{\infty^{(3)},2}- t_{\infty^{(2)},2}}\cr
\hbar \partial_{t_{\infty^{(2)},1}}p=&\frac{-P_1'(q)p^2+P_2'(q)p-P_3'(q)}{(t_{\infty^{(3)},2}- t_{\infty^{(2)},2})(t_{\infty^{(1)},2}- t_{\infty^{(2)},2})} +\hbar\frac{t_{\infty^{(3)},2}}{t_{\infty^{(3)},2}- t_{\infty^{(2)},2}}
\end{align}
It is then a direct but cumbersome computation to check that using \eqref{R1R2Dual} in \eqref{CheckDualityNonTrivialDirection}, the last expressions match with \eqref{Evolutionr21} and therefore the evolutions are consistent on both sides.

\newpage
\addcontentsline{toc}{section}{References}
\bibliographystyle{plain}
\bibliography{Biblio}

\end{document}